\documentclass[a4paper,11pt]{article}
\pdfoutput=1 

\usepackage{jheppub} 

\usepackage[T1]{fontenc} 
\usepackage{subcaption}
\usepackage{verbatim}
\usepackage{mathrsfs}
\usepackage{physics}
\usepackage{tikz}
\usetikzlibrary{calc}
\usetikzlibrary{shapes.geometric,positioning}
\usepackage{amsthm}
\usepackage{mathtools}
\usepackage{enumerate}

\definecolor{ForestGreen}{RGB}{34,139,34}

\newcommand{\del}{\partial}

\renewcommand{\tilde}{\widetilde}
\renewcommand{\hat}{\widehat}
\renewcommand{\bar}{\overline}

\newcommand{\reals}{\mathbb{R}}
\newcommand{\comps}{\mathbb{C}}
\renewcommand{\Im}{\text{Im}}

\newcommand{\A}{\mathcal{A}}
\newcommand{\M}{\mathcal{M}}
\renewcommand{\H}{\mathcal{H}}
\newcommand{\K}{\mathcal{K}}

\newcommand{\im}{\operatorname{im}}

\newcommand{\B}{\mathcal{B}}

\newcommand{\Z}{\mathcal{Z}}

\newtheorem{theorem}{Theorem}

\newtheorem{prop}[theorem]{Proposition}

\theoremstyle{definition}
\newtheorem{definition}[theorem]{Definition}
\newtheorem{remark}[theorem]{Remark}

\numberwithin{theorem}{section}

\title{Excitability in quantum field theory}
\author[a]{Jacqueline Caminiti,}
\author[b]{Federico Capeccia,}
\author[c]{and Jonathan Sorce}
\affiliation[a]{Perimeter Institute for Theoretical Physics}
\affiliation[b]{University of California, Davis}
\affiliation[c]{Princeton Gravity Initiative}
\abstract{
In quantum field theory, it is not always possible to excite one state out of another using only local operators.
This paper establishes abstract algebraic criteria for (local) excitability in general quantum theories, and computes these criteria explicitly for zero-mean Gaussian states in (generalized) free field theories.
We find that in this context, due to the special nature of Gaussian states, one-way excitability always implies two-way excitability, and our results generalize the ``quasiequivalence theorems'' of Powers, St{\o}rmer, van Daele, Araki, and Yamagami.
A key role in our proof is played by the information-theoretic tool of canonical purification.
In appendices, we provide a pedagogical introduction to the algebraic formulation of (generalized) free field theory.
}

\begin{document}
\maketitle

\section{Introduction}

When can one state of a quantum field be excited out of another?
In the algebraic description of quantum field theory, this question can be made precise.
One starts with an abstract algebra $\A_0$ describing the field content of the theory, then constructs different Hilbert space ``sectors'' by selecting algebraic states and performing the GNS construction.
Each GNS sector ${\cal H}_{\omega}$ can be thought of as the set of local excitations around a single chosen state $\omega$, so the question of whether $\omega_2$ can be excited from $\omega_1$ is a question of whether $\omega_2$ ``lives inside'' the space $\H_{\omega_1}.$

Concretely, every GNS sector carries a representation of the field algebra $\A_0,$ and we say that $\omega_2$ can be excited out of $\omega_1$ if there is a density matrix $\rho_{\omega_2}$ on $\H_{\omega_1}$ that reproduces the $\omega_2$ correlation functions:
\begin{equation}
	\tr_{\H_{\omega_1}}(\rho_{\omega_2} a)
		= \omega_2(a), \qquad a \in \A_0.
\end{equation}
Beyond asking when excitability is possible, one can also study when the correlation functions of $\omega_2$ can be reproduced by a pure state,
\begin{equation}
	\rho_{\omega_2}
		= |\omega_2\rangle \langle \omega_2|,
\end{equation}
and one can further ask, ``if $\omega_2$ can be excited out of $\omega_1,$ does that necessarily mean that $\omega_1$ can be excited out of $\omega_2$?''

These are physically important questions at the foundation of quantum field theory, and the answers are generally nontrivial.
One expects, for example, that infrared effects will prevent the excitation of a thermal state out of the global vacuum in Minkowski spacetime --- see figure \ref{fig:thermalstates} --- while no such obstruction should occur within a finite causal diamond.
More generally, one expects that any two states that are ``well behaved in the ultraviolet'' should be excitable into one other within compact subregions.
While these statements are physically intuitive, it turns out to be quite difficult to establish them with any degree of rigor, since the abstract question of excitability is hard to translate into standard tools of quantum field theory like scattering amplitudes and correlation functions.

\begin{figure}
    \centering
    \hspace{-5cm}
    \begin{tikzpicture}[scale=2, transform shape]
    \node[diamond, draw=black, line width=1pt, fill=red, minimum size=1.6cm] (D) {};
        \node[anchor=south west] at ($(D.45)+(-.25,0)$) {\tiny$\mathscr{I}^+$};
        \node[anchor=north west] at ($(D.-45)+(-.25,.15)$) {\tiny$\mathscr{I}^-$};
        \node[anchor=east] at ($(D.-180)$) {\textcolor{red}{\tiny$\beta$}};
    \hspace{2cm}
    \raisebox{-0.2cm}{\huge$\not\prec$}
    \hspace{2cm}
    \node[diamond, draw=black, line width=1pt, fill=blue, minimum size=1.6cm] (D2) {};
        \node[anchor=south west] at ($(D2.45)+(-.25,0)$) {\tiny  $\mathscr{I}^+$};
        \node[anchor=north west] at ($(D2.-45)+(-.25,.15)$) {\tiny  $\mathscr{I}^-$};
        \node[anchor=west] at ($(D2.0)$) {\textcolor{blue}{\tiny$\Omega$}};
\end{tikzpicture}
\\
\vspace{.5cm}
\hspace{-5cm}
\begin{tikzpicture}[scale=2, transform shape]
    \node[diamond, draw=black, line width=1pt, fill=blue, minimum size=1.6cm] (D) {};
        \node[anchor=south west] at ($(D.45)+(-.25,0)$) {\tiny$\mathscr{I}^+$};
        \node[anchor=north west] at ($(D.-45)+(-.25,.15)$) {\tiny$\mathscr{I}^-$};
        \node[anchor=east] at ($(D.-180)$) {$\;\;$};
    \node[diamond, draw=black, line width=1pt, fill=red, minimum size=.8cm] {};
    \hspace{2cm}
    \raisebox{-0.2cm}{\huge$\prec$}
    \hspace{2cm}
    \node[diamond, draw=black, line width=1pt, fill=blue, minimum size=1.6cm] (D2) {};
        \node[anchor=south west] at ($(D.45)+(-.25,0)$) {\tiny$\mathscr{I}^+$};
        \node[anchor=north west] at ($(D.-45)+(-.25,.15)$) {\tiny$\mathscr{I}^-$};
        \node[anchor=west] at ($(D2.0)$) {\textcolor{blue}{$\;\;$}};
\end{tikzpicture}
    \caption{While a thermal state cannot be excited (indicated by $\not \prec$) from the Minkowski vacuum globally, 
    one expects that this should be possible in a local subregion, due to the absence of infrared effects.}
    \label{fig:thermalstates}
\end{figure}
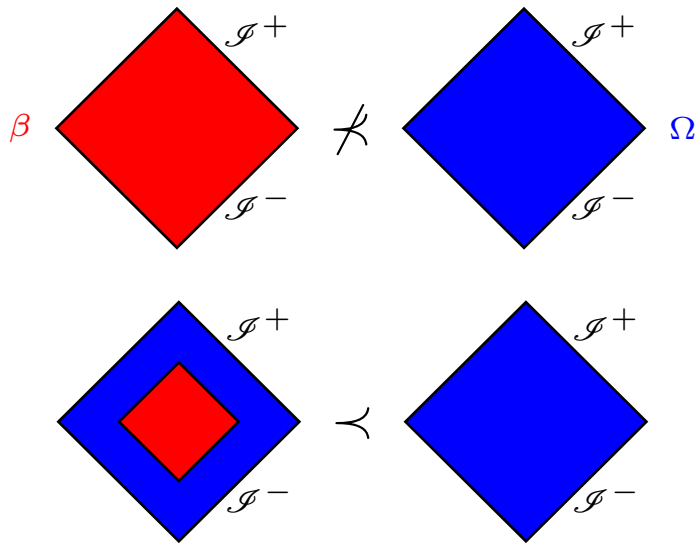

In the context of (generalized) free field theory, the work of \cite{Powers:quasi, Araki:quasi, VanDaele:quasi, Araki-Yamagami} proved a theorem determining when two zero-mean Gaussian states $\omega_1$ and $\omega_2$ can be mutually excited out of one another.
In honor of the paper \cite{Araki-Yamagami} that established the result in its most general form, this is typically called the ``Araki-Yamagami theorem.''
In Klein-Gordon theory, Verch \cite{Verch:hadamard} used this theorem to establish that any two ``Hadamard'' states --- i.e., Klein-Gordon states that are well behaved in the ultraviolet --- can be mutually excited out of one another within compact regions.

Interestingly, if one reads the papers \cite{Powers:quasi, Araki:quasi, VanDaele:quasi, Araki-Yamagami, Verch:hadamard}, one will not see a transparent connection between the questions posed there and the ``excitability'' questions we have introduced above.
Instead, the Araki-Yamagami theorem concerns the problem of \textit{quasiequivalence of sectors}.
In free field theory, once one chooses a GNS representation $\H_{\omega},$ the abstract algebra $\A_0$ of free fields can be completed to a von Neumann algebra $\A_{\omega}$.
Within this algebra, the original abstract fields appear as the unitary ``Weyl operators'' $e^{i \phi[f]},$ where $f$ is a real smearing function.
Two abstract states $\omega_1$ and $\omega_2$ are said to be quasiequivalent if the von Neumann algebras $\A_{\omega_1}$ and $\A_{\omega_2}$ are related by an algebraic isomorphism that does not modify any of the Weyl operators.
As we will explain in section \ref{sec:excitability}, quasiequivalence implies mutual excitability, but the argument for this implication is not direct, and it is not apparent from the analysis of \cite{Powers:quasi, Araki:quasi, VanDaele:quasi, Araki-Yamagami} whether mutual excitability implies quasiequivalence.

The point of this paper is to revisit the quasiequivalence work in  \cite{Powers:quasi, Araki:quasi, VanDaele:quasi, Araki-Yamagami} using a different perspective and an updated set of tools.
Our motivation for doing so comes from problems like the geometric modular flow conjecture \cite{Jensen:JSS, Sorce:analyticity, Caminiti:2025hjq}, and the holographic construction of large-$N$ sectors with inequivalent bulk duals \cite{Leutheusser:HSMT, Witten:largeN, Leutheusser:subalgebra}.
These are interesting contemporary problems that would profit from a better understanding of the conditions under which one state can be excited into another.
The main new contribution of our paper is a series of intermediate results that establish criteria for excitability even when full quasiequivalence is not satisfied, and a demonstration that excitability always implies quasiequivalence for Gaussian states in free theories.
Additionally, because our approach is motivated by a different set of questions and makes use of information-theoretic tools, we also believe that our work provides a substantial conceptual simplification of the proofs developed in \cite{Powers:quasi, Araki:quasi, VanDaele:quasi, Araki-Yamagami}.

To understand most of this paper, it is necessary that the reader have a background in algebraic free field theory.
To make the paper approachable to readers without this background, we have provided a complete pedagogical introduction to this theory in appendix \ref{app:review}.
The reader may approach section \ref{sec:excitability} without consulting this appendix, but the rest of the paper relies on its ideas.

The plan of the paper is given below.
\begin{itemize}
\item
In section \ref{sec:history}, we provide a discussion of prior work and some justification as to why we think our approach is simpler than existing ones.
The main technical results of the paper are then summarized in section \ref{sec:summary}.
\item
In section \ref{sec:excitability}, we explain the web of connections between quasiequivalence and mutual excitability, using a purely algebraic description of quantum field theory.
We also show that excitability for a generic pair of states is equivalent to excitability of their canonical purifications.
\item
In section \ref{sec:general-necessary}, we establish a few necessary criteria for excitability of Gaussian states, which will be used throughout the rest of the paper.
\item
In section \ref{sec:pure}, we review the conditions under which one pure Gaussian state can be excited out of another, using a modification of arguments from \cite{Shale:unitary, Wald:particle-creation, Wald:s-matrix}.
We also present new proofs of theorems relating the standard ``Bogoliubov'' criteria to other criteria involving the two-point functions.
\item
In section \ref{sec:passing-to-canonical}, we establish concrete formulas relating the two-point functions of Gaussian states to the two-point functions of their canonical purifications.
Combined with the results of the prior section, this gives a general theorem controlling excitability out of states for which the canonical purification is ``pure'' in a certain technical sense.
\item
In section \ref{sec:centrally-pure}, we fill a gap in the generality of our results by treating excitability for ``centrally pure'' states, which are not pure in the technical sense, and which appear when one canonically purifies a ``non-factorial'' state.
\item 
In section \ref{sec:other-way}, we use our prior results to show that for zero-mean Gaussian states in free field theory, one-way excitability always implies mutual excitability; in other words, while it is possible in general to create excitability relations like $\omega_2 \prec \omega_1$ with $\omega_1 \not\prec \omega_2,$ this cannot happen when both states are Gaussian.
\item
Finally, in section \ref{sec:discussion}, we comment on our results, our approach, and future directions.
\end{itemize}
\noindent
In addition to the pedagogical appendix \ref{app:review}, several other appendices fill in technical material that has been deferred to keep the main text readable.
Appendix \ref{app:one-particle-modular} also provides a pedagogical introduction to one-particle modular theory.

\subsection{Discussion of prior work}
\label{sec:history}

The main papers that inspired the present work are \cite{Shale:unitary, Wald:particle-creation, Wald:s-matrix, Powers:quasi, Woronowicz:purification, VanDaele:quasi, Araki:quasi, Araki-Yamagami, Longo:quasi, Conti:quasi}.
In \cite{Shale:unitary, Wald:particle-creation, Wald:s-matrix}, criteria were established for unitary equivalence of GNS representations coming from pure Gaussian states.
In \cite{Powers:quasi}, the problem of quasiequivalence for ``factorial states'' of free fermions was solved using a ``doubling trick'' that was later connected, by \cite{Woronowicz:purification}, to what is now called canonical purification.
In \cite{VanDaele:quasi, Araki:quasi}, a similar technique was applied to free bosons.
The most general theorem encompassing prior work was established by Araki and Yamagami in \cite{Araki-Yamagami}, which relaxed the factorial assumption.\footnote{It was proved in \cite{Verch:algebra} that non-factorial states do not appear in physical settings in Klein-Gordon theory, but (i) no such result is known for generalized free field theories, and (ii) non-factorial features appear in a state-independent way in theories with higher-form symmetries. It therefore seems worthwhile to deal with the non-factorial case here.}
Recently, the interesting papers \cite{Longo:quasi, Conti:quasi} have revisited the Araki-Yamagami theorem in the context of modular theory; we use some aspects of their approach in our arguments below.

This paper provides a complete proof of the general version of the Araki-Yamagami theorem, using techniques that we believe constitute a substantial simplification over prior approaches.\footnote{It is very difficult to quantify ``simplicity,'' so a statement like this one is necessarily ideological. After all, the present paper is over a hundred pages long including appendices! On the other hand, the present paper is mostly self-contained, and is written in a pedagogical ``physics style'' as opposed to a more concise ``mathematical style.'' The Araki-Yamagami paper \cite{Araki-Yamagami} is 56 math-pages long and depends on both conceptual and technical results from the prior papers \cite{Shale:unitary, Segal:distributions, Araki:quasi, Araki-Shiraishi}.}
One main advantage of our approach is that it allows us to discuss conditions for excitability as a stepping stone towards the full question of quasiequivalence; this is useful because it allows us to treat the possibility that $\omega_2$ can be excited out of $\omega_1$, but not vice versa.

Below we summarize the tools that distinguish our approach from previous ones.

\subsubsection{Simplifications as compared to pre-Araki-Yamagami work}

Prior to the work by Araki and Yamagami, quasiequivalence was established in the factorial setting in the papers \cite{Shale:unitary, Powers:quasi, Araki:quasi, VanDaele:quasi, Woronowicz:purification}.
Our approach in this setting is very closely related to that introduced by Woronowicz in \cite{Woronowicz:purification}, but with a few improvements.

\begin{enumerate}
	\item We use a characterization of unitary equivalence that is closer to that of Wald \cite{Wald:particle-creation, Wald:s-matrix} than to that of Shale \cite{Shale:unitary}.
	Shale's theorem tells you how to determine whether a particular symplectic transformation is unitarily implementable, so to apply this theorem to the question of whether two states are unitarily equivalent, you must first construct a symplectic transformation between the states.
	By contrast, Wald's approach is direct; the symplectic transformation is constructed ``along the way'' as part of a different proof strategy.
	\item We use the construction from \cite{Sorce:paper1} of canonical purifications on $*$-algebras to work directly with field operators $\phi[f]$ when convenient, instead of having to translate everything into the bounded ``Weyl operators'' $e^{i \phi[f]}.$
    This simplifies several key arguments.
	\item Even though we eventually show that one-way excitability $\omega_2 \prec \omega_1$ always implies quasiequivalence for Gaussian states, we find that it is conceptually cleaner to study the conditions for one-way excitability directly, rather than seeking the conditions for quasiequivalence from the start.
\end{enumerate}

\subsubsection{Simplifications as compared to Araki-Yamagami}

The real power of our approach comes when simplifying and generalizing the theorem of Araki and Yamagami \cite{Araki-Yamagami}.
Our progress results from the following improvements.

\begin{enumerate}
    \item Following Woronowicz \cite{Woronowicz:purification}, we obtain the ``doubling trick'' used in prior work as a special case of a general canonical purification formula \cite{Sorce:paper1}.
    We extend the validity of Woronowicz's approach by applying canonical purifications to non-factorial states as in \cite{Sorce:paper1}.
    \item Since the canonical purifications of general states are ``centrally pure,'' we tackle the question of excitability for general states by passing to the canonical purification, then determining the conditions under which centrally pure states can be excited out of one another.
    There is a sense in which any centrally pure state breaks up into a ``pure piece'' and an ``abelian piece,'' and we decompose the excitability problem into these two components.
    \item We produce a version of Wald's characterization of unitary equivalence that applies to the ``abelian'' excitability problem just mentioned.
    This allows us to avoid invoking Segal's lemma on absolute continuity of Gaussian random processes \cite{Segal:distributions}, so that no part of the present work is dependent on results from constructive quantum field theory.
\end{enumerate}

\subsection{Summary of results}
\label{sec:summary}

The details of the paper are rather technical, so we summarize here all essential results that may be useful in future applications.
We write this section, and only this section, in definition-theorem-proof format, where after each theorem we point to the place in the paper where the proof can be found.
Results are presented in the order in which they are proved, with dependencies between theorems pointed out, so that the logical structure of the arguments is clear.

\subsubsection{Abstract results about excitability}

\begin{definition}
\label{def:exc}
	Let $\A_0$ be an abstract $*$-algebra, and let $\omega : \A_0 \to \comps$ be a state.
	Within the GNS representation $\H_{\omega}$, one can define a von Neumann algebra $\A_{\omega}$ via the procedure from \cite[appendix A]{Sorce:paper1}.
	
	Given two states $\omega_1, \omega_2 : \A_0 \to \comps,$ we say 
    \begin{equation*}
        \textit{$\omega_2$ can be excited out of $\omega_1$}, 
    \end{equation*}
    written $\omega_2 \prec \omega_1,$ if there is a density matrix $\rho_{\omega_2}$ on $\H_{\omega_1}$ that reproduces all $\omega_1$ correlation functions.
	Concretely, for any $a \in \A_0$ with polar decomposition $u_{a}^{(1)} |a|^{(1)}$ on $\H_{\omega_1}$ and $u_{a}^{(2)} |a|^{(2)}$ on $\H_{\omega_2},$ we require
	\begin{equation}
		\tr_{\H_{\omega_1}}(\rho_{\omega_2} u_{a}^{(1)}) = \langle \omega_2 | u_a^{(2)} |\omega_2\rangle,
	\end{equation}
	as well as
	\begin{equation}
		\tr_{\H_{\omega_1}}(\rho_{\omega_2} f(|a|^{(1)})) = \langle \omega_2 | f(|a|^{(2)}) |\omega_2\rangle, \quad f \text{ bounded},
	\end{equation}
	and similarly for all finite products of operators like $u_a$ and $f(|a|),$ as well as $u_a^{\dagger}.$
\end{definition}

\begin{prop} \label{prop:alpha-definition}
	The above definition, $\omega_2 \prec \omega_1,$ is equivalent to the statement that the map $\alpha : \A_{\omega_1} \to \A_{\omega_2}$ defined by
	\begin{align}
		\alpha(u_a^{(1)})
			& = u_a^{(2)}, \\
		\alpha(f(|a|^{(1)}))
			& = f(|a|^{(2)})\quad f \text{ bounded}
	\end{align}
	extends, via linearity, multiplication, and adjoints, to a normal homomorphism from $\A_{\omega_1}$ to $\A_{\omega_2}.$
	(A homomorphism is said to be normal if it is continuous with respect to the ultraweak topology.)
\end{prop}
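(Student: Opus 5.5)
Both directions rest on one fact. Let $\mathcal{G}_i \subset \A_{\omega_i}$ be the $*$-algebra generated by all $u_a^{(i)}$, $u_a^{(i)\dagger}$ and $f(|a|^{(i)})$ with $f$ bounded. The construction of \cite[appendix A]{Sorce:paper1} defines $\A_{\omega_i}$ as the von Neumann algebra generated by these operators. So $\mathcal{G}_i$ is ultraweakly dense in $\A_{\omega_i}$, by the bicommutant theorem. Moreover, $\mathcal{G}_i$ contains $\Omega_i := |\omega_i\rangle$ as a cyclic vector. Indeed, $\H_{\omega_i}$ is the closure of $\A_0|\omega_i\rangle$, and each $a|\omega_i\rangle = u_a^{(i)}|a|^{(i)}|\omega_i\rangle$ is a limit of $u_a^{(i)} f_n(|a|^{(i)})|\omega_i\rangle$ with bounded truncations $f_n$.

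For the direction ``normal homomorphism $\Rightarrow$ $\omega_2 \prec \omega_1$,'' suppose $\alpha$ extends to a normal homomorphism. Then $\varphi(x) := \langle \omega_2 | \alpha(x) |\omega_2\rangle$ is a normal state on $\A_{\omega_1} \subseteq B(\H_{\omega_1})$. Every normal state on a von Neumann algebra acting on $\H$ is the restriction of a normal state on $B(\H)$; this follows by writing it as a vector state in a suitable amplification, or from the standard characterization of normal functionals as $\sigma$-weakly continuous and Hahn--Banach in the predual. Hence $\varphi(x) = \tr(\rho x)$ for some density matrix $\rho$ on $\H_{\omega_1}$. By construction $\alpha$ sends each generator and each finite product of generators of $\mathcal{G}_1$ to the corresponding element of $\mathcal{G}_2$. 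Therefore $\rho$ reproduces exactly the expectation values required in Definition \ref{def:exc}.

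For the converse, fix $\rho$ as in the definition and set $\varphi(x) = \tr(\rho x)$, a normal state on $\A_{\omega_1}$. Let $(\pi_\varphi, \H_\varphi, \xi_\varphi)$ be its GNS triple; because $\varphi$ is normal, $\pi_\varphi$ is a normal representation. For each $x \in \mathcal{G}_1$, let $\tilde{x} \in \mathcal{G}_2$ be the same word with the superscript changed. Define $V \pi_\varphi(x)\xi_\varphi = \tilde{x}\,\Omega_2$. The hypothesis gives $\varphi(y^\dagger x) = \langle \omega_2 | \tilde{y}^\dagger \tilde{x} | \omega_2\rangle$ for all words. So $V$ is well defined and isometric on the dense subspace $\pi_\varphi(\mathcal{G}_1)\xi_\varphi$, and its range is $\mathcal{G}_2\Omega_2$, which is dense. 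Thus $V$ extends to a unitary $\H_\varphi \to \H_{\omega_2}$ that intertwines $\pi_\varphi(x)$ with $\tilde{x}$ on $\mathcal{G}_1$.

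Now set $\alpha(x) := V\pi_\varphi(x)V^\dagger$ for all $x \in \A_{\omega_1}$. This is a normal $*$-homomorphism, being a composition of a normal representation with a unitary conjugation. It agrees with the prescribed map on the generators. Its range lies in $\A_{\omega_2}$: it maps $\mathcal{G}_1$ into $\mathcal{G}_2$, and by normality together with density of $\mathcal{G}_1$ it maps $\A_{\omega_1}$ into the ultraweak closure of $\mathcal{G}_2$, which is $\A_{\omega_2}$. Uniqueness follows from density and normality, so ``extends'' is unambiguous. In particular, the extension is automatically consistent: it never identifies elements in a way that contradicts linearity. That consistency is the main worry in the statement, and the GNS construction handles it for free.

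The main obstacle is the first paragraph, namely confirming that the generators form a $*$-algebra that is both ultraweakly dense in $\A_{\omega_i}$ and has $\Omega_i$ as a cyclic vector. The delicate points are that $\A_0$ consists of possibly unbounded operators and that products of the $f(|a|)$ for different $a$ must be allowed. I would handle this by recalling precisely the definition of $\A_\omega$ in \cite[appendix A]{Sorce:paper1}, and by using the truncation argument above to pass from $a|\omega\rangle$ to bounded functional-calculus vectors.
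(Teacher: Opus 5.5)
Your proof is correct and follows the paper's route. For one direction, normal states on $\A_{\omega_1}$ are exactly those given by density matrices on $\H_{\omega_1}$. For the converse, the homomorphism is conjugation by an intertwiner into $\H_{\omega_2}$ from a normal representation of $\A_{\omega_1}$ in which $\omega_2$ is a vector state.

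The only difference is which representation you use. You take the GNS representation of the normal state defined by $\rho$. This gives a unitary and lets you work entirely with bounded words in the polar-decomposition generators, with your truncation argument supplying cyclicity. The paper instead passes to standard form, takes a vector representative of $\omega_2$ there, and uses a partial isometry defined through the action of $\A_0$.
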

\begin{proof}
	See section \ref{sec:one-way-excitability}.
\end{proof}

\begin{remark} \label{rem:alpha-bounded}
	If $\A_0$ is a C$^*$ algebra and $\omega : \A_0 \to \comps$ is norm-continuous, then for every $a \in \A_0,$ the representative operators on $\H_{\omega_1}$ and $\H_{\omega_2}$ are bounded.
	In this case one does not need any of the fancy polar decomposition language used above, and one says $\omega_2 \prec \omega_1$ if there is a density matrix $\rho_{\omega_2}$ satisfying
	\begin{equation}
		\tr_{\H_{\omega_1}}(\rho_{\omega_2} a) = \langle \omega_2 | a | \omega_2\rangle \qquad \text{for any $a \in \A_0$.}
	\end{equation}
	In this setting, the excitability relation $\omega_2 \prec \omega_1$ is equivalent to ultraweak continuity of the ``identity map'' that takes the representative of $a$ on $\H_{\omega_1}$ to the representative of $a$ on $\H_{\omega_2}.$
\end{remark}

\begin{definition}
	A state $\omega : \A_0 \to \comps$ is said to be factorial if $\A_\omega$ is a factor.
\end{definition}

\begin{prop} \label{prop:two-way-excitability}
	If $\omega_1$ is factorial and we have $\omega_2 \prec \omega_1,$ then we automatically have $\omega_1 \prec \omega_2.$
\end{prop}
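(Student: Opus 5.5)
Starting from Proposition \ref{prop:alpha-definition}, $\omega_2 \prec \omega_1$ means the generator-preserving map $\alpha : \A_{\omega_1} \to \A_{\omega_2}$ extends to a normal $*$-homomorphism. My plan is to show that, when $\A_{\omega_1}$ is a factor, $\alpha$ is in fact a normal $*$-isomorphism onto $\A_{\omega_2}$. Its inverse $\alpha^{-1}$ then sends $u_a^{(2)} \mapsto u_a^{(1)}$ and $f(|a|^{(2)}) \mapsto f(|a|^{(1)})$. A $*$-isomorphism between von Neumann algebras is automatically normal, since it preserves order and hence suprema of bounded increasing nets. So $\alpha^{-1}$ is exactly the map whose normality Proposition \ref{prop:alpha-definition} requires for $\omega_1 \prec \omega_2$, and the statement follows.

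\textbf{Step 1: surjectivity.} The image of a von Neumann algebra under a normal $*$-homomorphism is ultraweakly closed, hence a von Neumann algebra. It contains every $u_a^{(2)}$ and every $f(|a|^{(2)})$. By the construction of \cite[appendix A]{Sorce:paper1}, $\A_{\omega_2}$ is the von Neumann algebra generated by exactly these operators, so $\alpha$ is onto.

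\textbf{Step 2: injectivity.} The kernel of a normal $*$-homomorphism is an ultraweakly closed two-sided ideal of $\A_{\omega_1}$. Every such ideal has the form $\A_{\omega_1} z$ for a central projection $z$. Because $\A_{\omega_1}$ is a factor, either $z = 0$ or $z = 1$. The case $z=1$ would make $\alpha \equiv 0$, which is impossible because $\alpha$ is unital: taking $f \equiv 1$ gives $\alpha(1) = 1 \neq 0$ on $\H_{\omega_2}$. Hence $z = 0$, $\alpha$ is injective, and so it is a $*$-isomorphism.

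\textbf{Main obstacle.} The delicate point is bookkeeping rather than depth. One must make sure that the inverse map is literally the map $\A_{\omega_2} \to \A_{\omega_1}$ that Proposition \ref{prop:alpha-definition} asks about, with the roles of $\omega_1$ and $\omega_2$ exchanged. This requires that $\alpha$ is well defined on the whole generated algebra and that the algebras generated by the symbols $u_a$ and $f(|a|)$ correspond under $\alpha$. Both hold once $\alpha$ is a bijective $*$-homomorphism fixing these generators. If Proposition \ref{prop:alpha-definition} is phrased in terms of extension from the $*$-algebra generated by these symbols, then I would observe that $\alpha^{-1}$ restricted to that $*$-algebra is the required map, and its extension is $\alpha^{-1}$ itself. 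That extension is normal by the automatic normality of $*$-isomorphisms noted above.
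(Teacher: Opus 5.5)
Your proof is correct, but it takes a different route from the paper's. The paper does not argue abstractly from Proposition~\ref{prop:alpha-definition}. It passes to standard form so that $\omega_2$ has a vector representative $|\omega_2\rangle_1\in\H_{\omega_1}$. It then realizes the field-preserving homomorphism as the compression $x\mapsto V^\dagger x V$, where $Va|\omega_2\rangle_2=a|\omega_2\rangle_1$, and proves injectivity by hand. If $V^\dagger L V=0$, then $LP'=0$ for $P'=VV^\dagger\in\A_{\omega_1}'$. For $L\neq 0$ this forces $\overline{\A_{\omega_1}'\A_{\omega_1}|\omega_2\rangle_1}\neq\H_{\omega_1}$, which is impossible because $\A_{\omega_1}\vee\A_{\omega_1}'=\B(\H_{\omega_1})$ for a factor.

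You instead quote two structural facts: the image of a normal $*$-homomorphism is ultraweakly closed, and every ultraweakly closed two-sided ideal has the form $\A_{\omega_1}z$. You add automatic normality of $*$-isomorphisms. The two injectivity arguments are the same fact in different clothing. The paper's kernel $\{L\in\A_{\omega_1} : LP'=0\}$ is exactly $\A_{\omega_1}(1-z)$, where $z$ is the central support of $P'$, namely the projection onto $\overline{\A_{\omega_1}'\A_{\omega_1}|\omega_2\rangle_1}$. The paper's footnote exercise is a hands-on proof that a nonzero kernel forces $z\neq 1$, which a factor forbids.

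Your version is shorter and makes surjectivity explicit through closedness of the image. It is also precisely the argument the paper itself uses later, in section~\ref{sec:excitability-breakdown}, to show that $\alpha_{\pm i}$ is injective. Its cost is that the analytic content sits inside Proposition~\ref{prop:alpha-definition} and the cited ideal theory. The paper's version is more elementary and self-contained. It also exhibits the isomorphism concretely as compression by $V$, the form reused in section~\ref{sec:canonical-checking}.
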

\begin{proof}
	See section \ref{sec:one-way-excitability}.
\end{proof}

\begin{prop}
	If $\omega_1$ is not factorial, then there always exists a state $\omega_2$ with $\omega_2 \prec \omega_1$ but $\omega_1 \not\prec \omega_2.$
\end{prop}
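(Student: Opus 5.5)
If $\omega_1$ is not factorial, the centre $\mathcal{Z} = \A_{\omega_1} \cap \A_{\omega_1}'$ contains a nontrivial projection $P$, with $P \neq 0, 1$. The plan is to take $\omega_2$ to be the normalized restriction of $\omega_1$ to the subspace $P\H_{\omega_1}$,
\begin{equation*}
	\omega_2(a) = \frac{\langle \omega_1 | P a | \omega_1 \rangle}{\langle \omega_1 | P | \omega_1\rangle},
\end{equation*}
defined on the field operators through the polar-decomposition data of Definition \ref{def:exc}. Since the GNS vector $|\omega_1\rangle$ is cyclic for $\A_{\omega_1}$ and $P \in \A_{\omega_1}'$, we have $P|\omega_1\rangle \neq 0$, so this is well defined.

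\textbf{Step 1: $\omega_2 \prec \omega_1$.} The density matrix
\begin{equation*}
	\rho_{\omega_2} = \frac{P|\omega_1\rangle\langle\omega_1|P}{\langle\omega_1|P|\omega_1\rangle}
\end{equation*}
on $\H_{\omega_1}$ reproduces $\omega_2$ on every $u_a$, every $f(|a|)$, and all their products, because $P$ commutes with $\A_{\omega_1}$.

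\textbf{Step 2: identify the GNS data of $\omega_2$.} The vector $P|\omega_1\rangle$ is cyclic for $\A_{\omega_1}$ acting on $P\H_{\omega_1}$. By uniqueness of the GNS construction (in the $*$-algebra version of \cite{Sorce:paper1}), $\H_{\omega_2} \cong P\H_{\omega_1}$ with $\A_{\omega_2} \cong \A_{\omega_1}P$. The map $\alpha$ of Proposition \ref{prop:alpha-definition} is then the reduction $x \mapsto xP$. This map is normal and surjective, but it has nontrivial kernel $\A_{\omega_1}(1-P)$.

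\textbf{Step 3: show $\omega_1 \not\prec \omega_2$.} Suppose instead that $\omega_1 \prec \omega_2$. Proposition \ref{prop:alpha-definition} then gives a normal homomorphism $\beta : \A_{\omega_2} \to \A_{\omega_1}$ sending each generator $u_a^{(2)}$ to $u_a^{(1)}$ and each $f(|a|^{(2)})$ to $f(|a|^{(1)})$. The composite $\beta\circ\alpha$ is a normal homomorphism that fixes all the generators of $\A_{\omega_1}$. These generators, closed under the algebraic operations, form a $*$-algebra that is ultraweakly dense in $\A_{\omega_1}$. By ultraweak continuity, $\beta\circ\alpha = \mathrm{id}$ on $\A_{\omega_1}$, so $\alpha$ is injective. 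But $\alpha(1-P) = (1-P)P = 0$ with $1-P \neq 0$, a contradiction.

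\textbf{Main obstacle.} The delicate point is Step 2 in the unbounded setting. One must check that the polar-decomposition pieces and the functional calculus on $\H_{\omega_2}$ agree with the restrictions of those on $\H_{\omega_1}$ to $P\H_{\omega_1}$. This holds because $P$ commutes with the affiliated closed operators representing $a$, so it reduces them, and polar decomposition and bounded functional calculus commute with reduction by such a projection. One also needs the density statement used in Step 3, which should be part of the construction of $\A_\omega$ in \cite[appendix A]{Sorce:paper1}.
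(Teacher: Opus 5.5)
Your proposal is correct and follows essentially the paper's proof: the same central-projection state $\omega_2$ represented by $P|\omega_1\rangle/\lVert P|\omega_1\rangle\rVert$, the same field-preserving map given by compression to $P\H_{\omega_1}$ (the paper's conjugation by $V^{\dagger}$), and the same obstruction that this map annihilates $1-P$. Your Step 3 just spells out the composition argument that the paper invokes implicitly when it says, citing section \ref{sec:mutual-excitability}, that the map ``would have to be a bijection.''
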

\begin{proof}
	See section \ref{sec:one-way-excitability}.
\end{proof}

\begin{prop}
	If $\omega_2 \prec \omega_1$ and $\omega_1 \prec \omega_2,$ then the ``field-preserving map'' $\alpha$ from proposition \ref{prop:alpha-definition} is a normal isomorphism from $\A_{\omega_1}$ onto $\A_{\omega_2}.$
	In this case we write $\omega_1 \sim \omega_2$ and say the two states are quasiequivalent.
    \label{prop:mutual-exc-is-quasi}
\end{prop}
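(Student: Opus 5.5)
Proposition \ref{prop:alpha-definition} gives both directions at once. From $\omega_2 \prec \omega_1$ we get a normal homomorphism $\alpha : \A_{\omega_1} \to \A_{\omega_2}$ with $\alpha(u_a^{(1)}) = u_a^{(2)}$ and $\alpha(f(|a|^{(1)})) = f(|a|^{(2)})$. From $\omega_1 \prec \omega_2$ we get a normal homomorphism $\beta : \A_{\omega_2} \to \A_{\omega_1}$ with $\beta(u_a^{(2)}) = u_a^{(1)}$ and $\beta(f(|a|^{(2)})) = f(|a|^{(1)})$. The plan is to show that $\beta \circ \alpha = \mathrm{id}_{\A_{\omega_1}}$ and $\alpha \circ \beta = \mathrm{id}_{\A_{\omega_2}}$. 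Then $\alpha$ is a bijective normal $*$-homomorphism with normal inverse $\beta$, i.e.\ a normal isomorphism of $\A_{\omega_1}$ onto $\A_{\omega_2}$. It is field-preserving by construction.

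\textbf{Step 1: agreement on generators.} By definition, $\beta\circ\alpha$ fixes every generator $u_a^{(1)}$ and $f(|a|^{(1)})$, with $a \in \A_0$ and $f$ bounded. Since $\beta \circ \alpha$ is a $*$-homomorphism, it also fixes their adjoints, finite products, and linear combinations. So $\beta\circ\alpha$ is the identity on the unital $*$-algebra $\mathcal{D}_1$ generated by these operators. The same argument shows $\alpha\circ\beta$ is the identity on the analogous $*$-algebra $\mathcal{D}_2$.

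\textbf{Step 2: extension by continuity.} A composition of normal maps is normal, so $\beta\circ\alpha$ and the identity are both ultraweakly continuous on $\A_{\omega_1}$. Two ultraweakly continuous maps that agree on $\mathcal{D}_1$ agree on its ultraweak closure. It therefore suffices that $\mathcal{D}_1$ is ultraweakly dense in $\A_{\omega_1}$. This should follow from the construction of $\A_{\omega}$ in \cite[appendix A]{Sorce:paper1}, which defines $\A_\omega$ as the von Neumann algebra generated by these polar-decomposition pieces. By the bicommutant theorem, $\A_\omega = \mathcal{D}''$ is then the ultraweak closure of $\mathcal{D}$.

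\textbf{Main obstacle.} The delicate point is making Step 2 rigorous: one must confirm that $\A_{\omega}$ is exactly the von Neumann algebra generated by the $u_a$ and $f(|a|)$, and not something larger. I expect this to hold by definition, in which case the density statement is immediate.

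A secondary check is that $\alpha$ and $\beta$ are genuinely well defined on all of $\A_{\omega_i}$. This is already guaranteed by Proposition \ref{prop:alpha-definition}, which asserts that the prescription extends to a normal homomorphism; I would rely on that and not reprove it.

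Injectivity and surjectivity of $\alpha$ then follow from the two-sided inverse $\beta$, with no separate argument. One could instead argue via kernels, since kernels of normal homomorphisms are $z\A$ for central projections $z$, but the inverse argument avoids that.
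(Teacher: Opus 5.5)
Your argument is correct and is essentially the paper's. The paper likewise builds field-preserving normal homomorphisms $\A_{\omega_1}\to\A_{\omega_2}\to\A_{\omega_1}$ (inline, via standard form and a partial isometry $U$), notes that they compose to the identity, and concludes that $\alpha$ is an isomorphism. You instead take the two maps from Proposition \ref{prop:alpha-definition}, with no circularity since that proposition's proof does not use this one. You also make explicit the extension-by-continuity step, which the paper leaves implicit; it is valid because $\A_\omega$ is by definition the von Neumann algebra generated by the $u_a$, $u_a^\dagger$ and $f(|a|)$.
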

\begin{proof}
	See section \ref{sec:mutual-excitability}.
\end{proof}

\begin{definition}
	The canonical purification $\hat{\omega}$ is defined on the algebraic tensor product $\A_0 \otimes \A_0^{\text{op}}$ by
	\begin{equation} \label{eq:omega-hat-definition}
		\hat{\omega}(a \otimes b)
			\equiv \langle a^{\dagger} \omega | J_{\omega} |b^{\dagger} \omega\rangle,
	\end{equation}
	where $J_{\omega}$ is the modular conjugation of the GNS vector $|\omega\rangle \in \H_{\omega}$ with respect to the von Neumann algebra $\A_{\omega}$.
    \label{def:omega_hat}
\end{definition}

\begin{remark}
Some intuition for this definition comes from the fact that a mixed state $\omega$ induces a nontrivial commutant ${\cal A}_{\omega}'$ in the GNS representation, which can be thought of as containing the ``purifying degrees of freedom'' introduced by the GNS construction.
When $|\omega\rangle$ is separating for $\A_{\omega}$, one has ${\cal A}_{\omega}'=J_{\omega}{\cal A}_{\omega} J_{\omega}$, so the purifying degrees of freedom can be identified with a second copy of ${\cal A}_{\omega}$.
In the same way that $\A_{\omega}$ is generated by elements $a \in \A_0,$ the commutant $\A_{\omega}'$ is then generated by elements of the opposite algebra $\A_0^{\text{op}}$ via the identification
\begin{equation}
    b^{\text{op}} \leftrightarrow J_{\omega} b^{\dag} J_{\omega}\,.
\end{equation}
This motivates the definition in equation \eqref{eq:omega-hat-definition} if we rewrite it heuristically\footnote{We say ``heuristically'' because when the operator $a$ is unbounded, one must be careful about domain issues in writing down the vector $a (J_{\omega} b^{\dagger} J_{\omega}) |\omega\rangle.$} as
\begin{equation}
    \hat{\omega}(a \otimes b^{\text{op}})
			=
        \langle \omega | a (J_{\omega} b^{\dag} J_{\omega}) |\omega\rangle\,.
        \label{eq:conj_by_j}
\end{equation}
For more details, see \cite{Sorce:paper1}.
\label{rem:omega_hat}
\end{remark}

\begin{remark}
	It was shown in \cite{Sorce:paper1} that $\hat{\omega}$ genuinely defines a state, i.e., it is a positive linear functional.
	The state $\hat{\omega}$ is not actually always pure in the technical sense.
	Rather, it is pure if and only if $\omega$ is factorial.
	More generally, when $\omega$ is a non-factorial state, the state $\hat{\omega}$ is ``centrally pure,'' meaning one has the inclusion of von Neumann algebras
	\begin{equation}
		(\A_0 \otimes \A_0^{\text{op}})_{\hat{\omega}}' \subseteq (\A_0 \otimes \A_0^{\text{op}})_{\hat{\omega}}.
	\end{equation}
	Intuitively, this means that while $\hat{\omega}$ can introduce a commutant for $\A_0 \otimes \A_0^{\text{op}}$ in the GNS representation, that commutant contains only certain limits of operators that were already in $\A_0 \otimes \A_0^{\text{op}}.$
	In this setting, the GNS construction does not introduce any nontrivial purifying degrees of freedom.
\label{rem:centrally-pure-def}
\end{remark}

\begin{theorem} \label{thm:canonical-checking}
	One has $\omega_2 \prec \omega_1$ if and only if one has $\hat{\omega}_2 \prec \hat{\omega}_1.$
\end{theorem}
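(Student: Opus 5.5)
The plan is to work throughout with the characterization of excitability in Proposition \ref{prop:alpha-definition}: $\omega_2 \prec \omega_1$ holds exactly when the field-preserving map $\alpha : \A_{\omega_1} \to \A_{\omega_2}$ extends to a normal homomorphism. For the canonical purifications, the relevant algebras are $(\A_0 \otimes \A_0^{\text{op}})_{\hat\omega_i}$. The first step is to identify these concretely. Following \cite{Sorce:paper1}, the GNS space of $\hat\omega_i$ can be identified with the cyclic subspace $\overline{\A_{\omega_i}\A_{\omega_i}'|\omega_i\rangle}$ of $\H_{\omega_i}$. Under this identification, $a\otimes 1$ acts as $a$ and $1\otimes b^{\text{op}}$ acts as $J_{\omega_i} b^{\dagger} J_{\omega_i}$, as in equation \eqref{eq:conj_by_j}. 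Hence $(\A_0\otimes\A_0^{\text{op}})_{\hat\omega_i} = \A_{\omega_i}\vee\A_{\omega_i}'$, restricted to the support of $|\omega_i\rangle$.

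\textbf{($\Leftarrow$)} Suppose $\hat\omega_2\prec\hat\omega_1$, so the field-preserving map $\hat\alpha$ on the doubled algebra is a normal homomorphism. Its restriction to the von Neumann subalgebra generated by $\A_0\otimes 1$ is still a normal homomorphism. That subalgebra is $\A_{\omega_1}$, compressed to the support of $|\omega_1\rangle$; since $|\omega_1\rangle$ is cyclic for $\A_{\omega_1}$, the compression by a projection in the commutant that is cyclic for... is faithful. Its image is $\A_{\omega_2}$, similarly compressed. The restricted map sends $u_a^{(1)}\mapsto u_a^{(2)}$ and $f(|a|^{(1)})\mapsto f(|a|^{(2)})$, so it is the map $\alpha$ of Proposition \ref{prop:alpha-definition}, and therefore $\omega_2\prec\omega_1$.

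\textbf{($\Rightarrow$)} Suppose $\alpha:\A_{\omega_1}\to\A_{\omega_2}$ is a normal homomorphism. Its kernel is $\A_{\omega_1}(1-z)$ for a central projection $z$, and $\alpha$ restricts to a normal isomorphism of $\A_{\omega_1}z$ onto $\A_{\omega_2}$. The key point is that the modular conjugations intertwine with $\alpha$. I would use the standard fact that any normal isomorphism between von Neumann algebras in standard form is implemented by a unique unitary $U$ between their standard forms (the natural positive cones), and that $U J_1 = J_2 U$. Here $\A_{\omega_1} z$ is in standard form on $z J_{\omega_1} z J_{\omega_1}\H_{\omega_1}$, with cone generated by $z\,|\omega_1\rangle$, and $\A_{\omega_2}$ is in standard form on the support of $|\omega_2\rangle$. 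Then $U (J_1 b J_1) U^\dagger = J_2\,\alpha(b)\,J_2$, so $\hat\alpha(a\, J_1 b^\dagger J_1)\equiv \alpha(a)\,J_2\alpha(b^\dagger)J_2$ is implemented as $\operatorname{Ad}U$ composed with compression by the projection $z J_1 z J_1$ in the commutant. This map is manifestly normal, and it sends the fields of $\A_0\otimes\A_0^{\text{op}}$ to the corresponding fields. Hence $\hat\omega_2\prec\hat\omega_1$.

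The main obstacle is the standard-form bookkeeping when the GNS vectors are not separating. In that case $J_{\omega_i}$ is defined only on a support subspace, and one must check that the cyclic-subspace identification and the compression by $zJ_1zJ_1$ are compatible with the definition \eqref{eq:omega-hat-definition}. A related issue is the unbounded fields: one has to verify that polar decompositions of $a\otimes b^{\text{op}}$-type generators are carried correctly by $\hat\alpha$. I would handle this by working only with bounded functions of the generators $a\otimes 1$ and $1\otimes b^{\text{op}}$ separately, which strongly commute, and then using that they generate the whole algebra.
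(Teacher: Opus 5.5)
\textbf{There is a genuine gap: your concrete model of $\hat\omega_i$ is wrong when $|\omega_i\rangle$ is not separating.} You identify $\H_{\hat\omega_i}$ with $\overline{\A_{\omega_i}\A_{\omega_i}'|\omega_i\rangle}\subseteq\H_{\omega_i}$ and let $1\otimes b^{\text{op}}$ act as $J_{\omega_i}b^\dagger J_{\omega_i}$. This is correct only when $|\omega_i\rangle$ is separating. Otherwise $J_{\omega_i}$ is merely a partial isometry, and the assignment is not even a unital representation of $\A_0\otimes\A_0^{\text{op}}$.

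For example, take $\omega$ pure. Then $J_\omega$ is the rank-one antilinear map $\psi\mapsto\langle\psi|\omega\rangle\,|\omega\rangle$, so $J_\omega b^\dagger J_\omega=\omega(b)\,|\omega\rangle\langle\omega|$. In particular $1\otimes 1^{\text{op}}\mapsto|\omega\rangle\langle\omega|$, which does not commute with $\A_\omega=\B(\H_\omega)$. The correct answer is $\hat\omega(a\otimes b^{\text{op}})=\omega(a)\omega(b)$, with $\H_{\hat\omega}\cong\H_\omega\otimes\overline{\H_\omega}$, which is not a subspace of $\H_\omega$.

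For the same reason, $\H_{\omega_1}$ is in general not a standard form for $\A_{\omega_1}$, since $\B(\H)$ acting on $\H$ is not standard. So your step ``$\A_{\omega_1}z$ is in standard form on $zJ_{\omega_1}zJ_{\omega_1}\H_{\omega_1}$'' fails there, and the standard-implementation theorem cannot be applied.

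This is not an edge case. Theorem \ref{thm:canonical-checking} is used in section \ref{sec:factorial-consequences} for factorial, e.g.\ pure, $\omega_1$. Gaussian GNS vectors are non-separating whenever $R$ has eigenvalues $\pm i$. You flag this as the main obstacle, but no compatibility check can rescue an identification that is false.

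\textbf{The missing idea is to pass to standard form.} One shows $\H_{\hat\omega_i}\cong L^2(\A_{\omega_i})$, with $a\otimes 1\mapsto a$ and $1\otimes b^{\text{op}}\mapsto Jb^\dagger J$ for the canonical conjugation $J$ of the standard form. The vector $|\hat\omega_i\rangle$ is the natural-cone representative of $\omega_i$. It is cyclic for $\A_{\omega_i}\vee\A_{\omega_i}'$ because the support of $\omega_i$ in $\A_{\omega_i}$ has central support $1$: $\A_{\omega_i}$ acts faithfully on $\H_{\omega_i}$, where $|\omega_i\rangle$ is cyclic. This is what the paper imports from the structure theorem of \cite{Sorce:paper1} when it says it suffices to treat separating GNS vectors. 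It also supplies the faithfulness you need in $(\Leftarrow)$: there is no compression, only the faithful standard representation of $\A_{\omega_1}$.

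\textbf{With that repair, your $(\Rightarrow)$ argument is correct and differs from the paper's.} The paper builds the implementer by hand. It takes the natural-cone representative $|\omega_2^\natural\rangle_1$, defines $Va|\omega_2\rangle_2=a|\omega_2^\natural\rangle_1$, and proves $V^\dagger J_{\omega_1}V=J_{\omega_2}$ by checking Araki's characterization of $J$ (appendix \ref{app:modular-conjugation-proof}). Conjugation by $V^\dagger$ is then the required map. You instead split off $\ker\alpha=\A_{\omega_1}(1-z)$ and cite Haagerup's uniqueness of standard forms.

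The two constructions agree. The support of $\omega_2\circ\alpha$ in $\A_{\omega_1}$ is exactly $z$, so the paper's $VV^\dagger$ is the central projection $z$ and $V=U^\dagger$. Your version makes the central-projection structure explicit and, on $L^2(\A_{\omega_i})$, needs no separating hypothesis. The paper's version is self-contained modulo Araki's results.

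\textbf{A remaining point about unbounded operators.} Controlling bounded functions of $a\otimes 1$ and $1\otimes b^{\text{op}}$ separately determines $\hat\alpha$. However, definition \ref{def:exc} concerns polar decompositions of arbitrary elements $\sum_k a_k\otimes b_k^{\text{op}}$. Because your map $x\mapsto Uz\,x\,zU^\dagger$ is spatial, it extends to affiliated closed operators. A core argument identifying the closures is still needed; the paper is equally terse on this.
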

\begin{proof}
	See section \ref{sec:canonical-checking}.
	This result generalizes a result due to Woronowicz \cite{Woronowicz:purification} away from the setting of factorial states on C$^*$ algebras.
\end{proof}

\subsubsection{Results about Gaussian states}

\begin{definition}
	On a spacetime $\M$ with an antisymmetric bidistribution $\Omega,$ one says that the \textit{free field algebra} $\A_0$ is the $*$-algebra generated by symbols $\phi[f]$ with $f$ a compactly supported smooth function, subject to the canonical commutation relation
	\begin{equation}
		[\phi[f], \phi[g]] = - i \Omega[f, g].
	\end{equation}
\end{definition}

\begin{remark}
	In Klein-Gordon theory, one also imposes an equation of motion, but this is not at all important for the present work.
\end{remark}

\begin{definition}
	Given a free field algebra $\A_0,$ the state $\omega : \A_0 \to \comps$ is said to be \textit{Gaussian} if all the connected correlators, defined formally via
	\begin{equation}
		\omega_n^c(\phi[f_1] \dots \phi[f_n])
		= \left.\frac{\del^n}{\del t_1 \dots \del t_n} \log\omega(e^{t_1 \phi[f_1]} \dots e^{t_n \phi[f_n]})\right
		|_{t_1=\dots=t_n=0},
	\end{equation}
	vanish for $n \geq 3.$
	
	A Gaussian state is said to be \textit{zero-mean} if the one-point functions vanish, $\omega(\phi[f]) = 0.$
\end{definition}

\begin{remark}
    \textit{For the rest of this paper we will work only with zero-mean Gaussian states.}
	We will often say simply ``Gaussian state.''
	We intend to deal with more general Gaussian states in future work.
\end{remark}

\begin{definition}
	Given a Gaussian state $\omega,$ the corresponding ``phase-space inner product'' $\mu$ on the space $C^{\infty}_0$ of smearing functions is defined by
	\begin{equation}
		\langle f | g \rangle_{\mu}
			\equiv \frac{1}{2} \omega(\phi[f]^* \phi[g] + \phi[g] \phi[f]^*).
	\end{equation}
	The space obtained by quotienting out by null states and completing with respect to $\mu$ is denoted $\K_{\mu}$.
\end{definition}

\begin{remark}
	Many properties of $\K_{\mu}$, and its relation with the GNS space $\H_{\omega},$ are derived in appendix \ref{app:review}.
	In particular, $\H_{\omega}$ splits into $n$-particle sectors $\H_{\omega} = \oplus_{n} \H_{\omega}^{(n)},$ and the one-particle sector maps isometrically into $\K_{\mu}$ via the formula
	\begin{equation}
		\phi[f] |\omega\rangle \mapsto \sqrt{1 - i R} |f\rangle_{\mu},
	\end{equation}
	with $R$ defined by 
	\begin{equation}
		\langle f | R | g\rangle_{\mu} = \frac{1}{2} \Omega[f^*, g].
	\end{equation}
\end{remark}

\begin{theorem} \label{thm:both-pure}
	If $\omega_1$ and $\omega_2$ are pure Gaussian states of a free field algebra, then one has $\omega_2 \prec \omega_1$ if and only if the following two conditions are satisfied:
	\begin{enumerate}[(i)]
		\item One has $\mu_2 \prec \mu_1,$ i.e., there is a constant $c$ such that we have
		\begin{equation}
			\langle f | f \rangle_{\mu_2} \leq c \langle f | f \rangle_{\mu_1}\,, \qquad 
            f \in C_0^{\infty}\,.
		\end{equation}
		\item The operator $Q$ that implements the $\mu_2$ inner product, defined by
		\begin{equation}
			\langle f | Q | g\rangle_{\mu_1} = \langle f | g \rangle_{\mu_2}, \qquad f, g \in C^{\infty}_0,
		\end{equation}
		has the property that $Q-1$ is Hilbert-Schmidt.
		
		(An operator $O$ is said to be Hilbert-Schmidt under the condition $\tr(O^{\dagger} O) < \infty.$)
	\end{enumerate}
\end{theorem}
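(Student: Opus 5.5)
The proof will follow Wald's strategy: construct the candidate vector in $\H_{\omega_1}$ explicitly instead of first building a symplectic transformation and invoking Shale. Since both states are pure, $\A_{\omega_1}$ acts irreducibly on $\H_{\omega_1}$. Because $\alpha$ is a normal homomorphism out of $B(\H_{\omega_1})$, Proposition \ref{prop:alpha-definition} shows that $\omega_2 \prec \omega_1$ is equivalent to $\omega_2$ being a vector state $|\omega_2\rangle\in\H_{\omega_1}$ on the Weyl operators, i.e.\ to unitary equivalence of the two GNS representations. So both directions reduce to deciding when a vector with the $\omega_2$ correlators exists in the $\omega_1$ Fock space. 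For pure Gaussian states, $R$ restricted to the one-particle structure defines a complex structure (i.e.\ $R^2=-1$ on $\K_\mu$ in the appropriate sense), and $\H_{\omega_i}$ is the symmetric Fock space over the $+i$ eigenspace of $R_i$, with $\omega_i$ the Fock vacuum.

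\textbf{Necessity.} Suppose $|\omega_2\rangle\in\H_{\omega_1}$ reproduces $\omega_2$. I first want (i). The section on general necessary criteria is meant to supply this, but I would argue it directly as follows. The field $\phi[f]$ acting on $|\omega_2\rangle$ has norm $\langle f|f\rangle_{\mu_2}$, up to the symmetrization with $\Omega$. If no constant $c$ existed, one could find $f_n$ with $\|f_n\|_{\mu_1}\to0$ but $\|f_n\|_{\mu_2}=1$. Then $e^{i\phi[f_n]}\to 1$ strongly on $\H_{\omega_1}$, while $\omega_2(e^{i\phi[f_n]})=e^{-\|f_n\|^2_{\mu_2}/2}$ stays bounded away from $1$. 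This contradicts normality of the vector state.

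With (i) in hand, $Q$ is a bounded positive operator on $\K_{\mu_1}$. I then form the annihilation operators of the $\omega_2$ Fock space, $a_2(f)=\tfrac12\bigl(\phi[f]+i\phi[J_2 f]\bigr)$, and express them on $\H_{\omega_1}$ as $a_2 = A a_1 + B a_1^\dagger$. This is a Bogoliubov transformation whose coefficients $A$ and $B$ are built from $Q$ and the two complex structures. The condition $a_2(f)|\omega_2\rangle=0$ for all $f$ gives the standard equation for the two-particle coefficient of $|\omega_2\rangle$, namely $\psi^{(2)} = -A^{-1}B$, up to symmetrization. Finally, normalizability of $|\omega_2\rangle$ forces $\psi^{(2)}$, and hence $B$, to be Hilbert-Schmidt, which is Wald's argument.

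\textbf{Sufficiency.} Assume (i) and (ii). I will show that the same $B$ is Hilbert-Schmidt and that $A$ is bounded invertible, using (i) and (ii) for both bounds. Then I define
\begin{equation*}
|\omega_2\rangle = \mathcal N \exp\!\bigl(\tfrac12 a_1^\dagger \psi^{(2)} a_1^\dagger\bigr)|\omega_1\rangle .
\end{equation*}
The Hilbert-Schmidt property gives a finite norm, since $\det(1-\psi^\dagger\psi)^{-1/2}$ converges when $\|\psi\|<1$. This vector is annihilated by every $a_2(f)$. Its correlators are therefore those of the Gaussian state with covariance $\mu_2$, so the vector reproduces $\omega_2$. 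Because Gaussian states are determined by their two-point functions, it suffices to check the two-point function; the Weyl-operator statements in Definition \ref{def:exc} then follow by taking exponentials on analytic vectors.

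\textbf{Main obstacle.} The step I expect to be hardest is the purely linear-algebra translation between ``$B$ Hilbert-Schmidt'' and ``$Q-1$ Hilbert-Schmidt.'' The plan is to write $Q$ in terms of the complex structures, using $\langle f|g\rangle_{\mu}=\Re\langle f|g\rangle_{\text{one-particle}}$. The antilinear part of $J_1^{-1}J_2$ equals $B$, up to bounded invertible factors, while its linear part is $A$. Then $Q-1 = B^\dagger B + (\text{terms linear in } B)$, up to bounded factors, using $A^\dagger A - B^\dagger B = 1$. This shows that $B\in$ HS implies $Q-1\in$ HS. For the converse, I would expand $Q^{1/2}$ and use that $\Omega$ is preserved, i.e.\ $RQ$ is related to $R$ by the symplectic relation. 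The delicate point is dealing with kernels and degeneracies of $\mu_1$ versus $\mu_2$ on the completion. I will handle this by first using (i) to extend $Q$ to all of $\K_{\mu_1}$.
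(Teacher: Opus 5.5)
Your proposal is correct and takes essentially the paper's route, which is Wald's ansatz: the $\omega_2$ annihilation (null-state) constraints on $\H_{\omega_1}$ fix the candidate vector up to its two-particle amplitude, and normalizability reduces to that amplitude being Hilbert--Schmidt; your Bogoliubov coefficient $B$ is, up to bounded invertible factors, the paper's $\Pi_{1,i}L^{-1}\Pi_{2,-i}$, and the translation into ``$Q-1$ Hilbert--Schmidt'' is the same kind of identity, while your squeezed-state determinant formula replaces the paper's Stirling-number bound on the norm. The one step to make explicit is that $L$, and hence $Q$ and $A$, is boundedly invertible; extending $Q$ by continuity using (i) does not settle this, but purity of $\omega_1$ does, since $R_1^2=-1$ and $L^{\dagger}R_2L=R_1$ give $\lVert f\rVert_{\mu_1}\le \lVert L\rVert\,\lVert Lf\rVert_{\mu_2}$, and this is exactly what lets you define $a_2$ on $\H_{\omega_1}$ and keeps $\lVert A^{-1}B\rVert<1$.
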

\begin{proof}
	This is the main theorem of section \ref{sec:pure}.
\end{proof}

\begin{remark}
	The intuition behind the above theorem is that when $\omega_2$ and $\omega_1$ are both pure, one can write down an ansatz for the representative vector $|\omega_2\rangle \in \H_{\omega_1}$ and solve the ansatz explicitly.
	Condition (i) is the requirement that one can even write down the ansatz in the first place; condition (ii) is the requirement that the solution to the ansatz gives a normalizable vector in $\H_{\omega_1}.$
\end{remark}

\begin{remark}
	If we define the operators $X_j$ as the operators on $\K_{\mu_1}$ that implement the two-point functions, i.e., by
	\begin{equation}
		\langle f | X_j | g \rangle_{\mu_1} = \omega_j(\phi[f]^* \phi[g]),
	\end{equation}
	then one has (as computed in section \ref{sec:pure-two-point}) the formulas
	\begin{equation}
		X_1 = 1 - i R_1
	\end{equation}
	and
	\begin{equation}
		X_2 = Q - i R_1.
	\end{equation}
	So condition (ii) in the above theorem may be stated equivalently as the condition that $X_2 - X_1$ is Hilbert-Schmidt.
\end{remark}

\begin{prop}\label{prop:pure-with/out-roots}
	For pure states $\omega_1$ and $\omega_2,$ the condition that $X_2 - X_1$ is Hilbert-Schmidt implies $\sqrt{X_2} - \sqrt{X_1}$ Hilbert-Schmidt.
\end{prop}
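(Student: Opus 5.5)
The plan is to compare the two positive operators $X_1$ and $X_2$ on $\K_{\mu_1}$ by splitting $\K_{\mu_1}$ with respect to the spectral projection of $X_1$. Purity of $\omega_1$ makes that splitting clean. Purity of $\omega_2$ will supply an algebraic identity for $X_2$ that upgrades a Hilbert--Schmidt estimate to a trace-class estimate on the problematic block. Throughout write $A = X_2$, $B = X_1$, $D = A - B$, which is Hilbert--Schmidt by hypothesis. Both $A$ and $B$ are positive, since they implement two-point functions.

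\emph{Step 1 (structure of $X_1$).} For a pure Gaussian state the operator $iR_1$ on $\K_{\mu_1}$ is self-adjoint with $(iR_1)^2 = 1$; this is the standard characterization of purity reviewed in appendix \ref{app:review}. Hence $B = 1 - iR_1 = 2P$, where $P = \tfrac12(1 - iR_1)$ is an orthogonal projection, and $\sqrt{B} = \sqrt{2}\,P$.

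\emph{Step 2 (structure of $X_2$).} Since $\langle f|g\rangle_{\mu_2} = \langle f|Q|g\rangle_{\mu_1}$ and $\Omega$ is state-independent, we get $R_1 = Q R_2$. Purity of $\omega_2$ says $(iR_2)^2 = 1$, i.e. $R_1 Q^{-1} R_1 = -Q$. Here I will need $Q$ to be bounded and bounded below. Boundedness is condition (i) of theorem \ref{thm:both-pure}. A lower bound should follow from the symmetric roles of the two pure states, or directly from the fact that $Q - 1$ is Hilbert--Schmidt (equivalently $D$ is Hilbert--Schmidt) together with injectivity of $Q$; I would verify that first. A short computation then gives
\begin{equation*}
	A Q^{-1} A = (Q - iR_1) Q^{-1} (Q - iR_1) = 2Q - 2iR_1 = 2A .
\end{equation*}
So $\tfrac12 Q^{-1/2} A Q^{-1/2}$ is a projection. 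This identity is the key input from purity of $\omega_2$.

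\emph{Step 3 (the block off the range of $P$).} Since $B(1-P) = 0$, we have $A(1-P) = D(1-P)$, which is Hilbert--Schmidt. Using Step 2,
\begin{equation*}
	(1-P) A (1-P) = \tfrac12 (1-P) A Q^{-1} A (1-P) = \tfrac12 \bigl(Q^{-1/2} D (1-P)\bigr)^{\dagger} \bigl(Q^{-1/2} D (1-P)\bigr),
\end{equation*}
which is trace class because $Q^{-1/2}$ is bounded. Therefore $\sqrt{A}(1-P)$ is Hilbert--Schmidt, since $\tr\bigl((1-P)A(1-P)\bigr) = \|\sqrt{A}(1-P)\|_2^2$. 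As $\sqrt{B}(1-P) = 0$, it follows that $(\sqrt{A} - \sqrt{B})(1-P)$ is Hilbert--Schmidt.

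\emph{Step 4 (the block on the range of $P$).} Let $Y = \sqrt{A} - \sqrt{B}$. It satisfies the Sylvester-type identity $\sqrt{A}\, Y + Y \sqrt{B} = A - B + [\sqrt{A}, \sqrt{B}] + \ldots$, which is awkward in general. Multiplying on the right by $P$ and using $\sqrt{B}P = \sqrt{2}P$ removes the commutator:
\begin{equation*}
	\sqrt{A}\,(YP) + \sqrt{2}\,(YP) = A P - \sqrt{A}\sqrt{2}P + \sqrt{2}\sqrt{A}P - 2P = DP .
\end{equation*}
Since $\sqrt{A} \geq 0$ and $\sqrt{2} > 0$, this equation has the unique bounded solution
\begin{equation*}
	YP = \int_0^{\infty} e^{-s\sqrt{A}}\, D P\, e^{-\sqrt{2}s}\, ds ,
\end{equation*}
with $\|YP\|_2 \leq \|D\|_2/\sqrt{2}$. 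Combining with Step 3 gives $Y = YP + Y(1-P)$ Hilbert--Schmidt.

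I expect the main obstacle to be Step 2 and its prerequisites: pinning down the precise purity conditions in the paper's conventions (that $(iR_j)^2 = 1$ on the relevant one-particle spaces), and justifying that $Q$ is invertible with bounded inverse so that $Q^{-1}$ may be used freely. The remaining steps are routine operator estimates once the identity $AQ^{-1}A = 2A$ is in hand.
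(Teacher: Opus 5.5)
Your argument is correct, and it takes a genuinely different route from the paper's.

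\textbf{The paper's route.} It first proves that $Q$ is invertible. It then applies Birman--Solomyak twice: once with the square root, to get $Q^{1/2}-1$ Hilbert--Schmidt, and once with the absolute value, to strip the factors of $Q^{1/2}$ from $\sqrt{Q^{1/2}v^{\dagger}(1-iR_2)vQ^{1/2}}$. This reduces $\sqrt{X_2}-\sqrt{X_1}$, modulo Hilbert--Schmidt, to $\sqrt{2}\,(v^{\dagger}\Pi_{2,i}v-\Pi_{1,i})$. It finishes with $v\sim_{\mathrm{H.S.}}L$ and $L^{\dagger}R_2L=R_1$.

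\textbf{Your route.} You avoid Birman--Solomyak entirely by splitting along $P=\Pi_{1,i}$.
\begin{itemize}
\item On the range of $P$, the spectral gap of $\sqrt{X_1}$ gives the exact formula $(\sqrt{X_2}-\sqrt{X_1})P=(\sqrt{X_2}+\sqrt{2})^{-1}(X_2-X_1)P$; your integral is just this resolvent. Incidentally, $\sqrt{A}\,Y+Y\sqrt{B}=A-B$ holds exactly, with no commutator term. The only difficulty in general is inverting the Sylvester map when $0$ lies in both spectra.
\item On the range of $1-P$, $\sqrt{X_1}$ vanishes. Purity of $\omega_2$, in the form $X_2Q^{-1}X_2=2X_2$, then upgrades ``$X_2(1-P)$ Hilbert--Schmidt'' to ``$(1-P)X_2(1-P)$ trace class.'' This identity is equivalent to $Q+R_1Q^{-1}R_1=0$, the case $\Pi_{2,0}=0$ of \eqref{eq:cent-pure-QQ-identity}. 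Your projection $\tfrac12 Q^{-1/2}X_2Q^{-1/2}$ is precisely the paper's $v^{\dagger}\Pi_{2,i}v$.
\end{itemize}

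\textbf{What each route buys.} Yours gives an elementary proof and an explicit bound
$\lVert\sqrt{X_2}-\sqrt{X_1}\rVert_2\le 2^{-1/2}\bigl(1+\lVert Q^{-1}\rVert^{1/2}\bigr)\lVert X_2-X_1\rVert_2$.
It also shows transparently where purity of $\omega_2$ enters. For pure $\omega_1$, your Steps 3--4 show that $\sqrt{X_2}-\sqrt{X_1}$ is Hilbert--Schmidt if and only if $X_2-X_1$ is Hilbert--Schmidt and the $-i$ diagonal block of $X_2$ is trace class. That block condition is exactly what fails in appendix \ref{app:counterexample-1}. After using $\Gamma$ to swap the two diagonal blocks, it matches the criterion of theorem \ref{thm:mixed-from-pure}. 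The paper's route is built from the $v$, $Q^{1/2}$ and Birman--Solomyak manipulations it reuses in section \ref{sec:passing-to-canonical} for mixed $\omega_1$, where $\sqrt{X_1}$ need not have a spectral gap.

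\textbf{The step you left open.} Invertibility of $Q$ closes in a line and does not need the Hilbert--Schmidt hypothesis. Purity of $\omega_1$ makes $R_1$ unitary, so \eqref{eq:R1-from-R2} gives $\lVert\eta\rVert_{\mu_1}\le\lVert L^{\dagger}\rVert\,\lVert R_2\rVert\,\lVert L\eta\rVert_{\mu_2}$. Hence $L$ is bounded below with dense range, as in \eqref{eq:L-pure-lower-bound-clean}. Your alternative also works: $\ker Q=\ker L\subseteq\ker R_1=\{0\}$, and an injective positive operator of the form $1+(\text{compact})$ is bounded below.

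With $L$ invertible, read your $R_1=QR_2$ with $R_2$ transported to $\K_{\mu_1}$ as $L^{-1}R_2L$. The key identity is then cleanest as $R_1Q^{-1}R_1=L^{\dagger}R_2\,(LQ^{-1}L^{\dagger})\,R_2L=L^{\dagger}R_2^{2}L=-Q$, using $LQ^{-1}L^{\dagger}=vv^{\dagger}=1$.
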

\begin{proof}
	See section \ref{sec:pure-two-point}.
	This proposition is important for logical consistency, because for non-pure states, the general condition will be that $\sqrt{X_2} - \sqrt{X_1}$ is Hilbert-Schmidt.
    For non-pure states, the Hilbert-Schmidt property for $X_2 - X_1$ will not be sufficient.
\end{proof}

\begin{prop}
	There exist counterexamples where $\omega_1$ is pure, $\omega_2$ is mixed, and one has $\mu_2 \prec \mu_1$ with $X_2 - X_1$ Hilbert-Schmidt, but $\sqrt{X_2} - \sqrt{X_1}$ is not Hilbert-Schmidt.
    
	(In these cases, thanks to theorem \ref{thm:mixed-from-factor}, one has $\omega_2 \not\prec \omega_1.$)
\end{prop}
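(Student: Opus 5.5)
Since the statement is an existence claim, I will prove it with an explicit one-mode-per-direction counterexample assembled as an infinite direct sum, and reduce the whole problem to scalar sequences.

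\textbf{Building blocks.} Take $\K_{\mu_1}$ to be an infinite orthogonal sum of two-dimensional real phase spaces $V_k$, each carrying one canonical pair. Choose $\omega_1$ to be the pure Gaussian (vacuum) state with $\mu_1$ equal to the standard inner product on each $V_k$. On each block, $R_1$ then acts as $i$ times a unit complex structure (up to sign conventions), so $X_1 = 1 - iR_1$ has eigenvalues $2$ and $0$ on the complexification of $V_k$. Choose $\omega_2$ to be a thermal Gaussian state on each block, with the same symplectic form and $\mu_2 = (1+2n_k)\mu_1$ on $V_k$, where $n_k \ge 0$ is an occupation number. Then $Q = 1+2n_k$ on $V_k$, and $X_2 = Q - iR_1$ has eigenvalues $(1+2n_k)\pm 1$, namely $2+2n_k$ and $2n_k$, on the same eigenvectors as $X_1$. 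This is a legitimate mixed zero-mean Gaussian state because $\mu_2 \ge \mu_1$ guarantees the positivity condition $X_2 \ge 0$. To realize all of this inside the paper's setting (a free field algebra on $C_0^\infty$ smearing functions), I will invoke the generalized free field framework of appendix~\ref{app:review}, where any symplectic space together with any admissible $\mu$ arises; the simplest choice is a countable family of decoupled oscillators.

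\textbf{Verifying the three conditions.} Because $X_1$ and $X_2$ commute and are simultaneously diagonal, everything reduces to sequences:
\begin{itemize}
\item $\mu_2 \prec \mu_1$ holds provided $\sup_k n_k < \infty$.
\item $X_2 - X_1$ has eigenvalues $2n_k$, each with multiplicity two, so it is Hilbert-Schmidt iff $\sum_k n_k^2 < \infty$.
\item $\sqrt{X_2} - \sqrt{X_1}$ has eigenvalues $\sqrt{2+2n_k}-\sqrt2 \approx n_k/\sqrt2$ and $\sqrt{2n_k} - 0$.
\end{itemize}
The second eigenvalue is the decisive one: its square is $2n_k$, so $\sqrt{X_2}-\sqrt{X_1}$ is Hilbert-Schmidt iff $\sum_k n_k < \infty$. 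Choosing $n_k = 1/k$ therefore gives a bounded sequence with $\sum n_k^2<\infty$ but $\sum n_k = \infty$, and all three claims of the proposition follow.

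\textbf{Main obstacle.} The scalar computation is routine. The delicate points are the conventions for $R$ and $X$ from section~\ref{sec:pure-two-point}, which fix the signs of $\pm 1$ but not the structure of the argument, and checking that the example genuinely sits inside the free-field framework. The parenthetical conclusion $\omega_2\not\prec\omega_1$ needs no separate argument: it is supplied by theorem~\ref{thm:mixed-from-factor} as cited. As a sanity check, the failure is physically sensible: the total expected particle number $\sum_k n_k$ diverges, so no density matrix on Fock space can reproduce $\omega_2$.
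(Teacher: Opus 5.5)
Your construction is the paper's own counterexample. A thermal occupation $n_k$ on each mode pair gives $Q=1+2n_k$, and this is exactly the paper's choice $X_2=\mathrm{diag}(2+1/n,\,1/n)$ in the $R_1$ eigenbasis, with $n_k=1/(2k)$ in place of your $1/k$. The obstruction is the same in both: the entries $\sqrt{2n_k}$ on the $-i$ eigenspace are not square-summable. The paper builds this on top of an arbitrary pure $\omega_1$, using an orthonormal basis $|e_n\rangle_{\mu_1}$ of the $+i$ eigenspace of $R_1$ and its conjugate $\Gamma|e_n\rangle_{\mu_1}$. That also settles your realizability concern without needing a special decoupled-oscillator model.
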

\begin{proof}
	See appendix \ref{app:counterexample-1}.
\end{proof}

\begin{prop}
	The canonical purification of a Gaussian state is Gaussian.
\end{prop}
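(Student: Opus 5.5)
The plan is to verify Gaussianity of $\hat\omega$ on $\A_0 \otimes \A_0^{\text{op}}$, which is itself a free field algebra. Its generators are $\phi[f]\otimes 1$ and $1\otimes\phi[g]^{\text{op}}$, the symplectic form is $\Omega \oplus (-\Omega)$, and the two factors commute. It is enough to show that $\hat\omega$ is determined by a quasifree (Gaussian) formula. Concretely, I want
\begin{equation*}
\hat\omega\big(e^{i\phi[f]}\otimes (e^{i\phi[g]})^{\text{op}}\big) = \exp\!\Big(-\tfrac12\, \hat{\mu}\big((f,g),(f,g)\big)\Big)
\end{equation*}
for a suitable quadratic form $\hat\mu$. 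Equivalently, every connected correlator of order $n\ge 3$ of the field operators must vanish.

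\textbf{Step 1: write $\hat\omega$ in the GNS space.} Using Remark \ref{rem:omega_hat}, I would write
\begin{equation*}
\hat\omega(a\otimes b^{\text{op}})=\langle\omega|a\,J_\omega b^\dagger J_\omega|\omega\rangle .
\end{equation*}
I will work with the Weyl operators $W(f)=e^{i\phi[f]}$, which are bounded, so domain issues do not arise. The algebraic statement for polynomials then follows by differentiating in the smearing parameters. With this, the needed characteristic functional becomes
\begin{equation*}
\langle\omega|W(f)\,J_\omega W(g)^\dagger J_\omega|\omega\rangle=\langle\omega|W(f)\,J_\omega W(-g)J_\omega|\omega\rangle .
\end{equation*}

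\textbf{Step 2: second quantization of the modular objects.} The key structural input is that, for a Gaussian state, the GNS space is a Fock space over the one-particle space (appendix \ref{app:review}). The modular operator and modular conjugation are then the second quantizations $\Gamma(\delta)$ and $\Gamma(j)$ of one-particle modular data $(\delta,j)$. These come from the standard subspace generated by the closure of $\{\sqrt{1-iR}\,f\}$ inside $\K_\mu$, or from a doubled one-particle space when $\omega$ is not pure. From this one gets
\begin{equation*}
J_\omega W(g) J_\omega = W'(jg),
\end{equation*}
where $W'(jg)$ is a Weyl operator of the same Fock space, associated with the one-particle vector $jg$. Both $W(f)$ and $W'(jg)$ are Weyl operators in a single Fock representation. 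Their product is therefore a Weyl operator up to a phase, and its vacuum expectation is Gaussian:
\begin{equation*}
\exp\!\Big(-\tfrac12\|\text{(one-particle vector of } f) - jg\|^2\Big).
\end{equation*}
Expanding this norm gives an explicit quadratic form $\hat\mu$. The cross term $\langle f|\cdots j\cdots|g\rangle$ is expressed through $R$, for example by $\sqrt{1-R^2}$ type formulas. These explicit formulas are presumably what section \ref{sec:passing-to-canonical} develops. Only the existence of the quadratic form is needed here.

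\textbf{Step 3: pass from Weyl operators to field correlators.} Since the Weyl functional is the exponential of a quadratic form in $(f,g)$, taking $\partial^n/\partial t_1\cdots\partial t_n$ of its logarithm kills all connected correlators with $n\ge3$. It also gives zero one-point functions. The remaining consistency check is that the state on the polynomial algebra defined in \cite{Sorce:paper1} agrees with these derivatives of Weyl expectations. I would handle this by noting that $|\omega\rangle$ is an analytic vector for the fields in the Fock representation, so the derivatives exist and reproduce $\phi[f]$ insertions.

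\textbf{Main obstacle.} I expect Step 2, namely justifying $J_\omega=\Gamma(j)$, to be the hardest part, especially for non-factorial or non-pure $\omega$. There the GNS space is not a Fock space over a complex structure compatible with $\K_\mu$ alone. I would first try Araki's theory of standard subspaces: show that the Fock vacuum's Tomita operator $S$ acts on coherent vectors $W(f)|\omega\rangle$ as $\Gamma(s)$, with $s$ the one-particle Tomita operator. Then the polar decomposition factorizes under second quantization. Alternatively, one could verify the KMS condition for the candidate Gaussian modular flow directly on Weyl operators and invoke uniqueness of the modular group.
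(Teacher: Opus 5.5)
Your architecture is the paper's: write the Weyl generating functional $\langle\omega|e^{i\phi[f]}J_\omega e^{-i\phi[g]}|\omega\rangle$ through $J_\omega$, use that $J_\omega$ is the second quantization of one-particle modular data (appendix \ref{app:one-particle-modular}), combine Weyl operators, and read off $\hat\mu$. The gap is the identity you rely on in Step 2, $J_\omega W(g)J_\omega=W'(jg)$. It holds only when $|\omega\rangle$ is separating for $\A_\omega$, i.e.\ when $R$ has no eigenvalues $\pm i$. That excludes every pure state, and every state with pure directions, which are exactly the cases used later in the paper.

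In the non-separating case $J_\omega$ is only a partial antiisometry, and $j$ is the projected map with $UjU^\dagger=(1-\Pi_i-\Pi_{-i})\Gamma$, which is not isometric. For pure $\omega$ one has $j=0$ and $J_\omega\xi=\langle\xi|\omega\rangle|\omega\rangle$. Then $J_\omega W(g)J_\omega=e^{-\langle g|g\rangle_\mu/2}|\omega\rangle\langle\omega|$, which is not a Weyl operator. Your formula would give $e^{-\langle f|f\rangle_\mu/2}$, independent of $g$. That is not even a state, since it assigns zero variance to every field of the second copy although $\Omega\neq0$. The correct value is $e^{-(\langle f|f\rangle_\mu+\langle g|g\rangle_\mu)/2}$.

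Your diagnosis of the hard case is also inverted. The GNS space of every Gaussian state is the symmetric Fock space over $\H_\omega^{(1)}$, with no doubling needed. The faithful mixed case is the easy one, where your standard-subspace and KMS arguments work. Neither remedy reaches the non-separating case.

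The fix is to apply $J_\omega=\bigoplus_n j^{\otimes n}$ directly to the coherent vector rather than conjugating operators. Take $s$ to be the non-separating one-particle Tomita operator built with the projector $p$ onto $\overline{X'+iX'}$, as in appendix \ref{app:non-cs}. Writing $\psi_g=\phi[g]|\omega\rangle$, the coherent-state expansion gives
\[
J_\omega e^{-i\phi[g]}|\omega\rangle=e^{-(\lVert\psi_g\rVert^2-\lVert j\psi_g\rVert^2)/2}\,e^{i\chi_{j\psi_g}}|\omega\rangle .
\]
Antilinearity flips the sign, so the relevant one-particle vector is $\psi_f+j\psi_g$, not $\psi_f-j\psi_g$. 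The prefactor exactly restores the $\lVert\psi_g\rVert^2$ that a partial $j$ would otherwise lose. After the Weyl relation and the Gaussian evaluation this yields $\exp[-\tfrac12(\lVert\psi_f\rVert^2+\lVert\psi_g\rVert^2+2\langle\psi_f|j\psi_g\rangle)]$, i.e.\ the paper's $\hat\mu$ with cross term $\langle f|\sqrt{1+R^2}|g\rangle_\mu$. With this replacement the rest of your plan goes through.
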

\begin{proof}
	See section \ref{sec:gaussian-canonical-formulas}.
\end{proof}

\begin{theorem} \label{thm:passing-to-canonical}
	The canonical purification conditions
	\begin{enumerate}[(i)]
		\item $\hat{\mu}_2 \prec \hat{\mu}_1$,
		\item $\hat{X}_2 - \hat{X}_1$ Hilbert-Schmidt, and
		\item $\hat{Q}$ has trivial kernel
	\end{enumerate}
	are equivalent to the following conditions on $\omega_1$ and $\omega_2$:
	\begin{enumerate}[(i)]
		\item $\mu_2 \prec \mu_1$,
		\item $\sqrt{X_2} - \sqrt{X_1}$ Hilbert-Schmidt, and
		\item $Q$ has trivial kernel.
	\end{enumerate}
	
	In the second set of conditions, item (ii) may be replaced by the pair of conditions that $X_2 - X_1$ be Hilbert-Schmidt and that $\sqrt{1 + (v^{\dagger} R_2 v)^2} - \sqrt{1 + R_1^2}$ be Hilbert-Schmidt, where $v$ is the partial isometry that appears in the polar decomposition of the map $L$ that canonically maps $\K_{\mu_1}$ to $\K_{\mu_2}.$
\end{theorem}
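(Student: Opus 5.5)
The plan is to compute everything explicitly in one-particle terms and reduce the three purified conditions to operator statements on $\K_{\mu_1}$.

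\textbf{Step 1: the purified one-particle data.} Since the canonical purification of a Gaussian state is Gaussian (preceding proposition), $\hat\omega_j$ is fixed by its two-point function on $\A_0\otimes\A_0^{\text{op}}$. I would use one-particle modular theory (appendix \ref{app:one-particle-modular}) to write $J_{\omega_j}$ on the one-particle sector. Via the isometry $\phi[f]|\omega_j\rangle\mapsto\sqrt{1-iR_j}|f\rangle_{\mu_j}$, the purified phase space $\hat\K_{\hat\mu_j}$ becomes $\K_{\mu_j}\oplus\overline{\K_{\mu_j}}$. In this picture $\hat X_j$ is a $2\times 2$ block operator. Its diagonal blocks are $1\mp iR_j$ (up to conjugation), and its off-diagonal blocks are built from $\sqrt{1+R_j^2}$, which is the familiar Araki--Woronowicz doubling. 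Identifying $\K_{\mu_2}$ with $\K_{\mu_1}$ through $L$, or through its partial isometry $v$, expresses $\hat X_2$ on $\hat\K_{\hat\mu_1}$ in terms of $X_2=Q-iR_1$ and of $v^\dagger R_2 v$.

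\textbf{Step 2: conditions (i) and (iii).} The purified phase-space norm restricted to $f\otimes 1$ is $\mu_j$, so $\hat\mu_2\prec\hat\mu_1$ implies $\mu_2\prec\mu_1$. For the converse, the block form shows that $\hat\mu_j$ is equivalent to $\mu_j\oplus\bar\mu_j$ up to constants, because the off-diagonal blocks are bounded by the diagonal ones. This gives $\hat\mu_2\prec\hat\mu_1$. For (iii), $\hat Q$ is block-built from $Q$ and its conjugate, together with terms that vanish on $\ker Q$, so $\ker\hat Q=0$ if and only if $\ker Q=0$.

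\textbf{Step 3: condition (ii), the main obstacle.} Since $\hat\omega_j$ is centrally pure, I would expect $\hat X_j$ to be essentially a projection-like object, roughly of the form $\hat X_j=\hat 1-i\hat R_j$ with $\hat R_j^2=1$ on the pure part. This suggests writing $\hat X_j$ as $\tfrac12\big(\sqrt{X_j}\oplus\sqrt{\bar X_j}\big)$-type vectors forming the rank-structure $|\xi_j\rangle\langle\xi_j|$, where $\xi_j$ is the column $(\sqrt{X_j},\sqrt{\bar X_j}^{\,\dagger})$ or similar.

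\begin{itemize}
\item Differences of the form $\xi_2\xi_2^\dagger-\xi_1\xi_1^\dagger$ are Hilbert--Schmidt when $\xi_2-\xi_1$ is, since the $\xi_j$ are bounded. This gives the direction from $\sqrt{X_2}-\sqrt{X_1}$ Hilbert--Schmidt to $\hat X_2-\hat X_1$ Hilbert--Schmidt.
\item For the converse, the projection structure is essential. Proposition \ref{prop:pure-with/out-roots} applies to the pure (or centrally pure) purified states, so $\hat X_2-\hat X_1$ Hilbert--Schmidt implies $\sqrt{\hat X_2}-\sqrt{\hat X_1}$ Hilbert--Schmidt. Then I would read off the blocks of $\sqrt{\hat X_j}$, which should contain $\sqrt{X_j}$ directly.
\end{itemize}

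The delicate points are getting the blocks right when $\omega_j$ is non-factorial, and checking that $v$ does not spoil Hilbert--Schmidt estimates. The latter uses the fact that $Q$ has trivial kernel, so $v$ is unitary onto the closure of the range.

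\textbf{Step 4: the alternative form of (ii).} Write $\sqrt{X_j}$ in terms of its real and imaginary parts. Comparing the diagonal blocks of $\hat X_j$, namely $X_j$, with the off-diagonal blocks, namely $\sqrt{1+R_j^2}$ conjugated by $v$, shows that $\hat X_2-\hat X_1$ is Hilbert--Schmidt exactly when both $X_2-X_1$ and $\sqrt{1+(v^\dagger R_2v)^2}-\sqrt{1+R_1^2}$ are Hilbert--Schmidt. Combining this with Step 3 gives the replacement statement.
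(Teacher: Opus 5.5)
There is a genuine gap in Steps 1--2. The space $\K_{\hat\mu_j}$ is not topologically $\K_{\mu_j}\oplus\K_{\mu_j}$, and $\hat\mu_j$ is \emph{not} equivalent to $\mu_j\oplus\mu_j$ up to constants. With $W_j=\sqrt{1+R_j^2}$ one has
\[
\langle f\oplus g|f\oplus g\rangle_{\hat\mu_j}=\tfrac12\langle f+g|(1+W_j)|f+g\rangle_{\mu_j}+\tfrac12\langle f-g|(1-W_j)|f-g\rangle_{\mu_j}.
\]
Whenever $0\in\sigma(R_j)$ one has $\lVert W_j\rVert=1$, so the off-diagonal block is not dominated by the diagonal ones. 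On antisymmetric vectors $f\oplus(-f)$ the squared $\hat\mu_j$-norm is comparable to $\lVert R_jf\rVert_{\mu_j}^2$, and it vanishes on $\ker R_j$; only $\hat\mu_j\prec\mu_j\oplus\mu_j$ holds. Consequently $\mu_2\prec\mu_1$ does not imply $\hat\mu_2\prec\hat\mu_1$, even together with $\ker Q=0$. Concretely, take independent modes with $\Omega[q_n,p_n]=1$ and $\langle q_n|q_n\rangle_{\mu_1}=\langle p_n|p_n\rangle_{\mu_1}=n$, so that $R_1$ has eigenvalues $\pm i/(2n)$ on mode $n$. Let $\omega_2$ be the product of mode vacua, with squared $\mu_2$-norms $1/2$. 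Then $\mu_2\prec\mu_1$ and $\ker Q=0$, yet
\[
\lVert q_n\oplus(-q_n)\rVert^2_{\hat\mu_2}=1,\qquad \lVert q_n\oplus(-q_n)\rVert^2_{\hat\mu_1}=2n\bigl(1-\sqrt{1-1/(4n^2)}\bigr)\approx 1/(4n).
\]
So purified condition (i) genuinely needs (ii) and (iii) on the unpurified side. The paper reduces domination on the antisymmetric sector to the implication $|R_1|(f_n-g_n)\to0\Rightarrow|R_2|(f_n-g_n)\to0$. It gets this from $R_1=L^\dagger R_2L$ only after using $Q-1$ Hilbert--Schmidt and $\ker Q=0$ to make $L$ invertible.

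The same conflation undermines Step 3. Your rank-one picture is right in the $\mu_j\oplus\mu_j$ frame: since $\sqrt{1-iR_j}\sqrt{1+iR_j}=W_j$, the two-point form of $\hat\omega_j$ is $\xi_j\xi_j^\dagger$ with $\xi_j=(\sqrt{1-iR_j},\sqrt{1+iR_j})^T$. Transported to $\K_{\mu_1}\oplus\K_{\mu_1}$, the difference of the two forms is
\[
D=\begin{pmatrix}Q-1&W_2-W_1\\W_2-W_1&Q-1\end{pmatrix},\qquad W_2=L^\dagger\sqrt{1+R_2^2}\,L.
\]
But condition (ii) concerns $\hat X_2-\hat X_1=\hat Q-1$ on $\K_{\hat\mu_1}$. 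That is $D$ sandwiched between factors of the \emph{unbounded} operator $F_1^{-1/2}$. On the antisymmetric block it reads $(1-W_1)^{-1/2}\bigl((Q-1)-(W_2-W_1)\bigr)(1-W_1)^{-1/2}$. Proving this is Hilbert--Schmidt when $Q-1$ and $W_2-W_1$ are is the real content of the theorem. The paper writes $R_1^{-1}=Q^{-1/2}(v^\dagger R_2v)^{-1}Q^{-1/2}$ and uses $Q^{\pm1/2}\sim_{\text{H.S.}}1$, which again needs $Q$ invertible. It then applies Birman--Solomyak to $t\mapsto-1/(1+t)$, which sends $\sqrt{1+R^2}$ to $(1-\sqrt{1+R^2})/R^2$.

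Your converse also fails as stated, for two reasons:
\begin{enumerate}
\item Proposition~\ref{prop:pure-with/out-roots} is only proved for pure states, and $\hat\omega_j$ is merely centrally pure when $\omega_j$ is not factorial.
\item The blocks of $\sqrt{\hat X_j}$ do not contain $\sqrt{X_j}$. Already in the $\mu_j\oplus\mu_j$ frame, $\xi_j^\dagger\xi_j=2$ gives $\sqrt{\xi_j\xi_j^\dagger}=\xi_j\xi_j^\dagger/\sqrt2$, whose blocks are $X_j/\sqrt2$ and $W_j/\sqrt2$.
\end{enumerate}

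The route that works runs as follows.
\begin{enumerate}
\item Sandwich the symmetric and antisymmetric blocks of $\hat Q-1$ by the bounded operators $\sqrt{1\pm W_1}$ to get $(Q-1)\pm(W_2-W_1)$ Hilbert--Schmidt.
\item Use $W_2=Q^{1/2}\sqrt{1+(v^\dagger R_2v)^2}\,Q^{1/2}$ together with $Q^{1/2}\sim_{\text{H.S.}}1$ to get $\sqrt{1+(v^\dagger R_2v)^2}-\sqrt{1+R_1^2}$ Hilbert--Schmidt. Your Step 4 skips this by identifying $W_2$ with $\sqrt{1+(v^\dagger R_2v)^2}$.
\item Connect this pair to $\sqrt{X_2}-\sqrt{X_1}$ via the absolute-value Birman--Solomyak trick and $\Gamma$-conjugation. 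The Lipschitz maps $t\mapsto\sqrt2\sqrt{1+t}$ and $t\mapsto(t^2-2)/2$ interchange $\sqrt{1+R^2}$ and $\sqrt{1+iR}+\sqrt{1-iR}$.
\end{enumerate}
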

\begin{proof}
	See section \ref{sec:passing-to-canonical}.
\end{proof}
\begin{remark}
    The two sets of conditions (i-ii) above are actually equivalent to the one another, independently of the pair of conditions (iii).
	We do not provide a proof of this, but we indicate how the proof works in section \ref{sec:passing-to-canonical}.
\end{remark}

\begin{theorem} \label{thm:mixed-from-factor}
	If $\omega_1$ is a factorial state, then we have $\omega_2 \prec \omega_1$ if and only if the second set of conditions in the above theorem is satisfied.
    In fact, one only needs to check conditions (i) and (ii), since in the setting of factorial $\omega_1$, these conditions automatically imply condition (iii).
\end{theorem}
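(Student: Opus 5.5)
The plan is to reduce everything to the pure case through canonical purification. By theorem \ref{thm:canonical-checking}, $\omega_2 \prec \omega_1$ holds if and only if $\hat{\omega}_2 \prec \hat{\omega}_1$. Since $\omega_1$ is factorial, remark \ref{rem:centrally-pure-def} says that $\hat{\omega}_1$ is genuinely pure, not merely centrally pure. The canonical purifications are Gaussian states on the doubled free field algebra $\A_0 \otimes \A_0^{\text{op}}$, which is again a free field algebra with symplectic form $\Omega \oplus (-\Omega)$. So we are in the setting of theorem \ref{thm:both-pure}, except that $\hat{\omega}_2$ need not be pure. The first task is therefore a variant of theorem \ref{thm:both-pure}: with $\hat{\omega}_1$ pure and $\hat\omega_2$ arbitrary (in fact centrally pure), show that $\hat{\omega}_2 \prec \hat{\omega}_1$ if and only if $\hat{\mu}_2 \prec \hat{\mu}_1$, $\hat{X}_2 - \hat{X}_1$ is Hilbert--Schmidt, and $\hat{Q}$ has trivial kernel. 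Theorem \ref{thm:passing-to-canonical} then translates these three conditions into conditions (i)--(iii) on $\omega_1, \omega_2$.

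For the "if" direction I would use the three hatted conditions to show that $\hat{\omega}_2$ is itself pure. Then theorem \ref{thm:both-pure} applies directly, since its conditions are exactly (i) together with $\hat{X}_2-\hat{X}_1$ Hilbert--Schmidt.

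To prove purity, I would use the characterization of purity of a Gaussian state via $\hat{R}$ on $\K_{\hat\mu}$: the state is pure precisely when $\hat{R}^2 = -1$ on the closure, i.e.\ when $\hat{R}$ is a complex structure. I would carry out the following steps.
\begin{enumerate}[(a)]
\item The relation $\langle f|\hat{Q}|g\rangle_{\hat\mu_1} = \langle f|g\rangle_{\hat\mu_2}$ together with (i) gives a bounded positive $\hat Q$.
\item Since the commutator is state independent, $\hat{R}_2$ is related to $\hat{R}_1$ by $\hat{Q}\hat{R}_2 = \hat{R}_1$ on the common dense domain.
\item Trivial kernel of $\hat Q$ means that $\hat\mu_1$ is also dominated by $\hat\mu_2$ on the range closure. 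Hence $\K_{\hat\mu_2}$ and $\K_{\hat\mu_1}$ are identified as topological vector spaces.
\item The Hilbert--Schmidt condition forces $\hat{X}_2$ to be a compact perturbation of the pure $\hat{X}_1$.
\item Central purity of $\hat{\omega}_2$ then rules out an abelian piece except on a subspace that the trivial-kernel condition excludes.
\end{enumerate}
The cleanest route for (e) is to note that for a centrally pure Gaussian state the degenerate part of $\hat{\mu}_2$ (where $\hat{R}_2$ fails to be unitary) corresponds to the center. A nonzero center would make $\hat{\omega}_2$ non-factorial. Non-factoriality is incompatible with $\hat\omega_2\prec\hat\omega_1$ for pure (hence factorial) $\hat\omega_1$, since by proposition \ref{prop:alpha-definition} the normal homomorphism from the factor $\B(\H_{\hat\omega_1})$ is onto a type I factor.

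For the "only if" direction, $\hat{\omega}_2 \prec \hat{\omega}_1$ gives a normal homomorphism from $\B(\H_{\hat{\omega}_1})$, which is a factor, onto $(\A_0\otimes\A_0^{\text{op}})_{\hat\omega_2}$. That image is therefore a factor. Combined with central purity, this makes $\hat{\omega}_2$ pure, with commutant contained in a factor. The necessary conditions of theorem \ref{thm:both-pure} then give (i) and (ii) in hatted form. For trivial kernel of $\hat Q$ I would argue that a vector in $\ker\hat Q$ is a direction $f$ with $\langle f|f\rangle_{\hat\mu_2}=0$ but $\langle f|f\rangle_{\hat\mu_1}\neq 0$. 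Along this direction $e^{i\phi[f]}$ acts trivially on $\H_{\hat\omega_2}$, so $\alpha$ has a kernel. A kernel of a normal homomorphism out of the factor $\B(\H_{\hat\omega_1})$ is impossible unless $\alpha=0$. The final claim, that (i) and (ii) imply (iii) when $\omega_1$ is factorial, falls out of the same argument: (i) and (ii) already give $\hat\omega_2\prec\hat\omega_1$ via purity, and then the kernel argument gives (iii).

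The main obstacle is the "if" direction's purity step. There we must show that the Hilbert--Schmidt condition on $\hat X_2-\hat X_1$, together with domination in both directions, forces $\hat R_2^2=-1$ on all of $\K_{\hat\mu_2}$ and not just up to a compact error. Here I would first try the identity $\hat X_j^\dagger = 2-\hat X_j$ (from $\hat X_j = \hat Q_j - i\hat R_1$ with $\hat Q_1 = 1$). I would combine it with the explicit formula from section \ref{sec:passing-to-canonical}, which says that the purification of a Gaussian state has $\hat R = $ the "doubled" operator built from $R$ and $\sqrt{1+R^2}$. The goal is to check directly that $\hat R_2$ is a complex structure whenever $\ker Q$ is trivial, reducing purity to an algebraic identity.
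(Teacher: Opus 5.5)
You have the paper's architecture: canonical purification, purity of $\hat\omega_1$, then purity of $\hat\omega_2$, then theorems \ref{thm:both-pure} and \ref{thm:passing-to-canonical}. Your ``only if'' direction works. A normal homomorphism out of the factor $\B(\H_{\hat\omega_1})$ is injective, so $\A_{\hat\omega_2}$ is a factor, and central purity then makes $\hat\omega_2$ pure. This is a valid substitute for the paper's appeal to proposition \ref{prop:two-way-excitability} and quasiequivalence.

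The ``if'' direction has a genuine gap, exactly at the step you call the main obstacle.
\begin{itemize}
\item Your route (e) infers factoriality of $\hat\omega_2$ from $\hat\omega_2\prec\hat\omega_1$, which is the conclusion you are proving, so it is circular.
\item Step (c) is false as stated. A trivial kernel alone does not give $\hat\mu_1\prec\hat\mu_2$: a positive operator with eigenvalues $1/n$ has trivial kernel but no bounded inverse.
\item Your fallback uses $\hat X_j^\dagger=2-\hat X_j$, which is wrong because $\hat X_j$ is self-adjoint. The relation that actually holds is $\Gamma\hat X_1\Gamma=2-\hat X_1$.
\end{itemize}
The missing argument is short. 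Since $\hat Q-1$ is Hilbert--Schmidt, the spectrum of $\hat Q$ can accumulate only at $1$. Together with $\ker\hat Q=0$, this makes $\hat Q$, and hence $\hat L$, boundedly invertible. The correct form of your step (b) is $\hat L^\dagger\hat R_2\hat L=\hat R_1$. Combined with $\ker\hat R_1=0$ (purity of $\hat\omega_1$) and invertibility of $\hat L$, it forces $\ker\hat R_2=0$. For a centrally pure Gaussian state the center is generated by the Weyl operators coming from $\ker\hat R_2$, so $\A_{\hat\omega_2}$ has trivial center. Central purity, $\A_{\hat\omega_2}'\subseteq\A_{\hat\omega_2}$, then upgrades this to purity.

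Your argument that (i) and (ii) already imply (iii) is also circular. To reach $\hat\omega_2\prec\hat\omega_1$ from (i) and (ii) you must pass through theorem \ref{thm:passing-to-canonical}, and that theorem takes $\ker Q=0$ as a hypothesis. The paper instead argues directly on the unhatted side:
\begin{itemize}
\item The identity $L^\dagger R_2L=R_1$ gives $\ker L\subseteq\ker R_1$.
\item Factoriality of $\omega_1$ forces $\ker R_1=0$. A nonzero real $\psi\in\ker R_1$ would make each $e^{it\phi[\psi]}$ central. It would not be scalar, since $|\langle\omega_1|e^{it\phi[\psi]}|\omega_1\rangle|=e^{-t^2\lVert\psi\rVert_{\mu_1}^2/2}<1$ for $t\neq0$.
\item Hence $\ker Q=\ker L=0$.
\end{itemize}
Only with this in hand can theorem \ref{thm:passing-to-canonical} be applied starting from (i) and (ii) alone.
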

\begin{proof}
	The canonical purification of a factorial state is pure, so we may combine theorems \ref{thm:both-pure} and \ref{thm:passing-to-canonical} to reach this conclusion, with a few extra subtleties.
	See section \ref{sec:factorial-consequences}.
\end{proof}

\begin{theorem} \label{thm:mixed-from-pure}
	If $\omega_1$ is pure, then in the set of conditions for $\omega_2 \prec \omega_1$, the assumption that $\sqrt{X_2} - \sqrt{X_1}$ is Hilbert-Schmidt is equivalent to the condition that the diagonal elements of $X_2 - X_1$, within the block decomposition for $R_1,$ be trace class.
	
	Equivalently, one may check that $Q - 1$ is Hilbert-Schmidt and that $Q + R_1 Q^{-1} R_1$ is trace-class.
	
	(Note that $R_1$ has eigenvalues $\pm i$ in this setting; these are the eigenvalues with respect to which the block decomposition is taken.)
\end{theorem}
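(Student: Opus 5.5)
The plan is to make everything explicit in the block decomposition of $\K_{\mu_1}$ into the $\pm i$ eigenspaces of $R_1$. Let $P_\pm$ denote the projections onto these eigenspaces, so that $-iR_1 = P_+ - P_-$. Then
\begin{equation}
X_1 = 1 - iR_1 = 2P_+, \qquad \sqrt{X_1} = \sqrt{2}\,P_+, \qquad X_2 = Q - iR_1, \qquad X_2 - X_1 = Q - 1.
\end{equation}
So the diagonal blocks of $X_2 - X_1$ are $P_\pm (Q-1) P_\pm$. Throughout I take as given the remaining conditions (i) and (iii) of Theorem \ref{thm:passing-to-canonical}, together with $X_2 - X_1 = Q - 1$ Hilbert-Schmidt, which is needed for $\omega_2 \prec \omega_1$ anyway.

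The first step is to show that the two diagonal blocks are equivalent. I will use the antiunitary complex conjugation $C : |f\rangle_{\mu_1} \mapsto |f^*\rangle_{\mu_1}$. Since $\Omega$ is real and antisymmetric, $C$ commutes with $R_1$ as a real operator and swaps the $\pm i$ eigenspaces, so $C P_\pm C = P_\mp$. Since $\mu_2$ is real, $C$ commutes with $Q$. Hence $P_-(Q-1)P_- = C\,P_+(Q-1)P_+\,C$, and one diagonal block is trace class if and only if the other is. It therefore suffices to control the $--$ block.

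Next set $S = \sqrt{X_2} \geq 0$ and $D = S - \sqrt{2}P_+$, which is self-adjoint.

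\emph{Forward direction.} Suppose $D$ is Hilbert-Schmidt. Since $P_- X_1 P_- = 0$,
\begin{equation}
P_-(Q-1)P_- = P_- X_2 P_- = P_- D^2 P_- = (DP_-)^{\dagger}(DP_-),
\end{equation}
which is positive with trace $\|DP_-\|_{HS}^2 < \infty$. The $C$-symmetry then handles the $++$ block.

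\emph{Converse.} Suppose the diagonal blocks are trace class. The same identity shows that $DP_- = SP_-$ is Hilbert-Schmidt, so both off-diagonal blocks $P_\pm S P_\mp$ and the block $P_- S P_-$ are Hilbert-Schmidt. For the $++$ block, write $A = P_+ S P_+ \geq 0$. Then
\begin{equation}
2P_+ + P_+(Q-1)P_+ = P_+ X_2 P_+ = A^2 + (P_+SP_-)(P_-SP_+),
\end{equation}
so $A^2 - 2P_+$ is trace class. Because $A + \sqrt{2}$ is bounded below by $\sqrt{2}$, the operator $A - \sqrt{2}P_+ = (A^2 - 2P_+)(A+\sqrt{2})^{-1}$ on $P_+\K_{\mu_1}$ is trace class, and in particular Hilbert-Schmidt. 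All four blocks of $D$ are then Hilbert-Schmidt, so $\sqrt{X_2} - \sqrt{X_1}$ is Hilbert-Schmidt.

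For the equivalent reformulation, I would use $R_1 = i(P_+ - P_-)$. Then $R_1 Q^{-1} R_1 = -(P_+ - P_-)Q^{-1}(P_+ - P_-)$, which flips the sign of the off-diagonal blocks of $Q^{-1}$. This gives
\begin{equation}
Q + R_1 Q^{-1} R_1 = (Q - Q^{-1})_{\text{diag}} + (Q + Q^{-1})_{\text{off}},
\end{equation}
where the subscripts denote the diagonal and off-diagonal parts in the $P_\pm$ decomposition. Given $Q - 1$ Hilbert-Schmidt and trivial kernel, I would next argue that $Q^{-1}$ is bounded.

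I expect this step to be the main obstacle. Knowing only that $Q-1$ is Hilbert-Schmidt, the point $-1$ could still be an accumulation point of the spectrum of $Q-1$. I would try to rule this out using positivity of $Q \pm iR_1$, in the form of the $2\times2$ block inequalities $\left|\langle u | Q | v\rangle\right|^2 \leq \langle u|Q|u\rangle\langle v|Q|v\rangle$-type bounds across the blocks. Failing that, I would fall back on the factoriality argument behind condition (iii).

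With $Q^{-1}$ bounded, one can write $Q - Q^{-1} = (Q-1)(Q+1)Q^{-1}$ and expand $Q^{-1}$ around $1$ to second order. Up to trace-class errors of the form (Hilbert-Schmidt)$\times$(Hilbert-Schmidt), the diagonal part of $Q - Q^{-1}$ is $2(Q-1)_{\text{diag}}$. The off-diagonal part of $Q + Q^{-1}$ is likewise $(Q-1)_{\text{off}}$ plus a trace-class correction. That linear off-diagonal term is what the positivity constraint must control, and checking that it indeed matches is where I expect the real work of this reformulation to lie.
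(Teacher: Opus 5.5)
Your proof of the main equivalence is correct, and it is more elementary than the paper's route. The paper proceeds indirectly, in three steps.
\begin{enumerate}
\item It starts from the alternative form of condition (ii) in theorem \ref{thm:passing-to-canonical}: $Q-1$ and $\sqrt{1+(v^\dagger R_2 v)^2}-\sqrt{1+R_1^2}$ Hilbert-Schmidt, a form obtained via Birman--Solomyak.
\item It uses $R_1^2=-1$ and invertibility of $L$ to rewrite the second condition as $Q+R_1Q^{-1}R_1$ trace class.
\item It runs a Schur-complement analysis of the blocks of $Q$ and $Q^{-1}$. An integral representation of the square root takes it from $a^2-1$ to $a-1$, and in the converse, positivity of $Q-iR_1$ makes the off-diagonal block Hilbert-Schmidt.
\end{enumerate}
You work directly with $S=\sqrt{X_2}$ and $X_1=2P_+$ instead. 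The identity $P_-(Q-1)P_-=(SP_-)^\dagger(SP_-)$ together with $\Gamma$-symmetry settles the forward direction. The converse needs only the splitting of $P_+X_2P_+$ and the factorization $A-\sqrt2=(A^2-2)(A+\sqrt2)^{-1}$, which would also replace the paper's integral formula in one line. Your argument needs neither invertibility of $Q$ nor Birman--Solomyak. Your converse also never uses $Q-1$ Hilbert-Schmidt, so you can drop it from your standing hypotheses; it follows anyway from $X_2-X_1=DS+\sqrt2\,P_+D$. What the paper's route buys is the intermediate condition $Q+R_1Q^{-1}R_1$ trace class, the form it reuses in section \ref{sec:general-proof}. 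That is exactly where your proposal falls short.

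The ``Equivalently'' clause is not proved, and both steps you flag as hard are misdiagnosed.

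First, boundedness of $Q^{-1}$ is not an obstacle. $Q-1$ is compact, so its spectrum can accumulate only at $0$, never at $-1$. Hence $0$ can only be an isolated eigenvalue of $Q$ of finite multiplicity, which (iii) excludes. For $\omega_1$ pure you do not even need (iii): $R_1$ is unitary, so $\lVert f\rVert=\lVert R_1 f\rVert=\lVert L^\dagger R_2Lf\rVert\le\lVert L\rVert\,\lVert Lf\rVert$, as in \eqref{eq:L-pure-lower-bound}.

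Second, there is no linear off-diagonal term to control. The exact identity $Q^{-1}=1-(Q-1)+(Q-1)Q^{-1}(Q-1)$ gives
\[
Q+Q^{-1}=2+(Q-1)Q^{-1}(Q-1),\qquad Q-Q^{-1}=2(Q-1)-(Q-1)Q^{-1}(Q-1).
\]
So the off-diagonal part of $Q+Q^{-1}$ is trace class outright. Whenever $Q-1$ is Hilbert-Schmidt, $Q+R_1Q^{-1}R_1=2(Q-1)_{\text{diag}}+(\text{trace class})$. Combined with your first part, this closes the reformulation in a few lines. As written, however, you stop short and anticipate real work on a term that cancels identically.
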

\begin{proof}
	See section \ref{sec:exciting-from-pure}.
\end{proof}

\begin{prop}
	There exist examples where $\omega_1$ is pure, $\omega_2$ is mixed, the block-diagonal elements of $X_2 - X_1$ are trace class, but $X_2 - X_1$ is not itself trace-class.
\end{prop}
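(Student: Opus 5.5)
The plan is to build the counterexample as an infinite tensor product of independent single-mode systems, where everything reduces to explicit $2\times 2$ computations.

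\textbf{Setup.} Take a countable family of canonical pairs $(q_n,p_n)$, $n\ge 1$, realized as $\phi[f_n],\phi[g_n]$ with $\Omega[f_n,g_m]=\delta_{nm}$ and all other brackets zero. For instance, use a generalized free field whose symplectic form decomposes into countably many $2$-dimensional blocks, and restrict attention to that subalgebra.

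\textbf{Reference state $\omega_1$.} Let $\omega_1$ be the product of the Fock vacua of the modes, with $\langle q_n^2\rangle=\langle p_n^2\rangle=\tfrac12$. Then $\omega_1$ is pure. $\K_{\mu_1}$ splits as an orthogonal sum of two-dimensional blocks $\K^{(n)}$, and on each block $R_1$ has eigenvalues $\pm i$. The corresponding eigenvectors are the ``annihilation'' and ``creation'' directions $a_n\propto q_n+ip_n$ and $a_n^\dagger$. The block decomposition in Theorem~\ref{thm:mixed-from-pure} is therefore taken with respect to the span of the $a_n$ versus the span of the $a_n^\dagger$.

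\textbf{Comparison state $\omega_2$.} Let $\omega_2$ be the product of single-mode Gaussian states defined as follows.
\begin{itemize}
\item For $n\ge 2$, mode $n$ is the squeezed vacuum with squeezing $r_n=1/n$, so $\langle q_n^2\rangle=\tfrac12 e^{2r_n}$ and $\langle p_n^2\rangle=\tfrac12 e^{-2r_n}$.
\item Mode $1$ is a thermal state with $\langle q_1^2\rangle=\langle p_1^2\rangle=\tfrac12\coth(\beta/2)$, for some finite $\beta$.
\end{itemize}
Because mode $1$ is thermal, $\omega_2$ is mixed. Since the $r_n$ are bounded, $\langle f|f\rangle_{\mu_2}\le c\,\langle f|f\rangle_{\mu_1}$, so $\mu_2\prec\mu_1$.

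\textbf{Computing $Q$ blockwise.} The operator $Q$ is block diagonal over the modes. On $\K^{(n)}$ with $n\ge 2$, in the $(q,p)$ basis it is $\mathrm{diag}(e^{2r_n},e^{-2r_n})$. In the $R_1$-eigenbasis $(a_n,a_n^\dagger)$ this becomes
\begin{equation*}
Q^{(n)}=\begin{pmatrix}\cosh 2r_n & \sinh 2r_n\\ \sinh 2r_n & \cosh 2r_n\end{pmatrix}.
\end{equation*}
On $\K^{(1)}$, $Q^{(1)}=\coth(\beta/2)\,\mathbb{1}$. From the remark after Theorem~\ref{thm:both-pure}, $X_2-X_1=Q-1$, since $R_1$ cancels.

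\textbf{The diagonal blocks are trace class.} The diagonal entries of $Q-1$ in the $R_1$ block decomposition are $\cosh 2r_n-1\sim 2r_n^2=2/n^2$, plus a single finite contribution $\coth(\beta/2)-1$ from mode $1$. The series $\sum_n 2/n^2$ converges, so the two diagonal blocks of $X_2-X_1$ are trace class. Note that $Q$ has trivial kernel and $Q-1$ is Hilbert--Schmidt, because $\sum_n\sinh^2 2r_n<\infty$.

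\textbf{$X_2-X_1$ is not trace class.} The trace norm of a block-diagonal operator is the sum of the trace norms of its blocks. The trace norm of the $2\times2$ block $Q^{(n)}-1$ is at least the absolute value of its off-diagonal entry, namely $\sinh 2r_n\ge 2/n$. Since $\sum_n 2/n$ diverges, $X_2-X_1$ is not trace class.

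\textbf{Main obstacle.} The one step needing care is the conversion from the $(q,p)$ basis of $\K_{\mu_1}$ to the $R_1$-eigenbasis. One must check the normalization of $R_1$ from $\langle f|R|g\rangle_\mu=\tfrac12\Omega[f^*,g]$, so that the eigenvalues really are $\pm i$ and the eigenvectors are $q\pm ip$ up to scale, and confirm that squeezing appears purely off-diagonal at first order in $r_n$. A secondary point is to confirm that the infinite product state is a legitimate Gaussian state on $\A_0$ with the stated $\mu_2$. This follows mode by mode, since the two-point function determines a zero-mean Gaussian state.
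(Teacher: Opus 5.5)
Your construction is correct and follows essentially the same route as the paper's appendix: both make $X_2 - X_1 = Q - 1$ block-diagonal over modes in the $R_1$-eigenbasis, with diagonal entries of order $n^{-2}$ (summable) and off-diagonal entries of order $n^{-1}$ (square-summable but not summable). The only difference is the choice of blocks. The paper writes $\begin{pmatrix} 2+1/n^2 & 1/n \\ 1/n & 1/n^2 \end{pmatrix}$ directly in $\K_{\mu_1}$ for an arbitrary pure $\omega_1$, so every mode of $\omega_2$ is mixed; your modes $n\geq 2$ are pure squeezed vacua and the mixedness sits in a single thermal mode, which shows that the failure of the trace-class property is already a pure Bogoliubov (Shale-type) effect.
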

\begin{proof}
	See appendix \ref{app:counterexample-2}.
\end{proof}

\begin{theorem} \label{thm:abelian}
	Given a free field algebra with $\Omega = 0,$ i.e., an abelian free field algebra, one has $\omega_2 \prec \omega_1$ if and only if
	\begin{enumerate}[(i)]
	\item $\mu_2 \prec \mu_1$,
	\item $Q-1$ is Hilbert-Schmidt, and
    \item $Q$ has trivial kernel.
	\end{enumerate}
\end{theorem}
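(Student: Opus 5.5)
Since $\Omega = 0$, every $\phi[f]$ commutes with every $\phi[g]$, so $\A_{\omega}$ is an abelian von Neumann algebra. A zero-mean Gaussian state is then the Gaussian measure on real field configurations with covariance $\mu$: the characteristic function is $\omega(e^{i\phi[f]}) = e^{-\frac12\langle f|f\rangle_\mu}$ for real $f$. The one-particle structure is simple here, with $R = 0$ and $X_j$ the operator implementing $\mu_j$ on $\K_{\mu_1}$.

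My plan is to mimic the pure-state argument of theorem \ref{thm:both-pure} in the form of a ``Wald-type'' ansatz. By proposition \ref{prop:alpha-definition}, $\omega_2 \prec \omega_1$ means there is a density matrix on $\H_{\omega_1}$ reproducing $\omega_2$. Because $\A_{\omega_1}$ is abelian with cyclic vector $|\omega_1\rangle$, it is maximal abelian, so $\A_{\omega_1}' = \A_{\omega_1}$. Any normal state is therefore a vector state $\langle \xi|\cdot|\xi\rangle$, with $|\xi\rangle = h|\omega_1\rangle$ for $h$ affiliated to $\A_{\omega_1}$ and $|h|^2$ the Radon--Nikodym derivative. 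So excitability is equivalent to absolute continuity of the Gaussian measure $\omega_2$ with respect to $\omega_1$, and I want the Feldman--Hájek criterion, proved without Segal's lemma.

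\textbf{Necessity.}
\begin{enumerate}[(a)]
\item Condition (i) should follow from the general necessary criteria of section \ref{sec:general-necessary}, which I take to give $\mu_2 \prec \mu_1$ whenever $\omega_2 \prec \omega_1$.
\item Given (i), define $Q$ on $\K_{\mu_1}$ and diagonalize it spectrally.
\item For condition (ii), restrict to finitely many real orthonormal modes $e_1,\dots,e_n$ of $\K_{\mu_1}$. The pair of restricted states is a pair of finite-dimensional Gaussians $N(0,1)$ and $N(0,Q_n)$, where $Q_n$ is the compression of $Q$.
\item Compute the overlap, i.e.\ the fidelity, of the two restricted states. For a state $\rho_{\omega_2} = |\xi\rangle\langle\xi|$, the quantity $\langle \omega_1|\sqrt{d\omega_2/d\omega_1}|\omega_1\rangle$ is bounded below uniformly in $n$ by normality (a martingale argument). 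In finite dimensions the fidelity equals $\prod_k \left( \frac{2\sqrt{q_k}}{1+q_k}\right)^{1/2}$, where the $q_k$ are the eigenvalues of $Q_n$.
\item A uniform lower bound on this product forces $\sum_k (q_k-1)^2<\infty$, i.e.\ $Q-1$ Hilbert--Schmidt. The same bound rules out eigenvalues $q_k = 0$, which gives condition (iii).
\end{enumerate}
If the spectrum of $Q$ is not discrete, I would handle it via approximating projections. It may be cleaner to invoke Hilbert--Schmidt bounds on finite compressions directly: $\|Q_n-1\|_{HS}$ bounded uniformly in $n$ implies $Q-1$ is Hilbert--Schmidt.

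\textbf{Sufficiency.}
\begin{enumerate}[(a)]
\item Since $Q-1$ is Hilbert--Schmidt, it is compact and self-adjoint. Diagonalize $Q = \sum_k q_k |e_k\rangle\langle e_k|$ in a real orthonormal basis, with $q_k>0$ by (iii) and $\sum_k(q_k-1)^2<\infty$.
\item Write down the ansatz density
\[
h_n = \prod_{k\le n} q_k^{-1/2}\exp\!\Big(-\tfrac12(q_k^{-1}-1)\phi[e_k]^2\Big),
\]
understood as functions of commuting fields, so that $h_n|\omega_1\rangle$ reproduces $\omega_2$ on the first $n$ modes.
\item Show that $\sqrt{h_n}|\omega_1\rangle$ is Cauchy in $\H_{\omega_1}$. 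The relevant inner products are again products of the factors $\left(\frac{2\sqrt{q_k}}{1+q_k}\right)^{1/2}$ (with mixed indices), which converge precisely when $\sum(q_k-1)^2<\infty$.
\item The limit vector $|\xi\rangle$ is normalized and reproduces $\omega_2(e^{i\phi[f]})$ for all $f$ in the span of the $e_k$. By continuity in $\mu_1$, using (i), it reproduces all Weyl expectations, hence all $\omega_2$ correlators in the sense of definition \ref{def:exc}.
\end{enumerate}

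The main obstacle is the uniform-in-$n$ control on the necessity side. Turning normality of $\alpha$ into a quantitative bound on the finite-mode fidelities requires care, since the restriction to the first $n$ modes is a conditional expectation onto a subalgebra, and fidelity is monotone under it. Using that monotonicity is how I would try to make the martingale step precise.
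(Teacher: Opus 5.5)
Your proof is correct. The sufficiency half matches the paper's argument: explicit finite-mode vectors in an eigenbasis of $Q$, the Cauchy property from the factors $(2\sqrt{q_k}/(1+q_k))^{1/2}$, and extension to all Weyl operators by $\mu_1$-continuity. One slip: with $h_n$ the density, the vector representative is $\sqrt{h_n}|\omega_1\rangle$, as you in fact use in step (c), not $h_n|\omega_1\rangle$.

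Your necessity argument takes a genuinely different route. The paper proceeds in four steps:
\begin{enumerate}[(1)]
\item It shows that the finite-mode natural-cone vectors $|(\tau_2^\natural)_{(m)}\rangle$ converge to the full representative.
\item It converts the resulting Cauchy property into a bound on the mutual overlaps $\langle(\tau_2^\natural)_{(m)}|(\tau_2^\natural)_{(n)}\rangle$, written through the eigenvalues of $Q_{n,\perp}^{-1/4}Q_m^{1/2}Q_{n,\perp}^{-1/4}$.
\item It needs a separate proof that $Q$ is invertible (appendix \ref{app:why-q-is-invertible}) to make the quadratic lower bound uniform.
\item It concludes that $Q_m-E_m$ is Cauchy in Hilbert--Schmidt norm.
\end{enumerate}
You instead compare $\omega_1$ with $\omega_2$ at each finite level. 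Monotonicity of fidelity under restriction (conditional Jensen for $\sqrt{\cdot}$) gives $F_n\ge F>0$; positivity of $F$ uses that $|\omega_1\rangle$ is separating.

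To finish, say explicitly that each factor of $F_n$ is at most one, so each is individually at least $F$. This confines the spectra of all compressions $Q_n$ to one interval $[a,b]\subset(0,\infty)$, which does two things. It gives $Q\ge a$, so condition (iii), and in fact invertibility, comes for free. Combined with the elementary bound $\log\frac{1+q}{2\sqrt q}\ge c\,(q-1)^2$ on $[a,b]$, it also gives $\sup_n\|E_n(Q-1)E_n\|_{HS}<\infty$. Then $Q-1$ is Hilbert--Schmidt by monotone convergence of $\sum_{j,k\le n}|\langle e_j|(Q-1)|e_k\rangle|^2$.

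Your route is the classical Kakutani/Hellinger proof of Feldman--H\'ajek and is considerably shorter. The paper's version stays within natural-cone vectors, reuses a single overlap formula in both directions, and proves the stronger fact that the approximants converge to the representative. The monotonicity you need is actually already in the paper: the inequality $\Pi_{(m)}P_{\tau_2}\Pi_{(m)}\le P_{\tau_2,(m)}$ from appendix \ref{app:lower-bounding-sequence}, evaluated on $|\tau_1\rangle$ (which satisfies $\Pi_{(m)}|\tau_1\rangle=|\tau_1\rangle$), is exactly $F_m\ge F$.
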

\begin{proof}
	See section \ref{sec:abelian}.
\end{proof}

\begin{theorem} \label{thm:c-pure-from-c-pure}
	When $\omega_1$ and $\omega_2$ are centrally pure, the excitability relation $\omega_2 \prec \omega_1$ is equivalent to
	\begin{enumerate}[(i)]
		\item $\mu_2 \prec \mu_1$,
		\item $Q - 1$ is Hilbert-Schmidt, and
		\item $Q$ has trivial kernel.
	\end{enumerate}
\end{theorem}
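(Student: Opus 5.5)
The plan is to decompose a centrally pure Gaussian state into a ``pure piece'' and an ``abelian piece'' and reduce theorem \ref{thm:c-pure-from-c-pure} to theorems \ref{thm:both-pure} and \ref{thm:abelian}.

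\textbf{Step 1: a necessary condition.} First use the necessary criteria of section \ref{sec:general-necessary}. If $\omega_2 \prec \omega_1$, the normal homomorphism $\alpha$ of proposition \ref{prop:alpha-definition} shows that any $\phi[f]$ vanishing in $\mu_1$ also vanishes in $\mu_2$. Continuity of $\alpha$ on the Weyl operators then gives $\mu_2 \prec \mu_1$, which is condition (i).

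\textbf{Step 2: splitting the one-particle space.} Central purity, as in remark \ref{rem:centrally-pure-def}, should mean at the one-particle level that $R_j$ is a ``partial complex structure.'' Concretely, on $\K_{\mu_j}$ we would have $R_j^2 = -P_j$, where $P_j$ is the projection onto $\overline{\operatorname{im} R_j}$. The complement $\ker R_j$ carries no symplectic form and generates the center of $\A_{\omega_j}$.

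I would prove that $\ker R_j$ is the subspace generating the center. It is characterized intrinsically as the set of $f$ with $\Omega[f,g]=0$ for all $g$ in the relevant completion. This decomposes $\K_{\mu_j} = \K^{\text{sympl}}_j \oplus \K^{\text{ab}}_j$, and correspondingly the Weyl algebra factorizes as a tensor product of a pure Gaussian state on the symplectic part and a Gaussian state on an abelian free field algebra.

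The key point is that the abelian subspace is defined in terms of $\Omega$ alone, not in terms of $\omega_j$. Therefore the same test functions span $\K^{\text{ab}}$ for both states. Condition (i) then guarantees that $Q$ respects this splitting, at least up to showing that the off-diagonal blocks of $Q$ vanish.

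\textbf{Step 3: reducing to the two pieces.} For product states with compatible splittings, excitability $\omega_2\prec\omega_1$ should be equivalent to excitability on each factor separately. One direction comes from restricting $\alpha$; the other from tensoring the density matrices, or equivalently the normal homomorphisms. Applying theorem \ref{thm:both-pure} to the pure piece gives $Q_{\text{sympl}} - 1$ Hilbert-Schmidt, with trivial kernel automatic there. Applying theorem \ref{thm:abelian} to the abelian piece gives $Q_{\text{ab}}-1$ Hilbert-Schmidt and $\ker Q_{\text{ab}}=0$. Assembling the blocks yields conditions (ii) and (iii) for $Q$, and the converse runs the same argument backwards.

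\textbf{Main obstacle.} I expect the hardest step to be showing that $Q$ is block-diagonal in the splitting, i.e.\ that $\mu_2$ does not correlate the central and symplectic directions. The issue is that $\mu_1$-orthogonality of $\ker R_1$ and $\overline{\operatorname{im}R_1}$ need not obviously imply $\mu_2$-orthogonality.

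My first attempt would exploit the defining condition $R_2^2=-P_2$ for the second state. If $\mu_2$ mixed the two pieces, the Cauchy--Schwarz-type positivity $|\Omega[f,g]|^2 \le 4\langle f|f\rangle_\mu\langle g|g\rangle_\mu$ would not be saturated on the symplectic part, contradicting purity of the symplectic piece of $\omega_2$. Failing that, I would avoid block-diagonality altogether. Instead I would factor $\omega_2$ through its own decomposition and relate it to $\omega_1$'s via a symplectic map whose off-diagonal part is controlled by condition (i).
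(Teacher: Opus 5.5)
Your overall plan (split into a pure piece and an abelian piece, then invoke earlier theorems) matches the paper's. The step you flag as the main obstacle, however, cannot be proved: the block-diagonality on which Steps 2--3 rest is false, and the two states do not share a splitting. Only the $\Omega$-degenerate test functions are state-independent. By contrast, $\ker R_j$ is a closed subspace of the $\mu_j$-completion and is generally not spanned by test functions; in Klein--Gordon theory $\ker\Omega=\ker\mu$ on test functions, so it contains none. What links the two splittings is only $L(\ker R_1)\subseteq\ker R_2$, which follows from $L^\dagger R_2L=R_1$ and density of $\operatorname{im}L$. Nothing forces $L$ to map $\overline{\operatorname{im}R_1}$ into $\overline{\operatorname{im}R_2}$.

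Three modes already show this. Take real test functions $e_q,e_p,e_z$ with $\Omega[e_q,e_p]\neq0$ and $e_z$ degenerate for $\Omega$. Let $\omega_1$ be the vacuum on the Heisenberg pair $(\phi[e_q],\phi[e_p])$ times a Gaussian in $\phi[e_z]$. Let $\omega_2$ be the vacuum on the pair $(\phi[e_q+ae_z],\phi[e_p])$ times the same Gaussian in $\phi[e_z]$, with $a\neq0$. Both states are centrally pure and $\omega_2\prec\omega_1$. Yet $e_q\in\overline{\operatorname{im}R_1}$, $e_z\in\ker R_1$, and $\langle e_q|Q|e_z\rangle_{\mu_1}=\langle e_q|e_z\rangle_{\mu_2}=-a\langle e_z|e_z\rangle_{\mu_2}\neq0$. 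Your saturation argument yields no contradiction, because $\mu_2$ saturates the uncertainty bound on $\operatorname{span}(e_q+ae_z,e_p)$, not on $\operatorname{span}(e_q,e_p)$.

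The decisive consequence is that the restriction of $\omega_2$ to the algebra generated by $\omega_1$'s symplectic directions is mixed: its $\phi[e_q]$-variance exceeds the minimal-uncertainty value. So theorem~\ref{thm:both-pure} is the wrong tool for the pure piece. The product-state reduction of Step 3 also fails, since $\omega_2$ is not a product across $\omega_1$'s splitting. Your fallback does not help either: condition (i) only makes the off-diagonal blocks bounded, while excitability needs them Hilbert--Schmidt.

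The missing ingredients are those of sections~\ref{sec:excitability-breakdown}--\ref{sec:general-proof}.
\begin{enumerate}
\item Excitability is split only along $\omega_1$'s decomposition, into the restrictions $\alpha_{\pm i}$ and $\alpha_0$. Checking them separately suffices because the normal image of the type I factor $\B(\H_{\omega_1,i})$ is a type I factor. This induces a tensor factorization of $\H_{\omega_2}$ in which $\alpha=\alpha_{\pm i}\otimes\alpha_0$, with no compatibility needed on the $\omega_2$ side.
\item The map $\alpha_0$ lands in $\Z_{\omega_2,0}$ because $L(\ker R_1)\subseteq\ker R_2$, so theorem~\ref{thm:abelian} applies to it. The map $\alpha_{\pm i}$, by contrast, is a mixed-from-pure problem governed by theorem~\ref{thm:mixed-from-pure}. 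That theorem requires $\Pi_{1,i}(Q-1)\Pi_{1,i}$ and $\Pi_{1,-i}(Q-1)\Pi_{1,-i}$ to be trace-class, not merely Hilbert--Schmidt.
\item A block computation shows these trace-class conditions force every off-diagonal block of $Q$ to be Hilbert--Schmidt, including those coupling $\overline{\operatorname{im}R_1}$ to $\ker R_1$. One decomposes $Q^{1/2}$ between the eigenspaces of $R_1$ and of $v^\dagger R_2v$ and uses $L^\dagger R_2L=R_1$.
\item The converse upgrades ``$Q-1$ Hilbert--Schmidt'' to the trace-class conditions. It uses the Schur complement $Y=A-BC^{-1}B^\dagger$, where $A,B,C$ are the blocks of $Q$ relative to $\overline{\operatorname{im}R_1}\oplus\ker R_1$. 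It also uses the intermediate fact that $\sqrt{Q-iR_1}-\sqrt{1-iR_1}$ is Hilbert--Schmidt.
\end{enumerate}

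The blockwise Hilbert--Schmidt criterion your assembly would produce gives wrong answers. Stack infinitely many independent copies of the example above with $a_n=n^{-1/3}$. This gives centrally pure states for which $\Pi_{1,\pm i}(Q-1)\Pi_{1,\pm i}$ (entries of order $a_n^2$) and $\Pi_{1,0}(Q-1)\Pi_{1,0}$ (zero) are Hilbert--Schmidt. The off-diagonal entries, however, are of order $a_n$, so $Q-1$ is not Hilbert--Schmidt and $\omega_2\not\prec\omega_1$.

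Likewise, (iii) is not assembled from block kernels. It follows from the abelian-piece kernel condition because $\ker Q\subseteq\ker R_1$.
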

\begin{proof}
	This is proved in section \ref{sec:centrally-pure} by breaking a centrally pure state into a ``pure piece'' and an ``abelian piece''  (section \ref{sec:excitability-breakdown}), and then combining theorems \ref{thm:mixed-from-pure} and \ref{thm:abelian} (section \ref{sec:general-proof}).
\end{proof}

\begin{theorem} \label{thm:general-theorem}
	For zero-mean Gaussian states $\omega_1$ and $\omega_2,$ the excitability relation $\omega_2 \prec \omega_1$ is equivalent to
	\begin{enumerate}[(i)]
		\item $\mu_2 \prec \mu_1$,
		\item $\sqrt{X_2} - \sqrt{X_1}$ is Hilbert-Schmidt, and
		\item $Q$ has trivial kernel.
	\end{enumerate}
\end{theorem}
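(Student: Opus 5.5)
The plan is to reduce the general statement to the centrally pure case by passing to canonical purifications. By theorem \ref{thm:canonical-checking}, $\omega_2 \prec \omega_1$ holds if and only if $\hat{\omega}_2 \prec \hat{\omega}_1$. The canonical purification of a Gaussian state is again Gaussian and zero-mean (section \ref{sec:gaussian-canonical-formulas}). By remark \ref{rem:centrally-pure-def}, $\hat{\omega}_1$ and $\hat{\omega}_2$ are always centrally pure states on the free field algebra $\A_0 \otimes \A_0^{\text{op}}$, whose symplectic form is $\Omega \oplus (-\Omega)$. So theorem \ref{thm:c-pure-from-c-pure} applies to the pair $(\hat{\omega}_1, \hat{\omega}_2)$. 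It says that $\hat{\omega}_2 \prec \hat{\omega}_1$ is equivalent to three conditions: $\hat{\mu}_2 \prec \hat{\mu}_1$, $\hat{Q} - 1$ Hilbert-Schmidt, and $\hat{Q}$ with trivial kernel.

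The second step is to translate these hatted conditions into the form used in theorem \ref{thm:passing-to-canonical}. Conditions (i) and (iii) already match verbatim. For (ii), I would use the remark after theorem \ref{thm:both-pure}, namely $\hat{X}_j = \hat{Q}_j - i\hat{R}_1$ computed on $\K_{\hat{\mu}_1}$. This gives $\hat{X}_2 - \hat{X}_1 = \hat{Q} - 1$, so the Hilbert-Schmidt condition on $\hat{Q}-1$ is the same as the Hilbert-Schmidt condition on $\hat{X}_2 - \hat{X}_1$. One caveat: that formula was derived in section \ref{sec:pure-two-point} for pure states. It is in fact a purely one-particle identity, since $X_2$ implements $\frac12\omega_2(\{\phi^*,\phi\}) + \frac12\omega_2([\phi^*,\phi])$ and the commutator is state-independent. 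So I would check that it holds for any pair of Gaussian states with $\mu_2 \prec \mu_1$, and hence for the centrally pure purifications. With this in hand, theorem \ref{thm:passing-to-canonical} converts the three hatted conditions into exactly (i)–(iii) of the present statement.

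Chaining these equivalences gives
\[
\omega_2 \prec \omega_1 \iff \hat\omega_2 \prec \hat\omega_1 \iff \text{(hatted i–iii)} \iff \text{(i–iii)}.
\]
This is the claim.

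The main obstacle is bookkeeping rather than new analysis. First, theorem \ref{thm:c-pure-from-c-pure} must genuinely apply to $\hat\omega_j$ as states on the doubled free field algebra. This requires identifying $\A_0^{\text{op}}$ with a free field algebra (generated by $\phi[f]^{\text{op}}$ with the commutator sign flipped), so that all the Gaussian machinery of the earlier sections applies. Second, the operators $\hat{Q}$ and $\hat{X}_j$ appearing in theorem \ref{thm:c-pure-from-c-pure} must be the same objects as those in theorem \ref{thm:passing-to-canonical}. Third, a degenerate situation needs care: if $\hat{\mu}_2 \not\prec \hat{\mu}_1$, the operator $\hat Q$ is undefined. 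In that case both sides of the equivalence fail simultaneously, via condition (i) of each list, so the equivalence is preserved. I would check these identifications first, since everything else is a direct concatenation of earlier results.
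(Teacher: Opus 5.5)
Your proposal is correct and follows the paper's own argument. Theorem \ref{thm:canonical-checking} reduces $\omega_2 \prec \omega_1$ to $\hat{\omega}_2 \prec \hat{\omega}_1$, theorem \ref{thm:c-pure-from-c-pure} applied to the centrally pure purifications gives the hatted conditions, and theorem \ref{thm:passing-to-canonical} converts these into (i)--(iii); your caveat about $\hat{X}_2 - \hat{X}_1 = \hat{Q} - 1$ is resolved as you suggest, because $X_2 = Q - iR_1$ only uses $L^{\dagger} R_2 L = R_1$, which holds for any Gaussian pair with $\mu_2 \prec \mu_1$, and the paper uses it in that generality in section \ref{sec:passing-to-canonical}.
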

\begin{proof}
	This follows immediately by combining theorem \ref{thm:c-pure-from-c-pure} with theorem \ref{thm:passing-to-canonical}.
\end{proof}

\begin{theorem} \label{thm:always-quasi}
	For zero-mean Gaussian states, $\omega_2 \prec \omega_1$ always implies $\omega_1 \prec \omega_2.$
\end{theorem}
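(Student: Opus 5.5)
Given Theorem \ref{thm:general-theorem}, the claim reduces to showing that the three conditions there are symmetric under exchanging $\omega_1$ and $\omega_2$. In other words, assuming (i) $\mu_2 \prec \mu_1$, (ii) $\sqrt{X_2}-\sqrt{X_1}$ Hilbert-Schmidt, and (iii) $\ker Q = 0$, we want to derive the same three statements with the indices swapped. Rather than attacking the square roots directly, I would work at the level of the canonical purifications. Theorem \ref{thm:canonical-checking} says $\omega_2 \prec \omega_1$ if and only if $\hat\omega_2 \prec \hat\omega_1$, and the canonical purifications are centrally pure Gaussian states. Theorem \ref{thm:c-pure-from-c-pure} then translates excitability into three conditions: $\hat\mu_2 \prec \hat\mu_1$, $\hat Q - 1$ Hilbert-Schmidt, and $\ker \hat Q = 0$. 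So it suffices to prove the following: for centrally pure Gaussian states, these three conditions imply their mirror images.

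The core of the argument is a statement about one positive operator. Write $\hat Q = 1 + T$ with $T$ Hilbert-Schmidt and self-adjoint on $\K_{\hat\mu_1}$, and suppose $\hat Q \ge 0$ with trivial kernel. First I would show that $\hat Q$ is bounded below. Since $T$ is compact, the spectrum of $\hat Q$ away from $1$ consists of isolated eigenvalues of finite multiplicity, and these can accumulate only at $1$. Because $0$ is not an eigenvalue, the spectrum must stay a positive distance from $0$, so $\hat Q \ge \epsilon$ for some $\epsilon > 0$. This lower bound gives $\langle f|f\rangle_{\hat\mu_1} \le \epsilon^{-1}\langle f|f\rangle_{\hat\mu_2}$, which is exactly $\hat\mu_1 \prec \hat\mu_2$. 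Consequently $\K_{\hat\mu_1}$ and $\K_{\hat\mu_2}$ are the same completion with equivalent norms.

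Next I would identify the reversed operator. The operator $\hat Q'$ defined by $\langle f|\hat Q'|g\rangle_{\hat\mu_2} = \langle f|g\rangle_{\hat\mu_1}$ is $\hat Q^{-1}$, transported through the canonical map $L: \K_{\hat\mu_1}\to\K_{\hat\mu_2}$. Concretely, $L = \hat Q^{-1/2}$ up to a unitary identification, and then $\hat Q' \cong \hat Q^{-1}$. The reversed conditions now follow from facts about $\hat Q^{-1}$. It has trivial kernel because it is invertible. It also satisfies $\hat Q^{-1} - 1 = -\hat Q^{-1}T$, which is Hilbert-Schmidt since $\hat Q^{-1}$ is bounded. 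The unitary conjugation preserves the Hilbert-Schmidt property. Applying Theorem \ref{thm:c-pure-from-c-pure} with the roles of the states exchanged gives $\hat\omega_1 \prec \hat\omega_2$, and Theorem \ref{thm:canonical-checking} then gives $\omega_1 \prec \omega_2$.

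The main obstacle is making the identification $\hat Q' = \hat Q^{-1}$ rigorous. Both operators are defined only through sesquilinear forms on the dense set of smearing functions, modulo null vectors. I would need to check that $\hat\mu_1 \prec \hat\mu_2$ really implies the null spaces of the two inner products coincide, so that $L$ is a bounded bijection rather than merely densely defined. I would also need to confirm that the Hilbert-Schmidt condition is insensitive to which of the equivalent inner products is used to compute the trace. Finally, I would double-check that Theorem \ref{thm:c-pure-from-c-pure} can legitimately be applied with the arguments swapped, which requires that its hypotheses are symmetric; they are, since both canonical purifications are centrally pure.
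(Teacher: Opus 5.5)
Your argument is correct, but it takes a different route from the paper.

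\textbf{The paper's route.} It stays with the original states. From (ii) and (iii) of Theorem~\ref{thm:general-theorem} it gets $Q$ invertible, hence $\mu_1\prec\mu_2$ and a kernel-free $Q'=(L^{-1})^{\dagger}L^{-1}$. It then checks the reversed square-root condition by writing $\sqrt{Q'-iR_2}-\sqrt{1-iR_2}$ as $v\bigl(\sqrt{Q^{-1/2}(1-iR_1)Q^{-1/2}}-\sqrt{Q^{-1/2}(Q-iR_1)Q^{-1/2}}\bigr)v^{\dagger}$. The $Q^{-1/2}$ factors are removed with the Birman--Solomyak absolute-value trick, which reduces the expression to $\sqrt{X_1}-\sqrt{X_2}$.

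\textbf{Your route.} You lift to the canonical purifications with Theorem~\ref{thm:canonical-checking}, used in both directions. There, Theorem~\ref{thm:c-pure-from-c-pure} turns the criterion into the \emph{linear} condition that $\hat{Q}-1$ be Hilbert-Schmidt. Applying it with the roles swapped is legitimate, since $\hat{\omega}_1$ and $\hat{\omega}_2$ are both centrally pure zero-mean Gaussian states on the same doubled field algebra. The invariance of the linear condition is then elementary:
\begin{itemize}
\item a compact perturbation of $1$ with trivial kernel is bounded below;
\item so $\hat{L}$ is a bounded bijection whose polar part $\hat{v}$ is unitary, and $\hat{Q}'=\hat{v}\hat{Q}^{-1}\hat{v}^{\dagger}$;
\item and $\hat{Q}^{-1}-1=-\hat{Q}^{-1}(\hat{Q}-1)$ is Hilbert-Schmidt.
\end{itemize}

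\textbf{Comparison.} Your route bypasses Theorem~\ref{thm:passing-to-canonical} and all Birman--Solomyak estimates for this step. The paper's route instead exhibits the symmetry of the square-root criterion directly, as an operator statement about the original two-point operators; this is the form that connects to the Araki--Yamagami conditions. In your proof that symmetry only follows afterwards, by combining with Theorem~\ref{thm:general-theorem}.

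The technical worries you list are harmless. Two-sided domination makes the null spaces coincide, and the Hilbert-Schmidt property transfers under the unitary $\hat{v}$.

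One slip to correct: $\hat{L}=\hat{v}\hat{Q}^{1/2}$, not $\hat{Q}^{-1/2}$ up to a unitary. It is $\hat{L}^{-1}=\hat{Q}^{-1/2}\hat{v}^{\dagger}$ that yields $\hat{Q}'=(\hat{L}^{-1})^{\dagger}\hat{L}^{-1}=\hat{v}\hat{Q}^{-1}\hat{v}^{\dagger}$. Your formula as written would give $\hat{v}\hat{Q}\hat{v}^{\dagger}$ instead.
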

\begin{proof}
This theorem is proved in section \ref{sec:other-way} from the necessary conditions (i-iii) of theorem \ref{thm:general-theorem}.
\end{proof}

\begin{remark}
    By combining this last theorem with proposition \ref{prop:mutual-exc-is-quasi}, we learn that for quasiequivalence, it is necessary and sufficient to check the conditions of theorem \ref{thm:general-theorem}.
    This should be contrasted with the work of Araki and Yamagami \cite{Araki-Yamagami}, in which quasiequivalence is shown to be equivalent to conditions (i) and (ii) of theorem \ref{thm:general-theorem}, \textit{together} with the condition $\mu_1 \prec \mu_2.$

    To see consistency of our results with those of Araki and Yamagami, note that once one has $\mu_2 \prec \mu_1$ and $\mu_1 \prec \mu_2,$ then it is obvious that $Q$ must be invertible, so it cannot have a kernel --- this gives condition (iii) in our theorem.
    On the other hand, as explained in section \ref{sec:passing-to-canonical}, condition (ii) of our theorem implies that $Q$ is invertible away from its kernel, so when condition (iii) is satisfied, we learn that $Q$ is invertible, which gives the Araki-Yamagami condition $\mu_1 \prec \mu_2.$ 
\end{remark}

\section{Quasiequivalence and excitability}
\label{sec:excitability}

In the introduction, we alluded to the problem of ``quasiequivalence,'' which has its roots in \cite{Haag-Kastler} and was explored extensively in \cite{Powers:quasi, Woronowicz:purification, Araki:quasi, VanDaele:quasi, Araki-Yamagami, Verch:hadamard}.
This is the question of when two sectors of the same quantum field theory have physically isomorphic von Neumann algebras.
The goals of the present section are: (i) to explain the connection between quasiequivalence and ``mutual excitability;'' (ii) to establish weaker algebraic criteria that are equivalent to ``one-way excitability;'' (iii) to demonstrate that one-way excitability of generic states is equivalent to one-way excitability of the canonical purifications; and (iv) to accomplish points (i)-(iii) in a framework that is appropriate for general interacting field theories, rather than the C$^*$-algebraic framework of \cite{Woronowicz:purification} that is only appropriate for free theories.

While later sections of this paper are specifically about free field theory, this section is written in the fashion of \cite{Sorce:paper1}, where a $*$-algebraic framework is used to encourage future generalizations beyond the setting of free fields.

\subsection{Quasiequivalence and mutual excitability}
\label{sec:mutual-excitability}

In \cite{Haag-Kastler, Woronowicz:purification}, the problem of quasiequivalence is formulated as follows.
One has an abstract C$^*$-algebra $\A_0$ of quantum fields, on which one specifies a continuous quantum state $\omega$ and constructs the GNS representation $\H_{\omega}.$
Within this representation, the C$^*$-algebra $\A_0$ can be completed to a von Neumann algebra $\A_{\omega}.$
Two states $\omega_1$ and $\omega_2$ are said to be ``quasiequivalent'' if there is a $*$-isomorphism between $\A_{\omega_1}$ and $\A_{\omega_2}$ that acts as the identity on $\A_0.$
We will begin by explaining the equivalence of this criterion to ``mutual excitability,'' then generalize to the setting where we may only assume that $\A_0$ is a $*$-algebra.

As explained in the introduction, it is natural to say that the state $\omega_2$ can be ``excited'' out of $\omega_1$ if there is a density matrix on $\H_{\omega_1}$ that reproduces all of the $\omega_2$ correlation functions.
Concretely, we say ``$\omega_2$ can be excited out of $\omega_1$'' if there is a density matrix $\rho_{\omega_2}$ on $\H_{\omega_1}$ satisfying\footnote{We note that this density matrix representative will \textit{not} generally live in $\A_{\omega_1}$ itself --- in fact this is impossible if $\A_{\omega_1}$ is not type I (see \cite{Sorce:types}).}
\begin{equation}
	\tr_{\H_{\omega_1}}(\rho_{\omega_2} a)
	= \omega_2(a) \qquad \forall\, a \in \A_0.
\end{equation}
We write this excitability relation as $\omega_2 \prec \omega_1.$

Suppose now that the states $\omega_1$ and $\omega_2$ are quasiequivalent.
This means there is a $*$-isomorphism $\alpha : \A_{\omega_1} \to \A_{\omega_2}$ that acts as the identity on $\A_0.$
In this case, one can make use of a foundational theorem about von Neumann algebras --- see e.g.  \cite[corollary 5.13]{Stratila:book} --- which says that every $*$-isomorphism is automatically continuous with respect to the ultraweak topology.\footnote{A review of the ultraweak topology can be found in \cite[section 2.1]{Sorce:modular}, but the basic idea is that in the ultraweak topology on a von Neumann algebra $\A$, a sequence (or net) of operators $a_n \in \A$ converges to an operator $a \in \A$ if and only if we have
	\begin{equation*}
		\tr(\rho a_n) \to \tr(\rho a)
	\end{equation*}
	for every density matrix $\rho$ acting on $\H$.}
Since $\omega_2$ is automatically ultraweakly continuous as a functional on $\A_{\omega_2},$ the pullback $\omega_2 \circ \alpha$ is ultraweakly continuous as a functional on $\A_{\omega_1}.$
Moreover, because $\alpha$ preserves $\A_0,$ one has
\begin{equation}
	(\omega_2 \circ \alpha)(a)
	= \omega_2(a) \qquad \forall\, a \in \A_0.
\end{equation}
This means that there is an ultraweakly continuous functional on $\A_{\omega_1}$ that reproduces the correlation functions of $\omega_2.$
It is a basic fact about von Neumann algebras --- see e.g. \cite[theorem 46.4]{Conway:book} --- that every positive, ultraweakly continuous functional has a density matrix representative.
Putting these pieces together, we see that when $\omega_1$ and $\omega_2$ are quasiequivalent, we have the excitability relation $\omega_2 \prec \omega_1.$
In fact, because $\alpha$ is a $*$-isomorphism, it is invertible, and we can run the argument the other way --- this implies that when $\omega_1$ and $\omega_2$ are quasiequivalent, we have $\omega_2 \prec \omega_1$ \textit{and} $\omega_1 \prec \omega_2,$ i.e., we have mutual excitability.

In fact, it turns out that mutual excitability is \textit{the same statement} as quasiequivalence.
That is, if we assume $\omega_2 \prec \omega_1$ and $\omega_1 \prec \omega_2,$ then we may conclude that there is a $*$-isomorphism from $\A_{\omega_1}$ to $\A_{\omega_2}$ that preserves $\A_0.$
First, beginning with the assumption $\omega_2 \prec \omega_1,$ one can go to ``standard form'' for $\A_{\omega_1}$ by selecting a particular faithful functional $\phi_1 : \A_{\omega_1} \to \comps$ and going to its GNS representation $\H_{\phi_1}.$
A review of this procedure can be found in \cite[section 4.1]{Sorce:paper1}; the important point is that one has $\A_{\omega_1} \cong \A_{\phi_1},$ i.e., no new operators are introduced in the GNS representation $\H_{\phi_1}.$
Since we have assumed $\omega_2 \prec \omega_1,$ there is a natural extension of $\omega_2$ to an ultraweakly continuous functional on $\A_{\omega_1},$ and hence on $\A_{\phi_1}.$
Moreover, because $|\phi_1\rangle$ is cyclic and separating for $\A_{\phi_1}$ within $\H_{\phi_1}$,  every ultraweakly continuous functional on $\A_{\phi_1}$ has a vector representative --- see e.g. \cite[corollary 5.24]{Stratila:book}.
So there is a representative $|\omega_2\rangle_{\phi_1} \in \H_{\phi_1}.$
Defining the operator $U : \H_{\phi_1} \to \H_{\omega_2}$ by
\begin{equation}
    U a |\omega_2\rangle_{\phi_1} = a |\omega_2\rangle_{\omega_2}, \qquad a \in \A_0,
\end{equation}
and defining $U$ to vanish on the orthocomplement of the set $\A_{0} |\omega_2\rangle_{\phi_1},$ one easily sees that $U$ is a partial isometry.
One also easily sees that $U$ conjugates elements of $\A_0$ acting on $\H_{\phi_1}$ to the same elements of $\A_0$ acting on $\H_{\omega_2}.$
So conjugation by $U$ serves as the ``identity map'' on $\A_0$ from $\A_{\omega_1}$ to $\A_{\omega_2}$, and conjugation by a partial isometry is always ultraweakly continuous.
One therefore concludes, from the assumption $\omega_2 \prec \omega_1,$ that the identity map from $\A_0 \subseteq \A_{\omega_1}$ to $\A_0 \subseteq \A_{\omega_2}$ is ultraweakly continuous.
Repeating the same argument for the assumption $\omega_1 \prec \omega_2,$ one forms a chain of ultraweakly continuous maps
\begin{equation}
	\A_{\omega_1} \to \A_{\omega_2} \to \A_{\omega_1}
\end{equation}
that compose to the identity.
It follows that the ``identity map'' on $\A_0$ provides a $*$-isomorphism from $\A_{\omega_1}$ to $\A_{\omega_2}$.

Thus far we have established that ``quasiequivalence'' and ``mutual excitability'' are identical notions for the C$^*$-algebraic framework considered in \cite{Haag-Kastler, Woronowicz:purification}.
However, this framework is not believed to be appropriate for general quantum field theories \cite{Hollands:axioms, Hollands:review}.
In free field theories, because the commutator of two fields is proportional to the identity, the Baker-Campbell-Hausdorff formula allows one to create a C$^*$-algebra of ``exponentiated fields'' $e^{i \phi[f]}$ \cite{Slawny:1972iq, Bratteli:1996xq}.
For a general field theory, however, we can only assume that we have a $*$-algebra generated by smeared fields $\phi[f],$ and they cannot be exponentiated in an abstract way that is independent of representation.

To generalize the above discussion, we now assume that $\A_0$ is only an abstract $*$-algebra.
As reviewed in \cite[appendix A.3]{Sorce:paper1}, any algebraic state $\omega : \A_0 \to \comps$ gives rise to a GNS representation $\H_{\omega}$ on which the elements of $\A_0$ act as closed, unbounded operators.
We follow the choice made in \cite{Sorce:paper1}, where $\A_{\omega}$ is taken to be the smallest von Neumann algebra on $\H_{\omega}$ such that all elements of $\A_0$ are ``affiliated.''
Equivalently, as proved in \cite[appendix A.4]{Sorce:paper1}, one may take arbitrary polar decompositions of elements in $\A_0,$
\begin{equation}
	a = u_a |a|,
\end{equation}
and take $\A_{\omega}$ to be generated by all partial isometries $u_a,$ their adjoints $u_a^{\dagger},$ and bounded functions $f(|a|)$ of the self-adjoint operators $|a|.$

In this setting, what should we mean when we say that $\omega_1$ is quasiequivalent to $\omega_2$?
We want to say that this happens whenever there is a $*$-isomorphism from $\A_{\omega_1}$ to $\A_{\omega_2}$ that ``preserves the field algebra $\A_0.$''
But because $\A_0$ acts in an unbounded way on each space $\H_{\omega_j},$ it is not a subalgebra of either von Neumann algebra $\A_{\omega_j}.$
Instead, it makes sense to say that a $*$-isomorphism $\alpha : \A_{\omega_1} \to \A_{\omega_2}$ is a quasiequivalence if it preserves the polar decompositions of elements in $\A_0$.
So if we decompose $a \in \A_0$ as $u_a^{(1)} |a|^{(1)}$ on $\H_{\omega_1}$ and as $u_a^{(2)} |a|^{(2)}$ on $\H_{\omega_2},$ then for quasiequivalence we should require
\begin{equation}
	\alpha(u_a^{(1)})
	= u_a^{(2)}
\end{equation}
as well as 
\begin{equation}
	\alpha(f(|a|^{(1)}))
	= f(|a|^{(2)}), \qquad f \text{ bounded}.
\end{equation}
As for the excitability relation, it makes sense to say that $\omega_2$ can be excited out of $\omega_1$ if there is a density matrix $\rho_{\omega_2}$ on $\A_{\omega_1}$ satisfying
\begin{align}
	\tr_{\H_{\omega_1}}(\rho_{\omega_2} u_a^{(1)})
	= \langle \omega_2 | u_a^{(2)} | \omega_2\rangle,
\end{align}
as well as 
\begin{equation}
	\tr_{\H_{\omega_1}}(\rho_{\omega_2} f(|a|^{(1)}))
	= \langle \omega_2 | f(|a|^{(2)}) | \omega_2\rangle, \qquad f \text{ bounded},
\end{equation}
and similarly for all finite products of operators like $u_a$ and $f(|a|),$ as well as $u_a^{\dagger}.$

By a similar logic to the C$^*$-algebra case, it is obvious that quasiequivalence of $\omega_1$ and $\omega_2$ implies mutual excitability.
The converse is slightly more subtle.
As in the C$^*$-algebra case, one begins with the assumption $\omega_2 \prec \omega_1$ and passes to standard form for $\A_{\omega_1}$ by going to an equivalent von Neumann algebra $\A_{\phi_1}$ acting on $\H_{\phi_1}.$
In the C$^*$-algebraic case, this isomorphism would map $\A_0 \subseteq \A_{\omega_1}$ to an equivalent subalgebra of $\A_{\phi_1}.$
In the general, $*$-algebraic case, we must deal with the fact that $\A_0$ is not a subalgebra of $\A_{\omega_1}.$
Nevertheless, the unbounded operators in $\A_0$ actually do map, in a natural way, into operators acting on $\H_{\phi_1}$.
This map, constructed in \cite[section 4.1]{Sorce:paper1}, uses the polar decomposition explicitly.
As a consequence, the polar decompositions of elements of $\A_0$ acting on $\H_{\omega_1}$ are canonically isomorphic, within $\A_{\omega_1},$ to the polar decompositions acting on $\H_{\phi_1}$ within $\A_{\phi_1}.$
One can then proceed as in the C$^*$ algebraic case to construct a map $U$ that conjugates polar decompositions of $\A_0$ elements on $\A_{\omega_1}$ to the corresponding polar decompositions of $\A_0$ elements on $\A_{\omega_2}.$
This map is ultraweakly continuous, and repeating the argument for $\omega_1 \prec \omega_2$ provides an inverse, so the conjugation map is a quasiequivalence isomorphism, as desired.

\subsection{One-way excitability and normal homomorphisms}
\label{sec:one-way-excitability}

We have shown that algebraic quasiequivalence is the same as the mutual excitability relation $\omega_2 \prec \omega_1$ and $\omega_1 \prec \omega_2.$
Here, we show the following weakened statements:
\begin{itemize}
	\item We have $\omega_2 \prec \omega_1$ if and only if there exists an ultraweakly continuous $*$-\textit{homo}morphism from $\A_{\omega_1}$ to $\A_{\omega_2}$ that acts as the identity on field operators.\footnote{``Acting as the identity'' is meant in the ``polar decomposition sense'' of the previous subsection.}
	An ultraweakly continuous $*$-homomorphism is called a ``normal homomorphism,'' and we will use this terminology below.\footnote{Let us remark that any sensible definition of excitability should be transitive; that is, $\omega_3\prec \omega_2$ and $\omega_2 \prec \omega_1$ should imply $\omega_3 \prec \omega_1$.
By recasting excitability in terms of normal homomorphisms, this transitivity property becomes automatic; clearly the maps ${\cal A}_{\omega_1}\to {\cal A}_{\omega_2}$ and ${\cal A}_{\omega_2} \to {\cal A}_{\omega_3}$ compose to a normal homomorphism ${\cal A}_{\omega_1}\to {\cal A}_{\omega_3}$.}
	\item If we have $\omega_2 \prec \omega_1$, and $\A_{\omega_1}$ is a factor, then we automatically have mutual excitability $\omega_1 \prec \omega_2.$
	Conversely, if $\A_{\omega_1}$ is not a factor, then there always exists a state $\omega_2$ that can be excited out of $\omega_1$ but not vice versa.
\end{itemize}

The proof of the first statement is almost trivial given what we already know.
When we proved that having a quasiequivalence map from $\A_{\omega_1}$ to $\A_{\omega_2}$ induces the relation $\omega_2 \prec \omega_1,$ we actually did not use the assumption that the homomorphism was invertible; we only used its ultraweak continuity and the fact that it preserves the field algebra.
The converse relation, $\omega_1 \prec \omega_2,$ was the only part of the argument that used the invertibility assumption.
So by repeating the logic of the previous subsection, one finds that whenever we have a field-preserving normal homomorphism from $\A_{\omega_1}$ to $\A_{\omega_2},$ we have $\omega_2 \prec \omega_1.$

Conversely, given $\omega_2 \prec \omega_1,$ and using the trick of passing to standard form, we already showed in the preceding subsection that one obtains a normal homomorphism from $\A_{\omega_1}$ to $\A_{\omega_2}$ that preserves polar decompositions; the converse assumption $\omega_1 \prec \omega_2$ was only needed if we wanted to show that this normal homomorphism had an inverse.

As for the second bullet point above, we first assume that $\omega_1$ is a state on $\A_0$ such that the algebra $\A_{\omega_1}$ is not a factor.
This means that there is a nontrivial center $\A_{\omega_1} \cap \A_{\omega_1}'.$
Take $P$ to be a nontrivial projection operator in this algebra, and define the abstract algebraic state
\begin{equation}
	\omega_2(a)
	\equiv \frac{\langle P \omega_1 | a | P \omega_1\rangle_1}{\langle P \omega_1 | P \omega_1\rangle_1}, \qquad a \in \A_0.
\end{equation}
This expression makes sense even though $a$ is unbounded, because $P$ is in the commutant of $\A_{\omega_1}$ and $a$ is affiliated to $\A_{\omega_1}$; consequently, $|P \omega_1\rangle$ is in the domain of $a$ and in fact we have
\begin{equation}
	\omega_2(a)
	= \frac{\langle P \omega_1 | a | \omega_1\rangle_1}{\langle P \omega_1 | P \omega_1\rangle_1}.
\end{equation}
The GNS representation of $\omega_2$ embeds isometrically within $\H_{\omega_1}$ via the formula
\begin{equation}
	V a |\omega_2\rangle_2
	= a \frac{P |\omega_1\rangle_1}{\lVert P |\omega_1\rangle_1 \rVert}.
\end{equation}
Conjugation by $V^{\dagger}$ maps operators in $\A_0$ acting on $\H_{\omega_1}$ to the equivalent operators acting on $\H_{\omega_2},$ since we have
\begin{equation}
	V^{\dagger} a V b |\omega_2\rangle_2
	= V^{\dagger} a b \frac{P |\omega_1\rangle_1}{\lVert P |\omega_1\rangle_1 \rVert}
	= a b |\omega_2\rangle_2.
\end{equation}
This conjugation preserves polar decompositions, since we have\footnote{A sneaky trick here is that $VV^{\dagger}$ is the projection onto the image of $V$, which is the closure of the space $\A_{\omega_1} P |\omega_1\rangle.$
	This projection commutes with $\A_{\omega_1}$ --- as does any projection onto an invariant space for $\A_{\omega_1}$ --- so we can perform the manipulation
	\begin{equation*}
		V^{\dagger} u_a^{(1)} |a|^{(1)} V
		= V^{\dagger} u_a^{(1)} |a|^{(1)} (VV^{\dagger}) V
		= V^{\dagger} u_a^{(1)} (VV^{\dagger})|a|^{(1)} V.
\end{equation*}
\label{foot:sneakyfoot}}
\begin{equation}
	u_a^{(2)} |a|^{(2)}
	= V^{\dagger} u_a^{(1)} |a|^{(1)} V
	= (V^{\dagger} u_a^{(1)} V) (V^{\dagger} |a|^{(1)} V),
\end{equation}
and from uniqueness of the polar decomposition it is easy to show
\begin{equation}
	u_a^{(2)}
	= V^{\dagger} u_a^{(1)} V
\end{equation}
and similarly for $|a|.$
Consequently, the state $\omega_2$ can be excited out of $\omega_1$ as $P |\omega_1\rangle_1 / \lVert P |\omega_1\rangle_1 \rVert,$ and conjugation by $V^{\dagger}$ provides an explicit field-preserving normal homomorphism from $\A_{\omega_1}$ to $\A_{\omega_2}$.

On the other hand, we can show that for the state $\omega_2$ constructed above, one has $\omega_1 \nprec \omega_2.$
Otherwise, if $\omega_1$ \textit{could} be excited out of $\omega_2,$ then by the analysis of the preceding subsection, conjugation by $V^{\dagger}$ would have to be a bijection from $\A_{\omega_1}$ to $\A_{\omega_2}.$
But this is not the case, since we have $V^{\dagger} (1 - P) V = 0.$
To see this, one simply checks
\begin{equation}
	V^{\dagger} (1 - P) V a |\omega_2\rangle_2
	= V^{\dagger} (1 - P) a \frac{P |\omega_1\rangle_1}{\lVert P |\omega_1\rangle_1 \rVert}
	= V^{\dagger} (1 - P) P \frac{a |\omega_1\rangle_1}{\lVert P |\omega_1\rangle_1 \rVert}
	= 0,
\end{equation}
where we have used that $P$ is in the center of $\A_{\omega_1},$ so it commutes with each $a \in \A_0.$

We have now proved one part of the second bullet point from the beginning of this subsection --- we have proved that when $\A_{\omega_1}$ is not a factor, it is possible to construct a state $\omega_2$ with $\omega_2 \prec \omega_1$ but $\omega_1 \nprec \omega_2.$
We now show the converse --- when $\A_{\omega_1}$ is a factor and $\omega_2 \prec \omega_1,$ we always have $\omega_1 \prec \omega_2.$

By the same kind of ``standard form'' argument used several times in this section, it suffices to assume that $|\omega_1\rangle_1$ is separating for $\A_{\omega_1}.$
So we assume that $\omega_2$ is represented by a vector $|\omega_2\rangle_1$ in $\H_{\omega_1}$, and construct the isometry from $\H_{\omega_2}$ to $\H_{\omega_1}$ given by
\begin{equation}
	V a |\omega_2\rangle_2 = a |\omega_2\rangle_1.
\end{equation}
Conjugation by $V^{\dagger}$ maps $\A_{\omega_1}$ to $\A_{\omega_2}.$
We aim to show that when $\A_{\omega_1}$ is a factor, this map is both injective and surjective.
Surjectivity follows from the fact that $V$ is an isometry, so conjugation by $V$ is a right-inverse for conjugation by $V^{\dagger}.$
As for injectivity, we assume toward contradiction that there exists a nontrivial operator $L \in \A_{\omega_1}$ with $V^{\dagger} L V = 0.$
The image of $V$ is the closure of $\A_{\omega_1} |\omega_2\rangle_1,$ and the projection onto this space, $P'$, is an element of the commutant $\A_{\omega_1}'.$
The formula $V^{\dagger} L V = 0$ is equivalent to $P' L P' = 0,$ or simply $L P' = 0$ since $P'$ commutes with $L$.

The assumption that there is a nontrivial $L \in \A_{\omega_1}$ with $L P' = 0$ is the statement that the support of $P'$ is \textit{non-separating} for the von Neumann algebra $\A_{\omega_1}.$
It is an easy exercise\footnote{Proof: Let $a \in \A_{\omega_1}$ be a nonzero operator with $a b |\omega_2\rangle_1 = 0$ for all $b \in \A_{\omega_1}.$ The operator $a^{\dagger} a$ is positive, so we can pick some nonzero spectral projection $\Pi \in \A_{\omega_1}$ that is bounded above by a constant multiple of $a^{\dagger} a,$ giving $\langle b \omega_2| \Pi | b \omega_2\rangle_1 \leq 0$ and hence $\Pi b |\omega_2\rangle_1 = 0$ for all $b \in \A_{\omega_1}.$ Since $\Pi$ is in $\A_{\omega_1},$ this gives $\Pi a' b |\omega_2\rangle_1 = 0$ for all $b \in \A_{\omega_1}$ and $a' \in \A_{\omega_1}',$ so the support of $\Pi$ is orthogonal to the subspace on the left side of equation \eqref{eq:proper-inclusion-cyclic}.} to show that this means the support of $P'$ is \textit{non-cyclic} for $\A_{\omega_1}',$ i.e.,
\begin{equation} \label{eq:proper-inclusion-cyclic}
	(\text{smallest closed vector space containing $\A_{\omega_1}' \A_{\omega_1} |\omega_2\rangle_1$}) \subsetneq \H_{\omega_1}.
\end{equation}
But the left-hand side of the above inclusion is $(\A_{\omega_1} \vee \A_{\omega_1}') |\omega_2\rangle_1,$ where $\A_{\omega_1} \vee \A_{\omega_1}'$ is the von Neumann algebra generated by $\A_{\omega_1}$ and $\A_{\omega_1}'.$
It is easy to check that when $\A_{\omega_1}$ is a factor, this ``doubly generated'' algebra is the full algebra of bounded operators on Hilbert space, and the proper inclusion in \eqref{eq:proper-inclusion-cyclic} cannot hold.
We therefore conclude that when $\A_{\omega_1}$ is a factor, conjugation by $V^{\dagger}$ is injective.
Putting everything together, we conclude that when we have $\omega_2 \prec \omega_1$ and $\A_{\omega_1}$ is a factor, we also have $\omega_1 \prec \omega_2.$

\subsection{Checking excitability with canonical purifications}
\label{sec:canonical-checking}

In the rest of the paper, we will be concerned with establishing concrete conditions for $\omega_2 \prec \omega_1$ when $\omega_1$ and $\omega_2$ are Gaussian states of a free field theory.
The technique we will use is to begin by studying the case where $\omega_1$ and $\omega_2$ are pure, then studying more general states by using canonical purifications.
This technique goes back to \cite{Powers:quasi} for the specific case of ``factorial states'' in free fermion theories.
A general framework was worked out by Woronowicz in \cite{Woronowicz:purification} for factorial states on C$^*$ algebras.
Here we generalize Woronowicz's result away from factorial states, and away from C$^*$ algebras.

In particular, we consider a general $*$-algebra $\A_0$ and an algebraic state $\omega : \A_0 \to \comps.$
In \cite{Sorce:paper1}, the canonical purification of $\omega$ was defined as a particular algebraic state $\hat{\omega} : \A_0 \otimes \A_0^{\text{op}} \to \comps.$
We emphasize that in technical terms, the state $\hat{\omega}$ is not actually pure unless $\A_{\omega}$ is a factor.
In general, $\hat{\omega}$ will be a ``centrally pure'' state, as explained in detail in \cite{Sorce:paper1}.
This means that the von Neumann algebra $(\A_0 \otimes \A_{0}^{\text{op}})_{\hat{\omega}}$ may not be a factor, but its commutant will always be contained within the algebra itself; in other words, one has
\begin{equation}
	(\A_0 \otimes \A_{0}^{\text{op}})_{\hat{\omega}}'
	\subseteq (\A_0 \otimes \A_{0}^{\text{op}})_{\hat{\omega}}.
\end{equation}
Nevertheless, we prove in this subsection the following statement: $\omega_2 \prec \omega_1$ if and only if $\hat{\omega}_2 \prec \hat{\omega}_1$.

For the ``if'' direction, one assumes that $\hat{\omega}_2$ can be excited out of $\hat{\omega}_1.$
This means that there is a normal homomorphism
\begin{equation}
	\hat{\alpha} : (\A_0 \otimes \A_{0}^{\text{op}})_{\hat{\omega}_1}
	\to (\A_0 \otimes \A_{0}^{\text{op}})_{\hat{\omega}_2}
\end{equation}
that preserves the polar decompositions of elements of the $*$-algebra.
Obviously the restriction of this map to $(\A_0 \otimes 1)_{\hat{\omega}_1}$ preserves the polar decompositions of elements of the $*$-algebra $\A_0 \otimes 1,$ and maps to $(\A_0 \otimes 1)_{\hat{\omega}_2}$.
The structure theorem of \cite[section 4.4]{Sorce:paper1} establishes an isomorphism between $(\A_0 \otimes 1)_{\hat{\omega}_j}$ and $\A_{\omega_j}$, and by pulling back through these isomorphisms, we obtain a field-preserving normal homomorphism from $\A_{\omega_1}$ to $\A_{\omega_2},$ which gives $\omega_2 \prec \omega_1.$

For the ``only if'' direction, one assumes $\omega_2 \prec \omega_1.$
Using the structure theorem of \cite[section 4.4]{Sorce:paper1}, it suffices to assume that the states $|\omega_j\rangle_j$ are separating for the algebras $\A_{\omega_j},$ in which case one can map $\H_{\hat{\omega}_j}$ unitarily to $\H_{\omega_j}$.
In this setting, what we are trying to prove is the existence of a normal homomorphism
\begin{equation}
	\A_{\omega_1} \vee \A_{\omega_1}' \to \A_{\omega_2} \vee \A_{\omega_2}'
    \label{eq:purified_homo}
\end{equation}
that maps polar decompositions of elements $a J_{\omega_1} b J_{\omega_1}$ on $\H_{\omega_1}$ to polar decompositions of corresponding elements  $a J_{\omega_2} b J_{\omega_2}$ on $\H_{\omega_2}.$
The assumption we are allowed to use is that $\omega_2$ can be excited out of $\omega_1$.
Since $|\omega_1\rangle_1$ is separating for $\A_{\omega_1}$, we can assume that the representative of $\omega_2$ is a vector, and we may as well take it to be the (unique!) vector representative within the natural cone of $|\omega_1\rangle_1$ \cite{Araki:natural-1}.
What this means in practice is that there is a vector $|\omega_2^{\natural}\rangle_1 \in \H_{\omega_1}$ that (i) reproduces the correlators of $\omega_2$ on $\A_{\omega_1}$, and (ii) is fixed by the modular conjugation $J_{\omega_1}.$

We can construct an isometry from $\H_{\omega_2}$ to $\H_{\omega_1}$ using
\begin{equation} \label{eq:natural-def-V}
	V a |\omega_2\rangle_2
	= a |\omega_2^{\natural}\rangle_1, \qquad a \in \A_0.
\end{equation}
We would now like to construct the desired homomorphism \eqref{eq:purified_homo}
as the map which conjugates by $V^{\dag}$; that is, we would like to show
\begin{equation}
    V^{\dag} (aJ_{\omega_1} b J_{\omega_1})V c\ket{\omega_2}_2
    =
    (aJ_{\omega_2}b J_{\omega_2})c\ket{\omega_2}_2.
    \label{eq:conj_aJbJ}
\end{equation}
In fact, it is enough to just show
\begin{equation}
    V^{\dag} (J_{\omega_1} b J_{\omega_1})V c\ket{\omega_2}_2
    =
    (J_{\omega_2}b J_{\omega_2})c\ket{\omega_2}_2.
    \label{eq:conj_JbJ}
\end{equation}
To see this, note that one can compute $V^{\dagger} a V = a$ using equation \eqref{eq:natural-def-V}.
So as long as we are allowed to insert a factor of $VV^{\dag}$ after the first $a$ in equation \eqref{eq:conj_aJbJ}, we recover equation \eqref{eq:conj_aJbJ} using equation \eqref{eq:conj_JbJ}.
To perform this insertion,
note that since the image of $V$ is an invariant space for $\A_{\omega_1},$  the projection $P' = V V^{\dagger}$ is in $\A_{\omega_1}'$.
Hence, we can perform the manipulation $V^{\dag}a=V^{\dag}VV^{\dag}a=V^{\dag}aVV^{\dag}$.

To prove equation \eqref{eq:conj_JbJ}, we will use the identity $V^{\dagger} J_{\omega_1} V = J_{\omega_2}.$
The proof of that identity is straightforward but tedious, so we present it in appendix \ref{app:modular-conjugation-proof}.
Once we know this identity, we prove equation \eqref{eq:conj_JbJ} using the following steps:
\begin{align}
	\begin{split}
		V^{\dagger} (J_{\omega_1} a J_{\omega_1}) V b |\omega_2\rangle_2
		& = V^{\dagger} J_{\omega_1} a J_{\omega_1} b |\omega_2^{\natural}\rangle_1 \\
		& = V^{\dagger} J_{\omega_1} a (J_{\omega_1} b J_{\omega_1}) |\omega_2^{\natural}\rangle_1 \\
		& = V^{\dagger} J_{\omega_1} (J_{\omega_1} b J_{\omega_1}) a |\omega_2^{\natural}\rangle_1 \\
		& = V^{\dagger} b J_{\omega_1} a |\omega_2^{\natural}\rangle_1 \\
		& = V^{\dagger} (V V^{\dagger}) b J_{\omega_1} (V V^{\dagger}) a |\omega_2^{\natural}\rangle_1 \\
        & = V^{\dagger} b (V V^{\dagger}) J_{\omega_1} V (V^{\dagger} a |\omega_2^{\natural}\rangle_1) \\
		& = (V^{\dagger} b V) (V^{\dagger} J_{\omega_1} V) a |\omega_2\rangle_2 \\
		& = b J_{\omega_2} a |\omega_2\rangle_2 \\
		& = J_{\omega_2} a J_{\omega_2} b |\omega_2\rangle_2.
	\end{split}
\end{align}
So conjugation by $V^{\dagger}$ maps $J_{\omega_1} a J_{\omega_1}$ to $J_{\omega_2} a J_{\omega_2}.$
This provides a normal homomorphism from $\A_{\omega_1} \vee \A_{\omega_1}'$ to $\A_{\omega_2} \vee \A_{\omega_2}'$ that preserves polar decompositions, completing the proof that $\omega_2 \prec \omega_1$ implies $\hat{\omega}_2 \prec \hat{\omega}_1.$

\section{Some necessary conditions for Gaussian states}
\label{sec:general-necessary}

So far we have been looking at conditions for excitability for general states of general quantum theories.
Here, we establish some necessary conditions for excitability of zero-mean Gaussian states in free theories.
These conditions will form the basis of the detailed analysis of the rest of the paper.

\subsection{Boundedness of the symmetrized inner product}
\label{sec:mu-boundedness}

Suppose that $\omega_1$ and $\omega_2$ are zero-mean Gaussian states, and suppose that there is a density matrix $\rho_{\omega_2}$ in the GNS space $\H_{\omega_1}$ that reproduces the correlation functions of $\omega_2$ on the Weyl algebra.
The existence of this density matrix turns out to imply a useful relationship between the symmetrized inner products $\mu_1$ and $\mu_2$ that were defined in appendix \ref{app:phase-space}.
Recall, from that appendix, that one has
\begin{equation}
	\langle f | g \rangle_{\mu_1}
	= \frac{\omega_1(\phi[f^*] \phi[g] + \phi[g] \phi[f^*])}{2},
\end{equation}
that the Hilbert space of test functions endowed with this inner product is called $\K_{\mu_1},$ and that one defines an operator $R_1$ that implements the commutator,
\begin{equation}
	\langle f | R_1 | g\rangle_{\mu_1}
	= \frac{1}{2} \Omega[f^*, g].
\end{equation}
As explained in appendix \ref{app:vN-generating}, the map from test functions to Weyl operators, $f \mapsto e^{i \phi[f]},$ is continuous with respect to the $\mu_1$ topology on the domain and the strong-$\H_{\omega_1}$ topology on the image.\footnote{A quick introduction to operator topologies can be found in \cite[section 2.1]{Sorce:modular}.}
Since the image of this map is in the set of unitary operators, which all have norm one, this map is actually continuous in the ultrastrong topology --- it is a basic fact that the strong and ultrastrong topologies agree on bounded sets.
Finally, because the ultraweak topology is weaker than the ultrastrong topology, one may conclude that the map $f \mapsto e^{i\phi[f]}$ is continuous when one uses the ultraweak topology in the image space.
Consequently, if one has a sequence of real test functions $f_n$ that converges to zero with respect to $\mu_1$, then one has
\begin{equation}
	\tr_{\H_{\omega_1}}(\rho_{\omega_2} e^{i \phi[f_n]})
	\to 1.
\end{equation}
Expanding $e^{i \phi[f_n]}$ as a power series, one finds
\begin{equation}
	e^{- \langle f_n | f_n\rangle_{\mu_2}/2} \to 1,
\end{equation}
hence $\langle f_n | f_n\rangle_{\mu_2} \to 0.$

So given any real sequence $f_n$ with $\langle f_n | f_n\rangle_{\mu_1} \to 0,$ one has $\langle f_n | f_n\rangle_{\mu_2} \to 0.$
Since one has $\langle u + i v | u + i v\rangle_{\mu_1} = \langle u | u \rangle_{\mu_1} + \langle v | v \rangle_{\mu_1}$ for $u$ and $v$ real, one can reach the same conclusion for general complex $f_n.$
It follows that the inner product $\mu_2$ is bounded by a multiple of the inner product $\mu_1,$ i.e., there exists some number $c$ with
\begin{equation} \label{eq:mu2-bound}
	\langle f | f \rangle_{\mu_2} \leq c\, \langle f | f\rangle_{\mu_1}
\end{equation}
for any test function $f.$
One writes this relation as $\mu_2 \prec \mu_1,$ and says that $\mu_2$ is ``dominated by $\mu_1.$''
Note that in the special case where $\omega_1$ is factorial, meaning that $\A_{\omega_1}$ is a factor, we showed in section \ref{sec:one-way-excitability} that $\omega_2 \prec \omega_1$ automatically implies $\omega_1 \prec \omega_2.$
So under the assumption $\omega_2 \prec \omega_1,$ we automatically have that $\mu_1$ is bounded by a multiple of $\mu_2$:
\begin{equation} \label{eq:mu1-bound}
	\langle f | f \rangle_{\mu_1} \leq c'\, \langle f | f\rangle_{\mu_2}.
\end{equation}
In particular, this will be the case when $\omega_1$ is pure, since then we have $\A_{\omega_1} = \B(\H_{\omega_1}),$ which is a factor.

Whenever we have $\mu_2 \prec \mu_1,$ we can define a bounded map $L : \K_{\mu_1} \to \K_{\mu_2}$ that acts on test functions as
\begin{equation}
	L |f\rangle_{\mu_1}
		= |f \rangle_{\mu_2},
\end{equation}
so that the image of $L$ is dense in ${\cal K}_{\mu_2}$.
This map $L$ will be used many times in the rest of the paper, as it is the only tool we have for moving between the spaces $\K_{\mu_1}$ and $\K_{\mu_2}.$
In particular, while $L$ acts ``as the identity'' on test functions, other elements of $\K_{\mu_1}$ cannot be identified naturally with elements of $\K_{\mu_2}$; they can only be \textit{mapped} to $\K_{\mu_2}$ using $L$.

One advantage of having introduced $L$ is that it lets us formulate excitability using the ``completed spaces'' $\K_{\mu_1}$ and $\K_{\mu_2}$ instead of the more restrictive space of test functions.
Recall that, by definition, we said that we can excite $\omega_2$ out of $\omega_1$ if for some density matrix $\rho_{\omega_2}$ in $\mathcal H_{\omega_1}$ we have
\begin{equation}
\mathrm{tr}_{\mathcal{H}_{\omega_1}} (\rho_{\omega_2} e^{i\phi[f]}) = \expval{e^{i\phi[f]}}{\omega_2}
\end{equation}
for all test functions $f$.
Using the continuity of the map from real test functions to Weyl operators on $\H_{\omega_j}$, one can define a Weyl operator $e^{i \phi[\psi]}$ acting on $\H_{\omega_j}$ for \textit{any} real element of $\K_{\mu_j}$ --- see appendix \ref{app:review} for details.
Making use of continuity, one can equivalently formulate excitability as the requirement that for any real $|\psi\rangle$ in $\K_{\mu_1},$ one has
\begin{equation}
\mathrm{tr}_{\mathcal{H}_{\omega_1}} (\rho_{\omega_2} e^{i\phi[\psi]}) = \expval{e^{i\phi[L \psi]}}{\omega_2}.
\end{equation}
This notation will appear many times in what follows.

Note that while $L$ is bounded whenever we have $\mu_2 \prec \mu_1$, nothing we have discussed so far implies that it is invertible.
Ultimately, we will actually see that invertibility of $L$ does follow from the excitability relation $\omega_2 \prec \omega_1$, but this will take some doing and relies heavily on results in the rest of the paper.
In the next subsection, we will show at least that when one has $\omega_2 \prec \omega_1$, the map $L$ cannot have a kernel.

Before proceeding to that result, we demonstrate a very useful identity relating $L$ to the operators $R_1$ and $R_2$ that implement the commutator on $\K_{\mu_1}$ and $\K_{\mu_2},$ which are defined via the identity
\begin{equation}
	\langle f | R_j | g\rangle_{\mu_j}
		= \frac{1}{2} \Omega[f^*, g].
\end{equation} 
We have
\begin{equation}
	\langle f | R_1 | g \rangle_{\mu_1}
		= \frac{1}{2} \Omega[f^*, g]
		= \frac{1}{2} \langle f | R_2 | g \rangle_{\mu_2}
		= \frac{1}{2} \langle f | L^{\dagger} R_2 L |g \rangle_{\mu_1}.
\end{equation}
This gives
\begin{equation} \label{eq:R1-from-R2}
	L^{\dagger} R_2 L = R_1,
\end{equation}
which we use many times in the remainder of the paper.
It will also be useful to consider the polar decomposition of $L$, which we write as 
\begin{equation} \label{eq:L-polar}
    L=vQ^{1/2},
\end{equation}
where $Q^{1/2}=\sqrt{L^\dag L}$, and $v$ is a partial isometry such that $v^\dag v$ projects onto the support of $L$, and $v v^{\dagger}$ projects onto the closure of the image of $L$.

\subsection{\texorpdfstring{No kernel for $L$}{No kernel for L}}

So far, we know that when we have $\omega_2 \prec \omega_1,$ the identity map $L | f\rangle_{\mu_1} = |f\rangle_{\mu_2}$ is bounded.
We will now show that when excitability holds, $L$ cannot have a kernel.
This is a striking property of Gaussian states that has no analogue for general states in general quantum field theories, and it will be used in section \ref{sec:other-way} --- after developing the rest of the paper's machinery --- to rule out the possibility of having $\omega_2 \prec \omega_1$ while having $\omega_1 \not\prec \omega_2.$

First, assume that there is a state $|\psi\rangle_{\mu_1}$ in the kernel of $L$; without loss of generality, we will take $|\psi\rangle_{\mu_1}$ to be normalized.
Since $L$ commutes with complex conjugation, we may as well take $|\psi\rangle_{\mu_1}$ to be fixed by the antiunitary complex conjugation map $\Gamma$ on $\K_{\mu_1}.$
For any real number $t,$ we have
\begin{equation}
	\tr_{\H_{\omega_1}}(\rho_{\omega_2} e^{i t \phi[\psi]})
		= \langle \omega_2 | e^{i t \phi[L \psi]} |\omega_2\rangle
		= 1.
\end{equation}
This implies that the support of $\rho_{\omega_2}$ lies within the $+1$ eigenspace of the unitary operator $e^{i t \phi[\psi]}.$
For if we diagonalize the density matrix as $\rho_{\omega_2} = \sum_j p_j |e_j\rangle \langle e_j|,$ then we have
\begin{equation}
	1 = \sum_j p_j \langle e_j | e^{i t \phi[\psi]} |e_j\rangle.
\end{equation}
Unitarity of $e^{i t \phi[\psi]}$ means that all of the expectation values in this expression have norm bounded above by one, and since the $p_j$ coefficients are positive and sum to one, the above equation can only be satisfied if one has $\langle e_j | e^{i t \phi[\psi]} |e_j\rangle = 1$ for each $j,$ which implies $e^{i t \phi[\psi]}|e_j\rangle = |e_j\rangle.$
This establishes the claim that the support of $\rho$ is in the $+1$ eigenspace of each $e^{i t \phi[\psi]}.$

The simultaneous $+1$ eigenspace of each $e^{i t \phi[\psi]},$ for generic real $t,$ is the same as the zero-eigenspace of $\phi[\psi].$
So we know that $\rho_{\omega_2}$ is supported within the zero eigenspace of $\phi[\psi].$
From this we will derive a contradiction, by showing that the zero eigenspace of $\phi[\psi]$ consists only of the zero vector.

To see this, suppose that $|\zeta\rangle \in \H_{\omega_1}$ were in the zero-eigenspace of $\phi[\psi].$
We will decompose the one-particle Hilbert space $\H_{\omega_1}^{(1)}$ into the span of $\phi[\psi] |\omega_1\rangle,$ together with the orthocomplement of this space.
As explained in appendix \ref{app:Haag-duality}, this induces a factorization of $\H_{\omega_1}$ into two tensor factors:
\begin{equation}
	\H_{\omega_1}
		\cong \H_{\psi} \otimes \H_{\perp}.
\end{equation}
The factor $\H_{\psi}$ is built out of states that can be obtained by acting on $|\omega_1\rangle$ with only powers of the field operator $\phi[\psi].$
The other tensor factor is produced by acting on $|\omega_1\rangle$ with field operators that produce one-particle states orthogonal to $\phi[\psi] |\omega_1\rangle.$
Our putative state $|\zeta\rangle$ can be written in terms of this tensor factorization as
\begin{equation}
	|\zeta\rangle
		= \sum_{m,n} c_{m,n} (:\phi[\psi]^m: | \omega_1\rangle) \otimes |e_n\rangle_{\perp},
\end{equation}
where $|e_n\rangle_{\perp}$ is some orthonormal basis for $\H_{\perp}.$
The operator $\phi[\psi]$ acts only on the first tensor factor, and we have
\begin{equation}
	\phi[\psi] |\zeta\rangle
	= \sum_{m,n} c_{m,n} \left( m :\phi[\psi]^{m-1}: | \omega_1\rangle + :\phi[\psi]^{m+1}: | \omega_1 \rangle \right) \otimes |e_n\rangle_{\perp}.
\end{equation}
The condition that $|\zeta\rangle$ is in the kernel of $\phi[\psi]$ gives the recurrence relation
\begin{equation}
	c_{m+1,n} = - \frac{1}{m+1} c_{m-1, n}.
\end{equation}
For fixed $n,$ this can be used to solve for generic $c_{m,n}$ in terms of the two special coefficients $c_{0,n}$ and $c_{1,n},$ as
\begin{equation}
	c_{2m, n}
		= (-1)^m \frac{1}{(2m)!!} c_{0, n}
\end{equation}
and
\begin{equation}
	c_{2m+1, n}
	= (-1)^m \frac{1}{(2m+1)!!} c_{1, n}.
\end{equation}
From this we obtain
\begin{equation}
	|\zeta\rangle
	= \sum_{m,n} (-1)^m \left( \frac{1}{(2m)!!} c_{0,n} :\phi[\psi]^{2m} : | \omega_1\rangle + \frac{1}{(2m+1)!!} c_{1,n} :\phi[\psi]^{2m+1} : | \omega_1\rangle \right) \otimes |e_n\rangle_{\perp}.
\end{equation}
The norm-squared of this state can be computed directly using the fact that the norm-squared of $:\phi[\psi]^k: |\omega_1\rangle$ is $k!.$
One finds
\begin{equation}
	\langle \zeta | \zeta \rangle
		= \sum_{m, n} \left(\frac{(2m)!}{((2m)!!)^2} |c_{0,n}|^2 + \frac{(2m+1)!}{((2m+1)!!)^2} |c_{1,n}|^2\right) 
\end{equation}
The sums over $m$ do not converge,\footnote{For example, Stirling's approximation shows that $(2m!) / ((2m!!)^2)$ scales like $1/\sqrt{m}$ for large $m.$} so existence of $|\zeta\rangle$ is only possible if we have $c_{0,n} = c_{1,n} = 0,$ hence $|\zeta\rangle = 0.$
This establishes that $\phi[\psi]$ cannot have a nontrivial kernel, meaning that the map $L : \K_{\mu_1} \to \K_{\mu_2}$ cannot have a nontrivial kernel, as we claimed above.

To summarize, with the work of this section, we have proven part of the ``necessary'' direction of our main theorem, \ref{thm:general-theorem}.
In particular, that theorem claims that the general conditions for excitability are (i) $\mu_2 \prec \mu_1$, (ii) $\sqrt{X_2} - \sqrt{X_1}$ Hilbert-Schmidt, and (iii) $Q\equiv L^{\dag}L$ has no kernel.
While we have not yet discussed condition (ii), we have already shown that (i) and (iii) are necessary conditions.

\section{Gaussian pure-state excitability}
\label{sec:pure}

Given a free field algebra $\A_0,$ a state $\omega : \A_0 \to \comps$ is said to be pure if, in the GNS representation $\H_{\omega},$ the von Neumann algebra $\A_{\omega}$ has trivial commutant; that is, $\A_{\omega}$ is the full set ${\cal B}(\H_{\omega})$ of all bounded operators on ${\cal H}_{\omega}$.
In this section, we address the question of excitability for two zero-mean Gaussian states $\omega_1$ and $\omega_2$ that are both pure.

This section is essentially a review of the result from \cite{Shale:unitary}, following the arguments due to Wald in \cite{Wald:particle-creation, Wald:s-matrix}.
We make certain modifications to the arguments that allow us to generalize, in the following section, to centrally pure states; also, in section \ref{sec:pure-two-point}, we present a new argument for the equivalence of the conditions from \cite{Shale:unitary, Wald:particle-creation, Wald:s-matrix} with the conditions from Araki and Yamagami in \cite{Araki-Yamagami}.

We begin by assuming $\omega_2 \prec \omega_1$ and deriving a set of necessary conditions; then, at the end of the section, we will turn the argument around and show that these conditions are sufficient.

\subsection{Vectors versus density matrices}

In principle, our assumptions in this section should be that $\omega_1$ and $\omega_2$ are pure states on the free field algebra (which can be taken to be the C$^*$ algebra of Weyl operators), and that there exists a density matrix $\rho_{\omega_2}$ on $\H_{\omega_1}$ that reproduces the correlation functions of $\omega_2.$
In this case, however, it turns out that whenever $\rho_{\omega_2}$ exists, it must actually be rank 1, and so going forward we can write $\rho_{\omega_2} = |\omega_2\rangle \langle \omega_2|,$ with $|\omega_2\rangle$ a vector representative for $\omega_2$ in $\H_{\omega_1}$.\footnote{Note that we are no longer using the ``subscript notation'' $|\omega_2\rangle_1$ from section \ref{sec:excitability}; nowhere in this section will we write down a vector on $\H_{\omega_2},$ so we can simplify write $|\omega_2\rangle$ for the representative in $\H_{\omega_1}$ without confusion.}

To see this, suppose that $\rho_{\omega_2}$ is a nontrivial density matrix, so that we can write it as a convex combination of other density matrices on $\H_{\omega_1}$:
\begin{equation}
	\rho_{\omega_2} = p \sigma + (1 - p) \tau.
\end{equation}
Each of the density matrices $\sigma$ and $\tau$ defines a state on the von Neumann algebra $\A_{\omega_1},$ and induces a state on the C$^*$ subalgebra of Weyl operators.
This allows us to decompose $\omega_2$ as a convex combination of two states on the C$^*$ algebra, but by purity of $\omega_2$, these two states must be equal.
So $\sigma$ and $\tau$ have the same correlation functions on the full Weyl subalgebra, and since the Weyl subalgebra is ultraweakly dense in $\A_{\omega_1},$ we may conclude that $\sigma$ and $\tau$ reproduce the same correlation functions on all of $\A_{\omega_1}.$
Since $\A_{\omega_1}$ is the full algebra of bounded operators $\B(\H_{\omega_1}),$ this implies that $\sigma$ and $\tau$ are equal as density matrices, which is a contradiction.

\subsection{Null state constraints}

Going forward, the idea will be to write down an ansatz for $|\omega_2\rangle$, show that it is uniquely determined, and then to determine the conditions under which the ansatz is well defined.
This basic approach was developed in \cite{Wald:particle-creation, Wald:s-matrix}.
Using the Fock space structure of $\H_{\omega_1}$ explained in appendix \ref{app:fock}, we can decompose the putative state $|\omega_2\rangle$ into its $n$-particle amplitudes,
\begin{equation}
	|\omega_2\rangle
		= \sum_{n=0}^{\infty} |\omega_2^{(n)}\rangle.
\end{equation}
To constrain these amplitudes, we will use that fact that because $\omega_2$ is pure, the Hilbert space $\K_{\mu_2}$ decomposes, as in appendix \ref{app:pure}, into the $\pm i$ eigenspaces of the operator $R_2.$
These two eigenspaces can be mapped into one another via complex conjugation $\ket{\psi}\mapsto\ket{\psi^*}$, and the ``null states'' of $\omega_2$ are the elements of the $-i$ eigenspace.

Concretely, let $L$ be the map from $\K_{\mu_1}$ to $\K_{\mu_2}$ that acts on test functions as
\begin{equation}
	L |f\rangle_{\mu_1} \equiv |f\rangle_{\mu_2}.
    \label{eq:L_again}
\end{equation}
We showed in section \ref{sec:mu-boundedness} that whenever we have $\omega_2 \prec \omega_1,$ this map is bounded.
Moreover, due to the purity of $\omega_1$,  from section \ref{sec:mu-boundedness} we will also have $\omega_1 \prec \omega_2$, and this implies boundedness of $L^{-1}.$

For any state $|\psi\rangle_{\mu_2}$ in the $-i$ eigenspace of $R_2,$ the corresponding operator $\phi[\psi]$ on the GNS space $\H_{\omega_2}$ annihilates the vacuum.
Also, the operator $\phi[L^{-1} \psi],$ which acts on $\H_{\omega_1},$ must annihilate the vector representative of $\omega_2$.
So for every $|\psi\rangle_{\mu_2}$ in the $-i$ eigenspace of $R_2,$ one has
\begin{equation} \label{eq:null-state-constraint}
	\phi[L^{-1} \psi] |\omega_2\rangle = 0.
\end{equation}
Decomposing $\phi[L^{-1} \psi]$ into creation and annihilation operators as in appendix \ref{app:fock}, and separating out the resulting equation into each $n$-particle sector, one finds an infinite set of identities
\begin{align}
	a[L^{-1} \psi] |\omega_2^{(1)}\rangle
		& = 0, \\
	a[L^{-1} \psi] |\omega_2^{(2)}\rangle + a[L^{-1} \psi^*]^{\dagger} |\omega_2^{(0)}\rangle
		& = 0, \\
	a[L^{-1} \psi] |\omega_2^{(3)} \rangle + a[L^{-1} \psi^*]^{\dagger} |\omega_2^{(1)} \rangle
		& = 0, \\
	\dots
		&
\end{align}
It turns out that when $\omega_1$ and $\omega_2$ are pure, these identities completely fix the form of $|\omega_2\rangle.$
We will show this in a few steps.

\subsection{The odd-particle pieces}
\label{sec:pure-odd-particle}

To solve for the state $|\omega_2\rangle$ in the case that $\omega_1$ and $\omega_2$ are both pure, we will begin by showing that equation \eqref{eq:null-state-constraint} implies that $|\omega_2^{(j)}\rangle$ vanishes for odd $j.$

Let us begin with the simplest identity,
\begin{equation} \label{eq:annihilation-vanishing-pure}
	a[L^{-1} \psi] |\omega_2^{(1)}\rangle
	= 0,
\end{equation}
which holds for every $|\psi\rangle_{\mu_2}$ in the $-i$ eigenspace of $R_2.$
Taking the overlap with $|\omega_1\rangle$, we find
\begin{equation}
	0 = \langle a[L^{-1} \psi]^{\dagger} \omega_1 | \omega_2^{(1)} \rangle
		= \langle \phi[L^{-1} \psi^*] \omega_1 | \omega_2^{(1)}\rangle.
\end{equation}
In other words, $|\omega_2^{(1)}\rangle$ is orthogonal to the subspace of the one-particle Hilbert space that can be made by acting with $\phi[L^{-1} \psi^*]$ for any $|\psi\rangle_{\mu_2}$ obeying $(R_2 + i)|\psi\rangle_{\mu_2} = 0.$
We would like to show this actually implies $\ket*{\omega_2^{(1)}}=0$.

As discussed in appendix \ref{app:phase-space}, the map $U : \phi[f] |\omega_1\rangle \mapsto \sqrt{1 - i R_1} |f\rangle_{\mu_1}$ provides a unitary equivalence between $\H_{\omega_1}$ and a subspace of $\K_{\mu_1}.$
In the present, pure setting, this subspace is just the $+i$ eigenspace of $R_1,$ and the map becomes $\phi[f] |\omega_1\rangle \mapsto \sqrt{2} \Pi_{1,i} |f\rangle_{\mu_1}.$
Under this map, our above equation becomes
\begin{equation}\label{eq:density-Linverse-psi*}
	0 = \langle L^{-1} \psi^* | U \omega_2^{(1)}\rangle_{\mu_1},
\end{equation}
where we do not need to write $\Pi_{1,i}$ in the bra because the ket is already within the $+i$ eigenspace of $R_1.$
One has
\begin{equation}
    \begin{aligned}
        0&=\braket*{L^{-1}\psi^*}{U\omega_2^{(1)}}_{\mu_1}\\
        &= -i \braket*{L^{-1}\psi^*}{R_1U\omega_2^{(1)}}_{\mu_1}\\
        &= -i \braket*{\psi^*}{R_2 L U\omega_2^{(1)}}_{\mu_2}\\
        &= i \braket*{R_2 \psi^*}{L U\omega_2^{(1)}}_{\mu_2}\\
        &= \braket*{\psi^*}{LU\omega_2^{(1)}}_{\mu_2}\,,
    \end{aligned}
\end{equation}
where in the third equality we used equation \eqref{eq:R1-from-R2},
that is, $R_1=L^{\dag}R_2L$,
and in the last one $R_2^\dag =-R_2$ and $R_2\ket{\psi^*}_{\mu_2}=i\ket{\psi^*}_{\mu_2}$.
Therefore $LU\ket*{\omega_2^{(1)}}$ has vanishing overlap with all states in the $+i$ eigenspace of $R_2$, and we may conclude
\begin{equation}
	R_2 L U |\omega_2^{(1)}\rangle
		= - i L U |\omega_2^{(1)}\rangle.
\end{equation}
We may now compute
\begin{equation}
	\langle L U \omega_2^{(1)} | L U \omega_2^{(1)}\rangle_{\mu_2}
		= i \langle L U \omega_2^{(1)} | R_2 | L U \omega_2^{(1)}\rangle_{\mu_2}
		= i \langle U \omega_2^{(1)} | R_1 | U \omega_2^{(1)}\rangle_{\mu_1}
		= - \langle U \omega_2^{(1)} | U \omega_2^{(1)}\rangle_{\mu_1}.
\end{equation}
The left-hand side is nonnegative, while the right-hand side is non-positive; so the whole expression must vanish, and we learn $L U |\omega_2^{(1)}\rangle = 0.$
Since $L$ is invertible and $U$ is an isometry, this implies that $|\omega_2^{(1)}\rangle$ must vanish.

Proceeding by induction, we may conclude that $|\omega_2^{(j)}\rangle$ vanishes for all odd $j.$
For example, we have the relation
\begin{equation}
	a[L^{-1} \psi] |\omega_2^{(3)}\rangle + a[L^{-1} \psi^*]^{\dagger} |\omega_2^{(1)}\rangle = 0,
\end{equation}
and since we already know that the one-particle piece vanishes, this is simply
\begin{equation} \label{eq:three-particle-pure}
	a[L^{-1} \psi] |\omega_2^{(3)}\rangle = 0.
\end{equation}
Taking the overlap with a generic two-particle state of the form $|:\phi[g_1] \phi[g_2]: \omega_1\rangle$, we may write this as
\begin{equation}
\label{eq:three-particle-pure2}
	0 = \langle a[L^{-1} \psi]^{\dagger} :\phi[g_1] \phi[g_2]: \omega_1 |\omega_2^{(3)}\rangle \quad \text{whenever} \quad (R_2 + i)|\psi\rangle_{\mu_2} = 0.
\end{equation}
As reviewed in appendix \ref{app:fock}, the $n$-particle spaces of $\H_{\omega_1}$ can be thought of as symmetric tensor powers of the one-particle Hilbert space.
Our calculation in the preceding paragraph shows that states of the form $a[L^{-1} \psi]^{\dagger} |\omega_1\rangle$ are dense in the one-particle Hilbert space.
Obviously the states $:\phi[g]: |\omega_1\rangle$ are dense in this space as well.
So equation \eqref{eq:three-particle-pure2} gives a vanishing overlap of $|\omega_2^{(3)}\rangle$ with a dense set of states in $\H_{\omega_1}^{(3)},$ implying that $|\omega_2^{(3)}\rangle$ vanishes.
The same basic argument holds for arbitrary odd-particle-number amplitudes.

\subsection{The two-particle amplitude}
\label{sec:two-part-ampl}

Now we turn to examine the constraint on the two-point amplitude,
\begin{equation}
	0 = a[L^{-1} \psi] |\omega_2^{(2)}\rangle + a[L^{-1} \psi^*]^{\dagger} |\omega_2^{(0)} \rangle \quad \text{whenever} \quad (R_2 + i)|\psi\rangle_{\mu_2} = 0.
\end{equation}
Since $|\omega_2^{(0)}\rangle$ must be proportional to $|\omega_1\rangle,$ we will momentarily re-normalize the state $|\omega_2\rangle$ so that we have $|\omega_2^{(0)}\rangle = |\omega_1\rangle.$
Of course we will have to change normalization again at the end to guarantee that $|\omega_2\rangle$ actually reproduces the correlation functions of $\omega_2,$ but this will be an easy fix.
Our equation now becomes
\begin{equation} \label{eq:two-particle-pure}
	a[L^{-1} \psi] |\omega_2^{(2)}\rangle = - \phi[L^{-1} \psi] |\omega_1\rangle 
\end{equation}
for every $|\psi \rangle_{\mu_2}$ in the $-i$ eigenspace of $R_2.$
Taking the overlap with a generic one-particle state $\phi[g] |\omega_1\rangle,$ we have
\begin{equation} \label{eq:two-particle-overlap}
	\langle :\phi[L^{-1} \psi^*] \phi[g]: \omega_1 |\omega_2^{(2)}\rangle = - \langle \phi[g] \omega_1 | \phi[L^{-1} \psi] \omega_1\rangle.
\end{equation}
As discussed in the previous subsection, the states $\phi[L^{-1} \psi^*] |\omega_1\rangle$ are dense in the one-particle Hilbert space, so the states $:\phi[L^{-1} \psi^*] \phi[g]: |\omega_1\rangle$ are dense in the two-particle Hilbert space.
Consequently, we see that the above equation completely determines $|\omega_2^{(2)}\rangle.$
To get a more explicit expression, we make use of (i) the interpretation of the two-particle Hilbert space as a symmetric tensor power of the one-particle Hilbert space; and (ii) the unitary equivalence $U$ between the one-particle Hilbert space $\H_{\omega}^{(1)}$ and the $+i$ eigenspace of $R_1$.
Using (i), the above equation can be rewritten as
\begin{equation} 
	\sqrt{2} \langle \phi[L^{-1} \psi^*] \omega_1 \otimes_S \phi[g] \omega_1 | V \omega_2^{(2)}\rangle = - \langle \phi[g] \omega_1 | \phi[L^{-1} \psi] \omega_1\rangle,
\end{equation}
where $V$ is the inverse of the map from equation \eqref{eq:symmetric-power-Fock}.
Using (ii), we can then rewrite this as
\begin{equation} \label{eq:two-particle-overlap-mu}
	\sqrt{2} \langle L^{-1} \psi^* \otimes g | (U \otimes U) V \omega_2^{(2)}\rangle_{\mu_1} = - \langle \Pi_{1, i} g |\Pi_{1, i} L^{-1} \psi \rangle_{\mu_1}.
\end{equation}
(Note that we did not need to explicitly write the symmetrization of the tensor product in the bra on the left-hand side, because taking the inner product with a symmetrized ket automatically projects to the symmetrized subspace.)

There is a conceptually useful way to rewrite equation \eqref{eq:two-particle-overlap-mu}, which is to use an identification between vectors in the tensor product of two copies of the $+i$ eigenspace of $R_1,$ and Hilbert-Schmidt operators that map the $-i$ eigenspace of $R_1$ to the $+i$ eigenspace of $R_1.$
Concretely, given any orthonormal basis $|e_j\rangle_{\mu_1}$  for the $+i$ eigenspace of $R_1,$ the conjugated states $\Gamma |e_j\rangle_{\mu_1}$ form an orthonormal basis for the $-i$ eigenspace.
The Hilbert-Schmidt operators of interest can hence be expanded as $\sum_{j, k} c_{j k} |e_j\rangle_{\mu_1}\langle \Gamma e_k |,$ and can be identified with the normalizable state $\sum_{j k}c_{jk} |e_k \otimes e_j\rangle_{\mu_1}.$
Using this ansatz for $|(U \otimes U) V \omega_2^{(2)} \rangle_{\mu_1}$, we can rewrite equation \eqref{eq:two-particle-overlap-mu} as
\begin{equation}
	\sqrt{2} \sum_{jk} c_{jk} \langle \Pi_{1, i} L^{-1} \psi^* | e_k \rangle_{\mu_1} \langle g | e_j\rangle_{\mu_1} = - \langle g | \Pi_{1,i} L^{-1} \psi\rangle_{\mu_1},
\end{equation}
and using antiunitarity of $\Gamma$ together with the fact that it commutes with $L$, we may rewrite this as
\begin{equation}
\label{eq:state_to_operator}
	\sqrt{2} \sum_{jk} c_{jk} \langle g | e_j\rangle_{\mu_1} \langle \Gamma e_k | \Pi_{1, - i} L^{-1} \psi\rangle_{\mu_1} = - \langle g | \Pi_{1,i} L^{-1} \psi\rangle_{\mu_1},
\end{equation}
So our state $(U \otimes U) V |\omega_2^{(2)}\rangle$ may be identified up to a multiplicative constant with the operator that takes states of the form $\Pi_{1,-i }L^{-1} |\psi\rangle_{\mu_2}$ to $\Pi_{1,i} L^{-1} |\psi\rangle_{\mu_2},$ for any $\psi$ in the $-i$ eigenspace of $R_2.$

Normalizability of the representative’s two-particle amplitude requires this operator to be Hilbert–Schmidt.
To better understand its structure, we note that the projector $\Pi_{1,-i}$ is actually invertible on the domain of states of the form $L^{-1}|\psi\rangle_{\mu_2}.$
Using equation \eqref{eq:R1-from-R2} ($R_1=L^{\dag} R_2 L$), one can write
\begin{align} \label{eq:1+Q-2}
	\begin{split}
		\Pi_{1,-i} L^{-1} |\psi\rangle_{\mu_2}
			& = \frac{i - R_1}{2 i} L^{-1} |\psi\rangle_{\mu_2} \\
			& = \frac{1}{2} L^{-1} |\psi\rangle_{\mu_2}  - \frac{1}{2i} L^{\dagger} R_2 |\psi\rangle_{\mu_2}  \\
			& = \frac{1}{2} L^{-1} |\psi\rangle_{\mu_2}  + \frac{1}{2} L^{\dagger} |\psi\rangle_{\mu_2}  \\
			& = \frac{1 + L^{\dagger} L}{2} L^{-1} |\psi\rangle_{\mu_2} .
	\end{split}
\end{align}
So the operator $((1 + L^{\dagger} L)/2)^{-1}$ provides an inverse for $\Pi_{1,-i}$ on the relevant domain, and the map $\Pi_{1,-i} L^{-1} |\psi\rangle_{\mu_2}  \to \Pi_{1,i} L^{-1} |\psi\rangle_{\mu_2} $ may be written as
\begin{equation}
	\Pi_{1,i} \left( \frac{1 + L^{\dagger} L}{2} \right)^{-1} : \Pi_{1,-i} L^{-1} \K_{\mu_2, -i} \to \Pi_{1,i} L^{-1} \K_{\mu_2, -i},
\end{equation}
where we have introduced the notation $\K_{\mu_2, -i}$ for the $-i$ eigenspace of $R_2,$ in order to make the domain of the operator transparent.

It is a general fact that an operator of the form $A X,$ with $X$ invertible, is Hilbert-Schmidt if and only if $A$ itself is Hilbert-Schmidt.
The operator $((1 + L^{\dagger} L)/2)^{-1}$ is an invertible map from $\Pi_{1,-i} L^{-1} \K_{\mu_2, -i}$ to $L^{-1} \K_{\mu_2, -i}$, so one can drop it from the above expression and conclude that we want is the Hilbert-Schmidt condition for the map
\begin{equation}
	\Pi_{1,i} : L^{-1} \K_{\mu_2, -i} \to \Pi_{1,i} L^{-1} \K_{\mu_2, -i},
\end{equation}
Since $L^{-1}$ is invertible, we may equivalently look for the Hilbert-Schmidt condition on 
\begin{equation}
	\Pi_{1,i} L^{-1} :  \K_{\mu_2, -i} \to \Pi_{1,i} L^{-1} \K_{\mu_2, -i},
\end{equation}
Finally, since $\K_{\mu_2, -i}$ is just the space obtained by acting with $\Pi_{2,-i}$ on $\K_{\mu_2},$ we may equivalently phrase our new necessary condition as the statement that that $\Pi_{1,i} L^{-1} \Pi_{2,-i}$ is Hilbert-Schmidt as a map from $\K_{\mu_2}$ to $\K_{\mu_1}$.

We will now see that this condition is sufficient.

\subsection{Sufficiency}
\label{sec:pure-sufficiency}

So far we have determined that when $\omega_1$ and $\omega_2$ are both pure, and there is a representative for $\omega_2$ in $\H_{\omega_1}$, then (i) the inner product $\mu_2$ must be bounded by the inner product $\mu_1$; and (ii) the operator $\Pi_{1,i} L^{-1} \Pi_{2,-i}$ must be Hilbert-Schmidt as a map from $\K_{\mu_2}$ to $\K_{\mu_1}$.
This last condition, while it sounds technical, is really just the normalizability condition for the two-particle amplitude 
$\ket*{\omega_2^{(2)}}$,
which we constructed in the previous section.
We will now look at the higher-particle amplitudes of the ansatz for $|\omega_2\rangle,$ show that they are all uniquely determined in terms of $\ket*{\omega_2^{(2)}}$, and show that the full ansatz is automatically normalizable as long as $\ket*{\omega_2^{(2)}}$ is normalizable.
Then we will show that this ansatz genuinely reproduces all correlation functions of $\omega_2$; this completes the proof that conditions (i) and (ii) are sufficient, since 
they can be used to construct the state $|\omega_2\rangle$ explicitly.

We have already shown that all odd-particle-number amplitudes for $|\omega_2\rangle$ must vanish.
The even-particle amplitudes satisfy
\begin{equation} \label{eq:even-particle-pure-constraint}
	a[L^{-1} \psi] |\omega_2^{(2 n)}\rangle + a[L^{-1} \psi^*]^{\dagger} |\omega_2^{(2n-2)}\rangle = 0
\end{equation}
for every $|\psi\rangle_{\mu_2}$ in the $-i$ eigenspace of $R_2.$
This equation can be applied inductively --- that is, starting with the $n=2$ case --- to show that all of the $n$-particle amplitudes can be expressed in terms of $\ket*{\omega_2^{(2)}}$ alone.
We give this computation in appendix \ref{app:higher-particle-amplitudes}; the final answer is
\begin{equation}
\label{eq:high_from_low}
	|(U \otimes \dots \otimes U) V \omega_2^{(2n)}\rangle_{\mu_1}
	= \frac{1}{n!} \sqrt{\frac{(2n)!}{2^n}} P_{\text{sym}}|((U \otimes U) V \omega_2^{(2)})^{\otimes n}\rangle_{\mu_1},
\end{equation}
where $V$ is the map from the GNS space $\H_{\omega_1}$ into a symmetric Fock space, and $U$ is the isometry from $\H_{\omega_1}^{(1)}$ into $\K_{\mu_1}.$

Each of these $(2n)$-particle amplitudes, taken individually, is normalizable solely from the assumption that $|\omega_2^{(2)}\rangle$ is normalizable.
But does the assumption that the full state $|\omega_2\rangle$ is normalizable introduce any additional necessary conditions?
To check this, we must compute the sum
\begin{equation}
	\sum_{n}
		\frac{1}{(n!)^2} \frac{(2n)!}{2^n} \langle ((U \otimes U) V \omega_2^{(2)})^{\otimes n} | P_{\text{sym}}|((U \otimes U) V \omega_2^{(2)})^{\otimes n}\rangle_{\mu_1}
\end{equation}
and study the conditions for it to be finite.
This is a simple but somewhat notation-dense combinatorial calculation, so we present it in appendix \ref{app:ansatz-normalizability}.
The end result of that calculation is that no new necessary condition is introduced.

So by studying the ansatz for $|\omega_2\rangle$ and its normalizability, we found a simple necessary condition, which came from normalizability of the two-particle amplitude.\footnote{A natural question to ask is whether non-normalizability of the ansatz, which would forbid $\ket{\omega_2}$ from existing as a state in ${\cal H}_{\omega_1}$, is still compatible with $\langle\omega_2|...|\omega_2\rangle$ defining a valid \textit{weight}.
The answer is no.
For as long as $\langle\omega_2|...|\omega_2\rangle$ spits out finite answers for two-point functions of field operators $\phi[f]\phi[g]$, it must also do so for $\phi[f]\phi[g]-\phi[g]\phi[f]$, which in a free $\ast$-algebra is proportional to the identity.
If $\ket{\omega_2}$ is non-normalizable, then $\langle\omega_2|1|\omega_2\rangle$ diverges.
}
By reversing the logic of this argument, we can actually show that this condition --- together with the first necessary condition $\mu_2 \prec \mu_1$ found in subsection \ref{sec:mu-boundedness}
--- is sufficient.

To do so, we first note that the condition $\mu_2\prec \mu_1$ is sufficient to show that $L^{-1}$ is a bounded operator, which must be true in order for us to make sense of the normalizability on $\Pi_{1,i} L^{-1} \Pi_{2,-i}$.
For suppose that $\omega_1$ and $\omega_2$ are both pure, and that we have $\mu_2 \prec \mu_1.$
Then we automatically have $\mu_1 \prec \mu_2$,\footnote{In fact this only requires assuming that $\omega_1$ is pure.} for in this case $R_1$ is unitary, so for any $\ket{\eta}_{\mu_1} \in {\cal K}_{\mu_1}$ we have
\begin{equation} \label{eq:L-pure-lower-bound}
	\lVert |\eta\rangle \rVert_{\mu_1}
		= \lVert R_1 |\eta\rangle \rVert_{\mu_1}
		= \lVert L^{\dagger} R_2 L |\eta \rangle \rVert_{\mu_1}
		\leq \lVert L^{\dagger} \rVert \lVert R_2 \rVert \lVert L |\eta\rangle \rVert_{\mu_2},
\end{equation}
where in the second step we used the identity $L^{\dagger} R_2 L = R_1$ from equation \eqref{eq:R1-from-R2}.
We can rearrange this into the inequality
\begin{equation}
\label{eq:L-pure-lower-bound-clean}
	\lVert L |\psi\rangle \rVert_{\mu_2} \geq \frac{1}{\lVert L^{\dagger} \rVert \lVert R_2 \rVert} \lVert |\psi\rangle \rVert_{\mu_1},
\end{equation}
so $L$ is lower-bounded, hence invertible.\footnote{Note that invertibility of $L$ is exactly the condition $\mu_1 \prec \mu_2,$ so we do not need to make any extra assumption of this kind.}

Turning now to the normalizability condition, we suppose that $\Pi_{1,i} L^{-1} \Pi_{2,-i}$ is Hilbert-Schmidt as a map from $\K_{\mu_2}$ to $\K_{\mu_1}.$
By repeating the argument of the preceding subsection, we may conclude that the map
\begin{equation}
	\Pi_{1,-i} L^{-1} |f\rangle_{\mu_2}
		\mapsto \Pi_{1,i} L^{-1} |f\rangle_{\mu_2}
\end{equation}
is Hilbert-Schmidt as a map from the $-i$ eigenspace of $R_1$ to the $+i$ eigenspace.
Consequently, applying similar reasoning as we did for equation \eqref{eq:two-particle-overlap-mu}, there exists a state $|\chi\rangle_{\mu_1}$ in two copies of the $+i$ eigenspace that satisfies the equation
\begin{equation}
	\sqrt{2} \langle L^{-1} \psi^* \otimes g | \chi \rangle_{\mu_1} = - \langle \Pi_{1,i} g | \Pi_{1,i} L^{-1} \psi\rangle_{\mu_1}
\end{equation}
for every $\ket{g}_{\mu_2}$ in $\K_{\mu_2}$ and for every $\ket{\psi}_{\mu_2}$ in the $-i$ eigenspace of $R_2.$
The state $|\chi\rangle_{\mu_1}$ can be shown explicitly to be symmetric --- though this involves a computation, see appendix \ref{app:ansatz-symmetry} --- so it is in the image of $(U \otimes U) V,$ and we can \textit{define} the two-particle amplitude $|\omega_2^{(2)}\rangle$ via the equation
\begin{equation}
	|\chi \rangle_{\mu_1} = (U \otimes U) V |\omega_2^{(2)}\rangle.
\end{equation}
This state automatically satisfies the two-particle null-state constraint.
The state built by summing over symmetrized tensor powers of this amplitude, weighted by $\frac{1}{n!} \sqrt{\frac{(2n)!}{2^n}}$ as in equation \eqref{eq:high_from_low}, is normalizable thanks to the calculation from appendix \ref{app:ansatz-normalizability}, and it automatically satisfies all other null-state constraints.
We are left with a state in $\H_{\omega_1}$ that satisfies all the $\omega_2$ null-state constraints; by normalizing it, we end up with a candidate for $|\omega_2\rangle.$

Now all that remains is to show that our ansatz $|\omega_2\rangle$ actually represents the abstract state $\omega_2$ in $\H_{\omega_1}.$
In other words, we must show that what we know about $|\omega_2\rangle$ --- that it is normalized and satisfies all null state constraints --- implies that it reproduces all the $\omega_2$ correlation functions.

To do this, one first shows that the state $|\omega_2\rangle$ is in the domain of all field polynomials $\phi[f_1] \dots \phi[f_m].$
This is a simple calculation but it involves a little bit of unbounded operator terminology, so we present it in appendix \ref{app:ansatz-domain}.
Once one knows this, the goal is to show
\begin{equation}
	\langle \omega_2 | \phi[f_1] \dots \phi[f_m] |\omega_2\rangle
	= \omega_2(\phi[f_1] \dots \phi[f_m]).
\end{equation}
We will now reintroduce a piece of notation we used in section \ref{sec:excitability}: we will pass to the $\omega_2$ GNS representation on the right-hand side, and start distinguishing kets in the two different GNS representations by writing $|\cdot\rangle_{j}.$
So the state $|\omega_2\rangle_{1}$ represents the putative representative of $\omega_2$ in $\H_{\omega_1},$ while $|\omega_2\rangle_{2}$ simply represents the GNS vector in $\H_{\omega_2}.$

The equation we want to show is
\begin{equation}
	\langle \omega_2 | \phi[f_1] \dots \phi[f_m] |\omega_2\rangle_{1}
	= \langle \omega_2 | \phi[f_1] \dots \phi[f_m] |\omega_2\rangle_{2}
\end{equation}
for any test functions $f_1, \dots, f_m.$
Within the space $\K_{\mu_2}$, we can split each $f_j$ as $f_{j, i} + f_{j, -i},$ where $f_{j, \pm i}$ is obtained by projecting $|f_j\rangle_{\mu_2}$ onto the $\pm i$ eigenspace of $R_2.$
We can then write the identity we want to show as
\begin{align} \label{eq:pure-correlators-desired}
	& \langle \omega_2 | (\phi[L^{-1} f_{1,i}] + \phi[L^{-1} f_{1, -i}]) \dots (\phi[L^{-1} f_{m, i}] + \phi[L^{-1} f_{m, -i}]) |\omega_2\rangle_{1} \\
	& \qquad \qquad = \langle \omega_2 | (\phi[f_{1,i}] + \phi[f_{1, -i}]) \dots (\phi[f_{m, i}] + \phi[f_{m, -i}]) |\omega_2\rangle_{2}
\end{align}

Let us start with the case $m=0.$
Since we chose $|\omega_2\rangle_{1}$ to be a normalized state, we have guaranteed that this equation is satisfied.
Proceeding to $m=1,$ for a single field $\phi[f],$ we can use the fact that $|\omega_2\rangle_{1}$ is annihilated by $\phi[L^{-1} f_{-i}]$, and $\langle \omega_2 |_{1}$ is annihilated by $\phi[L^{-1} f_{i}]$.
This means that the one-point functions vanish on both sides of equation \eqref{eq:pure-correlators-desired}.
For $m=2,$ using similar logic, our desired identity (with $f_1 =f$ and $f_2=g$) simplifies to
\begin{equation}
	\langle \omega_2 | \phi[L^{-1} f_{-i}] \phi[L^{-1} g_{i}] |\omega_2\rangle_{1}
	= \langle \omega_2 | \phi[f_{-i}] \phi[g_{i}] |\omega_2\rangle_{2}.
\end{equation}
But we can freely replace each pair of fields in the above expression by its commutator, since $\phi[L^{-1} f_{-i}]$ annihilates $|\omega_2\rangle_{1}$ and $\phi[f_{-i}]$
annihilates $|\omega_2 \rangle_{2}.$
This reduces the equation we are trying to check to
\begin{equation} \label{eq:Gamma-commutator-R}
	\langle \Gamma L^{-1} f_{-i} | R_1 | L^{-1} g_{i} \rangle_{\mu_1}
	= \langle \Gamma f_{-i} | R_2 | g_{i} \rangle_{\mu_2}.
\end{equation}
But this equality follows from the identity $R_1 = L^{\dagger} R_2 L,$ which we verified in equation \eqref{eq:R1-from-R2}.

We can now proceed by induction.
For odd $m,$ our inductive assumption will be that the $k$-point correlators vanish for odd $k < m.$
This allows us to commute terms freely, since replacing two fields by their commutator brings us to an odd-point function of lower degree, which vanishes by assumption.
So we can freely commute all of the $f_{k, -i}$ fields to the right and all of the $f_{k, i}$ fields to the left, leading to zero on both sides of equation \eqref{eq:pure-correlators-desired} in the case of odd $m$.
For the even-point correlators, the identity we want to show is
\begin{equation}
	\langle \omega_2 | \phi[L^{-1} f_{1, -i}] \dots \phi[L^{-1} f_{m, i}] |\omega_2\rangle_{1}
	= \langle \omega_2 | \phi[f_{1, -i}] \dots \phi[f_{m, i}] |\omega_2\rangle_{2}.
\end{equation}
On both sides of the equation, we can commute the $\phi[f_{m, i}]$ through other fields at the cost of introducing a commutator, but each of these is a product of a commutator (which agree on both sides as in equation \eqref{eq:Gamma-commutator-R}) and an even-point function of lower degree (which agree on both sides by the inductive assumption).
So we only need to worry about the term where the $f_{m,i}$ field is moved all the way to the front of both expressions --- but these terms both vanish, since this field annihilates both bras $\langle \omega_2|_{1}$ and $\langle \omega_2|_{2}.$

\subsection{Connection to the two-point function}
\label{sec:pure-two-point}

In this section we have established, following \cite{Wald:particle-creation, Wald:s-matrix}, that when $\omega_1$ and $\omega_2$ are both pure, then $\omega_2$ has a representative in $\H_{\omega_1}$ if and only if (i) the symmetrized inner product $\mu_2$ is bounded by the symmetrized inner product $\mu_1,$ and (ii) within the space $\K_{\mu_1},$ the operator $\Pi_{1,i} L^{-1} \Pi_{2,-i}$ is Hilbert-Schmidt.
To make contact with the theorem we will prove in the general case of a mixed state, it is helpful to rewrite condition (ii) in terms of information about the two-point functions induced by $\omega_1$ and $\omega_2.$

First note that the two-point function induced by $\omega_1$ is represented by a bounded operator on $\K_{\mu_1}$:
\begin{equation}
	\omega_1(\phi[f]^* \phi[g])
		= \langle f | g \rangle_{\mu_1} -  \frac{i}{2}\Omega[f^*, g]
		= \langle f | (1 - i R_1) |g\rangle_{\mu_1}.
\end{equation}
Similarly, for $\omega_2,$ using $R_1 = L^{\dagger} R_2 L,$ we have
\begin{equation}
	\omega_2(\phi[f]^* \phi[g])	
		= \langle f | (1 - i R_2) |g \rangle_{\mu_2}
		= \langle f | (L^{\dagger} L - i R_1) |g \rangle_{\mu_1}.
\end{equation}
We write $Q$ for the operator $L^{\dagger} L$ that represents the $\mu_2$ inner product in $\mu_1$ (see again equation \eqref{eq:L-polar}) and define
\begin{align}
	X_1
		& \equiv 1 - i R_1, \\
	X_2
		& \equiv Q - i R_1.
\end{align}
We will now show that the normalizability criterion that $\Pi_{1,i} L^{-1} \Pi_{2,-i}$ is Hilbert-Schmidt is equivalent to the condition that $X_1-X_2$ is Hilbert-Schmidt.

Writing out the projectors $\Pi_{1,i}$ and $\Pi_{2,-i}$ explicitly in terms of $R_1$ and $R_2,$ we have
\begin{equation}
	\Pi_{1, i} L^{-1} \Pi_{2, -i}
		= \frac{i + R_1}{2i} L^{-1} \frac{i - R_2}{2i}.
\end{equation}
Using again the identity $R_1 = L^{\dagger} R_2 L,$ together with $(R_2)^2 = -1,$ we may rewrite this as
\begin{equation}
	\Pi_{1, i} L^{-1} \Pi_{2, -i}
	= \frac{i L^{-1} + L^{\dagger} R_2}{2i} \frac{i - R_2}{2i}
	= \frac{1}{4} \left(L^{-1} + i L^{-1} R_2 - i L^{\dagger} R_2 - L^{\dagger} \right)
\end{equation}
Since $L$ is invertible, the operator $\Pi_{1, i} L^{-1} \Pi_{2, -i}$ is Hilbert-Schmidt if and only if the operator $4 \Pi_{1, i} L^{-1} \Pi_{2, -i} L$ is Hilbert-Schmidt.
So we may investigate the Hilbert-Schmidt condition on the operator
\begin{equation}
	4 \Pi_{1, i} L^{-1} \Pi_{2, -i} L 
		= 1 + i L^{-1} R_2 L - i R_1 - Q
\end{equation}
Inserting $(L^{\dagger})^{-1} L^{\dagger}$ in the second term, we may rewrite this as
\begin{equation}
	4 \Pi_{1, i} L^{-1} \Pi_{2, -i} L 
	= 1 + i Q^{-1} R_1 - i R_1 - Q
	= (1 - Q) - i (1 - Q^{-1}) R_1.
\end{equation}
This operator is Hilbert-Schmidt if and only if its conjugation by $\Gamma$ is Hilbert-Schmidt.
Because conjugation by $\Gamma$ fixes $Q$ and changes the sign of $i R_1,$ one may add and subtract this expression from its $\Gamma$-conjugate and equivalently check that $Q - 1$ and $(Q^{-1} - 1) R_1$ are Hilbert-Schmidt.
But $R_1$ is invertible, so this is really just the condition that $Q - 1$ and $Q^{-1} - 1$ are Hilbert-Schmidt.
Moreover, since we are assuming that the $\omega_i$ are pure, and $\mu_2\prec \mu_1$, we have that $Q$ is bounded and invertible (see equation \eqref{eq:L-pure-lower-bound-clean} above).
Because $Q$ is invertible, $Q^{-1} - 1$ is Hilbert-Schmidt if and only if $Q - 1 = - Q (Q^{-1} - 1)$ is Hilbert-Schmidt.
So we conclude that the necessary and sufficient condition for excitability is (i) $\mu_2 \prec \mu_1$ and (ii) $Q - 1$, equivalently $X_2 - X_1,$ is Hilbert-Schmidt.
This was claimed in section \ref{sec:summary} as theorem \ref{thm:both-pure}.

To make a connection with the criterion for mixed states claimed in theorem \ref{thm:general-theorem}, we will now show that for pure states, the condition that $X_2 - X_1$ is Hilbert-Schmidt is the same as the condition that $\sqrt{X_2} - \sqrt{X_1}$ is Hilbert-Schmidt.
The second statement clearly implies the first, since the Hilbert-Schmidt property is preserved under multiplication by bounded operators, and one has
\begin{equation}
	X_2 - X_1 \propto (\sqrt{X_2} - \sqrt{X_1}) (\sqrt{X_2} + \sqrt{X_1}) + (\sqrt{X_2} + \sqrt{X_1}) (\sqrt{X_2} - \sqrt{X_1}).
\end{equation}
To see that the first condition implies the second, we will make our first use of a technique that will be employed many times in the remainder of the paper, which is \textit{Birman-Solomyak theory}.
The basic idea of Birman Solomyak theory is to tell you when, given a function $f$ and a pair of positive operators $A$ and $B,$ one conclude that $f(A) - f(B)$ is in the $p$-Schatten-class whenever $A - B$ is in that class.
The key theory was developed by Birman and Solomyak in the 1960s; see \cite{Skripka:book} for a comprehensive textbook account.
For our purposes, we only need the Hilbert-Schmidt version of the theory, for which one can find nice, simple proofs in \cite{Bach:revisited}.

This is the form of the theorem we will use, though it admits generalizations:
\begin{theorem}[Birman-Solomyak]
	Let $A$ and $B$ be bounded, self-adjoint operators on a Hilbert space $\H$.
	Suppose that $f : \reals \to \reals$ is Lipschitz on the union of the spectra of $A$ and $B$.
	Then whenever $A - B$ is Hilbert-Schmidt, so is $f(A) - f(B).$
\end{theorem}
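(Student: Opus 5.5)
We will prove the Hilbert–Schmidt Birman–Solomyak statement via the double-operator-integral (Schur multiplier) representation of $f(A)-f(B)$, which for $p=2$ reduces to an elementary spectral estimate. We first treat the case where $A$ and $B$ have purely discrete spectrum with eigenbases $\{|a_j\rangle\}$ and $\{|b_k\rangle\}$. Writing $A = \sum_j a_j |a_j\rangle\langle a_j|$ and $B = \sum_k b_k |b_k\rangle\langle b_k|$, we have
\begin{equation*}
\langle a_j | (A-B) | b_k\rangle = (a_j - b_k)\langle a_j | b_k\rangle,
\end{equation*}
and, by the same computation,
\begin{equation*}
\langle a_j | (f(A)-f(B)) | b_k\rangle = (f(a_j)-f(b_k))\langle a_j|b_k\rangle.
\end{equation*}
If $f$ has Lipschitz constant $K$ on $\sigma(A)\cup\sigma(B)$, then $|f(a_j)-f(b_k)| \le K |a_j - b_k|$. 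Summing the squares over $j,k$ and using that the Hilbert–Schmidt norm is the $\ell^2$ norm of matrix entries in any pair of orthonormal bases gives
\begin{equation*}
\lVert f(A)-f(B)\rVert_{HS} \le K \lVert A-B\rVert_{HS}.
\end{equation*}

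For general bounded self-adjoint operators we replace eigenbases by spectral measures $E_A$ and $E_B$. The cleanest route is to consider the operator $\mathcal{T}: X \mapsto \int\!\!\int \frac{f(s)-f(t)}{s-t}\, dE_A(s)\, X\, dE_B(t)$ acting on the Hilbert space of Hilbert–Schmidt operators. Left multiplication by $A$ and right multiplication by $B$ are commuting self-adjoint operators on that space, so they have a joint spectral measure on $\sigma(A)\times\sigma(B)$. The divided difference is a bounded Borel function there, with sup bounded by $K$; on the diagonal we define it as any value in $[-K,K]$. Functional calculus therefore makes $\mathcal{T}$ a bounded operator on the Hilbert–Schmidt class with norm at most $K$.

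The identity we need is
\begin{equation*}
\mathcal{T}(AX - XB) = f(A)X - Xf(B),
\end{equation*}
which follows because multiplying the divided difference by $(s-t)$ gives $f(s)-f(t)$. We would like to set $X=1$, but $1$ is not Hilbert–Schmidt. This is the main obstacle, and we plan to circumvent it by approximation. Take finite-rank projections $P_n \to 1$ strongly, apply the identity to $X = P_n$, and note that
\begin{equation*}
AP_n - P_nB = (A-B)P_n + [A,P_n].
\end{equation*}
Because $[A,P_n]$ is not controlled, we instead symmetrize by working on $\H\oplus\H$ with the self-adjoint operator $C = A\oplus B$ and the off-diagonal operator $Y = \begin{pmatrix}0&1\\1&0\end{pmatrix}$, whose commutator with $C$ encodes $A-B$. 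This is the standard trick in \cite{Bach:revisited}. An alternative we expect to work is to first prove the estimate for finite-dimensional compressions and then pass to limits. Concretely, let $A_n, B_n$ be compressions of $A, B$ to increasing finite-dimensional subspaces, chosen so that $A_n - B_n \to A-B$ in Hilbert–Schmidt norm and $A_n\to A$, $B_n\to B$ strongly. Since $f$ is continuous on a compact neighborhood of the spectra, we can extend it to a Lipschitz function on $\reals$ with the same constant, so $f(A_n)\to f(A)$ strongly. The finite-dimensional bound $\lVert f(A_n)-f(B_n)\rVert_{HS}\le K\lVert A_n-B_n\rVert_{HS}$ then gives weak convergence, and lower semicontinuity of the Hilbert–Schmidt norm under strong limits yields the claim with the same constant.
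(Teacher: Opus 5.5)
The paper does not actually prove this statement. It cites \cite{Skripka:book} and \cite{Bach:revisited} and offers only the heuristic $f(A)-f(B)\sim\frac{f(A)-f(B)}{A-B}\,(A-B)$. Your compression argument gives a genuine, self-contained proof, and its first step is precisely the rigorous content of that heuristic. In the pair of eigenbases, $f(A)-f(B)$ is obtained from $A-B$ by multiplying each matrix element by the divided difference $\frac{f(a_j)-f(b_k)}{a_j-b_k}$, which has modulus at most $K$, and for the Hilbert--Schmidt norm an entrywise bound is all one needs. The compression step then removes the pure-point assumption, and you obtain the sharp quantitative form $\lVert f(A)-f(B)\rVert_2\le K\lVert A-B\rVert_2$. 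What the operator-integral machinery of \cite{Skripka:book} buys instead is an explicit representation of $f(A)-f(B)$ and extensions to other Schatten classes. Your entrywise argument is special to $p=2$, which is consistent with the paper switching to an integral representation of the square root when it later needs a trace-class analogue.

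Two points need fixing.

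\emph{The $\mathcal{T}$ paragraph.} It is a dead end and should be dropped.
The displayed identity is wrong; it should read $AP_n-P_nB=(A-B)P_n+[B,P_n]$.
The $2\times2$ trick only turns the problem into showing that $[f(C),Y]$ is Hilbert--Schmidt when $[C,Y]$ is. Since $Y$ is still not Hilbert--Schmidt, the identity for $\mathcal{T}$ is no more available for $Y$ than it was for $X=1$.
The standard way to finish that route is to apply $\mathcal{T}$ to $X=A-B$ itself and prove $\mathcal{T}(A-B)=f(A)-f(B)$ by approximating $f$. For polynomials this is the telescoping identity $A^n-B^n=\sum_k A^k(A-B)B^{n-1-k}$.

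\emph{The compression step.} Extending $f$ is genuinely needed, because $\sigma(P_nAP_n)$ need not lie in $S=\sigma(A)\cup\sigma(B)$. But your stated reason is unavailable: nothing is assumed about $f$ off $S$.
Use the McShane extension $\tilde f(x)=\inf_{y\in S}\bigl(f(y)+K|x-y|\bigr)$. It is $K$-Lipschitz on all of $\reals$ and agrees with $f$ on $S$, so $\tilde f(A)=f(A)$ and $\tilde f(B)=f(B)$; run the whole argument with $\tilde f$.
The rest then goes through:
\begin{enumerate}
\item $P_nAP_n$ and $P_nBP_n$ have orthonormal eigenbases of $\H$ (eigenvalue $0$ on $(1-P_n)\H$). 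Your first paragraph therefore gives $\lVert\tilde f(P_nAP_n)-\tilde f(P_nBP_n)\rVert_2\le K\lVert P_n(A-B)P_n\rVert_2\le K\lVert A-B\rVert_2$.
\item $\tilde f(P_nAP_n)\to f(A)$ strongly. This follows from uniform polynomial approximation of $\tilde f$ on $[-M,M]$ with $M=\max(\lVert A\rVert,\lVert B\rVert)$, and likewise for $B$.
\item The Hilbert--Schmidt norm is a supremum of finite sums $\sum_{j,k}|\langle e_j|T|e_k\rangle|^2$. It is therefore lower semicontinuous under weak operator convergence, which passes the bound to the limit.
\end{enumerate}
If $\H$ is not separable, use a net of finite-rank projections instead of a sequence.
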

A function is said to be ``Lipschitz'' on a domain if all of its tangent lines have uniformly bounded slope.
Concretely, if one has
\begin{equation}
	\sup_{x, y} \frac{|f(x) - f(y)|}{|x - y|} < \infty.
\end{equation}
Intuitively --- though this formula is not literally correct unless $A$ and $B$ commute --- one should imagine an expression like
\begin{equation}
	f(A) - f(B) \sim \frac{f(A) - f(B)}{A - B} (A - B),
\end{equation}
with the Lipschitz condition guaranteeing that the first factor is bounded, so whenever $A-B$ is Hilbert-Schmidt, so is $f(A) - f(B)$.

In particular, the square root function is Lipschitz on any domain of the positive real numbers that is bounded away from zero.
Since $Q$ is positive and invertible, its spectrum is contained in such a domain.
So using the Birman-Solomyak theorem, and the fact that we are assuming $Q - 1$ is Hilbert-Schmidt, we may conclude that $Q^{1/2} - 1$ is also Hilbert-Schmidt.
Then we look at the expression
\begin{equation}
	\sqrt{X_2} - \sqrt{X_1}
		= \sqrt{Q - i R_1} - \sqrt{1 - i R_1}
		= \sqrt{Q^{1/2} (1 - i Q^{-1/2} L^{\dagger} R_2 L Q^{-1/2}) Q^{1/2}} - \sqrt{1 - i R_1}.
\end{equation}
Let us write the polar decomposition of $L$ as $L = v Q^{1/2}.$
Because $L$ is invertible, the operator $v$ is unitary.
We may rewrite the above as
\begin{equation}
	\sqrt{X_2} - \sqrt{X_1}
	= \sqrt{Q^{1/2} v^{\dagger} (1 - i R_2) v Q^{1/2}} - \sqrt{1 - i R_1}.
    \label{eq:x2x1_toomanyQs}
\end{equation}
In order to simplify the first factor on the right hand side of equation \eqref{eq:x2x1_toomanyQs}, we can consider the expression
\begin{equation}
	\sqrt{Q^{1/2} v^{\dagger} (1 - i R_2) v Q^{1/2}}
		= \left|\sqrt{v^{\dagger}(1 - i R_2) v} Q^{1/2}\right|.
\end{equation}
The absolute value function is Lipschitz.
We also know that
\begin{equation}
	\sqrt{v^{\dagger}(1 - i R_2) v} Q^{1/2} - \sqrt{v^{\dagger}(1 - i R_2) v}
		= \sqrt{v^{\dagger}(1 - i R_2) v} (Q^{1/2} - 1)
\end{equation}
is Hilbert-Schmidt, since $Q^{1/2} - 1$ is Hilbert-Schmidt.
So applying the Birman-Solomyak theorem again, with the absolute value function, we may conclude
\begin{equation}
	\sqrt{Q^{1/2} v^{\dagger} (1 - i R_2) v Q^{1/2}} - \sqrt{v^{\dagger} (1 - i R_2) v}\quad \text{is Hilbert-Schmidt}.
\end{equation}
Now revisiting equation \eqref{eq:x2x1_toomanyQs},
this gives the Hilbert-Schmidt equivalence\footnote{Given two bounded operators $A,B$, we say that $A$ is Hilbert–Schmidt equivalent to $B$, and write $A\sim_{\mathrm{H.S.}}\!B$, if $A-B$ is Hilbert–Schmidt.
Note that $\sim_{\mathrm{H.S.}}$ defines an equivalence relation. Moreover, if $A\sim_{\mathrm{H.S.}}\!B$, then $AO\sim_{\mathrm{H.S.}}\!BO$ for any bounded operator $O$. }

\begin{equation}
	\sqrt{X_2} - \sqrt{X_1}
	\sim_{\text{H.S.}} \sqrt{v^{\dagger} (1 - i R_2) v} - \sqrt{1 - i R_1}
	= v^{\dagger} \sqrt{1 - i R_2} v - \sqrt{1 - i R_1}.
\end{equation}
Because $R_1$ and $R_2$ both square to minus the identity, the operator $1 - i R_j$ is simply $2 \Pi_{j, i}.$
The square root of an orthogonal projector is itself, so we can compute the above explicitly as
\begin{align}
	\begin{split}
	\sqrt{X_2} - \sqrt{X_1}
	& \sim_{\text{H.S.}} \sqrt{2} (v^{\dagger} \Pi_{2, i} v - \Pi_{1, i}) \\
	& = \frac{1}{\sqrt{2}} (v^{\dagger} (1 - i R_2) v - (1 - i R_1)) \\
	& = - i \frac{1}{\sqrt{2}} (v^{\dagger} R_2 v - R_1).
	\end{split}
\end{align}
Finally, because we have $L = v Q^{1/2}$ and $Q^{1/2} - 1$ is Hilbert-Schmidt, we have $v \sim_{\text{H.S.}} L,$ and we obtain
\begin{align}
	\begin{split}
		\sqrt{X_2} - \sqrt{X_1}
		& \sim_{\text{H.S.}} - i \frac{1}{\sqrt{2}} (L^{\dagger} R_2 L - R_1)
		= 0,
	\end{split}
\end{align}
where we have used the identity $L^{\dagger} R_2 L = R_1.$
This completes the proof that when $\omega_1$ and $\omega_2$ are both pure, the condition that $X_2 - X_1$ be Hilbert-Schmidt implies that $\sqrt{X_2} - \sqrt{X_1}$ is Hilbert-Schmidt.

Note that we could not have applied the Birman-Solomyak theorem directly to $X_2 - X_1$ using the square root function, since the operators $X_2$ and $X_1$ can contain zero in their spectra, and the square root function is not Lipschitz on domains containing zero.
We give an example in appendix \ref{app:counterexample-1} of a setting where $\omega_1$ is pure, $\omega_2$ is not pure, and one has $X_2 - X_1$ Hilbert-Schmidt but not $\sqrt{X_2} - \sqrt{X_1}$ Hilbert-Schmidt.
In this case $\omega_2$ cannot be excited out of $\omega_1,$ which is consistent with the theorem we prove for factorial states in the next section.

\section{Gaussian excitability via canonical purifications}
\label{sec:passing-to-canonical}

In the preceding section, we proved theorem \ref{thm:both-pure} --- that for zero-mean Gaussian states, when $\omega_1$ and $\omega_2$ are both pure, the necessary and sufficient criteria for $\omega_2 \prec \omega_1$ are (i) $\mu_2 \prec \mu_1$ and (ii) $Q - 1$ Hilbert-Schmidt.
We also showed that in this case, $\mu_2 \prec \mu_1$ automatically implies $\mu_1 \prec \mu_2,$ so one does not need to check that result separately.

The main theorem of the paper, \ref{thm:general-theorem}, claims that the general conditions for excitability are (i) $\mu_2 \prec \mu_1$, (ii) $\sqrt{X_2} - \sqrt{X_1}$ Hilbert-Schmidt, and (iii) $Q$ has no kernel.
To prepare for proving this general theorem, this section provides a proof of theorem \ref{thm:passing-to-canonical}, which establishes equivalence of these conditions with certain conditions on the canonical purification.

Namely, we prove the equivalence of the following conditions:
\begin{itemize}
	\item $\mu_2 \prec \mu_1,$ and $\sqrt{X_2} - \sqrt{X_1}$ Hilbert-Schmidt, and $Q$ has no kernel;
	\item $\hat{\mu}_2 \prec \hat{\mu}_1$, and $\hat{Q} - 1$ Hilbert-Schmidt, and $\hat{Q}$ has no kernel.
\end{itemize}
In fact, we remark that the equivalence of the other conditions could be proved without making any assumptions about the kernels of $Q$ or $\hat{Q}$, since the Hilbert-Schmidt conditions imply that $Q$ and $\hat{Q}$ have finite-dimensional kernels, and these finite-dimensional kernels can be treated separately without any real complications.
But on physical grounds there is no need to prove this more general statement, since we established in section \ref{sec:general-necessary} that the absence of a kernel is a necessary condition for excitability.

At the end of the section, we explain how our results imply theorem \ref{thm:mixed-from-factor}, which states that when $\omega_1$ is a factor state, the excitability statement $\omega_2 \prec \omega_1$ is equivalent to the statements $\mu_2 \prec \mu_1$ with $\sqrt{X_2} - \sqrt{X_1}$ Hilbert-Schmidt.
While this may seem a little puzzling --- as no explicit mention is made in theorem \ref{thm:mixed-from-factor} of the necessary condition that $Q$ have no kernel --- we explain in section \ref{sec:factorial-consequences} why the conditions of theorem \ref{thm:mixed-from-factor} automatically imply that $Q$ has no kernel, without needing to check this condition separately.

\subsection{Canonical purifications of Gaussian states}
\label{sec:gaussian-canonical-formulas}

We begin by discussing general properties of the canonical purification $\hat{\omega}$ of a Gaussian state $\omega$.
In particular, we explain that the canonical purification of a Gaussian state is itself Gaussian.

Let $\A_0$ be a free-field $*$-algebra, and let $\omega : \A_0 \to \comps$ be a Gaussian state.
As in \cite{Sorce:paper1}, the canonical purification of $\omega$ is defined as the state
\begin{equation}
	\hat{\omega} : \A_0 \otimes \A_0^{\text{op}} \to \comps
\end{equation}
given by
\begin{equation}
	\hat{\omega}(a \otimes b)
		\equiv \langle a^{\dagger} \omega | J_{\omega} | b^{\dagger} \omega \rangle,
\end{equation}
with $J_{\omega}$ the modular conjugation of $|\omega\rangle$ with respect to the von Neumann algebra $\A_{\omega}$ in the GNS representation $\H_{\omega}.$
See remark \ref{rem:omega_hat} above for the basic intuition behind this definition.

As in appendix \ref{app:review}, the free field algebra $\A_0$ is generated by symbols $\phi[f],$ labeled by smearing functions $f \in C^{\infty}_0(\mathcal{M}),$ and subject to linearity constraints as well as the canonical commutation relation
\begin{equation}
	[\phi[f], \phi[g]] = - i \Omega[f, g].
\end{equation}
The algebraic tensor product $\A_0 \otimes \A_0^{\text{op}}$ can be thought of as a free field algebra generated by elements $\phi[f \oplus g],$ with $f \oplus g$ in the direct sum $C^{\infty}_0(\mathcal{M}) \oplus C^{\infty}_0(\mathcal{M}),$ subject to the commutation relation
\begin{equation}
	[\phi[f \oplus g], \phi[u \oplus v]] = - i \Omega[f, g] + i \Omega[u, v].
\end{equation}
This is accomplished by mapping
\begin{equation}
	\phi[f \oplus g] \mapsto \phi[f] \otimes 1 + 1 \otimes \phi[g] \in \A_0 \otimes \A_0^{\text{op}}.
\end{equation}
With this interpretation of $\A_0 \otimes \A_0^{\text{op}}$ as a free field algebra, we claim that $\hat{\omega}$ is itself a Gaussian state.

To see this, we note that every correlation function of $\hat{\omega}$ can be computed by taking derivatives of the expression
\begin{equation}
	\langle \omega | e^{i \phi[f]} J_{\omega} e^{-i \phi[g]} |\omega\rangle.
\end{equation}
In the GNS representation of $\hat{\omega},$ this will correspond to the correlator
\begin{equation}
	\langle \omega | e^{i \phi[f]} J_{\omega} e^{-i \phi[g]} |\omega\rangle
	= \langle \hat{\omega} | e^{i \phi[f \oplus g]} |\hat{\omega}\rangle.
\end{equation}
Matching to the equations from appendix \ref{app:Weyl}, we see that $\hat{\omega}$ is Gaussian if there is a semi- inner product $\hat{\mu}$ on $C^{\infty}_0(\mathcal{M}) \oplus C^{\infty}_0(\mathcal{M})$ satisfying
\begin{equation} \label{eq:tentative-muhat}
	\langle \omega | e^{i \phi[f]} J_{\omega} e^{-i \phi[g]} |\omega\rangle
	= e^{- \langle f \oplus g | f \oplus g\rangle_{\hat{\mu}}/2}.
\end{equation}
We will now verify that this is the case.

In appendix \ref{app:1P-modular-ops}, we computed the action of the modular conjugation $J_{\omega}$ on states of the form $e^{- i \phi[g]} |\omega\rangle$ in terms of the one-particle modular operator $j.$
We can do this by making use of a relationship, introduced in appendix \ref{app:generalized-Weyl}, between field-like operators $\chi_{\alpha}$ (which include ordinary field operators $\phi[g]$ with $g$ real) and elements $\ket{\alpha}$ of the one-particle Hilbert space (on which $j$ acts).
This relationship obeys the property $\phi[g]=\chi_{\phi[g] |\omega\rangle}$ when $g$ is real.
The antilinear operator $J_{\omega}$ acts on $e^{- i \phi[g]} |\omega\rangle$ by sending it to $e^{i \chi_{j \phi[g]|\omega\rangle}}\ket{\omega}.$
So we have
\begin{equation}
	\langle \omega | e^{i \phi[f]} J_{\omega} e^{-i \phi[g]} |\omega\rangle
		= \langle \omega | e^{i \chi_{\phi[f] |\omega\rangle}} e^{i \chi_{j \phi[g] |\omega\rangle}} |\omega\rangle.
        \label{eq:below-chi-explanation}
\end{equation}
Using equation \eqref{eq:chi-commutators} from appendix \ref{app:one-particle-modular}, one can compute this further as
\begin{equation}
	\langle \omega | e^{i \phi[f]} J_{\omega} e^{-i \phi[g]} |\omega\rangle
	= \langle \omega | e^{i \chi_{\phi[f] |\omega\rangle + j \phi[g] |\omega\rangle}} |\omega\rangle e^{- i \text{Im} \langle \phi[f] \omega | j \phi[g] \omega\rangle}.
\end{equation}
Now using that $\omega$ is a Gaussian state, we apply equation \eqref{eq:generalized-Weyl-eval} to rewrite this as
\begin{equation}
	\langle \omega | e^{i \phi[f]} J_{\omega} e^{-i \phi[g]} |\omega\rangle
	= e^{- \langle (\phi[f] + j \phi[g]) \omega | (\phi[f] + j \phi[g]) \omega \rangle/2} e^{- i \text{Im} \langle \phi[f] \omega | j \phi[g] \omega\rangle},
\end{equation}
and combining terms (and using antiunitarity of $j$) gives
\begin{equation}
	\langle \omega | e^{i \phi[f]} J_{\omega} e^{-i \phi[g]} |\omega\rangle
	= \exp\left[- \frac{1}{2} \left( \langle \phi[f] \omega | \phi[f] \omega\rangle + \langle \phi[g] \omega | \phi[g] \omega\rangle + 2 \langle \phi[f] \omega | j \phi[g] \omega \rangle \right)\right].
\end{equation}
This expression can be rewritten in terms of the $\mu$ inner product from appendix \ref{app:phase-space} by using the isometry $\phi[f] |\omega\rangle \mapsto \sqrt{1 - i R} |f\rangle_{\mu},$ and recalling from appendix \ref{app:non-cs} that under this isometry, the modular conjugation $j$ maps to the projection of $\Gamma$ away from the $\pm i$ eigenspace of $R$.
This gives
\begin{align}
	\begin{split} 
	& \langle \omega | e^{i \phi[f]} J_{\omega} e^{-i \phi[g]} |\omega\rangle \\
	& \quad = \exp\left[- \frac{1}{2} \left( \langle f | (1 - i R) |f \rangle_{\mu} + \langle g | (1 - i R) | g\rangle_{\mu} + 2 \langle f | \sqrt{1 - i R} \Gamma (1 - \Pi_i - \Pi_{-i}) \sqrt{1-iR} | g \rangle_{\mu} \right)\right].
	\end{split}
\end{align}
Since $R$ satisfies $\Gamma R \Gamma = R$ and $R^{\dagger} = - R,$ since $\Gamma$ is antiunitary, and since $f$ and $g$ are real, one has $\langle f | R | f \rangle = \langle g | R | g \rangle = 0.$
So this expression simplifies to
\begin{align}
	\begin{split} 
		& \langle \omega | e^{i \phi[f]} J_{\omega} e^{-i \phi[g]} |\omega\rangle \\
		& \qquad = \exp\left[- \frac{1}{2} \left( \langle f |f \rangle_{\mu} + \langle g | g\rangle_{\mu} + 2 \langle f | \sqrt{1 - i R} \Gamma (1 - \Pi_i - \Pi_{-i}) \sqrt{1-i R}| g \rangle_{\mu} \right)\right].
	\end{split}
\end{align}
One can then use that $\Gamma$ conjugates $\sqrt{1 - i R}$ to $\sqrt{1 + i R}$ to simplify the expression further as
\begin{align}
	\begin{split} 
		& \langle \omega | e^{i \phi[f]} J_{\omega} e^{-i \phi[g]} |\omega\rangle \\
		& \qquad = \exp\left[- \frac{1}{2} \left( \langle f |f \rangle_{\mu} + \langle g | g\rangle_{\mu} + 2 \langle f | \sqrt{1 + R^2}| g \rangle_{\mu} \right)\right].
	\end{split}
\end{align}
Finally, once again using that $f$ and $g$ are real, that $\Gamma$ is antiunitary, and that conjugation by $\Gamma$ preserves $R$, one can verify $\langle f | \sqrt{1 + R^2} g \rangle = \langle g | \sqrt{1 + R^2} f\rangle$, yielding the final expression
\begin{align}
	\begin{split} 
		& \langle \omega | e^{i \phi[f]} J_{\omega} e^{-i \phi[g]} |\omega\rangle \\
		& \qquad = \exp\left[- \frac{1}{2} \left( \langle f |f \rangle_{\mu} + \langle g | g\rangle_{\mu} + \langle f | \sqrt{1 + R^2}| g \rangle_{\mu} + \langle g | \sqrt{1 + R^2}| f \rangle_{\mu}\right)\right].
	\end{split}
    \label{eq:hat_omega_gaussian}
\end{align}
Consequently,
we see that $\hat{\omega}$ is a Gaussian state, with corresponding $\hat{\mu}$ inner product given by
\begin{equation} \label{eq:muhat}
	\langle u \oplus v | f \oplus g \rangle_{\hat{\mu}}
		= \langle u | f \rangle_{\mu} + \langle v | g \rangle_{\mu} + \langle u | \sqrt{1 + R^2} |g \rangle_{\mu} + \langle v | \sqrt{1 + R^2} | f\rangle_{\mu}.
\end{equation}
This is exactly the formula that can be found in e.g. \cite[equation 3.9]{Araki-Yamagami} for the ``doubling'' of a Gaussian state.

It is important to note that on the space $C^{\infty}_{0}(\M) \oplus C_0^{\infty}(\M)$, the ``purified'' inner product $\hat{\mu}$ is distinct from the ``unentangled'' inner product $\mu \oplus \mu.$
One does have the domination condition $\hat{\mu} \prec \mu \oplus \mu,$ which is obvious from the fact that one can rewrite equation \eqref{eq:muhat} to express $\hat{\mu}$ as a block matrix acting on $\K_{\mu} \oplus \K_{\mu}$.
But $\hat{\mu}$ and $\mu \oplus \mu$ will not always be topologically equivalent; in particular, if $|\eta\rangle_{\mu}$ is in the kernel of $R$, then it is easy to see that $|\eta\rangle \oplus (-1)|\eta\rangle$ will be null for $\hat{\mu}.$
Remember that when we write $\K_{\hat{\mu}}$ below, we are talking about the space obtained by taking a quotient by all null states of $\hat{\mu},$ followed by a topological completion.

\subsection{Domination conditions on symmetrized inner products}
\label{sec:domination}

Towards a proof of theorem \ref{thm:passing-to-canonical},
we now prove that for Gaussian states $\omega_1$ and $\omega_2,$ we have the following implications:
\begin{equation} \label{eq:hat-dominance-implies}
	(\hat{\mu}_1 \prec \hat{\mu}_1) \Rightarrow (\mu_2 \prec \mu_1)
\end{equation}
and
\begin{equation} \label{eq:regular-dominance-implies}
	\left(\mu_2 \prec \mu_1 \text{ and } \sqrt{X_2} - \sqrt{X_1} \text{ Hilbert-Schmidt and } \ker{Q} = 0 \right) \Rightarrow \left( \hat{\mu}_2 \prec \hat{\mu}_1 \right).
\end{equation}

The implication in equation \eqref{eq:hat-dominance-implies} is easy to show.
For if there is a constant $C$ with
\begin{equation}
	\langle f \oplus g | f \oplus g \rangle_{\hat{\mu}_2}
		\leq C \langle f \oplus g | f \oplus g \rangle_{\hat{\mu}_1},
\end{equation}
then taking $g = 0$ gives
\begin{equation}
	\langle f | f \rangle_{\mu_2} \leq C \langle f | f \rangle_{\mu_1}.
\end{equation}

For the other direction, suppose that we have $\mu_2 \prec \mu_1$.
We will not use the other assumptions until absolutely necessary; they will come in at the last step.
Consider a sequence (or net) $f_{n} \oplus g_n$ that converges to zero with respect to the $\hat{\mu}_1$ inner product.
We must show that it also converges to zero in the $\hat{\mu}_2$ inner product.

First, we note that we can write
\begin{align}
	\begin{split}
	\langle f_n \oplus g_n | f_n \oplus g_n \rangle_{\hat{\mu}_1}
		& = \frac{1}{2} \left[ \langle f_n + g_n | 1 +  {\textstyle \sqrt{1 + R_1^2}} |f_n + g_n \rangle_{\mu_1} \right] \\
		& \qquad	+ \frac{1}{2} \left[ \langle f_n - g_n | 1 - {\textstyle \sqrt{1 + R_1^2}} |f_n - g_n \rangle_{\mu_1} \right].
	\end{split}
\end{align}
Each term on the right-hand side is nonnegative, so the whole expression goes to zero if and only if each of these terms goes to zero.
Since $1 + \sqrt{1+ R_1^2}$ is invertible, the first term goes to zero if and only if $|f_n + g_n\rangle_{\mu_1}$ goes to zero.
This obviously implies $|f_n + g_n \rangle_{\mu_2} \to 0,$ since we have assumed $\mu_2 \prec \mu_1.$

The second term is a bit more complicated; it converges to zero if and only if we have
\begin{equation}
	\sqrt{1 - \textstyle{\sqrt{1 + R_1^2}}} |f_n - g_n\rangle_{\mu_1} \to 0.
\end{equation}
If we remember again that $1+\sqrt{1+R_1^2}$ is an invertible operator, however, then this becomes simpler; we multiply by the square root of this invertible operator and find the equivalent condition
\begin{equation}
	|R_1| |f_n - g_n\rangle_{\mu_1} \to 0.
\end{equation}
So to complete the proof, we must show the implication
\begin{equation}
	\left( |R_1| |f_n - g_n\rangle_{\mu_1} \to 0 \right) \Rightarrow \left( |R_2| |f_n - g_n\rangle_{\mu_2} \to 0 \right).
\end{equation}
But using $R_1 = L^{\dagger} R_2 L$ from equation \eqref{eq:R1-from-R2}, and the fact that $L$ acts as the identity on test functions, we find
\begin{equation} \label{eq:Ldag-R2-convergence}
	\langle f_n - g_n | |R_1|^2 |f_n - g_n \rangle_{\mu_1}
		= \langle L^{\dagger} R_2 (f_n - g_n) | L^{\dagger} R_2 (f_n - g_n) \rangle_{\mu_1}.
\end{equation}

If $\mu_2$ and $\mu_1$ are mutually dominated by one another, then $L$ is an invertible operator, and one can conclude from this equation that one has $|R_2| |f_n - g_n \rangle_{\mu_2} \to 0.$
This is where the assumptions on $Q$ and $\sqrt{X_2} - \sqrt{X_1}$ come in.
Since $\sqrt{X_2} - \sqrt{X_1}$ is Hilbert-Schmidt, by multiplying on either side by $\sqrt{X_2} + \sqrt{X_1}$ and taking linear combinations, one may conclude that the operator $X_2 - X_1 = Q - 1$ is Hilbert-Schmidt.
Since $Q - 1$ is Hilbert-Schmidt, the operator $Q$ must have a finite number of eigenvalues in any subset of the real line that is bounded away from one.
In particular, $Q$ must have a finite kernel, and there must be a gap between its kernel and its first nonzero eigenvalue.
When we put in the additional assumption that $Q$ has no kernel, we learn that $Q$ is an invertible map; since $Q$ is defined by $Q = L^{\dagger} L,$ we conclude that $L$ is invertible.
So from the convergence of equation \eqref{eq:Ldag-R2-convergence}, we may conclude
\begin{equation}
	\langle R_2 (f_n - g_n) | R_2 (f_n - g_n) \rangle_{\mu_2} \to 0,
\end{equation}
as desired.

\subsection{Hilbert-Schmidt conditions on two-point operators}

Now we prove the full equivalence statement of theorem \ref{thm:passing-to-canonical}, namely
\begin{align}
	\begin{split}
	& \left(\mu_2 \prec \mu_1, \sqrt{X_2} - \sqrt{X_1} \text{ Hilbert-Schmidt, and } \ker{Q} = 0\right) \\
		& \qquad \Leftrightarrow \left(\hat{\mu}_2 \prec \hat{\mu}_1, \hat{X}_2 - \hat{X}_1 \text{ Hilbert-Schmidt, and } \ker{\hat{Q}}=0\right).
	\end{split}
    \label{eq:iff-lift-to-purify}
\end{align}
To accomplish this, we will need expressions that relate the operators $X_1, X_2$ to the operators $\hat{X}_1, \hat{X}_2.$
We have already shown that the conditions on the left imply $\hat{\mu}_2 \prec \hat{\mu}_1,$ and that the conditions on the right imply $\mu_2 \prec \mu_1.$
So we can henceforth assume domination of inner products no matter which side of the implication we are trying to prove.

\subsubsection{\texorpdfstring{Expressions for $\hat{Q}$}{Expressions for Q}}

As a first step towards expressing the Hilbert-Schmidt condition on $\hat{X}_2-\hat{X}_1= \hat{Q}-1$ as a Hilbert-Schmidt condition on the original space,
we first generate explicit expressions for the operator $\hat{Q}$, which is abstractly defined by
\begin{equation}
	\langle x \oplus y | \hat{Q} | f \oplus g \rangle_{\hat{\mu}_1}
		= \langle x \oplus y | f \oplus g \rangle_{\hat{\mu}_2}\,.
        \label{eq:hatQ-def}
\end{equation}

We begin by rewriting the $\hat{\mu}_1, \hat{\mu}_2$ inner products in an instructive way.
We can think of the $\hat{\mu}_j$ inner product as being represented by a block matrix within the ``unentangled'' inner product $\mu_j \oplus \mu_j$:
\begin{equation}
	\langle x \oplus y | f \oplus g \rangle_{\hat{\mu}_j}
		= \begin{pmatrix} \langle x | & \langle y | \end{pmatrix}
		\begin{pmatrix} 1 & \sqrt{1 + R_j^2} \\ \sqrt{1 + R_j^2} & 1 \end{pmatrix}
		\begin{pmatrix} |f\rangle \\ |g\rangle \end{pmatrix}_{\mu_j \oplus \mu_j}.
        \label{eq:mu-hat-to-mumu}
\end{equation}
Call this matrix $F_j$.
We can obtain formulas for operators with respect to the $\hat{\mu}_j$ inner product by first writing them with respect to the $\mu_j \oplus \mu_j$ inner product, then acting on the left with $F_j^{-1}.$
We will proceed to do this for the operator $\hat{Q}$.

Applying equation \eqref{eq:mu-hat-to-mumu} to the right hand side of equation \eqref{eq:hatQ-def}, we have
\begin{equation}
	\langle x \oplus y | \hat{Q} | f \oplus g \rangle_{\hat{\mu}_1}
		= \begin{pmatrix} \langle x | & \langle y | \end{pmatrix}
		\begin{pmatrix} 1 & \sqrt{1 + R_2^2} \\ \sqrt{1 + R_2^2} & 1 \end{pmatrix}
		\begin{pmatrix} |f\rangle \\ |g\rangle \end{pmatrix}_{\mu_2 \oplus \mu_2},
\end{equation}
and since $L : \K_{\mu_1} \to \K_{\mu_2}$ is the operator that converts between the $\mu_2$ and $\mu_1$ inner products, we can write this as
\begin{equation}
	\langle x \oplus y | \hat{Q} | f \oplus g \rangle_{\hat{\mu}_1}
	= \begin{pmatrix} \langle x | & \langle y | \end{pmatrix}
	\begin{pmatrix} Q & L^{\dagger} \sqrt{1 + R_2^2} L \\ L^{\dagger} \sqrt{1 + R_2^2} L & Q \end{pmatrix}
	\begin{pmatrix} |f\rangle \\ |g\rangle \end{pmatrix}_{\mu_1 \oplus \mu_1},
\end{equation}
where as usual we have written $Q = L^{\dagger} L.$
We can simplify our notation slightly by writing
\begin{align}
	W_1
		& = \textstyle{\sqrt{1 + R_1^2}}, \\
	W_2
		& = L^{\dagger} \textstyle{\sqrt{1 + R_2^2}} L,
\end{align}
in which case one has
\begin{equation}
	\langle x \oplus y | \hat{Q} | f \oplus g \rangle_{\hat{\mu}_1}
	= \begin{pmatrix} \langle x | & \langle y | \end{pmatrix}
	\begin{pmatrix} Q &  W_2 \\ W_2 & Q \end{pmatrix}
	\begin{pmatrix} |f\rangle \\ |g\rangle \end{pmatrix}_{\mu_1 \oplus \mu_1}.
    \label{eq:Qhat_in_muplusmu}
\end{equation}
One can easily compute $F_1^{-1}$ as
\begin{equation}
	F_1^{-1}
		= \frac{1}{1-W_1^2}
		\begin{pmatrix}
			1 & -W_1 \\ -W_1 & 1	
			\end{pmatrix},
\end{equation}
thereby obtaining an explicit expression for $\hat{Q}$ as
\begin{equation}
	\hat{Q} = F_1^{-1} \begin{pmatrix} Q & W_2 \\ W_2 & Q \end{pmatrix}
		= \frac{1}{1-W_1^2} \begin{pmatrix} (Q - W_1 W_2) & \quad (W_2 - W_1 Q)  \\ (W_2 - W_1 Q)  & \quad (Q - W_1 W_2) \end{pmatrix}.
    \label{eq:Qhat-as-matrix}
\end{equation}
In looking at this expression, one might be concerned about the fact that $1 - W_1^2 = - R_1^2$ is not invertible as a bounded operator.
It is invertible as an \textit{unbounded} operator on the image of $R_1^2,$ and if $R_1^2$ has a kernel, then this image will not even be dense in Hilbert space.
Nevertheless, we are guaranteed that the above equation gives a valid expression for $\hat{Q}$ ``whenever it makes sense,'' i.e., when it acts on vectors such that $(1 - W_1^2)^{-1}$ ends up being applied within its domain.
Moreover, using a decomposition of $\hat{Q}$ discussed in the next paragraph, we will show that this expression actually does makes sense on a dense subspace of vectors in ${\cal K}_{\hat{\mu}_1}$.

To better understand the action of $\hat{Q}$ in equation \eqref{eq:Qhat-as-matrix}, it is convenient to
split the Hilbert space into symmetric and antisymmetric subspaces.
On these subspaces, equation \eqref{eq:Qhat-as-matrix} becomes
\begin{equation}
	\hat{Q} |f \oplus f \rangle
		= \frac{1}{1+W_1} \begin{pmatrix} Q+ W_2 & 0 \\ 0 & Q+W_2 \end{pmatrix} \begin{pmatrix} |f\rangle \\ |f\rangle \end{pmatrix}
\end{equation}
and 
\begin{equation}
	\hat{Q} |f \oplus - f \rangle
	= \frac{1}{1-W_1} \begin{pmatrix} Q-W_2 & 0 \\ 0 & Q-W_2 \end{pmatrix} \begin{pmatrix} |f\rangle \\ |\text{$-f$}\rangle \end{pmatrix}.
\end{equation}
From these expressions, we can see that the symmetric and antisymmetric subspaces are preserved by $\hat{Q}.$
Moreover, the structure of $\hat{\mu}$ allows us to create unitary identifications that map these subspaces of $\K_{\hat{\mu}_1}$ into corresponding subspaces of $\K_{\mu_1},$ which will allow us to turn statements about $\hat{Q}$ into statements about operators on $\K_{\mu_1}.$
Concretely, from equation \eqref{eq:muhat}, one has
\begin{equation}
	\langle f \oplus f | f \oplus f \rangle_{\hat{\mu}_1}
			= 2 \langle f | (1 + W_1) |f\rangle_{\mu_1}
\end{equation}
and
\begin{equation}
	\langle f \oplus -f | f \oplus -f \rangle_{\hat{\mu}_1}
		= 2 \langle f | (1 - W_1) |f\rangle_{\mu_1}.
\end{equation}
The latter set of equations tells us that the map
\begin{equation}
	|f\rangle
		\mapsto \frac{1}{\sqrt{2}} \left|\frac{1}{\sqrt{1+W_1}} f \oplus \frac{1}{\sqrt{1+W_1}} f \right\rangle
\end{equation}
is a unitary map from $\K_{\mu_1}$ to the symmetric subspace of $\K_{\hat{\mu}_1},$ while 
\begin{equation}
	|f\rangle
	\mapsto \frac{1}{\sqrt{2}} \left|\frac{1}{\sqrt{1-W_1}} f \oplus - \frac{1}{\sqrt{1-W_1}} f \right\rangle
\end{equation}
is a unitary map from the support\footnote{The unitary map is only defined on the support of $R_1,$ since $\sqrt{1 - W_1}$ is only invertible on a dense subspace of this support. In principle one might think that our map is only an isometry, i.e., that it is not surjective, since it can only produce vectors of the form $|f \oplus -f\rangle$ with $f$ in the support of $R_1.$
But as mentioned previously, if $f$ is in the kernel of $R_1,$ then $|f \oplus -f\rangle$ is a null state of $\hat{\mu}_1.$} of $R_1$ to the antisymmetric subspace of $\K_{\hat{\mu}_1}.$
Conjugating $\hat{Q}$ by these unitary maps, one can think of $\hat{Q}$ as being made up of two blocks --- a ``symmetric block,'' which acts on $\K_{\mu_1}$ by
\begin{equation} \label{eq:Q-sym}
	\hat{Q}_{\text{sym}} |f \rangle
		= \frac{1}{\sqrt{1+W_1}} (Q + W_2) \frac{1}{\sqrt{1+W_1}} |f\rangle,
\end{equation}
and an ``antisymmetric block,'' which acts on the support of $R_1$ by
\begin{equation} \label{eq:Q-antisym}
	\hat{Q}_{\text{antisym}} |f \rangle
	= \frac{1}{\sqrt{1-W_1}} (Q - W_2) \frac{1}{\sqrt{1-W_1}} |f\rangle.
\end{equation}

Taken together, equations \eqref{eq:Q-sym} and  \eqref{eq:Q-antisym} are clearly a simplification as compared to equation \eqref{eq:Qhat-as-matrix}, and this is the form of $\hat{Q}$ that we will use in the following subsection to make contact with the Hilbert-Schmidt condition on $\sqrt{X_2}-\sqrt{X_1}$.
We are also now able to justify our claim above that the expression \eqref{eq:Qhat-as-matrix} for $\hat{Q}$ ``makes sense'' on a dense subspace of ${\cal K}_{\hat{\mu}_1}$.
In particular, we can show that the expression \eqref{eq:Q-sym} for $\hat{Q}_{\text{sym}}$ is densely defined on ${\cal K}_{\mu_1}$, and the expression \eqref{eq:Q-antisym} for $\hat{Q}_{\text{antisym}}$ is densely defined on the support of $R_1$.\footnote{Note that since we already know that $\hat{Q}$ is bounded, the operators $\hat{Q}_{\text{sym}}$ and $\hat{Q}_{\text{antisym}}$ will automatically be bounded wherever they are defined.}

The operator $\hat{Q}_{\text{sym}}$ is densely defined because $\sqrt{1 + W_1}$ is boundedly invertible; indeed, for this reason, equation \eqref{eq:Q-sym} can be used as an expression for $\hat{Q}_{\text{sym}}$ on all of $\K_{\mu_1}.$
On the other hand, $\sqrt{1 - W_1}$ is not boundedly invertible, so $\hat{Q}_{\text{antisym}}$ is only defined a priori on vectors $|f\rangle$ such that (i) $|f\rangle$ is in the image of $\sqrt{1 - W_1}$, and (ii) $(Q - W_2) \frac{1}{\sqrt{1 - W_1}} |f\rangle$ is in the image of $\sqrt{1 - W_1}.$
To show that these vectors are dense in the support of $R_1$, we will show that condition (i) above implies condition (ii), so that \eqref{eq:Q-antisym}  actually defines $\hat{Q}_{\text{antisym}}$ on the full image of $\sqrt{1 - W_1} = \sqrt{1 - \sqrt{1 + R_1^2}},$ which is dense within the support of $R_1.$

For any $|f\rangle$ in the image of $\sqrt{1 - W_1}$, clearly $(Q - W_2) \frac{1}{\sqrt{1 - W_1}} \ket{f}$  is in the image of $Q-W_2$, and hence in the image of $\sqrt{Q-W_2}$.
So we can accomplish our goal by showing that the image of $\sqrt{Q - W_2}$ is contained within that of $\sqrt{1 - W_1}$.
This follows from a standard result known as ``Douglas's lemma'', once we demonstrate the existence of a constant $C$ such that on the support of $R_1,$ one has 
\begin{equation} \label{eq:Q-W2-ineq}
    Q - W_2 \leq C (1 - W_1).
\end{equation}

To prove an inequality like \eqref{eq:Q-W2-ineq}, we note that the function $2 (1 - \sqrt{1 + t^2})$ exceeds $|t|^2$ for $t$ in the range $[-1,1]i,$ so to prove \eqref{eq:Q-W2-ineq}, it suffices to find $C'$ satisfying the inequality
\begin{equation}
    Q - W_2 \leq C' R_1^{\dagger} R_1.
\end{equation}
Writing $L = v Q^{1/2}$ and applying the identity $R_1 = L^{\dagger} R_2 L$ from from equation \eqref{eq:R1-from-R2} to the right hand side, we see that the inequality we are trying to prove is
\begin{equation} \label{ineq:well-def-penultimate}
	Q^{1/2} \left(v^{\dagger} v - v^{\dagger} {\textstyle \sqrt{1 + R_2^2}} v \right) Q^{1/2} \leq C' Q^{1/2} (v^{\dagger} R_2^{\dagger} v) Q (v^{\dagger} R_2 v) Q^{1/2}.
\end{equation}
The trick we will use is to note that regardless of which direction we are trying to show in theorem \ref{thm:passing-to-canonical}, we are always guaranteed that $Q$ is invertible.
For the ``implies'' direction of equation \eqref{eq:iff-lift-to-purify}, the assumption that $\sqrt{X_2} - \sqrt{X_1}$ is Hilbert-Schmidt guarantees that $Q-1$ is Hilbert-Schmidt, so $Q$ is invertible on its support, and the assumption $\ker{Q} = 0$ means that $Q$ is invertible.
Conversely, if we were instead making the assumption that $\hat{X}_2 - \hat{X}_1$ were Hilbert-Schmidt with $\ker{\hat{Q}} = 0,$ then restricting to the subspace $|f \oplus 0\rangle$ in equation \eqref{eq:Qhat_in_muplusmu} would give us that $Q-1$ is Hilbert-Schmidt, and for any $|z\rangle_{\mu_1}$ in the kernel of $Q$ we would have $|z \oplus 0\rangle_{\hat{\mu}_1}$ in the kernel of $\hat{Q},$\footnote{Here we are again using equation \eqref{eq:Qhat_in_muplusmu}, together with the fact that $Q\ket{z}_{\mu_1}=0$ implies $L \ket{z}_{\mu_1} = 0$ and hence $W_2\ket{z}_{\mu_1}=0$.} so we can conclude $\ker{Q} = 0.$
Taken together, these results tell us that $Q$ is invertible.
So in particular $Q^{1/2}$ is invertible, and to show inequality \eqref{ineq:well-def-penultimate}, it suffices to show
\begin{equation}
	v^{\dagger} (1 - {\textstyle\sqrt{1 + R_2^2}}) v \leq C' v^{\dagger} (R_2^{\dagger} v Q v^{\dagger} R_2) v.
\end{equation}
But since $Q$ is invertible, there is a positive constant $\alpha$ with $v Q v^{\dagger} \geq \alpha v v^{\dagger} = \alpha,$ and one finds that putting $C' = \alpha^{-1},$ we have
\begin{equation}
	\alpha^{-1} v^{\dagger} (R_2^{\dagger} v Q v^{\dagger} R_2) v
		\geq v^{\dagger} R_2^{\dagger} R_2 v
		\geq v^{\dagger} (1 - {\textstyle\sqrt{1+R_2^2}}) v,
\end{equation}
where the last inequality simply follows from the fact that $R_2$ has spectrum contained between $-i$ and $i,$ and one has $|z|^2 \geq 1 - \sqrt{1+z^2}$ on this range.

\subsubsection{First Hilbert-Schmidt conditions}

Putting all this together, we have demonstrated that equations \eqref{eq:Q-sym} and \eqref{eq:Q-antisym} tell us everything we need to know about $\hat{Q}$ by breaking it into two pieces $\hat{Q}_{\text{sym}}$ and $\hat{Q}_{\text{antisym}}$.
In particular, $\hat{Q} - 1$ will be Hilbert-Schmidt if and only if $\hat{Q}_{\text{sym}} - 1$ and $\hat{Q}_{\text{antisym}} - 1$ are Hilbert-Schmidt.
So instead of checking that $\hat{Q} - 1$ is Hilbert-Schmidt on $\K_{\hat{\mu}_1},$ we may equivalently check that
\begin{itemize}
	\item $\frac{1}{\sqrt{1+W_1}} (Q - 1 + W_2 - W_1) \frac{1}{\sqrt{1+W_1}}$ is Hilbert-Schmidt on $\K_{\mu_1}$, and
	\item $\frac{1}{\sqrt{1-W_1}} (Q - 1 - W_2 + W_1) \frac{1}{\sqrt{1-W_1}}$ is Hilbert-Schmidt on the support of $R_1.$
\end{itemize}
Since the operator $\sqrt{1+W_1}$ is invertible, we may ignore it in the first expression, making our conditions
\begin{itemize}
	\item $Q - 1 + W_2 - W_1$ is Hilbert-Schmidt on $\K_{\mu_1}$, and
	\item $\frac{1}{\sqrt{1-W_1}} (Q - 1 - W_2 + W_1) \frac{1}{\sqrt{1-W_1}}$ is Hilbert-Schmidt on the support of $R_1.$
\end{itemize}
Even though $\frac{1}{\sqrt{1-W_1}}$ is not bounded, we still know that multiplying a Hilbert-Schmidt operator on the left and right by $\sqrt{1-W_1}$ produces a Hilbert-Schmidt operator.
From this we can conclude that Hilbert-Schmidtness of $\hat{Q} - 1$ \textit{implies} the conditions
\begin{itemize}
	\item $(Q - 1) + (W_2 - W_1)$ is Hilbert-Schmidt on $\K_{\mu_1}$, and
	\item $(Q - 1) - (W_2 - W_1)$ is Hilbert-Schmidt on the support of $R_1.$
\end{itemize}
A simple calculation shows that $(Q-1)-(W_2-W_1)$ actually vanishes on the kernel of $R_1$,\footnote{The identity $R_1 = L^{\dagger} R_2 L$, together with the fact that $L$ has dense image, implies that $L$ maps the kernel of $R_1$ to the kernel of $R_2.$
So the operator $W_2 = L^{\dagger} \sqrt{1 + R_2^2} L$, restricted to the kernel of $R_1,$ is simply $L^{\dagger} \sqrt{1 + 0} L = Q.$
On the kernel of $R_1$, we also obviously have $W_1 = \sqrt{1 + R_1^2} = 1.$
It follows that on the kernel of $R_1$ we have $(Q - 1) - (W_2 - W_1) = (Q - W_2) - (1 - W_1) = 0.$}
so the second condition trivially extends to all of $\mathcal{K}_{\mu_1}$.
Overall, we simply find that the Hilbert-Schmidt property for $\hat{Q} - 1$ implies the conditions
\begin{itemize}
	\item $Q - 1$ is Hilbert-Schmidt on $\K_{\mu_1}$, and
	\item $W_2 - W_1$ is Hilbert-Schmidt on $\K_{\mu_1}.$
\end{itemize}

These two conditions are \textit{necessary} for $\hat{Q} - 1$ to be Hilbert-Schmidt.
Next we will show that these conditions are actually \textit{sufficient}; then in the following subsection, we will show that they are equivalent to the single condition that $\sqrt{X_2} - \sqrt{X_1}$ be Hilbert-Schmidt, which finally proves theorem \ref{thm:passing-to-canonical}.

Let us begin with sufficiency.
The conditions that $Q - 1$ and $W_2 - W_1$ be Hilbert-Schmidt obviously imply that $Q - 1 + W_2 - W_1$ is Hilbert-Schmidt.
All we need to show is that nothing was lost in dropping the factors of $\frac{1}{\sqrt{1-W_1}}$ in the above expression; i.e., we must show that whenever $Q-1$ and $W_2 - W_1$ are Hilbert-Schmidt, so is $\frac{1}{\sqrt{1-W_1}} (Q - 1 - W_2 + W_1) \frac{1}{\sqrt{1-W_1}}.$
We will first use our usual trick of left- and right-multiplying by the invertible operator $\frac{1}{\sqrt{1+W_1}},$ so that we are trying to demonstrate the Hilbert-Schmidt condition on
\begin{equation}
	\frac{1}{|R_1|} ((Q - 1) - (W_2 - W_1)) \frac{1}{|R_1|},
\end{equation}
defined on the support of $R_1.$
If we write a polar decomposition $R_1 = u |R_1|,$ then $u$ is a partial isometry from the support of $R_1$ to the image of $R_1$; but since $R_1$ is anti-Hermitian, these spaces are the same, and $u$ is simply a unitary operator from the support of $R_1$ to itself.
Taking adjoints one can write $R_1 = - |R_1| u^{\dagger},$ and inverting gives
\begin{equation}
	\frac{1}{R_1} = - u \frac{1}{|R_1|}.
\end{equation}
So there is an invertible operator within the support of $R_1$ that one can use to exchange $\frac{1}{|R_1|}$ for $\frac{1}{R_1},$ meaning that what we are looking for is the Hilbert-Schmidt condition on
\begin{equation}
	\frac{1}{R_1} ((Q - 1) - (W_2 - W_1)) \frac{1}{R_1}.
\end{equation}
From the relation $Q^{1/2} v^{\dagger} R_2 v Q^{1/2} = R_1,$ one finds that the inverse of $R_1$ can be written
\begin{equation}
	\frac{1}{R_1} = Q^{-1/2} \frac{1}{v^{\dagger} R_2 v} Q^{-1/2}.
\end{equation}
Grouping $Q - W_2 = Q^{1/2} v^{\dagger} \left(1 - \sqrt{1 + R_2^2} \right) v Q^{1/2}$ in the above expression, one finds that we are looking for the Hilbert-Schmidt condition on
\begin{equation}
	Q^{-1/2} \frac{1}{v^{\dagger} R_2 v} v^{\dagger} \left(1 - \textstyle{\sqrt{1 + R_2^2}}\right) v \frac{1}{v^{\dagger} R_2 v} Q^{-1/2} - \frac{1}{R_1} (1 - W_1) \frac{1}{R_1}.
\end{equation}
Using the identity $v v^{\dagger} = 1,$ we can move $v^{\dagger} (\cdot) v$ inside the square root to obtain the equivalent expression
\begin{equation} \label{eq:penpenultimate-WW}
	Q^{-1/2} \frac{1}{v^{\dagger} R_2 v} \left(1 - \textstyle{\sqrt{1 +  (v^{\dagger} R_2 v)^2} }\right) \frac{1}{v^{\dagger} R_2 v} Q^{-1/2} - \frac{1}{R_1} (1 - W_1) \frac{1}{R_1}.
\end{equation}
Finally, as in the preceding subsection, we know from our assumptions that $Q$ is invertible.
So one may apply the Birman-Solomyak theorem from section \ref{sec:pure-two-point} to conclude that $Q^{-1/2} - 1$ is Hilbert-Schmidt, and expression \eqref{eq:penpenultimate-WW} is Hilbert-Schmidt equivalent to
\begin{equation} \label{eq:penultimate-WW}
	\frac{1}{v^{\dagger} R_2 v} \left(1 - \sqrt{1 +  (v^{\dagger} R_2 v)^2} \right)  \frac{1}{v^{\dagger} R_2 v} - \frac{1}{R_1} \left(1 - \sqrt{1 + R_1^2}  \right) \frac{1}{R_1}.
\end{equation}
So far we have only used the assumption that $Q-1$ is Hilbert-Schmidt.
We will need to use the assumption that $W_2 - W_1$ is Hilbert-Schmidt to conclude that expression \eqref{eq:penultimate-WW} is Hilbert-Schmidt.
The key is that we can write
\begin{equation}
	W_2 - W_1 = Q^{1/2} \sqrt{1 + (v^{\dagger} R_2 v)^2} Q^{1/2} - \sqrt{1 + R_1^2},
\end{equation}
and the same kind of Birman-Solomyak trick we just used tells us that $Q^{1/2} - 1$ is Hilbert-Schmidt, so we have
\begin{equation} \label{eq:explicit-DW-HS}
	\sqrt{1 + (v^{\dagger} R_2 v)^2} - \sqrt{1 + R_1^2} \quad \text{ Hilbert-Schmidt}.
\end{equation}

To relate this expression to the one in equation \eqref{eq:penultimate-WW}, we use the Birman-Solomyak theorem a second time.
The function
\begin{equation}
	f(t) = - \frac{1}{1+t}
\end{equation}
is Lipschitz on any domain of $t$ bounded away from $t=-1.$
One also has
\begin{equation}
	f({\textstyle{\sqrt{1 + R_1^2}}})
		= \frac{1 - \sqrt{1+R_1^2}}{R_1^2},
\end{equation}
and similarly for $f(\sqrt{1+(v^{\dagger} R_2 v)^2}).$
So from equation \eqref{eq:explicit-DW-HS} and the Birman-Solomyak theorem, we may conclude
\begin{equation}
	\frac{1}{v^{\dagger} R_2 v} \left(1 - \sqrt{1 +  (v^{\dagger} R_2 v)^2} \right) \frac{1}{v^{\dagger} R_2 v} - \frac{1}{R_1} \left(1 - \sqrt{1 + R_1^2} \right) \frac{1}{R_1} \quad \text{ Hilbert-Schmidt},
\end{equation}
which was the desired condition in equation \eqref{eq:penultimate-WW}.

\subsubsection{\texorpdfstring{Connection to $\sqrt{X_2} - \sqrt{X_1}$}{Connection to sqrtX2 - sqrtX1}}

So far we have shown equivalence of the following two conditions:
\begin{itemize}
	\item $\hat{\mu}_2 \prec \hat{\mu}_1,$ $\ker{\hat{Q}} = 0,$ and $\hat{Q} - 1$ Hilbert-Schmidt.
	\item $\mu_2 \prec \mu_1$, $\ker{Q} = 0,$ and both $Q-1$ and $\sqrt{1 + (v^{\dagger} R_2 v)^2} - \sqrt{1 + R_1^2}$ Hilbert-Schmidt.
\end{itemize}
To prove theorem \ref{thm:passing-to-canonical}, it remains to show that in the second bullet point, the assumption that this pair of operators is Hilbert-Schmidt is the same as the assumption that $\sqrt{X_2} - \sqrt{X_1}$ is Hilbert-Schmidt.

First, let us assume that both operators $Q-1$ and $\sqrt{1 + (v^{\dagger} R_2 v)^2} - \sqrt{1 + R_1^2}$ are Hilbert-Schmidt.
We wish to show the Hilbert-Schmidt condition on
\begin{equation}
	\sqrt{X_2} - \sqrt{X_1}
		= \sqrt{Q - i R_1} - \sqrt{1 - i R_1}.
\end{equation}
Using the identity $R_1 = L^{\dagger} R_2 L$ from equation \eqref{eq:R1-from-R2}, and making use of the polar decomposition $L = v Q^{1/2},$ we can rewrite $\sqrt{X_2} - \sqrt{X_1}$ as
\begin{equation}
	\sqrt{X_2} - \sqrt{X_1}
	= \sqrt{Q^{1/2} (1 - i v^{\dagger} R_2 v) Q^{1/2}} - \sqrt{1 - i R_1}.
\end{equation}
Let us simplify the first term on the right-hand side of this equation.
As we have explained several times already, when $Q - 1$ is Hilbert-Schmidt, so is $Q^{1/2} - 1.$
Applying the Birman-Solomyak theorem to the absolute value function as in section \ref{sec:pure-two-point}, one has
\begin{equation}
	\sqrt{Q^{1/2} (1 - i v^{\dagger} R_2 v) Q^{1/2}} - \sqrt{1 - i v^{\dagger} R_2 v} \quad \text{ Hilbert-Schmidt},
\end{equation}
so we have
\begin{equation}
	\sqrt{X_2} - \sqrt{X_1}
	\sim_{\text{H.S.}} \sqrt{1 - i v^{\dagger} R_2 v} - \sqrt{1 - i R_1}.
\end{equation}
We must now figure out how to ``square'' $iv^{\dag}R_2 v$ and $i R_1$ in this expression, in order to use our assumption that $\sqrt{1 + (v^{\dagger} R_2 v)^2} - \sqrt{1 + R_1^2}$ is Hilbert-Schmidt.
Usefully,
The operators $\sqrt{1 + i R_1} + \sqrt{1 - i R_1}$ and $\sqrt{1 + i v^{\dagger} R_2 v} + \sqrt{1 - i v^{\dagger} R_2 v}$ are both invertible.
So to check if $\sqrt{X_2} - \sqrt{X_1}$ is Hilbert-Schmidt, we can equivalently check the Hilbert-Schmidt condition on
\begin{equation}
	\left( \sqrt{1 + i v^{\dagger} R_2 v} + \sqrt{1 - i v^{\dagger} R_2 v} \right) \left( \sqrt{1 - i v^{\dagger} R_2 v} - \sqrt{1 - i R_1}\right) \left( \sqrt{1 + i R_1} + \sqrt{1 - i R_1}\right),
\end{equation}
which simplifies to
\begin{align} \label{eq:some-annoying-thing}
	\begin{split}
	& \left( 1 - i v^{\dagger} R_2 v + \sqrt{1 + (v^{\dagger} R_2 v)^2} \right) \left( \sqrt{1 + i R_1} + \sqrt{1 - i R_1}\right) \\
		& \qquad - \left( \sqrt{1 + i v^{\dagger} R_2 v} + \sqrt{1 - i v^{\dagger} R_2 v} \right) \left(1 - i R_1 + {\textstyle \sqrt{1 + R_1^2}} \right).
	\end{split}
\end{align}
Note that as above, since $Q^{1/2} - 1$ is Hilbert-Schmidt, the operator $v^{\dagger} R_2 v$ is Hilbert-Schmidt equivalent to $Q^{1/2} v^{\dagger} R_2 v Q^{1/2}=R_1$.
If we also use the assumption that $\sqrt{1 + (v^{\dagger} R_2 v)^2}$ is Hilbert-Schmidt equivalent to $\sqrt{1 + R_1^2},$ we find that the whole expression in \eqref{eq:some-annoying-thing} is Hilbert-Schmidt equivalent to
\begin{align}
	\begin{split}
		& \left( 1 - i R_1 + {\textstyle \sqrt{1 + R_1^2}} \right) \left( \sqrt{1 + i R_1} + \sqrt{1 - i R_1}\right) \\
		& \qquad - \left( \sqrt{1 + i v^{\dagger} R_2 v} + \sqrt{1 - i v^{\dagger} R_2 v} \right) \left(1 - i R_1 + {\textstyle \sqrt{1 + R_1^2}} \right).
	\end{split}
\end{align}
Finally, note that the related expression
\begin{align}
	\begin{split}
		& \left( 1 - i R_1 + {\textstyle \sqrt{1 + R_1^2} } \right) \left( \sqrt{1 + i R_1} + \sqrt{1 - i R_1}\right) \\
		& \qquad - \left( \sqrt{1 + i R_1} + \sqrt{1 - i R_1} \right) \left(1 - i R_1 + {\textstyle \sqrt{1 + R_1^2}} \right),
	\end{split}
\end{align}
vanishes; hence, in order to show $\sqrt{X_2}-\sqrt{X_1}$ is Hilbert-Schmidt, it suffices to show that
$\sqrt{1 + i v^{\dagger} R_2 v} + \sqrt{1 - i v^{\dagger} R_2 v}$ is Hilbert-Schmidt equivalent to $\sqrt{1 + i R_1} + \sqrt{1 - i R_1}.$
This is true because one can obtain $\sqrt{1 + i R_1} + \sqrt{1 - i R_1}$ by applying the Lipschitz function
\begin{equation}
	f(t) = \sqrt{2} \sqrt{1+t}
\end{equation}
to the operator $\sqrt{1 + R_1^2}$, and one can obtain $\sqrt{1 + i v^{\dagger} R_2 v} + \sqrt{1 - i v^{\dagger} R_2 v}$ by applying the same function to $\sqrt{1 + (v^{\dagger} R_2 v)^2},$ which we already know is Hilbert-Schmidt equivalent to $\sqrt{1 + R_1^2}$.

For the other direction, we assume $\sqrt{X_2} - \sqrt{X_1}$ is Hilbert-Schmidt, and we wish to show that $Q-1$ and $\sqrt{1 + (v^{\dagger} R_2 v)^2} - \sqrt{1 + R_1^2}$ are Hilbert-Schmidt.
By left-multiplying and right-multiplying $\sqrt{X_2} - \sqrt{X_1}$ by $\sqrt{X_2} + \sqrt{X_1}$, and taking linear combinations, one easily finds the Hilbert-Schmidt condition on $X_2 - X_1 = Q - 1.$
So it remains to show that $\sqrt{1 + (v^{\dagger} R_2 v)^2} - \sqrt{1 + R_1^2}$ is Hilbert-Schmidt.
But once we know that $Q - 1$ is Hilbert-Schmidt, we know that $Q^{1/2} - 1$ is Hilbert-Schmidt, so write
\begin{equation}
	\sqrt{X_2} - \sqrt{X_1}
		= \sqrt{Q^{1/2} (1 - i v^{\dagger} R_2 v) Q^{1/2}} - \sqrt{1 - i R_1}
\end{equation}
and apply the Birman-Solomyak theorem with the absolute value function to conclude
\begin{equation}
	\sqrt{X_2} - \sqrt{X_1}
	\sim_{\text{H.S.}} \sqrt{1 - i v^{\dagger} R_2 v} - \sqrt{1 - i R_1}.
\end{equation}
Since we started with the assumption that $\sqrt{X_2} - \sqrt{X_1}$ was Hilbert-Schmidt, we have learned that the right-hand side above is Hilbert-Schmidt.
Conjugating by the complex conjugation operator $\Gamma$, which is antilinear and preserves $R_2$ and $R_1$ we learn that we have both
\begin{equation}
	\sqrt{1 - i v^{\dagger} R_2 v} - \sqrt{1 - i R_1} \quad \text{ Hilbert-Schmidt}
\end{equation}
\textit{and}
\begin{equation}
	\sqrt{1 + i v^{\dagger} R_2 v} - \sqrt{1 + i R_1} \quad \text{ Hilbert-Schmidt}.
\end{equation}
Adding these expressions, we learn
\begin{equation}
	\left( \sqrt{1 + i v^{\dagger} R_2 v} + \sqrt{1 - i v^{\dagger} R_2 v} \right) - \left(\sqrt{1+ i R_1} - \sqrt{1 - i R_1}\right) \quad \text{ Hilbert-Schmidt}.
\end{equation}
Applying the Lipschitz function $f(t) = (t^2 - 2)/2,$ one concludes
\begin{equation}
	\sqrt{1 + (v^{\dagger} R_2 v)^2} - \sqrt{1 + R_1^2} \quad \text{ Hilbert-Schmidt},
\end{equation}
as desired.

\subsection{Consequences for factorial states}
\label{sec:factorial-consequences}

Now we have proven theorem \ref{thm:passing-to-canonical}.
In practice, this theorem will be used to reduce the problem of excitability for general mixed states, to the simpler problem of excitability for purified states.
Indeed, with theorem \ref{thm:passing-to-canonical} in hand, we may now prove theorem \ref{thm:mixed-from-factor}, which addresses mixed-state excitability $\omega_2 \prec \omega_1$ subject to the assumption that $\omega_1$ is a factorial state.
Relaxing the factorial assumption is the task of section \ref{sec:centrally-pure}.

Theorem \ref{thm:mixed-from-factor} says that if $\omega_1$ is factorial, then we have $\omega_2 \prec \omega_1$ if and only if we have $\mu_2 \prec \mu_1$ and $\sqrt{X_2} - \sqrt{X_1}$ is Hilbert-Schmidt.
Notice that there is no condition checking that $Q$ is kernel-free.
The reason is that whenever we have $\mu_2 \prec \mu_1,$ we may use the identity $L^{\dagger} R_2 L = R_1$ from equation \eqref{eq:R1-from-R2}.
This equation implies that anything in the kernel of $L$ must also be in the kernel of $R_1.$
So when $\omega_1$ is factorial, meaning $R_1$ has trivial kernel, we may conclude that $L$ and $Q = L^{\dagger} L$ have trivial kernel.
So when $\omega_1$ is factorial, the triviality of the kernel of $Q$ does not need to be checked separately from the condition $\mu_2 \prec \mu_1.$

Now we prove the direction of theorem \ref{thm:mixed-from-factor} that starts with the assumption $\omega_2 \prec \omega_1.$
Since $\omega_1$ is factorial, proposition \ref{prop:two-way-excitability} tells us that we automatically have $\omega_1 \prec \omega_2.$
This leads to an isomorphism of the von Neumann algebras $\A_{\omega_1}$ and $\A_{\omega_2}$, so $\omega_2$ is factorial as well.
Since the canonical purification of a factorial state is pure, this means that both $\hat{\omega}_1$ and $\hat{\omega}_2$ are pure.
The general theorem \ref{thm:canonical-checking} tells us that $\omega_2 \prec \omega_1$ implies $\hat{\omega}_2 \prec \hat{\omega}_1,$ and since both of these states are pure, we learn from theorem \ref{thm:both-pure} that we must have $\hat{\mu}_2 \prec \hat{\mu}_1$, with $\hat{Q} - 1$ Hilbert-Schmidt.
Since $\hat{\omega}_1$ is pure we automatically get $\ker{\hat{Q}} = 0,$ and theorem \ref{thm:passing-to-canonical} then tells us that we have $\mu_2 \prec \mu_1$, and $\sqrt{X_2} - \sqrt{X_1}$ Hilbert-Schmidt.

In the other direction, assume that $\omega_1$ is factorial, and that we have $\mu_2 \prec \mu_1$, together with $\sqrt{X_2} - \sqrt{X_1}$ Hilbert-Schmidt.
As discussed above, $\ker{Q}$ is automatically trivial, so theorem \ref{thm:passing-to-canonical} gives $\hat{\mu}_2 \prec \hat{\mu}_1,$ with $\hat{Q} - 1$ Hilbert-Schmidt, and with the kernel of $\hat{Q}$ trivial.
We want to use these facts to reach the conclusion $\hat{\omega}_2 \prec \hat{\omega}_1,$ from which we can conclude $\omega_2 \prec \omega_1$ by theorem \ref{thm:canonical-checking}.

If we knew that $\hat{\omega}_1$ and $\hat{\omega}_2$ were both pure, then we could apply theorem \ref{thm:both-pure}, and we would be done.
But so far, we only know that $\omega_1$ is factorial, so that $\hat{\omega}_1$ is pure --- we don't know anything about $\hat{\omega}_2.$
Let us show that the assumptions we are making actually imply that $\hat{\omega}_2$ is pure.

Since we know that $\hat{Q} - 1$ is Hilbert-Schmidt and $\ker{\hat{Q}}$ is trivial, we know that $\hat{Q}$ is invertible,\footnote{We have used this argument before, but we remind the reader that since $\hat{Q} - 1$ is Hilbert-Schmidt, $\hat{Q}$ must be invertible away from its kernel, since it cannot have infinitely many eigenvalues piling up near zero. If we also know that the kernel is trivial, then $\hat{Q}$ is simply invertible.} so we may conclude that $\hat{L}$ is invertible.
Using the identity $\hat{L}^{\dagger} \hat{R}_2 \hat{L} = \hat{R}_1$, we conclude that $\hat{R}_2$ has trivial kernel, since if we had
\begin{equation}
    \hat{R}_2 |\psi\rangle = 0,
\end{equation}
then this would imply
\begin{equation}
    \hat{R}_1 L^{-1} |\psi\rangle = 0,
\end{equation}
but this implies $|\psi\rangle = 0,$ since $\hat{R}_1$ has trivial kernel.
So $\hat{R}_2$ has trivial kernel, and $\A_{\hat{\omega}_2}$ has trivial center.
Since $\hat{\omega}_2$ is centrally pure, we may conclude from the fact that it is factorial that $\hat{\omega}_2$ must be genuinely pure.
This completes the proof, since now we may apply theorem \ref{thm:both-pure} to conclude $\hat{\omega}_2 \prec \hat{\omega}_1,$ hence $\omega_2 \prec \omega_1.$

\subsection{A comment on exciting from pure states}
\label{sec:exciting-from-pure}

When $\omega_1$ is pure and we have $\mu_2 \prec \mu_1$, the condition that $\sqrt{X_2} - \sqrt{X_1}$ is Hilbert-Schmidt can be expressed as a condition directly on the operator $X_2 - X_1 = Q - 1.$
This is the content of theorem \ref{thm:mixed-from-pure}.
Here we provide a proof.
We already know, from earlier in this section, that the condition that $\sqrt{X_2} - \sqrt{X_1}$ is Hilbert-Schmidt is equivalent to the conditions
\begin{equation}
	Q - 1 \quad \text{ Hilbert-Schmidt}
\end{equation}
and
\begin{equation}
	\sqrt{1 + (v^{\dagger} R_2 v)^2} - \sqrt{1 + R_1^2} \quad \text{ Hilbert-Schmidt},
\end{equation}
with $L = v Q^{1/2}.$
When $\omega_1$ is pure, $R_1$ squares to $-1$, so the second condition above is simply
\begin{equation}
	\sqrt{1 + (v^{\dagger} R_2 v)^2} \quad \text{ Hilbert-Schmidt},
\end{equation}
or equivalently
\begin{equation}
	1 + (v^{\dagger} R_2 v)^2 \quad \text{ trace-class}.
\end{equation}
Moreover, the identity $L^{\dagger} R_2 L = R_1$ implies --- as in equation \eqref{eq:L-pure-lower-bound} --- that $L$ is invertible, so we can write $v^{\dagger} R_2 v = Q^{-1/2} R_1 Q^{-1/2},$ and the condition becomes
\begin{equation}
	1 + (Q^{-1/2} R_1 Q^{-1/2})^2 \quad \text{ trace-class}.
\end{equation}
Finally, since $Q$ is invertible, we can left and right multiply by $Q^{1/2}$ to obtain the condition
\begin{equation} \label{eq:Q-Qinv-expression}
	Q + R_1 Q^{-1} R_1 \quad \text{ trace-class}.
\end{equation}
Summing up, we have found that this condition, together with $Q - 1$ Hilbert-Schmidt, is the same (in the case of $\omega_1$ pure) as the condition that $\sqrt{X_2} - \sqrt{X_1}$ is Hilbert-Schmidt.
This already gives part of theorem \ref{thm:mixed-from-pure}.
To finish proving that theorem, we need to show that these two conditions together are equivalent to the statement that the ``diagonal elements'' in the block decomposition of $Q-1$ are trace-class.

Writing 
\begin{equation}
	Q = \begin{pmatrix} a & b \\ b^{\dagger} & c \end{pmatrix},
\end{equation}
we are interested in when $a-1$ and $c-1$ are trace-class.
Here, the block decomposition is defined by noting $R_1$ has eigenvalues $\pm i,$ and letting the first row correspond to the $+i$ eigenspace and the second row correspond to the $-i$ eigenspace. 
The invertibility of $Q$ means that it is strictly positive, with $Q \geq \lambda > 0,$ and this implies important properties for the blocks. In particular, $a$ and $c$ are strictly positive, with $a \geq \lambda, c \geq \lambda.$
Invertibility of $Q$ also implies that the operators $y$ and $y'$ defined by
\begin{equation}
\label{eq:y-yp-definition}
    y\equiv c - b^{\dagger} a^{-1} b\qquad y'\equiv a - b c^{-1} b^{\dagger}
\end{equation}
are invertible.
This is because $y$ and $y'$ arise as diagonal elements after conjugating \(Q\) by the invertible operators
\begin{equation}
\begin{pmatrix} 1 & 0 \\ - b^{\dagger} a^{-1} & 1 \end{pmatrix}
\quad \text{and} \quad
\begin{pmatrix} 1 & - b c^{-1} \\ 0 & 1 \end{pmatrix}, 
\end{equation}
respectively.

The relation between the blocks of $Q$ and the operator  $Q+R_1Q^{-1}R_1$ can now be computed by writing the inverse of $Q$ as\footnote{To verify this, check the identity $$\frac{1}{a - b c^{-1} b^{\dagger}} = a^{-1} + a^{-1} b \frac{1}{c - b^{\dagger} a^{-1} b} b^{\dagger} a^{-1},$$ then substitute this into the top-left block.}
\begin{equation}
	Q^{-1}
		= \begin{pmatrix}
			 y'^{-1} &  - a^{-1} b y^{-1}  \\
			- y^{-1}b^{\dagger} a^{-1} 
			& y^{-1}
			\end{pmatrix},
\end{equation}
and since in this block decomposition we simply have $R_1 = \begin{pmatrix} i & 0 \\ 0 & -i \end{pmatrix},$ we can explicitly compute
\begin{equation}
	Q + R_1 Q^{-1} R_1
		= \begin{pmatrix} a & b \\ b^{\dagger} & c \end{pmatrix}
			- \begin{pmatrix}
			 y'^{-1} &   a^{-1} b y^{-1}  \\
			 y^{-1}b^{\dagger} a^{-1} 
			& y^{-1}
			\end{pmatrix}.
\end{equation}

First we assume that $\sqrt{X_2} - \sqrt{X_1}$ is Hilbert-Schmidt, so that our goal is to prove that $a-1$ and $c-1$ are trace-class.
In this case we have $Q-1$ Hilbert-Schmidt, so by acting with block projections (which in particular are bounded operators), we know that that $a-1, b,$ and $c-1$ are Hilbert-Schmidt.
To upgrade the Hilbert-Schmidt conditions on $a-1$ and $c-1$ to trace-class conditions, we will need to use that $Q + R_1 Q^{-1} R_1$ is trace-class.
In particular, this means that the diagonal blocks of $Q + R_1 Q^{-1} R_1,$ which are $a-y'^{-1}$ and $c-y^{-1}$, are trace-class.
Multiplying $a-y'^{-1}$ on the left by $a^{-1}$ and on the right by $y'$ --- and performing an analogous manipulation for the second diagonal block --- one finds that $y'-a^{-1}$ and $y-c^{-1}$ are trace class.
Since $b$ is Hilbert-Schmidt, the operators $b c^{-1} b^{\dagger}$ and $b^{\dagger} a^{-1} b$ appearing in the definitions of $y$ and $y'$ are trace-class,\footnote{The product of two Hilbert-Schmidt operators is trace-class, and the product of a Hilbert-Schmidt operator with a bounded operator is Hilbert-Schmidt, so if $b$ is Hilbert-Schmidt, then $b T b^{\dagger}$ is trace-class for any bounded $T$.} and this implies
\begin{equation}
	a - a^{-1} \quad \text{ trace-class}
    \label{eq:a-ainv-traceclass}
\end{equation}
and
\begin{equation}
	c - c^{-1} \quad \text{ trace-class}.
\end{equation}
Multiplying by $a$ in one expression and by $c$ in the other, one finds
\begin{equation}
	a^2 - 1 \quad \text{ trace-class}
\end{equation}
and
\begin{equation}
	c^2 - 1 \quad \text{ trace-class}.
\end{equation}

We are almost at what we wish to conclude, which is that $a-1$ and $c-1$ are trace-class.
At this point, one can apply a version of the Birman-Solomyak theorem to reach the desired conclusion.
Unfortunately, this is complicated to explain, because for trace-class operators, the Birman-Solomyak theorem is expressed in terms of ``Besov functions,'' not Lipschitz ones.
Luckily, there is also a more direct path to the desired result, which is to use the following expression for the square root of a positive operator:\footnote{For a systematic discussion of how such integral expressions are derived, see \cite{sorce-blog-Pick}.}
\begin{equation}
	\sqrt{A} 
		= \frac{1}{\pi} \int_{0}^{\infty} dt\, t^{-1/2} (1 + t A^{-1})^{-1}.
\end{equation}
So putting $A = a^2,$ one finds
\begin{align}
	\begin{split}
	a - 1
		& = \frac{1}{\pi} \int_{0}^{\infty} dt\, t^{-1/2} \left[(1+t a^{-2})^{-1} - (1+t)^{-1} \right] \\
		& = \frac{1}{\pi} \int_{0}^{\infty} dt\, t^{-1/2} (1+t a^{-2})^{-1} \left[ (1 + t) - (1 + t a^{-2}) \right] (1+t)^{-1} \\
		& = \frac{1}{\pi} \int_{0}^{\infty} dt\, t^{1/2} (1+t a^{-2})^{-1} \left[ 1 - a^{-2} \right] (1+t)^{-1}.
	\end{split}
\end{align}
Taking trace norms $\norm{\cdot}_1$ on both sides and using a Hölder-type inequality to pull $\norm{(1-ta^{-2})^{-1}}_{\infty}$ out of the full trace norm on the right-hand side, one gets
\begin{align}
	\begin{split}
		\lVert a - 1 \rVert_1
		& \leq \frac{1}{\pi} \lVert 1 - a^{-2} \rVert_1 \int_{0}^{\infty} dt\, \frac{t^{1/2}}{1+t} \frac{1}{1 + t \lVert a^{-2} \rVert_{\infty}}.
	\end{split}
\end{align}
The integral converges, and the remaining trace norm on the right-hand side is finite, since it can be obtained by multiplying the finite-trace operator $a^2 - 1$ by the bounded operator $a^{-2}.$
We conclude that $a-1$ is trace-class; a similar argument gives that $c-1$ is trace-class, and we have proved the part of theorem \ref{thm:mixed-from-pure} that says ``if $\omega_2 \prec \omega_1$ and $\omega_1$ is pure, then the diagonal blocks of $Q-1$ are trace-class.''

In the other direction, suppose that $a-1$ and $c-1$ are trace-class.
Our goal is to show that $Q-1$ is Hilbert-Schmidt, and to show the condition in \eqref{eq:Q-Qinv-expression} that $Q + R_1 Q^{-1} R_1$ is trace-class.
To prove the Hilbert-Schmidt condition on $Q-1$, we note that the inequality $X_2=Q - i R_1 \geq 0$ gives
\begin{equation}
	\begin{pmatrix}
		a + 1 & b \\
		b^{\dagger} & c - 1
	\end{pmatrix}
		\geq 0,
\end{equation}
and conjugation by note minus $\begin{pmatrix} 1 & 0 \\ -b^{\dagger} (1 + a)^{-1} & 1 \end{pmatrix}$ gives
\begin{equation}
	c - 1 \geq b^{\dagger} (1+a)^{-1} b \geq \frac{1}{\lVert 1 + a \rVert} b^{\dagger} b
\end{equation}
upon restricting the positivity constraint to one of the diagonal blocks.
Since $c-1$ is trace-class, we learn that $b^{\dagger} b$ is trace-class, hence $b$ is Hilbert-Schmidt --- we did not need to put this in as an assumption, and we learn that $Q-1$ is Hilbert-Schmidt, since $a-1$ and $c-1$ are Hilbert-Schmidt by virtue of being trace-class.

Now we proceed to prove the condition in equation \eqref{eq:Q-Qinv-expression} that $Q + R_1 Q^{-1} R_1$ is trace-class.
Since $a-1$ is trace-class, we can multiply by $a^{-1}$ to learn that $a^{-1} - 1$ is trace-class, and together with $b$ being Hilbert-Schmidt, this implies that the operator
\begin{equation}
	y'-a^{-1}
		= (a - 1) - b c^{-1} b^{\dagger} - (a^{-1} - 1)
\end{equation}
is trace-class, and similarly for $y-c^{-1}$.
Multiplying by bounded operators, as above equation \eqref{eq:a-ainv-traceclass}, we learn that the diagonal blocks $a-y'^{-1}$ and $c-y^{-1}$ of $Q + R_1 Q^{-1} R_1$ are trace-class.

We will complete our proof by showing that the off-diagonal block 
\begin{equation}
    b - a^{-1} b y^{-1}
\end{equation} 
is  trace-class as well.
Note that 
\begin{equation}
    y-1=c - b^{\dagger} a^{-1} b - 1
\end{equation}
is trace-class (because $c-1$ is trace-class and $b$ is Hilbert-Schmidt), so multiplying by $y^{-1}$ we find $y^{-1}-1$ is trace-class, and since $a^{-1} -1$ is trace-class, we find
\begin{equation}
	b - a^{-1} b y^{-1}
		\sim_{\text{trace-class}} (b - b) = 0.
\end{equation}

So the diagonal and off-diagonal blocks of $Q + R_1 Q^{-1} R_1$ are trace-class, and hence the full operator is trace-class.
This completes the proof of the part of theorem \ref{thm:mixed-from-pure} that says, ``if $\omega_1$ is pure and $\mu_2 \prec \mu_1,$ with the diagonal blocks of $Q-1$ being trace-class, then we have $\omega_2 \prec \omega_1$.''

\section{Centrally pure Gaussian states}
\label{sec:centrally-pure}

So far, we have proved theorem \ref{thm:mixed-from-factor}.
This gives a general criterion for excitability $\omega_2 \prec \omega_1$ when $\omega_1$ is a factorial state.
The key insight was that when $\omega_1$ is factorial, the canonical purification $\hat{\omega}_1$ is pure, so as explained in section \ref{sec:factorial-consequences}, one can apply the theorem that says pure states can be excited out of one another if and only if one has $\hat{\mu}_2 \prec \hat{\mu}_1$ and the operator $\hat{Q} - 1$ is Hilbert-Schmidt.

For a non-factorial state, the canonical purification is not pure in the technical sense; instead, it is only centrally pure.
Crucially, we can still prove the paper's main theorem --- theorem \ref{thm:general-theorem} --- by using the simple logic of section \ref{sec:factorial-consequences}, as long as the excitability conditions for  centrally pure states mimic those of pure states.
So to complete the proof, we will establish the appropriate excitability criteria for centrally pure states.

In this section, we will take $\omega_1$ and $\omega_2$ to be centrally pure states.\footnote{In proving the main theorem of the paper, these will be the canonical purifications $\hat{\omega}_1$ and $\hat{\omega}_2$, but for present section it is best not to muddy the notation by putting hats everywhere.}
As in remark \ref{rem:centrally-pure-def}, a centrally pure state $\omega$ is characterized by the inclusion ${\cal A}_{\omega}'\subseteq {\cal A}_{\omega}$, which expresses the absence of new ``purifying'' degrees of freedom introduced to ${\cal A}_0$ by the commutant in the GNS representation ${\cal H}_{\omega}$.
In the particular case of centrally pure Gaussian states, central purity is equivalent to the statement that $R_1$ and $R_2$ have spectra $\pm i$ and zero.
We will prove theorem \ref{thm:c-pure-from-c-pure}, which states that one has $\omega_2 \prec \omega_1$ if and only if $\mu_2 \prec \mu_1$, $\ker{Q} = 0,$ and the operator $Q - 1$ is Hilbert-Schmidt.

The idea of the proof is to split $\omega_1$ into a ``pure piece'' and an ``abelian piece.''
In particular, while the $\pm i$ eigenspaces of $R_1$ generate a subsystem for which $\omega_1$ is pure (in the strict, factorial sense), the kernel of $R_1$ generates a subsystem for which $\omega_1$ is classical (in the sense of describing a state on an abelian algebra).
One can then break the criterion $\omega_2 \prec \omega_1$ into these two pieces, and apply theorem \ref{thm:mixed-from-pure} --- which tells us when a general mixed state can be excited from a pure state --- together with a new theorem that will tell us when excitability is possible on an abelian algebra.
Putting these together will yield a proof of theorem \ref{thm:general-theorem}.

\subsection{Structure of centrally pure states}
\label{sec:centrally-pure-structure}

We begin by explaining how a centrally pure Gaussian state $\omega$ can be split into a ``pure piece'' and an ``abelian piece'' by appealing to the structure of the one-particle Hilbert space.

Let $\omega$ be a centrally pure state. In appendix \ref{app:pure} we show that, within the corresponding $\K_\mu$ space, the anti-self-adjoint operator $R$ squares to $-1$ on the orthocomplement of its kernel.
It follows that its spectrum is simply $\{+i, 0, -i\}$.
Let $\Pi_{i}, \Pi_{0},$ and $\Pi_{-i}$ be the projectors on the respective eigenspaces, so that one has 
\begin{equation}
	1 = \Pi_i + \Pi_0 + \Pi_{-i}
\end{equation}
and 
\begin{equation}
    R=i(\Pi_{i}-\Pi_{-i})\,.
\end{equation}

The one-particle Hilbert space $\H_{\omega}^{(1)}$ is, as usual, unitarily equivalent to the orthocomplement of $(R = - i)$ via the map $\phi[f] |\omega\rangle \mapsto \sqrt{1 - i R}|f\rangle_{\mu}.$
In the present setting, this is simply
\begin{equation}
	\phi[f] |\omega\rangle
		\mapsto \sqrt{2} \Pi_{i}|f\rangle_{\mu} + \Pi_0 |f\rangle_{\mu}.
\end{equation}
The $+i$ eigenstates of $R$ and the kernel of $R$ are mutually orthogonal.
From this, we see that the one-particle Hilbert space $\H_{\omega}^{(1)}$ splits into two orthogonal pieces --- a piece generated by the $+i$ eigenstates of $R$,
\begin{equation}
	\H_{\omega}^{(1, i)}
		\equiv \bar{\{\phi[\psi] |\omega\rangle \text{ such that } R |\psi\rangle_{\mu} = i |\psi\rangle_{\mu}\}},
\end{equation}
and a piece generated by the kernel of $R$,
\begin{equation}
	\H_{\omega}^{(1, 0)}
	\equiv \bar{\{\phi[\psi] |\omega\rangle \text{ such that } R |\psi\rangle_{\mu} = 0\}}.
\end{equation}

In appendix \ref{app:Haag-duality}, we explain that for any orthodecomposition of $\H_{\omega}^{(1)}$ into subspaces, one gets a corresponding tensor product decomposition of the full Hilbert space $\H_{\omega}.$
In the present setting, this is obtained by taking a generic $n$-particle state, $:\phi[f_1] \dots \phi[f_n]: |\omega\rangle,$ decomposing each $f_j$ in terms of eigenstates of $R$ as
\begin{equation}
	f_j = f_{j, i} + f_{j, 0} + f_{j, -i},
\end{equation}
and writing 
\begin{equation}
	:\phi[f_1] \dots \phi[f_n]: |\omega\rangle
		= :\phi[f_{1, i} + f_{1, 0}] \dots \phi[f_{n, i} + f_{n, 0}]: |\omega\rangle.
\end{equation}
Then one notices that due to orthogonality of $f_{j, i}$ and $f_{k, 0},$ the normal ordering splits up cleanly between $+i$ terms and kernel terms; i.e., from the definition of normal ordering in equation \eqref{eq:normal-ordering} of the appendix, one has
\begin{align}
	\begin{split}
		& :\phi[f_1] \dots \phi[f_n]: |\omega\rangle \\
		& \qquad = \sum_{p=0}^{n} \sum_{j_1 < \dots < j_p} 
		\left( :\phi[f_{1, i}] \dots \hat{\phi[f_{j_1, i}]} \dots \hat{\phi[f_{j_p, i}]} \dots \phi[f_{n, i}]: \right) \left( : \phi[f_{j_1, 0}] \dots \phi[f_{j_p, 0}] :\right)  |\omega\rangle.
	\end{split}
\end{align}
When taking overlaps of such states, the $R=+i$ terms and the $R=0$ terms have zero overlap with one another, so all overlaps factorize between these two different pieces.
Consequently, one can introduce the subspaces
\begin{align}
	\H_{\omega, i}
		& \equiv \text{ states obtained by acting on $\omega$ with normal-ordered $R=+i$ fields}, \\
	\H_{\omega, 0}
		& \equiv \text{ states obtained by acting on $\omega$ with normal-ordered $R=0$ fields},
\end{align}
and the map
\begin{align} \label{eq:i-0-tensor-map}
	\begin{split}
	& \left( :\phi[f_{1, i}] \dots \phi[f_{m, i}]:\right)
		\left(:\phi[g_{1, 0}] \dots \phi[g_{n, 0}]: \right) |\omega\rangle \\
		& \qquad \mapsto \left( :\phi[f_{1, i}] \dots \phi[f_{m, i}]: |\omega\rangle \right) \otimes \left( :\phi[g_{1,0}] \dots \phi[g_{n, 0}]: |\omega\rangle \right)
	\end{split}
\end{align}
furnishes a unitary equivalence between $\H_{\omega}$ and $\H_{\omega, i} \otimes \H_{\omega, 0}.$

A very nice feature of this tensor factorization is that the von Neumann algebra $\A_{\omega}$ also factorizes.
In particular, we claim the identity
\begin{equation} \label{eq:c-pure-vN-reduction}
	\A_{\omega}
		\cong \B(\H_{\omega,i}) \otimes \mathcal{Z}_{\omega, 0}.
\end{equation}
There is a lot of structure in this statement; it says that $\A_{\omega}$ is generated by the \textit{full} algebra of bounded operators on $\H_{\omega, i}$ together with an \textit{abelian} algebra $\mathcal{Z}_{\omega, 0}$ acting on $\H_{\omega,0}.$
Moreover, given this decomposition, the central purity relation $\A_{\omega}' \subseteq \A_{\omega}$ will clearly imply a maximality condition $\Z_{\omega,0}' = \Z_{\omega, 0}$, which in turn guarantees that the center of $\A_\omega$ can be identified with $1\otimes\mathcal{Z}_{\omega,0}$ under the unitary equivalence \eqref{eq:i-0-tensor-map}.

We dedicate the remainder of this subsection to showing the decomposition in equation \eqref{eq:c-pure-vN-reduction}.
To accomplish this, we consider a generic real test function $f,$ and decompose it within $\K_{\mu}$ as
\begin{equation}
	|f\rangle_{\mu} = |f_i + f_{-i}\rangle_{\mu} + |f_0\rangle_{\mu}.
\end{equation}
Because $R$ commutes with complex conjugation, each of the states in the above expression is real, so one can consider the Weyl operators $e^{i \phi[f_i + f_{-i}]}$ and $e^{i \phi[f_0]}.$
Because these correspond to eigenstates of the $R$ operators in distinct eigenspaces, the operators $\phi[f_i + f_{-i}]$ and $\phi[f_0]$ commute, so one has
\begin{equation}
	e^{i \phi[f]}
		= e^{i \phi[f_{i} + f_{-i}]} e^{i \phi[f_0]}.
    \label{eq:weyls-factorize}
\end{equation}

Since $\A_{\omega}$ is the von Neumann algebra generated by all Weyl operators $e^{i \phi[f]},$ we learn that we can decompose it as
\begin{equation} \label{eq:c-pure-omega-add}
	\A_{\omega}
		= \A_{\omega, \pm i} \vee \A_{\omega, 0},
\end{equation}
with
\begin{equation} \label{eq:A-pmi}
	\A_{\omega, \pm i} = \langle e^{i \phi[f_i + f_{-i}]}\rangle
\end{equation}
and
\begin{equation}
	\A_{\omega, 0} = \langle e^{i \phi[f_0]}\rangle.
\end{equation}
Since the commutator operator $R$ annihilates $|f_0\rangle_{\mu},$ the algebra $\A_{\omega,0}$ is clearly abelian.

It is clear that under the unitary equivalence in equation \eqref{eq:i-0-tensor-map}, the decomposition from equation \eqref{eq:c-pure-omega-add} becomes a tensor product. 
So all that remains is to show that under this map, $\A_{\omega, \pm i}$ becomes the full algebra of bounded operators on $\H_{\omega, i}$.\footnote{Recall from the comment below equation \eqref{eq:c-pure-vN-reduction} that this automatically implies $\A_{\omega, 0}$ is the center of $\A_{\omega}$.}
To show this, all we will need is that the commutant of $\A_{\omega, \pm i}$ acts trivially on $\H_{\omega, i}.$
This can be accomplished using the commutator computations from appendix \ref{app:Haag-duality}.
In the language of appendices \ref{app:generalized-Weyl}-\ref{app:non-cs}, one associates, with any subset $X$ of the one-particle Hilbert space, the von Neumann algebra
\begin{equation}
	\A(X)
	= \langle e^{i \chi_{\alpha}} \, | \, |\alpha\rangle \in X\rangle.
\end{equation}
The $\chi$ operators, introduced above equation \eqref{eq:below-chi-explanation},
obey the following property for any $f_{\pm i}$ in the $\pm i$ eigenspaces of $R$:
\begin{equation}
\chi_{\phi[f_i]\ket{\omega}}=\phi[f_i+f_{-i}].
\end{equation}
Hence, by consulting equation \eqref{eq:A-pmi}, we see that $\A_{\omega, \pm i}$ is simply $\A(X_{\pm i})$, where the spaces $X_{\pm i}$ are defined via
\begin{equation}
	X_{\pm i}
		= \bar{\{\phi[\psi] |\omega\rangle \text{ such that } R |\psi\rangle_{\mu} = i |\psi\rangle_{\mu} \} }.
        \label{eq:X-pm-subspace}
\end{equation}
Here, as usual, we are generalizing from test functions $f$ to general elements $\ket{\psi}_{\mu}$ of ${\cal K}_{\mu}$, so in particular $X_{\pm i}$ can be thought of as the real subspace of $\H_{\omega}^{(1)}$ obtained by acting with $U^{\dagger}$ on the $+i$ eigenspace of $R$, where $U$ is the isometry from $\H_{\omega}^{(1)}$ into $\K_{\mu}.$

From appendix \ref{app:Haag-duality}, the commutant is
\begin{equation}
	\A(X_{\pm i})'
		= \A(X_{\pm i}'),
\end{equation}
with
\begin{equation}
	X_{\pm i}'
		= \{|\beta\rangle \in \H_{\omega}^{(1)} \text{ such that } \text{Im}\langle \beta | \alpha \rangle = 0 \text{ for all } |\alpha\rangle \in X_{\pm i}\}.
\end{equation}
So $X_{\pm i}'$ consists of the one-particle states $|\beta\rangle$ with
\begin{equation}
	\text{Im}\langle \beta | \phi[\psi] |\omega\rangle = 0
		\quad \text{ for all } \quad R |\psi\rangle_{\mu} = i |\psi\rangle_{\mu}.
\end{equation}
Acting on both the bra and the ket with $U$, we see that this is just the set of states $|\beta\rangle$ such that $U |\beta\rangle$ has vanishing imaginary overlap with the $+i$ eigenspace of $R$.
But the $+i$ eigenspace of $R$ is a complex vector space; so if a state has vanishing imaginary overlap that space, it is orthogonal to that space.
Consequently, we see that $X_{\pm i}$ is the set of states $|\beta\rangle$ such that $U |\beta\rangle$ is orthogonal to the space $R=+i.$
Since the image of $U$ is automatically orthogonal to the space $R=-i$ by construction, we conclude that $U |\beta\rangle$ must be in the kernel of $R$, since this is the only space left in $\K_{\mu}.$
Consequently, we learn
\begin{equation}
	X_{\pm i}'
		= \bar{\{\phi[\psi] |\omega\rangle \text{ such that } R |\psi\rangle_{\mu} = 0 \} }.
\end{equation}
So $\A_{\omega, \pm i}' = \A(X_{\pm i})'=\A(X_{\pm i}')$ is exactly the algebra $\A_{\omega, 0},$ which acts trivially on the tensor factor $\H_{\omega,i}$ within the decomposition $\H_{\omega} = \H_{\omega, i} \otimes \H_{\omega, 0}.$
This completes the proof of the claimed decomposition in equation \eqref{eq:c-pure-vN-reduction}.

\subsection{Breaking excitability into tensor factors}
\label{sec:excitability-breakdown}

Now suppose we have states $\omega_1$ and $\omega_2,$ and assume that $\omega_1$ is centrally pure.
For the moment, we will not assume that $\omega_2$ is centrally pure.
By the results of the preceding subsection, one has a decomposition
\begin{equation}
	\H_{\omega_1}
		\cong \H_{\omega_1, i} \otimes \H_{\omega_1, 0}
\end{equation}
with a corresponding decomposition of the von Neumann algebra
\begin{equation}
	\A_{\omega_1}
	\cong \B(\H_{\omega_1, i}) \otimes \Z_{\omega_1, 0}.
\end{equation}
We wish to use this decomposition to simplify our understanding of the excitability relation $\omega_2 \prec \omega_1.$

Our starting point is proposition \ref{prop:alpha-definition} --- see also remark \ref{rem:alpha-bounded} --- which in the present setting tells us that $\omega_2 \prec \omega_1$ is equivalent to the condition that the ``identity map'' on Weyl operators extend to a normal homomorphism from $\A_{\omega_1}$ to $\A_{\omega_2}.$
I.e., if one considers the map $\alpha$ that takes $e^{i \phi[f]}$ acting on $\H_{\omega_1}$ to $e^{i \phi[f]}$ acting on $\H_{\omega_2},$ it is necessary and sufficient that this map be a $*$-homomorphism that is continuous with respect to the ultraweak topology.

Suppose first that one has $\omega_2 \prec \omega_1,$ so we know that $\alpha$ is a normal homomorphism.
We also know, from section \ref{sec:mu-boundedness}, that $\omega_2 \prec \omega_1$ implies $\mu_2 \prec \mu_1.$
Consequently, the map $L |f\rangle_{\mu_1} = |f \rangle_{\mu_2}$ is bounded.
It follows that the map $\alpha$, defined on test functions by
\begin{equation}
	\alpha(e^{i \phi[f]}) = e^{i \phi[f]},
\end{equation}
uniquely extends by ultraweak continuity to generic real elements $|\psi\rangle_{\mu_1}$ in $\K_{\mu_1}$ as
\begin{equation} \label{eq:alpha-L}
	\alpha(e^{i \phi[\psi]}) = e^{i \phi[L \psi]}.
\end{equation}
In particular, the algebra $\A_{\omega_1, \pm i}$ from the preceding section is generated by operators $e^{i \phi[\psi]}$ where $|\psi\rangle_{\mu}$ is in the support of $R$, while the algebra $\A_{\omega_1, 0}$ is generated by operators $e^{i \phi[\psi]}$ where $|\psi\rangle_{\mu}$ is in the kernel of $R.$
Consequently, $\alpha$ has two restrictions:
\begin{align}
	\alpha_{\pm i} \label{eq:alpha-i}
		& : \A_{\omega_1, \pm i} \to \A_{\omega_2}, \\
	\alpha_{0} \label{eq:alpha-0}
		& : \A_{\omega_1, 0} \to \A_{\omega_2},
\end{align}
each of which is a normal homomorphism, and each of which can be written explicitly in terms of equation \eqref{eq:alpha-L}.

Going in the other direction, we now claim that in order to prove that $\alpha$ extends to a normal homomorphism, it is sufficient to check that the maps $\alpha_{\pm i}$ and $\alpha_0$ --- which initially are defined only on Weyl operators --- extend to normal homomorphisms within the appropriate subalgebras.\footnote{This fact, and many pieces of the argument below, were explained to us by Lauritz van Luijk.}
This will give us a way to break the excitability condition $\omega_2 \prec \omega_1$ into two pieces, one corresponding to $\A_{\omega_1, \pm i}$ and one corresponding to $\A_{\omega_1, 0}.$

To prove this claim, we start from the necessary condition $\mu_2 \prec \mu_1,$ so that we can make sense of the bounded ``identity map'' $L : \K_{\mu_1} \to \K_{\mu_2}.$
Then equation \eqref{eq:alpha-L} gives maps $\alpha_{\pm i}$ and $\alpha_0$ that are defined on operators of the form $e^{i \phi[\psi]}$ within $\A_{\omega_1, \pm i}$ and $\A_{\omega_1, 0}.$
If we suppose further that each of these extends to a normal homomorphism, then we get maps as in equations \eqref{eq:alpha-i} and \eqref{eq:alpha-0}, and after rewriting $\H_{\omega_1} \cong \H_{\omega_1, i} \otimes \H_{\omega_1, 0},$ we can rewrite these as maps
\begin{align}
	\alpha_{\pm i}
		& : \B(\H_{\omega_1,i}) \to \A_{\omega_2}, \\
	\alpha_{0}
		& : \Z_{\omega_1,0} \to \A_{\omega_2}.
\end{align}
The map $\alpha_{\pm i}$ is actually quite constrained thanks to the observation that $\B(\H_{\omega_1, i})$ is a factor algebra.
It is a basic fact about von Neumann algebras that any normal homomorphism out of a factor must be an isomorphism onto its image.\footnote{To see this, note that since $\alpha_{\pm i}$ is a normal homomorphism, the only way it could fail to be an isomorphism onto its image would be if it failed to be injective.
But the kernel of $\alpha_{\pm i}$ is an ultraweakly closed, two-sided ideal, and factors admit no such nontrivial ideals --- see e.g. \cite[proposition II.3.12]{Takeaski:I}.
So the kernel of $\alpha_{\pm i}$ is either everything --- impossible, since we have $\alpha_{\pm i}(1) = 1$ --- or nothing, i.e., $\alpha_{\pm i}$ is injective.}
Consequently, the image
\begin{equation}
	\alpha_{\pm i}(\B(\H_{\omega_1, i})) \subseteq \A_{\omega_2}
\end{equation}
is an isomorphic copy of $\B(\H_{\omega_1, i}).$
This is not only a factor, it is a \textit{type I factor}.
As explained e.g. in \cite[section 7.1]{Sorce:types}, this means that there is a tensor product decomposition of $\H_{\omega_2}$ as
\begin{equation}
	\H_{\omega_2}
	\cong \K \otimes \mathcal{F}
\end{equation}
such that we have
\begin{equation}
	\alpha_{\pm i}(\B(\H_{\omega_1, i}))
	\cong \B(\K) \otimes 1_{\mathcal{F}}.
\end{equation}
The image of $\alpha_0,$ i.e. the algebra
\begin{equation}
	\alpha_0(\Z_{\omega_1, 0}) \subseteq \A_{\omega_2},
\end{equation}
necessarily commutes with the image of $\B(\H_{\omega_1,i}),$ thanks to the homomorphism property of $\alpha.$
So with respect to the above decomposition, $\alpha_0(\Z_{\omega_1, 0})$ commutes with $\B(\K),$ and therefore acts entirely on the tensor factor $\mathcal{F}$ within $\H_{\omega_2}.$
This implies that after writing $\H_{\omega_1}$ as $\H_{\omega_1, i} \otimes \H_{\omega_1, 0},$ and after writing $\H_{\omega_2}$ as $\K \otimes \mathcal{F},$ we have the decomposition $\alpha = \alpha_{\pm i} \otimes \alpha_0.$
A tensor product of normal homomorphisms is obviously a normal homomorphism; this completes the proof of the claim.

How will this discussion help us prove the excitability criteria for centrally pure states as in theorem \ref{thm:c-pure-from-c-pure}?
So far we have only assumed that $\omega_1$ is centrally pure; now we also assume $\omega_2$ is centrally pure, so we have decompositions
\begin{equation}
	\H_{\omega_2}
		\cong \H_{\omega_2, i} \otimes \H_{\omega_2, 0}
\end{equation}
and
\begin{equation} \label{eq:omega-2-central-decomp}
	\A_{\omega_2}
		\cong \B(\H_{\omega_2, i}) \otimes \Z_{\omega_2, 0}.
\end{equation}

A priori, the image of the restriction $\alpha_0$ was just some complicated subalgebra of $\A_{\omega_2}.$
But in fact, the image of $\alpha_0$ must lie within $\Z_{\omega_2, 0}.$
This is because $\alpha_0$ is defined to send elements $e^{i \phi[\psi]}$ with $R_1 |\psi\rangle_{\mu} = 0$ to elements $e^{i \phi[L \psi]}$ acting on $\H_{\omega_2}.$
But the identity $L^{\dagger} R_2 L=R_1$ tells us that $L$ maps the kernel of $R_1$ to the kernel of $R_2,$ which means that $\alpha$ maps $\Z_{\omega_1, 0}$ to a subalgebra of $\Z_{\omega_2, 0}.$
Hence, crucially, we can convert the task of understanding whether $\alpha_0$ extends to a normal homomorphism, to the task of establishing excitability conditions for Gaussian states on abelian algebras as in theorem \ref{thm:abelian}.
This will be the task of the following subsection.

We remark that no such simple classification is possible for the image of $\alpha_{\pm i}$ --- this will generally map $\B(\H_{\omega_1,i})$ to a subalgebra that is entangled between the two factors on the right-hand side of \eqref{eq:omega-2-central-decomp}.
But the restriction of $\omega_1$ to $\B(\H_{\omega_1, i})$ is pure, so it is possible to deal with continuity of $\alpha_{\pm i}$ using the general condition for exciting a mixed state out of a pure state as in theorem \ref{thm:mixed-from-pure}.

To summarize, excitability of centrally pure states breaks down as
\begin{center}
\begin{tikzpicture}[>=stealth, line width=0.7pt]

\node (A1) at (1,3.3) {$\mathcal A_{\omega_1}$};
\node (cong1) at (1.8,3.3) {$\cong$};
\node (BHtop) at (2.9,3.3) {$\mathcal A_{\omega_1,\pm i}$};
\node (Ztop)  at (5.2,3.3) {$\mathcal A_{\omega_1,0}$};
\node at (4.0,3.3) {$\;\;\otimes$};

\node (A2) at (1,0.8) {$\mathcal A_{\omega_2}$};
\node (cong2) at (1.8,0.8) {$\cong$};
\node (BHbot) at (2.9,0.8) {$\mathcal A_{\omega_2,\pm i}$};
\node (Zbot)  at (5.1,0.8) {$\;\mathcal A_{\omega_2,0}$};
\node at (4,0.8) {$\;\;\otimes$};

\draw[->] (Ztop.south) -- (Ztop.south |- Zbot.north);
\node at (5.5,2.1) {$\;\alpha_0$};

\draw[->] (A1.south) -- (A1.south |- A2.north);
\node at (0.65,2.1) {$\;\alpha$};

\coordinate (P) at (3.8,2.2);

\draw (BHtop.south) -- (P);

\draw[->] (P) -- (BHbot.north);
\draw[->] (P) -- (Zbot.north);

\node at (3.1,2.1) {$\alpha_{\pm i}$};

\end{tikzpicture}
\end{center}
In section \ref{sec:general-proof} we will put together theorems \ref{thm:mixed-from-pure} and \ref{thm:abelian} to analyze the general excitability problem.

\subsection{The abelian case}
\label{sec:abelian}

Now we turn to the excitability question for a free field algebra where the commutator vanishes, i.e., where we have $\Omega = 0$.
We have some basic field operators $\phi[f]$ that all mutually commute with one another forming an algebra $\A_0$, and a pair of zero-mean Gaussian states $\tau_1$ and $\tau_2$, where we are using the symbols $\tau_j$ for these abelian states so as not to confuse them with the states $\omega_j$ from the preceding subsection.

Our goal is to prove theorem \ref{thm:abelian}, which says that we have $\tau_2 \prec \tau_1$ if and only if we have (i) $\mu_2 \prec \mu_1,$ (ii) $\ker{Q} = 0$, and (iii) $Q - 1$ Hilbert-Schmidt. Recall that the necessity of conditions (i) and (ii) was already established in Section \ref{sec:general-necessary}.

First we note that since $\A_0$ is abelian, the state $|\tau_j \rangle$ is automatically cyclic \textit{and separating} for $\A_{\tau_j}$ within $\H_{\tau_j}.$\footnote{Cyclicity is always true in the GNS construction; and since separating for $\A_{\tau_j}$ is the same as cyclic for $\A_{\tau_j}',$ and we have $\A_{\tau_j}' \supseteq \A_{\tau_j}$ in the abelian case, the separating property follows as well.}
Consequently, whenever we have $\tau_2 \prec \tau_1,$ we can conclude as in section \ref{sec:mutual-excitability} that by \cite[corollary 5.24]{Stratila:book}, there is in fact a \textit{vector} representative for $\tau_2$ within $\H_{\tau_1}.$
As in the pure-state case of section \ref{sec:pure}, we will find that well-definedness of the vector representative implies that $Q-1$ is Hilbert-Schmidt.

Unlike in the pure-state case, there will be a great deal of freedom in choosing our vector representative, for we could always act on it with any unitary in $\A_{\tau_1}$ and get an equally good representative.
There is, however, a canonical representative furnished by Araki's theory of the natural cone \cite{Araki:natural-1}.
Theorem 7.1 of that paper implies the existence of a preferred representative $|\tau_2^{\natural}\rangle$.
This vector has the special property that it is in the set
\begin{equation}
	\mathfrak{P}_{\natural} = \bar{\{\Delta_{\tau_1}^{1/4} P |\tau_1\rangle \text{ such that } P \in \A_{\tau_1} \text{ and } P \geq 0\}}.
\end{equation}
The abelian property of $\A_{\tau_1}$ automatically implies that $\Delta_{\tau_1}$ is the identity,\footnote{This follows from the fact that for any algebra $\M$ and cyclic-separating state $|\psi\rangle,$ the adjoint of the Tomita operator $S_{\psi, \M}^{\dagger}$ can be identified with the Tomita operator for the commutant, $S_{\psi, \M'}.$ In the abelian case, one has $\M \subseteq \M',$ so for $a \in \M$ we have
$$\Delta_{\psi, \M} a |\psi\rangle = S_{\psi, \M'} S_{\psi, \M} a |\psi\rangle = S_{\psi, \M'} a^{\dagger} |\psi\rangle = a |\psi\rangle.$$}
which gives us
\begin{equation}
	\mathfrak{P}_{\natural} = \bar{\{P |\tau_1\rangle \text{ such that } P \in \A_{\tau_1} \text{ and } P \geq 0\}}.
    \label{eq:natural-cone-no-delta}
\end{equation}
A simple argument --- see e.g. \cite[proposition 10.8]{Stratila:book} --- tells us that every vector in this set can be written directly as $P |\tau_1\rangle$ where $P$ is a positive operator that is \textit{affiliated} with $\A_{\tau_1}$.
I.e., to remove the closure in equation \eqref{eq:natural-cone-no-delta}, we need only accept that $P$ may need to be unbounded.

So to sum up, when we assume $\tau_2 \prec \tau_1,$ we know there is a positive operator $P_{\tau_2}$ affiliated with $\A_{\tau_1}$ such that
\begin{equation}
	| \tau_2^{\natural} \rangle
		\equiv P_{\tau_2} |\tau_1 \rangle
\end{equation}
is a vector representative for $\tau_2.$
The game will be to solve for this operator in terms of the operator $Q$ implementing the $\mu_2$ inner product in the $\mathcal{K}_{\mu_1}$ space.
In the present setting, since the commutator vanishes, we are effectively in the ``non-quantum regime'' where everything is governed by classical probability theory; $\tau_1$ and $\tau_2$ can be thought of as infinite-dimensional Gaussian measures, and finding $P_{\tau_2}$ amounts to finding the ``Radon-Nikodym derivatives'' that convert between these measures.\footnote{This perspective comes from \cite{Segal:distributions}, from which our present technique is essentially adapted by working directly in Hilbert space instead of on certain abstract structures built over Hilbert space.}

We begin by choosing an orthonormal basis $|z_j\rangle_{\mu_1}$ for $\K_{\mu_1}$.
We are free to take each $|z_j\rangle_{\mu_1}$ to be real (i.e., unchanged by complex conjugation).
The way we will build up the state $|\tau_2^{\natural}\rangle$ is by considering the sequence of von Neumann algebras $\A_{\tau_1, (m)}$ that are generated by the field operators $\phi[z_1], \dots, \phi[z_m].$
We will consider also the sequence of subspaces $\H_{\tau_1, (m)} \subseteq \H_{\tau_1}$ that are built by acting on $|\tau_1\rangle$ only with elements of $\A_{\tau_1, (m)}$.\footnote{Note that these are completely different from the $n$-particle subspaces $\H_{\tau_1}^{(n)}$, since each $\H_{\tau_1, (m)}$ will contain states of arbitrary particle number.}
We will find the correct sequence of natural cone states $|(\tau_2^{\natural})_{(m)}\rangle$ within each of these subspaces that necessarily converge to $|\tau_2^{\natural}\rangle,$ and prove theorem \ref{thm:abelian} by identifying a Hilbert-Schmidt criterion for convergence.

What do we mean by the state $|(\tau_2^{\natural})_{(m)}\rangle$?
This is a state in the subspace $\H_{\tau_1, (m)}$ that is in the natural cone for the subalgebra $\A_{\tau_1, (m)}.$
I.e., it is a state that reproduces all the correct correlators for $\tau_2$ within the subalgebra $\A_{\tau_1, (m)},$ and that also can be written as
\begin{equation} \label{eq:tau-natural-m}
	|(\tau_2^{\natural})_{(m)} \rangle
		= P_{\tau_2, (m)} |\tau_1\rangle
\end{equation}
for some positive operator $P_{\tau_2, (m)}$ affiliated with $\A_{\tau_1, (m)}.$
But since this latter algebra is the von Neumann algebra generated by a finite set of commuting, independent Hermitian operators, we know that $P_{\tau_2, (m)}$ can be written in terms of some positive function $p_m(t_1, \dots, t_m)$ on $\reals^m$, i.e., we have
\begin{equation} \label{eq:m-natural-p}
	|(\tau_2^{\natural})_{(m)} \rangle
		= p_m(\phi[z_1], \dots, \phi[z_m]) |\tau_1\rangle.
\end{equation}

To express the function $p_m$ explicitly, i.e. in terms of $Q$, we just need to start computing correlation functions.
Correlation functions in the state $|(\tau_2^{\natural})_{(m)} \rangle$ are naturally expressed in terms of $Q$ by writing
\begin{equation}
\begin{aligned}
    \langle (\tau_2^{\natural})_{(m)} | e^{i \phi[t_1 z_1 + \dots + t_m z_m]} |(\tau_2^{\natural})_{(m)}\rangle &=
    \langle \tau_2 | e^{i \phi[t_1 L z_1 + \dots + t_m L z_m]} | \tau_2 \rangle\\
    &=
    e^{-\frac{1}{2}
    \langle t_1 L z_1 + \dots + t_m L z_m | t_1 L z_1 + \dots + t_m L z_m \rangle_{\mu_2}}\\
    &=
    e^{-\frac{1}{2}
\langle t_1 z_1 + \dots + t_m z_m |Q| t_1  z_1 + \dots + t_m  z_m \rangle_{\mu_1}},
\end{aligned}
\label{eq:tau-natural-Weyl-expectation}
\end{equation}
then taking derivatives with respect to the $t_i$ parameters.
This expression is massaged in appendix \ref{app:pm-proof} to yield the following expression for $p_m$:
\begin{equation} \label{eq:pm_main}
	p_m(x_1, \dots, x_m) = \frac{1}{(\det Q_m)^{1/4}} \exp\left[ \frac{1}{4}\sum_{j,k=1}^m x_j\bra{z_j}1 - Q_m^{-1}\ket{z_k}_{\mu_1}\!x_k\, \right],
\end{equation}
where $Q_m$ is the restriction of $Q$ to the subspace of $\K_{\mu_1}$ spanned by $\ket{z_1}_{\mu_1}$ through $\ket{z_m}_{\mu_1}$. 
This gives a complete characterization for the natural cone representatives of the restriction of $\tau_2$ to each $\A_{\tau_1, (m)},$ via the formula \eqref{eq:tau-natural-m}.

We now need to show that (i) in order to have $\tau_2 \prec \tau_1,$ it is necessary that we have convergence $|(\tau_2^{\natural})_{(m)}\rangle \to |\tau_2^{\natural}\rangle,$ and (ii) this gives the ``necessity'' part of theorem \ref{thm:abelian}.
After we accomplish this, it will be easy to turn around and show sufficiency.

\subsubsection{Necessity}
Recall our key formulas,
\begin{equation}
	|\tau_2^{\natural}\rangle = P_{\tau_2} |\tau_1\rangle
\end{equation}
and
\begin{equation}
	|(\tau_2^{\natural})_{(m)}\rangle
		= P_{\tau_2, (m)} |\tau_1\rangle ,
\end{equation}
where $P_{\tau_2}$  and  $P_{\tau_2, (m)}$ are positive operators affiliated with $\A_{\tau_1}$ and $\A_{\tau_1, (m)}$ respectively.
Our first goal is to show that whenever we have $\tau_2 \prec \tau_1$,  we have the convergence $|(\tau_2^{\natural })_{(m)}\rangle \to |\tau_2^{\natural}\rangle.$
We have
\begin{equation}
	\norm*{\ket*{\tau_2^{\natural}} - \ket*{(\tau_2^{\natural})_{(m)}}}^2
		= 2 - \langle (\tau_2^{\natural})_{(m)} |\tau_2^{\natural}\rangle - \langle \tau_2^{\natural} | (\tau_2^{\natural})_{(m)} \rangle,
\end{equation}
so it is sufficient to show the convergence
\begin{equation}
	\langle \tau_2^{\natural} | (\tau_2^{\natural})_{(m)} \rangle
    =
    \langle P_{\tau_2} \tau_1 | P_{\tau_2,{(m)}} \tau_1 \rangle
    \to 1.
\end{equation}
We will accomplish this in three steps.
First, we will show that the overlaps are positive.
Then, we will show they are bounded above by one, and finally we will show that they are bounded below by one in the limit of $m$ going to infinity.

The overlaps are positive because they can be expressed as overlaps of vectors in the natural cone of $\ket{\tau_1}$ for the subalgebra $\mathcal{A}_{\tau_1,(m)}$.\footnote{It is a basic property of the natural cone --- see e.g. \cite[sections 10.23-10.25]{Stratila:book} --- that the overlaps are positive.}
In particular, we have
\begin{equation}
\bra*{P_{\tau_2} \tau_1}\ket*
{P_{\tau_2,{(m)}} \tau_1} 
=
\bra*{\Pi_{(m)}P_{\tau_2}\Pi_{(m)}\tau_1}\ket*{P_{\tau_2,(m)}\tau_1},
\label{eq:first-natural-cone-overlap}
\end{equation}
where we define $\Pi_{(m)}$ as projector onto $\H_{\tau_1, (m)}.$
On the right-hand side of the above equation, the ket is in the natural cone by construction, 
and we now argue that the bra is also in the natural cone.
Clearly, $\Pi_{(m)}P_{\tau_2}\Pi_{(m)}$ is positive; the nontrivial step is to show it is affiliated to $\mathcal{A}_{\tau_1,(m)}$.
To show this, we note that since $\Pi_{(m)}$ projects onto an invariant subspace of $\A_{\tau_1, (m)}$, it must live in the commutant of that algebra.
Moreover, since $P_{\tau_2}$ is affiliated with $\A_{\tau_1},$ and this is an abelian algebra that contains $\A_{\tau_1, (m)},$ the operator $P_{\tau_2}$ also formally commutes with $\A_{\tau_1, (m)}.$
Therefore we may conclude that the operator $\Pi_{(m)}P_{\tau_2}\Pi_{(m)}$ formally commutes with $\A_{\tau_1, (m)},$ and hence is affiliated to its commutant $\A_{\tau_1, (m)}'$.
Since $\A_{\tau_1, (m)}'= \A_{\tau_1, (m)}$ within the the GNS representation $\H_{\tau_1, (m)}$,\footnote{For any state $\sigma$ on an abelian $\ast$-algebra $\Z_0$, and $\Z$ the associated von Neumann algebra, one has $\Z' = \Z$ within the GNS representation for $\sigma.$ 
This follows from the cyclic-separating property of $|\sigma\rangle$ for $\Z$, which implies the modular conjugation identity $J_{\sigma} \Z J_{\sigma} = \Z'$, together with the fact that in the abelian case, one has $J_{\sigma} z |\sigma \rangle = z^{\dagger} |\sigma\rangle,$ hence
$$
	J_{\sigma} z_1 J_{\sigma} z_2 |\sigma \rangle
		= z_2 z_1^{\dagger} |\sigma\rangle
		= z_1^{\dagger} z_2 |\sigma\rangle,
$$
from which we may conclude $J_{\sigma} z_1 J_{\sigma} = z_1^{\dagger}$, hence $J_\sigma \Z J_\sigma = \mathcal Z$.} and since $\Pi_{(m)}P_{\tau_2}\Pi_{(m)}$ can be interpreted as an operator on $\H_{\tau_1, (m)}$ by virtue of the conjugation by projectors, it follows that $\Pi_{(m)}P_{\tau_2}\Pi_{(m)}$ is also affiliated with $\A_{\tau_1, (m)}$.

We conclude that the $\bra*{ \tau_2^{\natural}}\ket*{(\tau_2^{\natural})_{(m)}}$ overlaps are positive.
Let us now show they are bounded above by 1.
This is easy; by positivity, we may consider the absolute value of the overlap, and then apply the Cauchy–Schwarz inequality. Using also that both the bra and the ket are normalized by construction, we obtain $\langle \tau_2^{\natural} | (\tau_2^{\natural})_{(m)} \rangle \leq 1.$

To finish, we will lower-bound the sequence of overlaps by a sequence that converges to one, using the following chain of reasoning:
\begin{equation}
\label{eq:lower-bounding-sequence}
\begin{aligned}
        \langle P_{\tau_2} \tau_1 | P_{\tau_2,(m)} \tau_1 \rangle &= \braket{P_{\tau_2}\tau_1}{\Pi_{(m)}P_{\tau_2}\tau_1} + \braket{P_{\tau_2}\tau_1}{(P_{\tau_2,(m)}-\Pi_{(m)}P_{\tau_2})\tau_1}\\
        &=\braket{P_{\tau_2}\tau_1}{\Pi_{(m)}P_{\tau_2}\tau_1} + \braket{\Pi_{(m)}P_{\tau_2}\Pi_{(m)}\tau_1}{(P_{\tau_2,(m)}-\Pi_{(m)}P_{\tau_2}\Pi_{(m)})\tau_1}\\
        &\geq \braket{P_{{\tau_2}}\tau_1}{\Pi_{(m)}P_{\tau_2}\tau_1}.
\end{aligned}
\end{equation}
In order to get the last inequality, we argue in appendix \ref{app:lower-bounding-sequence} that the second summand above is a natural cone overlap, and hence must be positive.
Clearly, the last expression in the above equation converges to one as $m$ goes to infinity, so this concludes the proof.

Taking stock, we have shown that when we have $\tau_2 \prec \tau_1,$ it is necessary that we have $|(\tau_2^{\natural})_{(m)}\rangle \to |\tau_2^{\natural}\rangle.$
In particular, the sequence $\ket*{(\tau_2^{\natural})_{(m)}}$ must be Cauchy.
Hence, for any $\epsilon>0$, there must be an integer $N_\epsilon$ such that for any $m\geq n >N_\epsilon$, we have 
\begin{equation}
\label{eq:CauchyTau}
	\norm*{|(\tau_2^{\natural})_{(m)}\rangle - |(\tau_2^{\natural})_{(n)}\rangle}^{\!2}  < \epsilon\,.
\end{equation}
We will now show that this Cauchy property implies $Q-1$ can be obtained as the limit of a sequence of Hilbert-Schmidt operators, 
where the limit is taken with respect to the Hilbert-Schmidt norm; this will imply that $Q-1$ must itself be Hilbert-Schmidt.

To begin, we expand the left-hand side of the above expression as
\begin{equation}
	\norm*{|(\tau_2^{\natural})_{(m)}\rangle - |(\tau_2^{\natural})_{(n)}\rangle}^{\!2}  =  2-2\braket*{(\tau_2^{\natural})_{(m)}}{(\tau_2^{\natural})_{(n)}},
\end{equation}
where we used that each vector in the sequence is normalized to one (by construction), and that the overlaps of these vectors are real (because they are obtained by acting on $|\tau_1\rangle$ with positive, commuting operators).
The next step is to express the above quantity, and hence the Cauchy property \eqref{eq:CauchyTau}, in terms of $Q$ using the relation
\begin{equation}
	|(\tau_2^{\natural})_{(m)}\rangle
		= p_m(\phi[z_1], \dots, \phi[z_m]) |\tau_1\rangle,
\end{equation}
together with equation \eqref{eq:pm_main}.
This is a straightforward but somewhat lengthy computation, which we defer to appendix \ref{app:overlap-taus}.
The final result is the following.
Let $E_m$ denote the projector onto the subspace of $\mathcal K_{\mu_1}$ spanned by $|z_1\rangle_{\mu_1}, \dots, |z_m\rangle_{\mu_1},$ and define 
\begin{equation}
    Q_{n,\perp}=Q_n+E_m-E_n, 
\end{equation} 
which acts as $Q_{n}$ on the subspace $E_n\mathcal K_{\mu_1}$ and as the projector $E_m$ on its orthogonal complement (recall that we took $m\ge n$).
Then
\begin{equation}
\label{eq:tauOverlaps}
 \braket*{(\tau_2^{\natural})_{(m)}}{(\tau_2^{\natural})_{(n)}} =  \left(\prod_{j=1}^{m} \frac{\lambda^{(m,n)}_j + 1/\lambda^{(m,n)}_j}{2}\right)^{-1/2},
\end{equation}
where $\lambda_j^{(m,n)}$ are the eigenvalues of $Q_{n,\perp}^{-1/4}Q_m^{1/2}Q_{n,\perp}^{-1/4}$.\footnote{Note that $Q$ has no kernel by the results of section \ref{sec:general-necessary}, and since the subspace spanned by $\ket{z_j}_{\mu_1}$ for $j=1,\dots n$ is finite-dimensional, the operator $Q_n$ is always invertible. Consequently, also $Q_{n,\perp}$ is invertible.}

The Cauchy property \eqref{eq:CauchyTau} may now be expressed in terms of $Q$. Namely, for large enough $m$ and $n$, we must have
\begin{equation}
	\prod_{j=1}^{m} \frac{\lambda^{(m,n)}_j + 1/\lambda^{(m,n)}_j}{2}
		< \left(\frac{2}{2-\epsilon}\right)^2.
\end{equation}
As shown in appendix \ref{app:why-q-is-invertible}, $Q$ is invertible under the present set of assumptions.
From this, in appendix \ref{app:bound-lambdas}, we can lower bound the left-hand side as
\begin{equation} \label{eq:HS-mu}
	\prod_{j=1}^{m} \frac{\lambda^{(m,n)}_j + 1/\lambda^{(m,n)}_j}{2}
		\geq 1 + C \sum_{j=1}^{m} (\lambda^{(m,n)}_j - 1)^2,
\end{equation}
where $C$ is a positive constant independent of $m$ and $n$.
We recognize the quantity multiplying $C$ as the square of the Hilbert-Schmidt norm of $Q_{n,\perp}^{-1/4}Q_m^{1/2}Q_{n,\perp}^{-1/4}-1$, so that putting together the previous two equations, we have
\begin{equation}
	\norm{Q_{n, \perp}^{-1/4} Q_{m}^{1/2} Q_{n, \perp}^{-1/4} - 1}_2 \leq \epsilon,
\end{equation}
where we redefined $\epsilon$ to match $[(2/(2-\epsilon))^2 - 1]^{1/2}/C^{1/2}$.
Now using the H\"older-type inequality $\norm{ABC}_2\le \norm{A}\norm{B}_2 \norm{C}$, we obtain
\begin{align}
	\begin{split}
	\left\lVert Q_m^{1/2} - Q_{n,\perp}^{1/2} \right\rVert_2
		& = \left\lVert Q_{n,\perp}^{1/4} \left( Q_{n,\perp}^{-1/4} Q_m^{1/2} Q_{n,\perp}^{-1/4} - 1 \right) Q_{n,\perp}^{1/4}  \right\rVert_2 \leq \left\lVert Q_{n, \perp}^{1/4} \right\rVert ^2 \epsilon\,.
	\end{split}
\end{align}
Finally, applying the triangle inequality as follows (and another Hölder inequality),
\begin{equation}
\begin{aligned}
	\norm{Q_m - Q_{n,\perp}}_2
		&= \norm{Q_m^{1/2} (Q_m^{1/2} - Q_{n,\perp}^{1/2}) + (Q_m^{1/2} - Q_{n,\perp}^{1/2}) Q_{n,\perp}^{1/2}}_2\\
        &\le \norm{Q_m^{1/2} (Q_m^{1/2} - Q_{n,\perp}^{1/2})}_2 + \norm{(Q_m^{1/2} - Q_{n,\perp}^{1/2}) Q_{n,\perp}^{1/2}}_2\\
        &\le \left(\norm{Q_m^{1/2}} + \norm{Q_{n,\perp}^{1/2}}\right) \norm{Q_m^{1/2} - Q_{n,\perp}^{1/2}}_2,
\end{aligned}
\end{equation}
one finds
\begin{equation}
	\left\lVert Q_m - Q_{n,\perp} \right\rVert_2
		\leq \epsilon,
\end{equation}
where once more we reabsorbed all positive constants in the infinitesimal parameter $\epsilon$.\footnote{In particular, here we redefined $\epsilon$ to match $2 v \epsilon$, where, consistently with the notation used in appendix \ref{app:bound-lambdas}, $v$ is the maximum of the operator norms of the $Q_m$'s and the $Q_{m,\perp}$'s.}
Recalling that $Q_{n,\perp} = Q_n+E_m-E_n$, we note that the above condition reads
\begin{equation}
	\left\lVert (Q_m-E_m) - (Q_{n}-E_n) \right\rVert_2
		\leq \epsilon,
\end{equation}
so the sequence $Q_n- E_n$ is Cauchy in the Hilbert-Schmidt norm, with limit $Q - 1$.
Each element of this sequence is trivially Hilbert-Schmidt, by virtue of acting on a finite dimensional space, and since the Hilbert-Schmidt operators are complete in the Hilbert-Schmidt topology, we conclude that $Q - 1$ must be Hilbert-Schmidt, as desired.

The argument we have just undertaken is rather complicated, so we pause to sum up.
By assuming $\tau_2 \prec \tau_1$ in the abelian case, we concluded that there must exist a natural cone representative $|\tau_2^{\natural}\rangle$ for $\tau_2$.
Assuming the necessary condition that $Q$ has no kernel, we wrote a sequence $|(\tau_2^{\natural})_{(m)}\rangle \to |\tau_2^{\natural}\rangle,$ and showed that in order for this sequence to converge, it is necessary that $Q - 1$ be Hilbert-Schmidt.
This completes the ``necessity'' proof --- we have shown that when we have $\tau_2 \prec \tau_1,$ we must have (i) $\mu_2 \prec \mu_1,$ (ii) $\ker{Q} = 0$, and (iii) $Q - 1$ Hilbert-Schmidt.

\subsubsection{Sufficiency}\label{sec:suff-abelian}
The ``sufficiency'' direction is extremely easy now that we have already done all of the above work.
Our task is to construct the representative $\ket*{\tau_2^{\natural}}$ by hand, as a limit of states $\ket*{(\tau_2^{\natural})_{(m)}}$, using the starting assumption that $Q - 1$ is Hilbert-Schmidt.
If we know that $Q - 1$ is Hilbert-Schmidt, then we know that $Q$ can be decomposed into a discrete\footnote{Hilbert-Schmidt operators are compact, and compact operators have discrete spectra.} sum of eigenvectors, $Q = \sum_j q_j |e_j \rangle \langle e_j|.$
We take these vectors $|e_j\rangle$ to be a special case of the generic orthonormal basis $z_j$ used above.
For finite $m$, we simply define the state
\begin{equation}
	|(\tau_2^{\natural})_{(m)}\rangle
		\equiv p_m(\phi[e_1], \dots, \phi[e_m])
\end{equation}
with 
\begin{equation}
	p_m(x_1, \dots, x_m)
		= \frac{1}{(\prod_{j=1}^{m} q_j)^{1/4}} \exp[ \frac{1}{4} \sum_{j} (1 - q_j^{-1}) x_j^2]
\end{equation}
as inspired by equation \eqref{eq:pm_main}.
Repeating the above analysis (see equation \eqref{eq:tauOverlaps}), one computes\footnote{To facilitate the comparison with equation \eqref{eq:tauOverlaps} above, note that $\lambda_j^{(m,n)}=1$ for $j=1,\ldots,n$ while $\lambda_j^{(m,n)}=q_j^{1/2}$ for $j=n+1,\ldots,m$.}
\begin{align}
	\begin{split} 
		& \left\lVert |(\tau_2^{\natural})_{(m)}\rangle - |(\tau_2^{\natural})_{(n)}\rangle \right\rVert^2
		= 2 \left(1 - \left(\prod_{j=n+1}^{m} \frac{q_j^{1/2} + q_j^{-1/2}}{2}\right)^{-1/2}\right).
	\end{split} 
\end{align}

To show that this has the Cauchy property, we need only show that the product converges in the limit $m \to \infty.$
This is the case if and only if we have convergence of the infinite sum
\begin{equation}
	\sum_{j=n+1}^{\infty} \log\left( \frac{q_j^{1/2} + q_j^{-1/2}}{2} \right).
\end{equation}
Since $Q-1$ is Hilbert-Schmidt, the eigenvalues $q_j$ must converge to $1$ in this limit.
A similar argument as given above then tells us that (for sufficiently large $n$) there is a constant $C'$ such that the above sum is upper bounded by $C' \sum_{j} (q_j - 1)^2$, which converges because $Q - 1$ is Hilbert-Schmidt.
So we learn that $|(\tau_2^{\natural})_{(m)}\rangle$ is a Cauchy sequence, so it converges; we call its endpoint $|\tau_2^{\natural}\rangle.$

All that remains is to show that this vector, which we have constructed as the endpoint of a particular sequence, is actually a good representative for $\tau_2$ in $\H_{\tau_1}.$
By construction, each of the $|(\tau_2^{\natural})_{(m)}\rangle$ vectors produces the correct correlation functions of $\tau_2$ on the algebra $\A_{\tau_1, (m)}$ that is generated by the operators $\phi[e_1]$ through $\phi[e_m].$
So $|\tau_2^{\natural}\rangle$ produces the correct correlation functions on the algebra
\begin{equation}
	(\cup_m \A_{\tau_1, (m)})''.
\end{equation}
This algebra is in fact equal to all of $\A_{\tau_1}.$
To see this, note that $\A_{\tau_1}$ is generated by the Weyl operators $e^{i \phi[\psi]}$ for generic real $|\psi\rangle_{\mu_1} \in \K_{\mu_1}.$
But writing $|\psi\rangle_{\mu_1} = \sum_{j=1}^{\infty} c_j |e_j\rangle_{\mu_1},$ we have
\begin{equation}
e^{\,i \phi[\sum_{j=1}^{N} c_j e_j]} \;\xrightarrow{\;\text{strongly}\;}\; e^{\,i \phi[\psi]}
\end{equation}
and this limit is obviously in the von Neumann algebra generated by $\cup_m \A_{\tau_1, (m)}.$

\subsection{The general theorem}
\label{sec:general-proof}

Now we put together theorems \ref{thm:mixed-from-pure} and \ref{thm:abelian} --- which deal with the mixed-from-pure case and the abelian case, respectively --- to prove the centrally pure excitability result, i.e. theorem \ref{thm:c-pure-from-c-pure}.
As explained at the beginning of the section, this result (together with the simple logic of section \ref{sec:factorial-consequences}) automatically implies the general excitability theorem,  theorem \ref{thm:general-theorem}, which is the main result of the paper.

To prove theorem \ref{thm:c-pure-from-c-pure}, we make use of the decomposition from section \ref{sec:excitability-breakdown}, which tells us that for centrally pure states, we have $\omega_2 \prec \omega_1$ if and only if the following ``restriction maps'' are ultraweakly continuous: the map $\alpha_{\pm i}:\A_{\omega_1, \pm i}\to {\cal A}_{\omega_2}$ with
\begin{equation}
	\alpha_{\pm i}(e^{i \phi[\psi]}) = e^{i \phi[L \psi]} \quad \text{with} \quad  |\psi\rangle_{\mu_1} \in \Pi_{1,\pm i}{\cal K}_{\mu_1} \text{  real}
    \label{eq:pure-piece-alpha}
\end{equation}
and the map 
$\alpha_{0}:\A_{\omega_1, 0}\to {\cal A}_{\omega_2}$ with
\begin{equation}
	\alpha_{0}(e^{i \phi[\psi]}) = e^{i \phi[L \psi]} \quad \text{with} \quad  |\psi\rangle_{\mu_1} \in \Pi_{0}{\cal K}_{\mu_1} \text{  real},
\end{equation}
where we write $\Pi_{1, i}, \Pi_{1,-i},$ and $\Pi_{1,0}$ for the projections onto the eigenspaces of $R_1$, as well as $\Pi_{1,\pm i}\equiv\Pi_{1, i}+\Pi_{1,-i}$.
As mentioned at the end of section \ref{sec:excitability-breakdown}, the key point is that the restriction of $\omega_1$ to $\A_{\omega_1, \pm i}$ is pure, and the restriction of $\omega_1$ to $\A_{\omega_1,0}$ is abelian, and so we will be able to address the continuity of $\alpha_{\pm i}$ and $\alpha_0$ by appealing to our existing excitability theorems.
The physical picture is that the full excitability relation $\omega_2 \prec \omega_1$ breaks down into an excitability relation for the ``pure piece'' of $\omega_2$ and an excitability relation for the ``abelian piece'' of $\omega_2$.

To address the ``pure'' part of the question, the game is to find a constraint on $Q$ which expresses ultraweak continuity of the map $\alpha_{\pm i}$ in equation \eqref{eq:pure-piece-alpha}.
This is equivalent to an excitability relation between the appropriate restrictions of $\omega_1$ and $\omega_2$ to the $\ast$-algebra generated by $\phi[\psi]$ with $|\psi\rangle_{\mu_1}$ in the support of $R_1$,\footnote{To see why this is the relevant excitability problem, call this $\ast$-algebra ${\cal A}_{0,\pm i}$.
Recall from section \ref{sec:one-way-excitability} that when we have excitability, the two algebras related by a normal homomorphism are the von Neumann completions of the original $\ast$-algebra with respect to the two states of interest.
Clearly ${\cal A}_{\omega_1,\pm i}$ is the von Neumann completion of  ${\cal A}_{0,\pm i}$ with respect to the state $\omega_1$.
Meanwhile, if the map $\alpha_{\pm i}$ is ultraweakly continuous, its image will be the von Neumann completion of ${\cal A}_{0,\pm i}$ with respect to $\omega_2$, where the latter is defined on ${\cal A}_{0,\pm i}$ by pulling back along $L$.} for which $\omega_1$ is pure.
Clearly, for the pure excitability relation, the relevant operator is not the full operator $Q = L^{\dagger} L$, but rather $\Pi_{1,\pm i}Q\Pi_{1,\pm i}$. In particular, by theorem \ref{thm:mixed-from-pure}, once we have assumed $\mu_2 \prec \mu_1,$ the remaining necessary and sufficient conditions for the ``pure piece'' of the excitability relation $\omega_2 \prec \omega_1$ are\footnote{Recall that, for factorial and hence for pure states, the dominance of the $\mu$ inner products together with the Hilbert-Schmidt condition on the difference of the square roots of the $X$'s automatically imply that $Q$ has trivial kernel, hence we do not write this condition explicitly.}
\begin{equation} \label{eq:pure-conditions}
	\Pi_{1,i} (Q - 1) \Pi_{1,i}
	\quad \text{ and } \quad \Pi_{1,-i} (Q - 1) \Pi_{1,-i}
	\quad \text{trace-class}.
\end{equation}
Similarly, By theorem \ref{thm:abelian}, once we have assumed $\mu_2 \prec \mu_1,$  the remaining necessary and sufficient conditions for the ``abelian piece'' are 
\begin{equation} \label{eq:abelian-conditions}
	\Pi_{1,0} (Q - 1) \Pi_{1,0} \;\; \text{Hilbert-Schmidt}\quad \text{ and } \quad
	\ker(\Pi_{1,0} Q \Pi_{1,0}) \cap \ker{R_1} = 0.
\end{equation}

To prove theorem \ref{thm:c-pure-from-c-pure}, we must show that the conditions in equations \eqref{eq:pure-conditions} and \eqref{eq:abelian-conditions} are equivalent to the pair of requirements that (i) $\ker{Q}$ is trivial and (ii) $Q - 1$ is Hilbert-Schmidt.
Note that in either direction of the proof, we may assume that $Q$ has no kernel.
In one direction this is an assumption; in the other direction we only assume that the kernel of $\Pi_{1,0}Q\Pi_{1,0}$ is trivial within $\ker{R_1}$, but the identity $L^{\dagger} R_2 L = R_1$ tells us that the kernel of $Q$ is contained in the kernel of $R_1,$ so any vector in the kernel of $Q$ would also be in $\ker(\Pi_{1,0} Q \Pi_{1,0}) \cap \ker{R_1},$ which has been assumed to be trivial.

We may also assume in either case that we have $\ker(\Pi_{1,0} Q \Pi_{1,0}) \cap \ker{R_1} = 0$.
For while in one direction this is an assumption, in the other direction we assume $\ker{Q} = 0$ and $Q-1$ Hilbert-Schmidt.
From this, there must be some constant $u$ with $Q \geq u > 0,$ and one therefore has $\Pi_{1,0} Q \Pi_{1,0} \geq u > 0$ as an operator on the kernel of $R_1.$

It will be essential, in proving the theorem, to use the fact that $\omega_1$ and $\omega_2$ are centrally pure and hence that the $R_j$ operators have spectrum $-i,0,i$.
Let $\tilde{\Pi}_{2, i}, \tilde{\Pi}_{2, -i},$ and $\tilde{\Pi}_{2, 0}$ be the projectors onto the eigenspaces of $R_2$, where we use tildes to emphasize that these operators act on $\K_{\mu_2}$.
With the polar decomposition $L = v Q^{1/2},$ we define operators on $\K_{\mu_1}$ by
\begin{align}
	\begin{split}
	\Pi_{2, i}
		& = v^{\dagger} \tilde{\Pi}_{2, i} v, \\
	\Pi_{2, -i}
		& = v^{\dagger} \tilde{\Pi}_{2, -i} v, \\
	\Pi_{2, 0}
		& = v^{\dagger} \tilde{\Pi}_{2, 0} v.
	\end{split}
\end{align}
The fact that $L$ has a dense image gives the identity $v v^{\dagger} = 1,$ while the fact that it has trivial kernel gives the identity $v^\dagger  v = 1.$
Together, these identities tell us that the $\Pi_{2, \dots}$ operators are actually projectors on $\K_{\mu_1},$ that they are mutually orthogonal, and that they sum to the identity.
In practice, these projectors will be used to express $Q$ in various block decompositions.

To simplify these blocks, it will convenient to use the identity $L^{\dagger} R_2 L = R_1$, which can be expanded as
\begin{equation} \label{eq:centrally-pure-identity}
	Q^{1/2} (\Pi_{2,i} - \Pi_{2, -i}) Q^{1/2}
		= \Pi_{1,i} - \Pi_{1,-i}.
\end{equation}
In particular, we can write a block decomposition of $Q^{1/2}$ via the formula
\begin{equation}
	Q^{1/2}
		= (\Pi_{1,i} + \Pi_{1,-i} + \Pi_{1,0}) Q^{1/2} (\Pi_{2,i} + \Pi_{2,-i} + \Pi_{2,0}),
\end{equation}
so that
\begin{equation} \label{eq:Q-half-blocks}
	Q^{1/2}
	=
	\begin{pmatrix}
		a & b & c \\ j & k & l \\ x & y & z
	\end{pmatrix},
\end{equation}
where, for instance, \begin{equation}
a=\Pi_{1,i}Q^{1/2}\Pi_{2,i}\qquad
    b=\Pi_{1,i}Q^{1/2}\Pi_{2,-i}.
\end{equation}
Equation \eqref{eq:centrally-pure-identity} for $L^{\dagger} R_2 L = R_1$ now becomes the single matrix identity
\begin{equation}\label{eq:matrix-identity}
	\begin{pmatrix}
		(a a^{\dagger} - b b^{\dagger}) & (a j^{\dagger} - b k^{\dagger}) & (a x^{\dagger} - b y^{\dagger}) \\
		(j a^{\dagger} - k b^{\dagger}) & (j j^{\dagger} - kk^{\dagger}) & (j x^{\dagger} - k y^{\dagger}) \\
		(x a^{\dagger} - y b^{\dagger}) & (x j^{\dagger} - y k^{\dagger}) & (x x^{\dagger} - y y^{\dagger})
	\end{pmatrix}
	=
	\begin{pmatrix}
		1 & 0 & 0 \\ 0 & -1 & 0 \\ 0 & 0 & 0
	\end{pmatrix},
\end{equation}
expressed in the $R_1$ block decomposition; for example,  the relation $aa^{\dag}-bb^{\dag}=1$ comes from acting on \eqref{eq:centrally-pure-identity} with $\Pi_{1,i}$ on either side.
Note that within the $R_1$ block decomposition for $Q$, we have the formula
\begin{equation}
	Q
	=
	\begin{pmatrix}
		(a a^{\dagger} + b b^{\dagger} + c c^{\dagger})
			& (a j^{\dagger} + b k^{\dagger} + c l^{\dagger})
			& (a x^{\dagger} + b y^{\dagger} + c z^{\dagger}) \\
		(j a^{\dagger} + k b^{\dagger} + l c^{\dagger})
			& (j j^{\dagger} + k k^{\dagger} + l l^{\dagger})
			& (j x^{\dagger} + k y^{\dagger} + l z^{\dagger}) \\
		(x a^{\dagger} + y b^{\dagger} + z c^{\dagger})
			& (x j^{\dagger} + y k^{\dagger} + z l^{\dagger})
			& (x x^{\dagger} + y y^{\dagger} + z z^{\dagger})
	\end{pmatrix}.
\end{equation}
where, again, the top-left component is given by $\Pi_{1,i}Q\Pi_{1,i}$, and in particular we performed the following manipulation:
\begin{equation}
    \Pi_{1,i}Q\Pi_{1,i}
    =
    \Pi_{1,i}Q^{1/2}(\Pi_{2,i}+ \Pi_{2,-i}+\Pi_{2,0})Q^{1/2}\Pi_{1,i}
    = aa^{\dag} + bb^{\dag}+c c^{\dag}\,,
\end{equation}
and similarly for the other blocks.
We can simplify $Q$ using some of the identities in equation \eqref{eq:matrix-identity}, as follows:
\textsc{\begin{equation} \label{eq:Q-block-matrix}
	Q
	=
	\begin{pmatrix}
		(2 b b^{\dagger} + c c^{\dagger} + 1)
		& (2 a j^{\dagger} + c l^{\dagger})
		& (2 b y^{\dagger} + c z^{\dagger}) \\
		(2 j a^{\dagger} + l c^{\dagger})
		& (2 j j^{\dagger} + l l^{\dagger} + 1)
		& (2 j x^{\dagger} + l z^{\dagger}) \\
		(2 y b^{\dagger} + z c^{\dagger})
		& (2 x j^{\dagger} + z l^{\dagger})
		& (2 x x^{\dagger} + z z^{\dagger})
	\end{pmatrix}.
\end{equation}}

Suppose now that we are making the first set of assumptions, i.e., those encompassed by equations \eqref{eq:pure-conditions} and \eqref{eq:abelian-conditions}.
Equation \eqref{eq:pure-conditions} tells us that we have
\begin{equation}
	2 b b^{\dagger} + c c^{\dagger} \quad \text{ and } \quad 2 j j^{\dagger} + l l^{\dagger} \quad \text{trace-class}.
\end{equation}
All terms in these expressions are positive, so we learn that $bb^{\dagger}, cc^{\dagger}, jj^{\dagger},$ and $ll^{\dagger}$ are trace-class.
Equivalently, the operators $b, c, j,$ and $l$ are Hilbert-Schmidt.
Equation \eqref{eq:abelian-conditions} tells us that $2 x x^{\dagger} + z z^{\dagger} - 1$ is Hilbert-Schmidt.
If we write
\begin{equation} \label{eq:Q-minus-1-block}
	Q - 1
	=
	\begin{pmatrix}
		(2 b b^{\dagger} + c c^{\dagger})
		& (2 a j^{\dagger} + c l^{\dagger})
		& (2 b y^{\dagger} + c z^{\dagger}) \\
		(2 j a^{\dagger} + l c^{\dagger})
		& (2 j j^{\dagger} + l l^{\dagger})
		& (2 j x^{\dagger} + l z^{\dagger}) \\
		(2 y b^{\dagger} + z c^{\dagger})
		& (2 x j^{\dagger} + z l^{\dagger})
		& (2 x x^{\dagger} + z z^{\dagger} - 1)
	\end{pmatrix},
\end{equation}
then we see that $Q-1$ is Hilbert-Schmidt, since the bottom-right block is Hilbert-Schmidt by assumption, and every other term in every other block contains one of the Hilbert-Schmidt operators $b, c, j,$ or $l.$

In the other direction, suppose that we are assuming $Q-1$ is Hilbert-Schmidt and that $\ker Q =0$.
As discussed above, these conditions imply that $\ker{(\Pi_{1,0} Q \Pi_{1,0})}$ is trivial within $\ker{R_1}$.
This obviously gives $\Pi_{0} (Q - 1) \Pi_0$ Hilbert-Schmidt, so we just need to establish equation \eqref{eq:pure-conditions}.
In this setting, we will actually find it easier to prove the other condition from the statement of theorem \ref{thm:mixed-from-pure}, i.e., to prove that one has
\begin{equation} \label{eq:Q-block-TC-desired}
	Q_{\pm i} + R_1 Q_{\pm i}^{-1} R_1 \quad \text{trace-class,}
\end{equation}
where $Q_{\pm i}$ is the projection of $Q$ to the support of $R_1.$
Technically, we must also show that $\Pi_{\pm i} (Q - 1) \Pi_{\pm i}$ is Hilbert-Schmidt, but this follows from the same reasoning that we just used for $\Pi_{0} (Q - 1) \Pi_0$.

{We will prove equation \eqref{eq:Q-block-TC-desired} by writing $Q$ in a block decomposition for $R_1$, and constraining these blocks via the following intermediate relation:
\begin{equation}
    \sqrt{Q - i R_1} - \sqrt{1 - i R_1} \quad \text{Hilbert-Schmidt.}
    \label{eq:intermediate-hs}
\end{equation}
Of course, this is \textit{not} the $\sqrt{X_2}-\sqrt{X_1}$ relation that arises from applying the standard criterion in theorem \ref{thm:mixed-from-pure} to the case at hand; that would instead read 
\begin{equation}
    \sqrt{Q_{\pm i} -i R_1}-
    \sqrt{\Pi_{\pm i}-i R_1} \quad \textrm{ Hilbert-Schmidt,}
    \label{eq:right-track}
\end{equation}
and this is difficult to prove directly.
For now, let us prove equation \eqref{eq:intermediate-hs}.
First, as discussed above, we know under the present set of assumptions that $Q$ is positive and bounded away from zero, i.e., $Q$ is invertible.
From this and the fact that $Q - 1$ is Hilbert-Schmidt, we can use the Birman-Solomyak theorem as in section \ref{sec:pure-two-point} to conclude that $Q^{1/2} - 1$ is also Hilbert-Schmidt. Equivalently $Q^{1/2}\sim_{\text{H.S.}} 1$.
Let us write
\begin{equation}
	\sqrt{Q - i R_1}
		= \sqrt{Q^{1/2} (1 - i v^{\dagger} R_2 v) Q^{1/2}}.
\end{equation}
As in section \ref{sec:pure-two-point}, applying the Birman-Solomyak theorem to the absolute value function allows us to remove the $Q^{1/2}$ factors:\footnote{\label{foot:abs-val-BS}Let us recall this result, since it is used many times in this section.
Given a positive, bounded operator $O$, and a second operator $M$ with $M\sim_{H.S.} 1$, one clearly has $\sqrt{O}\sim_{\mathrm{H.S.}}\sqrt{O}M$, and  the Birman-Solomyak theorem applied to the absolute value function then leads to $\sqrt{O}=|\sqrt{O}| \sim_{\mathrm{H.S.}}|\sqrt{O}M|=\sqrt{M^\dag O M}$. To get equation \eqref{eq:used-abs-value-bs} we can set $O=1-iv^{\dag}R_2 v$ and $M=Q^{1/2}$.}
\begin{equation}
	\sqrt{Q - i R_1}
	\sim_{\text{H.S.}} \sqrt{1 - i v^{\dagger} R_2 v}.
    \label{eq:used-abs-value-bs}
\end{equation}
Using $v^{\dagger} R_2 v = i (\Pi_{2,i} - \Pi_{2,-i}),$ we obtain
\begin{equation}
	\sqrt{Q - i R_1}
	\sim_{\text{H.S.}} \sqrt{2 \Pi_{2,i} + \Pi_{2,0}}
	= \sqrt{2} \Pi_{2,i} + \Pi_{2,0}.
    \label{eq:expand-as-projector}
\end{equation}
This gives
\begin{equation} \label{eq:Q-root-cent-pure}
	\sqrt{Q - i R_1}
		- \sqrt{1 - i R_1}
		\sim_{\text{H.S.}}
		\sqrt{2} (\Pi_{2,i} - \Pi_{1,i}) + (\Pi_{2,0} - \Pi_{1,0}),
\end{equation}
where we have expanded $\sqrt{1-iR_1}=\sqrt{2\Pi_{1,i}+\Pi_{1,0}}$ as in the last step of equation \eqref{eq:expand-as-projector}.
To show the desired relation \eqref{eq:intermediate-hs} we now need only massage the combination of projectors on the right-hand side into something Hilbert-Schmidt.

To do so, we will make use of the identity
\begin{equation}
	R_1
		= L^{\dagger} R_2 L
		= i Q^{1/2} (\Pi_{2,i} - \Pi_{2, -i}) Q^{1/2}.
\end{equation}
If we conjugate by $Q^{-1/2},$ square the expression, then conjugate by $Q^{1/2},$ we obtain
\begin{equation}
	R_1 Q^{-1} R_1
		= - Q^{1/2} (1 - \Pi_{2, 0}) Q^{1/2},
\end{equation}
or
\begin{equation} \label{eq:cent-pure-QQ-identity}
	Q + R_1 Q^{-1} R_1
	= Q^{1/2} \Pi_{2, 0} Q^{1/2}.
\end{equation}
Using $Q^{1/2} \sim_{\mathrm{H.S.}\!}1$, we learn that the right-hand side is Hilbert-Schmidt equivalent to $\Pi_{2,0}.$
Similarly, using $Q \sim_{\mathrm{H.S.}\!}1$ and $Q^{-1} \sim_{\mathrm{H.S.}\!}1$, we obtain that the left-hand side is Hilbert-Schmidt equivalent to $1+R_1^2=\Pi_{1,0}$. We conclude the Hilbert-Schmidt equivalence
\begin{equation}
	\Pi_{2,0}
		\sim_{\text{H.S.}} \Pi_{1, 0}. 
\end{equation}
It follows that we can subtract arbitrary multiples of $\Pi_{2,0} - \Pi_{1,0}$ from the right-hand side of equation \eqref{eq:Q-root-cent-pure}, obtaining
\begin{align}
	\begin{split}
	\sqrt{Q - i R_1} - \sqrt{1 - i R_1}
		& \sim_{\text{H.S.}} \sqrt{2} (\Pi_{2,i} + \frac{1}{2} \Pi_{2,0} - \Pi_{1,i} - \frac{1}{2}\Pi_{1,0}) \\
		& = \frac{1}{\sqrt{2}} (v^{\dagger} (1 - i R_2) v - (1 - i R_1)).
	\end{split}
\end{align}
We can freely conjugate the first term by $Q^{1/2}$ without changing the Hilbert-Schmidt property --- since we have $Q^{1/2}\sim_{\mathrm{H.S}}1$  --- and rearranging terms gives
\begin{align}
	\begin{split}
		\sqrt{Q - i R_1} - \sqrt{1 - i R_1}
		\sim_{\text{H.S}} \frac{1}{\sqrt{2}} (Q - 1) - i \frac{1}{\sqrt{2}} (L^{\dagger} R_2 L - R_1).
	\end{split}
\end{align}
The first term is Hilbert-Schmidt by assumption, and the second term vanishes, so we have $\sqrt{Q - i R_1} - \sqrt{1 - i R_1}$ Hilbert-Schmidt (equation \eqref{eq:intermediate-hs}).

We are now prepared to prove equation \eqref{eq:Q-block-TC-desired}.
We decompose} $Q$ in a 2-by-2 block decomposition according to the support of $R_1$ and the kernel of $R_1.$
Note that this is similar in spirit to the $R_1$ block decomposition of the matrices in equations \eqref{eq:matrix-identity}-\eqref{eq:Q-minus-1-block}, but now we group together the $+i$ and $-i$ blocks as follows:
\begin{equation}
\begin{aligned}
    	Q = \;
        \Pi_{1,\pm i}Q\Pi_{1,\pm i} + \Pi_{1,\pm i}&Q\Pi_{1,0} \\
        +\Pi_{1,0} Q \Pi_{1,\pm i} + \Pi_{1,0} &Q \Pi_{1,0}.
\end{aligned}
\end{equation}
We write this as
\begin{equation}
	Q
		= \begin{pmatrix}
			A & B \\ B^{\dagger} & C
		\end{pmatrix}
        =
        \begin{pmatrix}
			Q_{\pm i} & B \\ B^{\dagger} & C
		\end{pmatrix}.
\end{equation}
As in section \ref{sec:exciting-from-pure}, we also define the combination
\begin{equation}
    Y\equiv A - B C^{-1} B^{\dagger}\,,
\end{equation}
which naturally arises if we conjugate $Q$ by the matrix 
\begin{equation}
	G
		= \begin{pmatrix}
			1 & - B C^{-1} \\ 0 & 1
		\end{pmatrix}
\end{equation}
to obtain
\begin{equation}
	G Q G^{\dagger}
		= \begin{pmatrix} Y & 0 \\ 0 & C \end{pmatrix}.
\end{equation}
The trick we will use to prove equation \eqref{eq:Q-block-TC-desired}, namely that $A+R_1 A^{-1}R_1$ is trace-class, is to instead prove the equivalent statement that $Y+R_1 Y^{-1}R_1$ is trace-class.
This is an equivalent statement, because $Q - 1$ being Hilbert-Schmidt implies that the block $B$ is Hilbert-Schmidt, so $BC^{-1}B^{\dag}=A-Y$ is trace-class; hence, we have $Y\sim_{T.C.}A$ (and $Y^{-1}\sim_{T.C.}A^{-1}$).

To constrain $Y$ and $R_1$, we begin by writing
\begin{equation}
	G (Q - i R_1) G^{\dagger}
		=
			\begin{pmatrix} Y - i R_1& 0 \\ 0 & C \end{pmatrix},
\end{equation}
where we have used the easy-to-check identity $G R_1 G^{\dagger} = R_1$.
Now, since $B$ is Hilbert-Schmidt, with $BC^{-1}\sim_{H.S.}B$, one has that $G - 1$ is Hilbert-Schmidt.
Consequently, the Birman-Solomyak theorem applied to the absolute value function gives\footnote{See footnote \ref{foot:abs-val-BS} with $O \equiv Q-iR_1$ and $M\equiv G^\dag$.}
\begin{equation}
	\sqrt{Q - i R_1}
		\sim_{\text{H.S.}}
			\sqrt{G (Q - i R_1) G^{\dagger}}
			= \begin{pmatrix}
				\sqrt{Y- i R_1} & 0 \\ 0 & \sqrt{C}
			\end{pmatrix}.
\end{equation}

We may now apply equation \eqref{eq:intermediate-hs}, namely our observation that $\sqrt{Q - i R_1} - \sqrt{1 - i R_1}$ is Hilbert-Schmidt, and we learn
\begin{equation}\label{eq:intermediate-HS}
	\sqrt{Y- i R_1}
		- \sqrt{\Pi_{1, \pm i} - i R_1} \quad \text{Hilbert-Schmidt}.
\end{equation}
Since we know that $A - 1$ is Hilbert-Schmidt, and $B$ is Hilbert-Schmidt, we have $Y-1$  Hilbert-Schmidt. Applying again the Birman-Solomyak theorem with the square root to the first summand above then leads to\footnote{See again footnote \ref{foot:abs-val-BS}, with $O\equiv Y-iR_1$ and $M\equiv Y^{-1/2}$.}
\begin{equation}
	\sqrt{\Pi_{1,\pm i}  - i Y^{-1/2} R_1 Y^{-1/2}}
	- \sqrt{\Pi_{1, \pm i} - i R_1} \quad \text{Hilbert-Schmidt}.
\end{equation}

Using Birman-Solomyak once more, this time with the function $t \mapsto \sqrt{1 - (t^2 - 1)^2},$ one finds
\begin{equation}
	\sqrt{\Pi_{1,\pm i} + \left( Y^{-1/2} R_1 Y^{-1/2} \right)^2}
	- \sqrt{\Pi_{1, \pm i}  + R_1^2} \quad \text{Hilbert-Schmidt},
\end{equation}
where the second term vanishes and so we are left with
\begin{equation}
	\Pi_{1,\pm i}  + \left( Y^{-1/2} R_1 Y^{-1/2} \right)^2 \quad
	\text{trace-class}.
\end{equation}
Multiplying on both sides by $Y^{1/2}$ we obtain $Y + R_1 Y^{-1} R_1$ trace-class, and this completes the proof of equation \eqref{eq:Q-block-TC-desired}, and hence also of theorem \ref{thm:c-pure-from-c-pure}.

\section{Gaussian excitability implies quasiequivalence}
\label{sec:other-way}

In the previous sections, we proved the main theorem of the paper, theorem \ref{thm:general-theorem}.
This theorem says that for zero-mean Gaussian states $\omega_1$ and $\omega_2,$ the excitability relation $\omega_2 \prec \omega_1$ is equivalent to
\begin{enumerate}[(i)]
		\item $\mu_2 \prec \mu_1$,
		\item $\sqrt{X_2} - \sqrt{X_1}$ is Hilbert-Schmidt, and
		\item $Q$ has trivial kernel.
\end{enumerate}

In this section, we use this result to prove theorem \ref{thm:always-quasi}, which states that whenever we have $\omega_2 \prec \omega_1$ for zero-mean Gaussian states, we automatically have $\omega_1 \prec \omega_2.$
This should be contrasted with the discussion of one-way excitability in section \ref{sec:one-way-excitability}.
There, it was shown that for any non-factorial state $\omega_1$, there exists a state $\omega_2$ with $\omega_2 \prec \omega_1$ and $\omega_1 \not\prec \omega_2.$
The key point of the present section is that this situation cannot occur when both states are zero-mean Gaussian states.

The main observation is that condition (ii) above, which states that $\sqrt{X_2} - \sqrt{X_1}$ is Hilbert-Schmidt, necessarily implies that $Q-1$ is Hilbert-Schmidt as well, since $Q - 1 = X_2 - X_1$ can be obtained by left- and right-multiplying the Hilbert-Schmidt operator $\sqrt{X_2} - \sqrt{X_1}$ by the bounded operator $\sqrt{X_2} + \sqrt{X_1}.$
So whenever we have $\omega_2 \prec \omega_1,$ we necessarily have that $Q-1$ is Hilbert-Schmidt.
This means that $Q$ must be invertible away from its kernel, since it cannot have infinitely many eigenvalues accumulating near zero.
But since $\omega_2 \prec \omega_1$ implies condition (iii) above, which is that $Q$ has trivial kernel, we learn that $Q$ is always invertible when we have $\omega_2 \prec \omega_1.$

Invertibility of $Q$ is the statement that $\mu_1$ and $\mu_2$ induce equivalent inner products on the space of smearing functions.
So we automatically have $\mu_1 \prec \mu_2,$ and if we define an operator $Q'$ on $\K_{\mu_2}$ that reproduces the $\mu_1$ inner product, we know that $Q'$ has no kernel.
Consequently, the versions of conditions (i) and (iii) for the relation $\omega_1 \prec \omega_2$ are satisfied.
To fully prove $\omega_1 \prec \omega_2,$ it remains only to check the appropriate version of condition (ii) above, 
i.e., we must show the property
\begin{equation}
    \sqrt{Q' - i R_2} - \sqrt{1 - i R_2} \quad \text{Hilbert-Schmidt on $\K_{\mu_2}$}.
\end{equation}
Letting $L$ be the ``identity map'' on test functions from $\K_{\mu_1}$ to $\K_{\mu_2},$ we have
\begin{equation}
    Q' = (L^{-1})^{\dagger} L^{-1},
\end{equation}
and as in section \ref{sec:general-necessary}, one also has
\begin{equation}
    R_2 = (L^{-1})^{\dagger} R_1 L^{-1}.
\end{equation}
So what we are really trying to establish is the Hilbert-Schmidt condition on
\begin{equation}\label{eq:alternate-AY-condition}
    \sqrt{(L^{-1})^{\dagger} (1 - i R_1) L^{-1}} - \sqrt{1 - i (L^{-1})^{\dagger} R_1 L^{-1}}.
\end{equation}

Writing the polar decomposition $L = v Q^{1/2}$, we may rewrite the operator we are studying as
\begin{equation}
    \sqrt{v Q^{-1/2} (1 - i R_1) Q^{-1/2} v^{\dagger}} - \sqrt{1 - i v Q^{-1/2} R_1 Q^{-1/2} v^{\dagger}},
\end{equation}
or, pulling out the $v$'s using $v^\dag v=1$,
\begin{equation}
    v \left( \sqrt{Q^{-1/2} (1 - i R_1) Q^{-1/2}} - \sqrt{Q^{-1/2} (Q - i R_1 ) Q^{-1/2}} \right) v^{\dagger}.
\end{equation}
Since $Q-1$ is Hilbert-Schmidt, one can apply the Birman-Solomyak theory of section \ref{sec:pure-two-point} to conclude that $Q^{-1/2} - 1$ and $Q^{1/2} - 1$ are Hilbert-Schmidt, and also that
\begin{equation}
    v \left( \sqrt{Q^{-1/2} (1 - i R_1) Q^{-1/2}}\right)v^\dag \sim_{\mathrm{H.S.}}\sqrt{1-iR_1},
\end{equation}
where we have used again the argument in footnote \ref{foot:abs-val-BS},
and similarly
\begin{equation}
    v\left(\sqrt{Q^{-1/2} (Q - i R_1 ) Q^{-1/2}} \right) v^{\dagger}\sim_{\mathrm{H.S.}}\sqrt{Q-iR_1}.
\end{equation}
The difference of the operators on the right-hand sides of these two equations is simply $\sqrt{X_2}-\sqrt{X_1}$, which is Hilbert-Schmidt by condition (ii) above, since we are assuming $\omega_2 \prec \omega_1$.
Consequently, the difference of the left-hand sides --- that is, the operator in equation \eqref{eq:alternate-AY-condition} --- must also be Hilbert-Schmidt, and this is what we were trying to show to reach the conclusion $\omega_1\prec \omega_2$.

\section{Discussion}
\label{sec:discussion}

The main point of this paper was to give a careful treatment of the ``excitability question,'' which is a physical cousin of the mathematical ``quasiequivalence question'' that has been discussed by algebraic quantum field theorists.
We established general algebraic criteria for excitability, and explained how it is possible in a general theory to have an excitability relation $\omega_2 \prec \omega_1$ without having the converse $\omega_1 \prec \omega_2.$
In a lengthy proof that involved several nested layers of argument, we then established theorem \ref{thm:general-theorem}, which gives general criteria for excitability in terms of a ``finite-trace condition'' when both $\omega_1$ and $\omega_2$ are zero-mean Gaussian states in a free theory.
We also established several other theorems --- theorems \ref{thm:both-pure}, \ref{thm:mixed-from-pure}, and \ref{thm:c-pure-from-c-pure} --- in which we provided alternate, simplified criteria that one can check in special cases where $\omega_1$ or $\omega_2$ has additional structure.

A surprising-to-us consequence of our analysis is that for zero-mean Gaussian states, one-way excitability always implies quasiequivalence.
This did not have to be true --- indeed, in section \ref{sec:excitability}, we gave examples of algebraic states in which one-way excitability is strict.
The additional assumption that the excited state is Gaussian, however, kills these examples.

A few immediate applications of our analysis spring to mind.
We summarize them below.
\begin{enumerate}[(i)]
	\item There are settings in which one would like to construct, within a given sector, a set of correlation functions that have useful features.
	One example is provided by the geometric modular flow conjecture from \cite{Jensen:JSS}.
	The analysis of the present paper allows this question to be answered, at least in principle, in free field theory.
    A forthcoming paper by one of us with Gautam Satishchandran will explore this point further \cite{Satishchandran:uniqueness}.
	
	\item Related to the above, since our analysis works for any free field theory --- including generalized free fields! --- it can be used to study sector equivalence within the large-N limit of AdS/CFT, which has many interesting implications as discussed in e.g. \cite{Leutheusser:HSMT, Witten:largeN, Leutheusser:subalgebra}.
	It may also be possible to apply these ideas to the semiclassical limit of closed-universe quantum gravity as formulated in \cite{Witten:background}.
	
	\item Strict one-way excitability is intimately related to the existence of a center for the von Neumann algebra associated with an algebraic state.
	In Klein-Gordon theory, one knows thanks to \cite{Verch:hadamard} that such a situation does not arise in physical (i.e., ``Hadamard'') states.
	Even if it did, our analysis would show that one-way excitability cannot happen for Gaussian states.
	
	However, there are important field theories in which a center for the von Neumann algebra is unavoidable: these are theories with higher-form symmetries \cite{Gaiotto:higher-form}, where topological operators of codimension greater than one will necessarily appear in the center of any local algebra.
	It would be interesting to study whether one-way excitability in such a setting reveals anything about the structure of the symmetry.
	
	\item Much of the actual technical material of this paper was concerned with nitty-gritty details of free, Gaussian states.
	The simplicity of this theory allowed us to argue at a high level of rigor; on the other hand, even in this simple setting, the arguments are quite complicated!
	It would be very interesting to understand if the algebraic approach to excitability lets one make interesting statements about when excitability is possible in interacting theories, even if one must sacrifice rigor in the process.
\end{enumerate}

\acknowledgments{
We gratefully acknowledge stimulating conversations with Marc Klinger, Roberto Longo, Lauritz van Luijk, Gautam Satishchandran, and Bob Wald.
We especially thank Lauritz van Luijk for sharing his expansive knowledge of von Neumann algebras to help us simplify several key arguments in this paper.
JC acknowledges the support of the Natural Sciences and Engineering Research Council of Canada
through Vanier Canada Graduate Scholarships [Funding Reference Number: CGV–192707]. FC acknowledges support from a Graduate Program Fellowship from the Department of Physics at the University of California, Davis.}

\appendix 

\section{(Generalized) free field theory}
\label{app:review}

This appendix provides an overview of the algebraic theory of free quantum fields.
The framework presented here encompasses both Klein-Gordon theory and generalized free field theory.
Our presentation differs somewhat from the classic literature, and some of our proof techniques are new.
The earlier reviews from which we extrapolated much of this material are \cite{Wald:QFTCS} and \cite{Hollands:review}.

\subsection{Key points}

\begin{itemize}
	\item A generalized free field theory starts with a $*$-algebra $\A_0$ generated by fields $\phi[f],$ where $f \in C^{\infty}_0$ is a compactly supported smooth function.
    The fields are subject to linearity relations and a c-number commutator
	\begin{equation}
		[\phi[f], \phi[g]] = - i \Omega[f, g].
	\end{equation}
	\item A state $\omega : \A_0 \to \comps$ is said to be Gaussian if the connected correlators
		\begin{equation}
			\omega_n^c(\phi[f_1] \dots \phi[f_n])
			= (-i)^n \left.\frac{\del^n}{\del t_1 \dots \del t_n} \log\omega(e^{i t_1 \phi[f_1]} \dots e^{i t_n \phi[f_n]})\right
			|_{t_1=\dots=t_n=0}
		\end{equation}
	vanish for $n \geq 3.$
	For such a state, one defines the ``vacuum-subtracted'' field $\tilde{\phi}[f] = \phi[f] - \omega(\phi[f]),$ for which the odd-point correlators vanish and one has
	\begin{equation}
		\omega(\tilde{\phi}[f_1] \dots \tilde{\phi}[f_{2n}])
		= \sum_{\substack{\text{ordered pairings} \\ P \text{ of } \{1, \dots, 2n\}}} \; \prod_{(j,k) \in P} \omega(\tilde{\phi}[f_j] \tilde{\phi}[f_k]).
        \label{eq:wick-formula}
	\end{equation}
	\item For any one-point function satisfying hermiticity $\omega(\phi[f])^* = \omega(\phi[f]^*),$ and for any two-point function respecting the canonical commutation relation and satisfying the positivity constraint
	\begin{equation}
		\sum_{j, k} c_j^* c_k \omega(\tilde{\phi}[f_j^*] \tilde{\phi}[f_k]) \geq 0,
	\end{equation}
	one obtains a unique Gaussian state on $\A_0$ via equation \eqref{eq:wick-formula}.
	\item Normal-ordering with respect to a state $\omega$ is defined by 
	\begin{equation}
		:\tilde{\phi}[f_1] \dots \tilde{\phi}[f_n]:
		\equiv \sum_{P}
		\prod_{j \notin P} \tilde{\phi}[f_j] \prod_{(k, \ell) \in P} \left[-\omega(\tilde{\phi}[f_k] \tilde{\phi}[f_\ell])\right].
	\end{equation}
	In the GNS space $\H_{\omega},$ one writes $\H_{\omega}^{(n)}$ for the Hilbert space obtained by acting on $|\omega\rangle$ with normal-ordered monomials of length $n.$
	This is called the ``$n$-particle space,'' and subspaces with different particle number are mutually orthogonal.
	\item The map
	\begin{equation}
		:\tilde{\phi}[g_1] \dots \tilde{\phi}[g_n]: |\omega\rangle
			\mapsto \sqrt{n!} |\tilde{\phi}[g_1] \omega \otimes_S \dots \otimes_S\tilde{\phi}[g_n] \omega\rangle
	\end{equation}
	gives a unitary equivalence between $\H_{\omega}^{(n)}$ and the $n$-th symmetric tensor power of $\H_{\omega}^{(1)}.$
	\item The field operator $\tilde{\phi}[f]$ can be written in terms of ``creation and annihilation operators'' as $\tilde{\phi}[f] = a[f] + a[f^*]^{\dagger},$ where one has
	\begin{equation}
		a[f^*]^{\dagger} :\tilde{\phi}[g_1] \dots \tilde{\phi}[g_n]: |\omega\rangle
		= : \tilde{\phi}[f] \tilde{\phi}[g_1] \dots \tilde{\phi}[g_n]: |\omega\rangle
	\end{equation}
	and
	\begin{equation}
		a[f] :\tilde{\phi}[g_1] \dots \tilde{\phi}[g_n]: |\omega\rangle
			= \sum_{j} \omega(\tilde{\phi}[f] \tilde{\phi}[g_j]) \left( :\tilde{\phi}[g_1] \dots \hat{\tilde{\phi}[f_j]} \dots \tilde{\phi}[g_n]: |\omega\rangle \right),
	\end{equation}
	with the hat denoting that the corresponding term is removed from the expression.
	\item After taking ``closures of operators,'' the operator we have called $a[f]^{\dagger}$ is genuinely the adjoint of $a[f]$; consequently, the adjoint of $\phi[f]$ is $\phi[f^*].$
	This means that for real $f,$ the operator $\phi[f]$ is self-adjoint and one can define the unitary ``Weyl operator'' $e^{i \phi[f]}.$
	\item Given a Gaussian state $\omega,$ one defines a ``symmetrized inner product'' on the space of test functions by
	\begin{equation}
		\langle f | g \rangle_{\mu}
			= \frac{\omega(\tilde{\phi}[f^*] \tilde{\phi}[g] + \tilde{\phi}[g] \tilde{\phi}[f^*])}{2}.
	\end{equation}
	The Hilbert space made out of $C^{\infty}_0$ equipped with this inner product is called $\K_{\mu}.$
	On this space, there is a bounded operator $R$ representing the commutator,
	\begin{equation}
		\langle f | R | g\rangle_{\mu} = \frac{1}{2} \Omega[f^*, g].
	\end{equation}
	The complex conjugation $\Gamma | f \rangle_{\mu} = |f^* \rangle_{\mu}$ is antiunitary, and one has
	\begin{equation}
		R^{\dagger} = - R
	\end{equation}
	and
	\begin{equation}
		\Gamma R \Gamma = R,
	\end{equation}
	with the spectrum of $R$ contained between $-i$ and $i.$
	When $\omega$ is pure, one has $R^2 = -1$ so that the only eigenvalues of $R$ are $\pm i$.
	When $\omega$ is centrally pure, the only eigenvalues of $R$ are $\pm i$ and zero.
	The two-point function is represented on $\K_{\mu}$ by
	\begin{equation}
		\omega(\tilde{\phi}[f^*] \tilde{\phi}[g])
			= \langle f | (1 - i R) |g \rangle_{\mu}.
	\end{equation}
	Hence the map
	\begin{equation}
		U \left( \tilde{\phi}[f] |\omega\rangle \right)
			= \sqrt{1 - i R} |f\rangle_{\mu}
	\end{equation}
	furnishes a unitary equivalence from $\H_{\omega}^{(1)}$ onto the orthocomplement of the $-i$ eigenspace of $R$.
	\item The von Neumann algebra $\A_{\omega}$ generated by $\A_0$ within the GNS space of a Gaussian state may equivalently be taken to be (i) the smallest von Neumann algebra such that each $\phi[f]$ is affiliated, or (ii) the smallest von Neumann algebra containing all Weyl operators $e^{i \phi[f]}$ with $f$ real.
	\item 
	The map $|f\rangle_{\mu} \to e^{i \phi[f]}$ is continuous with respect to the $\mu$-topology on the domain and the strong topology on the image, which allows one to define operators $\phi[\psi]$ for any $|\psi\rangle_{\mu}$ in $\K_{\mu}.$
	\item Weyl operators satisfy
	\begin{equation}
		\langle \omega | e^{i \tilde{\phi}[f]} |\omega\rangle = e^{- \langle f | f \rangle_{\mu}/2},
	\end{equation}
	or
	\begin{equation}
		\langle \omega | e^{i \phi[f]} |\omega\rangle = e^{- \langle f | f \rangle_{\mu}/2} e^{i \omega(\phi[f])}.
	\end{equation}
	Given any state $\ket{\psi}\in {\cal H}_{\omega}$ whose expectation values have this form, i.e., 
	\begin{equation}
		\langle \psi | e^{i\phi[f]} |\psi\rangle
			= e^{- \langle f | f \rangle_{\nu}/2} e^{i \psi(\phi[f])} \quad \text{for all Weyl operators},
	\end{equation}
	with $\psi(\phi[f])^* = \psi(\phi[f^*])$ and with $\nu$ a real inner product on the space of smearing functions, one can conclude that $|\psi\rangle$ induces a Gaussian state on the $*$-algebra $\A_0$.
	If one has a functional of this form but does not know that it comes from a ket $|\psi\rangle,$ one can still conclude that it induces a Gaussian state provided one checks the additional boundedness condition
	\begin{equation}
		\frac{i}{2} \Omega[f^*, f] \leq \langle f | f \rangle_{\nu}
	\end{equation}
	for general complex $f,$ or equivalently
	\begin{equation}
		|\Omega[u, v]| \leq 2 \lVert u \rVert_{\nu} \lVert v \rVert_{\nu}
	\end{equation}
	for general real $u$ and $v.$
\end{itemize}

\subsection{Basics}
\label{app:star-algebra}
	
From the algebraic perspective, a (generalized) free field theory is a collection of fields on spacetime that satisfy a ``c-number'' commutation relation.
One starts with a spacetime $\M$ and considers the collection $C^{\infty}_0(\M)$ of ``test functions'' --- smooth, compactly supported functions from $\M$ to $\mathbb{C}.$
To each test function $f$, one associates an object $\phi[f],$ which is supposed to represent the formal expression
\begin{equation}
	\phi[f] \equiv \int d^{D} x\, \sqrt{|g|} \phi(x) f(x),
\end{equation}
with $\phi(x)$ the generalized free field.

The fact that the field theory is ``free'' is encoded in the statement that the commutator $[\phi(x), \phi(y)]$ is proportional to the identity.
Formally, one writes
\begin{equation} \label{eq:appendix-formal-commutator}
	[\phi(x), \phi(y)] = - i \Omega(x, y),
\end{equation}
with $\Omega$ an antisymmetric distribution on two copies of $\M$.
The choice of $\Omega$ is part of the definition of the theory.
By smearing, one translates equation \eqref{eq:appendix-formal-commutator} into an algebraic equation
\begin{equation} \label{eq:appendix-distribution-commutator}
	[\phi[f], \phi[g]]
		= - i \Omega[f, g].
\end{equation}

The \textit{abstract $*$-algebra} of the generalized free field theory is the collection of polynomials made out of smeared fields $\phi[f],$ subject to the commutation relation \eqref{eq:appendix-distribution-commutator}, also subject to a linearity condition $\phi[\alpha f + g] = \alpha \phi[f] + \phi[g],$ and finally endowed with an adjoint operation that corresponds to complex conjugation:
\begin{equation}
	\phi[f]^* \equiv \phi[f^*].
\end{equation} 
This $*$-algebra is denoted $\A_0,$ or sometimes $\A_0(\M).$

For a generalized free field theory, the data we have discussed so far is all that goes into the definition of the theory.
In Klein-Gordon theory, one has additional data supplied by the equation of motion.
If $\M$ is a globally hyperbolic spacetime, then the Klein-Gordon equation $(\Box - m^2) \phi = 0$ admits fundamental ``advanced'' and ``retarded'' propagators \cite{Friedlander:book}, which are kernels $A$ and $R$ such that one has
\begin{equation}
	(\Box - m^2) (A f) = f
\end{equation}
with $Af$ vanishing to the future of the support of $f,$ and similarly for $Rf$ but with $Rf$ vanishing to the past of the support of $f.$
The ``advanced-minus-retarded'' kernel $E = A - R$ maps test functions to solutions:
\begin{equation}
    (\Box - m^2) (E f) = 0.
    \label{eq:advanced-minus-retarded}
\end{equation}

The first way that the equations of motion show up in algebraic Klein-Gordon theory is through the canonical commutation relation.
In particular, the commutator from equation \eqref{eq:appendix-distribution-commutator} is determined by setting
\begin{equation}\label{eq:Omega-KG}
	\Omega(x, y) = E(x, y),
\end{equation}
which comes from canonical quantization of the symplectic form on the Klein-Gordon phase space.
By abuse of notation we will often refer to $\Omega$ as the ``symplectic form'' even in the context of generalized free field theory, where $\Omega$ is actually just an antisymmetric distribution on two copies of $\M$.

Another important way in which the equations of motion show up is that in algebraic Klein-Gordon theory, one imposes $(\Box - m^2) \phi = 0$ on the abstract free field via integration by parts.
This amounts to subjecting the $*$-algebra $\A_0$ to the algebraic equivalence relation
\begin{equation}
	\phi[(\Box - m^2) f] = 0,
\end{equation}
or, equivalently, quotienting the space of test functions by $\text{im}(\Box - m^2)$.
Effectively, one is smearing the free field $\phi(x)$ against elements of the space 
\begin{equation}
    \mathscr{S} \equiv C^{\infty}_0(\M) / \text{im}(\Box - m^2).
    \label{eq:phase-space-of-kg}
\end{equation}

The choice of symbol ``$\mathscr{S}$'' is deliberate; one can actually show that $\mathscr{S}$ is isomorphic to the phase space of \textit{solutions} of Klein-Gordon theory that are smooth and that have compact support on any spatial slice.
The isomorphism is simply the map $[f] \mapsto E f.$
That this map is well defined follows from the compatibility of $E$ with the equations of motion, i.e., $E (\Box - m^2) f = 0.$
That the map is an isomorphism follows from the facts --- see \cite[lemma 3.2.1]{Wald:QFTCS} --- that $f \mapsto E f$ is surjective onto phase space, and that the kernel of this map is \textit{exactly} the image of the equations of motion.

\subsection{Gaussian states and normal ordering}
\label{app:normal-ordering}

Given an abstract $*$-algebra $\A_0$, like the $*$-algebra of free fields discussed in the previous subsection, one constructs Hilbert space representations via the GNS construction.
To do this, one specifies a linear functional $\omega : \A_0 \to \comps$ that satisfies the positivity condition
\begin{equation}\label{eq:state-positivity}
	\omega(a^* a) \geq 0.
\end{equation}
This functional is supposed to represent the set of correlation functions for an abstract quantum state.
The GNS construction allows one to realize these correlation functions on a Hilbert space.
To do this, one starts with the abstract vector space
\begin{equation}
	V_{\omega} = \{|a \omega\rangle\, |\, a \in \A_0\},
    \label{eq:V-omega-space}
\end{equation}
and endows this space with the inner product
\begin{equation}\label{eq:GNS-inn-prod}
	\braket{a \omega}{b \omega} = \omega(a^* b).
\end{equation}
Taking the quotient by the space of null states, then taking a Hilbert space completion, one obtains a Hilbert space $\H_{\omega}$ whose elements are (Cauchy sequences of) equivalence classes $[| a\omega\rangle].$
There is rarely any confusion in simply writing $|a \omega\rangle$ for the corresponding equivalence class.
The Hilbert space $\H_{\omega}$ carries an action of the $*$-algebra $\A_0$ via the formula
\begin{equation}
	b |a \omega\rangle = |(b a) \omega\rangle.
\end{equation}
This tells us how each element $b$ of the $*$-algebra should act on states of the form $a |\omega\rangle$ with $a \in \A_0.$
However, this action will generally be unbounded, and cannot be completed continuously to an action on all of $\H_{\omega}.$
We say that the operators in $\A_0$ are represented on $\H_{\omega}$ as unbounded operators with a common dense domain $\{a |\omega\rangle\}.$
For more on unbounded operators, especially as they appear in the GNS construction, see \cite[appendix A]{Sorce:paper1}.

Given a (generalized) free field algebra, there are some nice simplifications due to the fact that the canonical commutation relation is proportional to the identity.
This algebraic feature means that free field algebras support \textit{Gaussian} states, where all $n$-point functions are determined from the $1$- and $2$-point functions.
They are defined by starting with the formal expression for connected $n$-point functions,
\begin{equation}\label{eq:conn-n-point}
	\omega_n^c(\phi[f_1] \dots \phi[f_n])
		= (-i)^n \left.\frac{\del^n}{\del t_1 \dots \del t_n} \log\omega(e^{i t_1 \phi[f_1]} \dots e^{i t_n \phi[f_n]})\right
		|_{t_1=\dots=t_n=0},
\end{equation}
and demanding that these vanish for $n \geq 3$.
More explicitly, for a Gaussian state with vanishing one-point function, all odd-point correlators vanish and the even-point correlators are given by taking Wick contractions:
\begin{equation} \label{eq:appendix-even-point}
	\omega(\phi[f_1] \dots \phi[f_{2n}])
		= \sum_{\substack{\text{ordered pairings} \\ P \text{ of } \{1, \dots, 2n\}}}\; \prod_{(j,k) \in P} \omega(\phi[f_j] \phi[f_k]).
\end{equation}
It turns out that this simple formula is also useful for Gaussian states with non-vanishing one-point function.
For if the one-point function is nonvanishing, one can define a new field by
\begin{equation} \label{eq:appendix-vev}
	\tilde{\phi}[f] \equiv \phi[f] - \omega(\phi[f]),
\end{equation}
and for $n \geq 2,$ one has (from the generating functional expression \eqref{eq:conn-n-point})
\begin{equation}
	\omega_n^c(\phi[f_1] \dots \phi[f_n]) = \omega_n^c(\tilde{\phi}[f_1] \dots \tilde{\phi}[f_n]).
\end{equation}
This means that if $\omega$ a Gaussian state for the $\phi$ fields, it is also a Gaussian state with respect to the ``vacuum-subtracted'' fields. 
But $\omega$ has vanishing one-point function with respect to $\tilde \phi$, and hence the odd-point functions of $\tilde \phi$ vanish, while the even-point functions are given by the analogue of equation \eqref{eq:appendix-even-point}, with $\phi$ replaced by the vacuum-subtracted field $\tilde \phi$.

To summarize, Gaussian states are determined by their 1- and 2-point functions.
Before discussing the implications of this for the GNS construction, we will address the question of which choices of 1- and 2-point functions yield well-defined Gaussian states.
For any one-point function $\omega(\phi[f])$ with $\omega(\phi[f])^* = \omega(\phi[f^*]),$ and for any two-point function respecting the canonical commutation relation,
\begin{equation}
	\omega(\phi[f] \phi[g]) - \omega(\phi[g] \phi[f])
		= - i \Omega[f, g],
\end{equation}
one can use equation \eqref{eq:appendix-even-point} with $\phi$ replaced by $\tilde{\phi}$ to \textit{define} a consistent functional on the $*$-algebra $\A_0.$\footnote{\label{foot:EoM-state} In the case of Klein-Gordon theory, one must also require that the one- and two-point functions respect the equation of motion. That is, for a functional of $f$ to consistently define a candidate one-point function of a state in Klein–Gordon theory, it must vanish whenever $f=(\square -m^2)g$ for some test function $g$. A similar constraint must be imposed on the two-point function.}
In order to show that this functional really corresponds to the correlation functions of a quantum state, we need to check the positivity condition \eqref{eq:state-positivity}, where now $a$ is a general polynomial in the smeared fields $\phi[f]$.
Luckily, if one has positivity of the two-point function for vacuum-subtracted fields, namely
\begin{equation} \label{eq:two-point-positivity}
	\sum_{j, k} c_j^* c_k\, \omega(\tilde{\phi}[f_j^*] \tilde{\phi}[f_k]) \geq 0,
\end{equation}
then the corresponding Gaussian functional defined by \eqref{eq:appendix-even-point} is guaranteed to be positive.
We will prove this at the end of this subsection after developing normal-ordering technology.

We now return to our discussion of the GNS construction, and how it simplifies for Gaussian GNS states.
Recall that the pre-GNS space $V_{\omega}$ was defined as the set of states obtained by acting on $|\omega\rangle$ with linear combinations of field monomials $\phi[f_1] \dots \phi[f_n]$, for any integer $n$.
For Gaussian states, it is extremely useful to define ``Wick monomials'' or ``normal-ordered monomials'' $:\tilde{\phi}[f_1] \dots \tilde{\phi}[f_n]:,$ defined with respect to the chosen state $\omega.$
One defines
\begin{equation} \label{eq:normal-ordering}
	:\tilde{\phi}[f_1] \dots \tilde{\phi}[f_n]:\;\,
		\equiv \sum_{P}
			\prod_{j \notin P} \tilde{\phi}[f_j] \prod_{(k, \ell) \in P} \left[-\omega(\tilde{\phi}[f_k] \tilde{\phi}[f_\ell])\right],
\end{equation}
where the sum is taken over all possible ways $P$ of picking an even number of elements from the monomial and dividing them up into ordered pairs that respect the original ordering.
It is easy to check, by induction, the dual relation
\begin{equation}
	\tilde{\phi}[f_1] \dots \tilde{\phi}[f_n]
		= \sum_{P}
		: (\prod_{j \notin P} \tilde{\phi}[f_j]): \prod_{(k, \ell) \in P} \omega(\tilde{\phi}[f_k] \tilde{\phi}[f_\ell]).
\end{equation}
This means that every polynomial in the fields can be written as a linear combination of Wick polynomials, and consequently that the pre-GNS space $V_{\omega}$ can be obtained by acting on $|\omega\rangle$ with Wick polynomials instead of ordinary field polynomials.

The main utility of normal-ordered monomials is that they satisfy a useful ``orthogonality relation'' with respect to the state $\omega.$
The starting point for this is the general identity
\begin{equation} \label{eq:pre-creation-annihilation}
	\tilde{\phi}[f_1] (:\tilde{\phi}[f_2] \dots \tilde{\phi}[f_n]:)
		= (:\tilde{\phi}[f_1] \dots \tilde{\phi}[f_n]:) + \sum_{j=2}^{n} \omega(\tilde{\phi}[f_1] \tilde{\phi}[f_j]) \left( :\tilde{\phi}[f_2] \dots \hat{\tilde{\phi}[f_j]} \dots \tilde{\phi}[f_n]: \right),
\end{equation}
where the hat in the second term indicates that the corresponding element $\tilde{\phi}[f_j]$ should be removed from the expression.
Equation \eqref{eq:pre-creation-annihilation} is easily verified by direct calculation from the definition \eqref{eq:normal-ordering}.
Using this formula, and proceeding by induction, it is straightforward to compute the following:
\begin{equation} \label{eq:normal-orthogonality}
	\omega\left((:\tilde{\phi}[f_1] \dots\tilde{\phi}[f_m]:)^* :\tilde{\phi}[g_1] \dots \tilde{\phi}[g_n]:\right)
		= \delta_{m, n} \sum_{\pi \in S_m} \omega(\tilde{\phi}[f_1^*] \tilde{\phi}[g_{\pi(1)}]) \dots \omega(\tilde{\phi}[f_m^*] \tilde{\phi}[g_{\pi(m)}]).
\end{equation}
In the GNS language, this is written
\begin{equation} \label{eq:GNS-normal-overlaps}
	\langle  :\tilde{\phi}[f_1] \dots\tilde{\phi}[f_m]: \omega | :\tilde{\phi}[g_1] \dots \tilde{\phi}[g_n]: \omega\rangle
	= \delta_{m, n} \sum_{\pi \in S_m} \langle \tilde{\phi}[f_1] \omega | \tilde{\phi}[g_{\pi(1)}] \omega\rangle \dots \langle \tilde{\phi}[f_m] \omega | \tilde{\phi}[g_{\pi(m)}] \omega\rangle.
\end{equation}
This immediately tells us that when $\omega$ is Gaussian, the GNS space splits into mutually orthogonal ``$n$-particle subspaces'' obtained by acting on $|\omega\rangle$ with one-point-subtracted Wick monomials of order $n$:
\begin{equation}
    \H_{\omega}^{(n)}
		\equiv \bar{V_{\omega}^{(n)}/\text{null states}}
    \label{eq:n-particle-hibert-space}
\end{equation}
where in analogy to equation \eqref{eq:V-omega-space}, we have defined
\begin{equation}
	V_{\omega}^{(n)}
		\equiv \{:\tilde{\phi}[f_1] \dots \tilde{\phi}[f_n]:|\omega\rangle\}.
\end{equation}

We will explore this structure more in the next subsection; first, however, we use normal ordering to prove the claim that $\omega$ is positive on $\A_0$ whenever equation \eqref{eq:two-point-positivity} holds.

The point is that even if $\omega$ were not positive, equation \eqref{eq:normal-orthogonality} would still break the $*$-algebra $\A_0$ into ``$n$-particle pieces'' that are mutually orthogonal with respect to $\omega.$
Consequently, positivity of $\omega$ need only be checked on each $n$-particle space. 
Equation \eqref{eq:two-point-positivity} is the assumption that $\omega$ is positive on the $1$-particle space.
For $n > 1,$ we have
\begin{equation} \label{eq:n-Wick-overlap}
	\omega((:\tilde{\phi}[f_1] \dots \tilde{\phi}[f_n]:)^* :\tilde{\phi}[g_1] \dots \tilde{\phi}[g_n]:)
		= \sum_{\pi \in S_n} \omega(\tilde{\phi}[f_1]^* \tilde{\phi}[g_{\pi(1)}]) \dots \omega(\tilde{\phi}[f_n]^* \tilde{\phi}[g_{\pi(n)}]).
\end{equation}
Denote by $\H_{\omega}^{(1)}$ the space we get by completing the space $\{\tilde{\phi}[f]|\omega\rangle\}$ as in equation \eqref{eq:n-particle-hibert-space}.
Equation \eqref{eq:n-Wick-overlap} can be rewritten in terms of inner products in the $n$-th tensor power of this space as
\begin{equation}
\label{eq:check-positivity-omega}
	\omega((:\tilde{\phi}[f_1] \dots \tilde{\phi}[f_n]:)^* :\tilde{\phi}[g_1] \dots \tilde{\phi}[g_n]:)
	= \sum_{\pi \in S_n} \langle \tilde{\phi}[f_1] \omega \otimes \dots \otimes \tilde{\phi}[f_n] \omega | \hat{\pi} | \tilde{\phi}[g_1] \omega \otimes \dots \tilde{\phi}[g_n] \omega \rangle.
\end{equation}
So the $\omega$ inner product on $V_{\omega}^{(n)}$ is simply implemented by the operator $\sum_{\pi} \hat{\pi}$ on the $n$-fold tensor product of $\H_{\omega}^{(1)}.$
This is (proportional to) the orthogonal projection onto the space of symmetric tensors, so it is positive, and hence induces a positive semidefinite inner product.

\subsection{The Fock structure of Gaussian representations}
\label{app:fock}

In the previous subsection, we showed that the GNS representation of a Gaussian state $\omega$ decomposes into $n$-particle sectors,
\begin{equation}
	\H_{\omega}
		= \oplus_n \H_{\omega}^{(n)},
\end{equation}
where $\H_{\omega}^{(n)}$ is defined by
\begin{equation}
	\H_{\omega}^{(n)}
		= \bar{\{:\tilde{\phi}[f_1] \dots \tilde{\phi}[f_n]:|\omega\rangle\}}
\end{equation}
and
\begin{equation}
	\tilde{\phi}[f]
		= \phi[f] - \omega(\phi[f]).
\end{equation}
We now connect this result with the more familiar notion of a Fock space generated by the repeated action of creation operators.
Equation \eqref{eq:pre-creation-annihilation} tells us how $\tilde{\phi}[g]$ acts on the $n$-particle subspace --- we have
\begin{align}
	\begin{split}
	 \tilde{\phi}[g] \left( :\tilde{\phi}[f_1] \dots \tilde{\phi}[f_n]: |\omega\rangle \right) = \;&:\tilde{\phi}[g] \tilde{\phi}[f_1] \dots \tilde{\phi}[f_n]: |\omega\rangle\\
	& 
	+ \sum_{j=1}^{n} \omega(\tilde{\phi}[g] \tilde{\phi}[f_j]) :\tilde{\phi}[f_1] \dots \hat{\tilde{\phi}[f_j]} \dots \tilde{\phi}[f_n]: |\omega\rangle.
	\end{split}
\end{align}
The left-hand side shows $\tilde{\phi}[g]$ acting on an element of $\H_{\omega}^{(n)}.$
The first term on the right-hand side is in $\H_{\omega}^{(n+1)},$ and the second is in $\H_{\omega}^{(n-1)}.$
So we see that the field operator splits into a piece that adds a particle, and a piece that removes a particle.
These are the creation and annihilation operators of free field theory.
We write
\begin{equation} \label{eq:annihilation-definition}
	a[g] :\tilde{\phi}[f_1] \dots \tilde{\phi}[f_n]: |\omega\rangle
		= \sum_{j=1}^{n} \omega(\tilde{\phi}[g] \tilde{\phi}[f_j]) :\tilde{\phi}[f_1] \dots \hat{\tilde{\phi}[f_j]} \dots \tilde{\phi}[f_n]: |\omega\rangle
\end{equation}
and
\begin{equation} \label{eq:creation-definition}
	a[g^*]^{\dagger} :\tilde{\phi}[f_1] \dots \tilde{\phi}[f_n]: |\omega\rangle
		= :\tilde{\phi}[g] \tilde{\phi}[f_1] \dots \tilde{\phi}[f_n]: |\omega\rangle.
\end{equation}
Then we have $\tilde{\phi}[g] = a[g] + a[g^*]^{\dagger}.$

The notation here may seem a little confusing.
Why did we call the creation operator $a[g^*]^{\dagger}$ instead of $a[g]^{\dagger}$?
The reason is that we want $a[g]^{\dagger}$ to be the adjoint of $a[g]$, and one can easily check that in order to have
\begin{equation} \label{eq:creation-annihilation-adjoint}
	\langle \psi | a[g] \chi \rangle = \langle a[g]^{\dagger} \psi | \chi \rangle
\end{equation}
for $|\psi\rangle$ and $|\chi\rangle$ finite-particle vectors, one must define the creation operator as in equation \eqref{eq:creation-definition}.

There is another subtlety we must address, which is that verifying equation \eqref{eq:creation-annihilation-adjoint} is not actually enough to show that $a[g]^{\dagger}$ is genuinely the adjoint of $a[g].$
The reason for this is that the operators are unbounded.
Equations \eqref{eq:annihilation-definition} and \eqref{eq:creation-definition} define the creation and annihilation operators on $V_{\omega},$ which is a dense subspace of $\H_{\omega}.$
But these operators are not bounded, so they cannot be extended to all of $\H_{\omega}.$
What equation \eqref{eq:creation-annihilation-adjoint} tells us is that the \textit{genuine} adjoint of $a[g]$ is an \textit{extension} of the creation operator ``$a[g]^{\dagger}$'' --- i.e., these operators agree on all vectors where both are defined.
Usefully, this tells us that the adjoint of $a[g]$ is densely defined (since the creation operator is densely defined), and consequently $a[g]$ is a ``closable operator'' --- see the review in \cite[section 2.3]{Sorce:modular} for details.
A similar argument tells us that the creation operator is a closable operator.
By abuse of notation, we can use the symbols $a[g]$ and $a[g]^{\dagger}$ to denote the closures of the operators defined by equations \eqref{eq:annihilation-definition} and \eqref{eq:creation-definition}.
Under this convention, it can be shown that these operators are genuinely adjoints of one another, though the proof of this statement is a little outside our present scope, so we present it in appendix \ref{app:ladder-adjoints}.
An interesting consequence of this statement is that the field operator $\tilde{\phi}[f],$ upon taking closures, has adjoint $\tilde{\phi}[f^*].$
Consequently, the field operator $\tilde{\phi}[f]$ is self-adjoint for real $f,$ which means it is a genuine physical observable.

Note that by equation \eqref{eq:creation-definition}, the vectors in $V_{\omega}^{(n)}$ have the form of a ``vacuum'' $|\omega\rangle$ acted upon by $n$ commuting creation operators.
This points to the existence of a canonical isomorphism between $\H_{\omega}^{(n)}$ and the $n$-th symmetric tensor power $(\H_{\omega}^{(1)})^{\otimes_S n}.$
Indeed, one has
\begin{align}
	\begin{split}
	&
		\langle \tilde{\phi}[f_1] \omega \otimes_S \dots \otimes_S \tilde{\phi}[f_n] \omega |
		\tilde{\phi}[g_1] \omega \otimes_S \dots \otimes_S \tilde{\phi}[g_n] \omega \rangle \\
		& \qquad = \frac{1}{n!} \sum_{\pi \in S_n} \langle \tilde{\phi}[f_1] \omega | \tilde{\phi}[g_{\pi(1)}] \rangle \dots \langle \tilde{\phi}[f_n] \omega | \tilde{\phi}[g_{\pi(n)}] \rangle,
	\end{split} 
\end{align}
and comparing this with equation \eqref{eq:GNS-normal-overlaps} immediately reveals that the map
\begin{equation} \label{eq:symmetric-power-Fock}
	 |	\tilde{\phi}[g_1] \omega \otimes_S \dots \otimes_S \tilde{\phi}[g_n] \omega \rangle
		\mapsto \frac{1}{\sqrt{n!}}:\tilde{\phi}[g_1] \dots \tilde{\phi}[g_n]: |\omega\rangle
\end{equation}
furnishes a unitary equivalence between $(\H_{\omega}^{(1)})^{\otimes_S n}$ and $\H_{\omega}^{(n)}.$
A nice consequence of this is that if one picks an orthonormal basis $\{\tilde{\phi}[f_j] |\omega\rangle\}$ for $\H_{\omega}^{(1)}$ --- which one can always do because vectors of the form $\tilde{\phi}[g] |\omega\rangle$ are dense in $\H_{\omega}^{(1)}$ --- then the vectors
\begin{equation}
	:\tilde{\phi}[f_1]^{p_1} \dots \tilde{\phi}[f_k]^{p_k}: |\omega\rangle,
\end{equation}
with $p_1 + \dots + p_k = n,$ form an orthogonal basis for $\H_{\omega}^{(n)}.$

\subsection{Quantizing phase space}
\label{app:phase-space}

The ``Fock structure'' described above tells us that when studying a Gaussian state of a free field theory, all of the data of the GNS representation is really contained within the one-particle Hilbert space $\H_{\omega}^{(1)}.$\footnote{The intuition behind this result comes from remembering that the GNS inner product is inherited from the Gaussian state $\omega$ and hence is fully characterized by the two-point functions of vacuum-subtracted fields. These, in turn, determine (and modulo null states, are determined by) the inner product on $\H_\omega^{(1)}$.}
There is another way of arriving at this Hilbert space that is very useful for certain proofs.
This is to obtain $\H_{\omega}^{(1)}$ as a subspace of an enlarged Hilbert space that is obtained by ``quantizing the phase space of the theory,'' in other words by promoting the phase space to a Hilbert space.

One begins with the space of complex smearing functions $C^{\infty}_0(\M)$.
This generates the $*$-algebra $\A_0,$ on which one defines a Gaussian state $\omega$.

The GNS space $\H_{\omega}$ is obtained by using $\omega$ to induce an inner product on $\A_0,$ then taking the quotient by null states and performing a Hilbert space completion.

A different Hilbert space is produced by symmetrizing $\omega$ to obtain an inner product directly on $C^{\infty}_0(\M).$
This inner product, written $\mu,$ is defined by
\begin{equation}
	\langle f | g\rangle_{\mu}
		= \frac{\omega(\tilde{\phi}[f^*] \tilde{\phi}[g]) + \omega(\tilde{\phi}[g] \tilde{\phi}[f^*])}{2}.
\end{equation}
This is clearly positive semidefinite because it is related to $\omega$ by using the commutator:
\begin{equation} \label{eq:mu-omega-Omega}
	\langle f | g \rangle_{\mu}
		= \omega(\tilde{\phi}[f^*] \tilde{\phi}[g]) + \frac{i}{2} \Omega[f^*, g].
\end{equation}
In general, the $\mu$ inner product is intimately tied to the symplectic form $\Omega$.
A first hint of this comes from using the Cauchy-Schwarz inequality for $\omega$  to get the inequality
\begin{equation}
	\Big| \langle f | g \rangle_{\mu} - \frac{i}{2} \Omega[f^*, g] \Big|
		\leq \sqrt{\omega(\tilde{\phi}[f^*] \tilde{\phi}[f])} \sqrt{\omega(\tilde{\phi}[g^*] \tilde{\phi}[g])}.
\end{equation}
If $f$ and $g$ are real, then the first term on the left-hand side is real, while the second is imaginary, which lets us evaluate the absolute value of the complex number on the left; on the right-hand side, we use the identity $\omega(\phi[f]^2) = \langle f | f \rangle_{\mu}$ for $f$ real.
Putting these together, we find that for real $f$ and $g,$ one has the inequality
\begin{equation}
	|\langle f | g \rangle_{\mu}|^2 + \frac{1}{4} |\Omega[f, g]|^2
	\leq \langle f | f \rangle_{\mu} \langle g | g \rangle_{\mu}.
\end{equation}
In particular, this gives
\begin{equation} \label{eq:mu-Omega-inequality}
	\langle f | f \rangle_{\mu} \langle g | g \rangle_{\mu}
		\geq \frac{1}{4} |\Omega[f, g]|^2, \qquad f, g \text{ real},
\end{equation}
which is a very useful inequality relating $\Omega$ and $\mu.$
In appendix \ref{app:pure}, we will return to this inequality and show that it is saturated in the case that $\omega$ is pure, meaning that for any $f$, there is a choice of $g$ for which equality holds.

By the ``phase space'' of the theory, we will mean the vector space $\mathscr{S}_{\mu}$ obtained by endowing $C^{\infty}_0(\M)$ with the $\mu$ inner product, then taking the quotient by null states:
\begin{equation}
    \mathscr{S}_{\mu}=C^{\infty}_0(\M)/\ker\mu\,.
\end{equation}
The reason we call this a ``phase space'' is that in the case of Klein-Gordon theory, one has $\ker \mu = \ker \Omega$ --- as we will now demonstrate --- which gives $\mathscr{S}_{\mu} = \mathscr{S},$ the usual Klein-Gordon phase space introduced in equation \eqref{eq:phase-space-of-kg}.
Note that in any free theory, for $u, v$ real, one can easily show
\begin{equation} \label{eq:mu-complex-splitting}
	\langle u + i v | u + i v\rangle_{\mu}
		= \langle u | u \rangle_{\mu} + \langle v | v \rangle_{\mu},
\end{equation}
which implies that if $u + i v$ is null for $\mu,$ then its real and imaginary parts are null as well.
Inequality \eqref{eq:mu-Omega-inequality} then implies that $u$ and $v$, hence $u + i v$, are degeneracies of the symplectic form.
This gives the general inclusion $\ker \mu \subseteq \ker \Omega,$ and our task is to show that in Klein-Gordon theory, the converse inclusion also holds.

In the case of Klein-Gordon theory, the symplectic form is $\Omega = E,$ with $E$ the advanced-minus-retarded propagator.
As discussed above, the only degeneracies of $\Omega$ consist of test functions that are in the image of the equations of motion, i.e., we have $\ker \Omega=\im(\Box - m^2)$.
But in Klein-Gordon theory, the $*$-algebra $\A_0$ involves a quotient by the images of the equations of motion, so everything in the image of the equations of motion is a degeneracy not just of $\Omega,$ but also of any algebraic state $\omega$ (see footnote \ref{foot:EoM-state}).
By equation \eqref{eq:mu-omega-Omega}, this implies that every element of $\ker \Omega$ is itself a degeneracy of the inner product $\mu.$

Note that in generalized free field theories, the quotient by $\text{ker}\mu$ might not kill all of the degeneracies of the symplectic form.
In this case, the space $\mathscr{S}_{\mu}$ is $\mu$-dependent.

Now that we know $\mu$ is a positive definite inner product on the phase space $\mathscr{S}_{\mu},$ we can define its Hilbert space completion $\K_{\mu}.$
The utility of having constructed $\K_{\mu}$ is that it retains some information pertaining to the state $\omega$ that is discarded in the construction of $\H_{\omega}.$
In particular, if there exist test functions $f$ with\footnote{Note this cannot happen if the state $\omega$ is faithful, but we will see later that many states of interest in free field theory are not faithful; in particular, no pure state is faithful.}
\begin{equation}\label{eq:null-state}
	\omega(\tilde{\phi}[f^*] \tilde{\phi}[f]) = 0
\end{equation}
but for which $f$ is not a degeneracy of $\mu$, then there will be a state in $\K_{\mu}$ representing $f$, while no such state is present in $\H_{\omega}$.
In this sense, null states for $\omega$ can be kept alive within the space $\K_{\mu}.$

Of course, some information about $\omega$ is discarded in the construction of $\K_{\mu}$, but the point is that this information will also be discarded in the construction of $\H_\omega$, because every null state for $\mu$ is also a null state for $\omega$, by equation \eqref{eq:mu-omega-Omega} together with the inclusion $\ker\mu \subseteq \ker \Omega$.
Hence, $\K_{\mu}$ knows ``strictly more'' about $\omega$ than $\H_\omega^{(1)}$ does.
We call the null states of $\omega$ that are not null states for $\mu$ ``one-particle null states.''

This last observation makes it sound like we should be able to identify the one-particle Hilbert space, $\H_{\omega}^{(1)},$ as a subspace of $\K_{\mu}.$
Indeed, this is the case.
To see this, we first define an operator $R$ on $\K_{\mu}$ that represents the symplectic form:\footnote{This operator is called $J$ in \cite{Wald:QFTCS}, but we use $R$ as we wish to reserve $J$ for the modular conjugation operator.}
\begin{equation}
	\langle f | R | g\rangle_{\mu}
		= \frac{1}{2} \Omega[f^*, g].
        \label{eq:def-of-R}
\end{equation}
The fact that such an operator exists is guaranteed by inequality \eqref{eq:mu-Omega-inequality}.
That equation tells us that $\Omega$ is bounded-in-$\mu$ when viewed as a quadratic form, and the Riesz lemma therefore tells us that it can be represented as a bounded operator.
The operator $R$ has several nice properties, all of which are inherited from $\Omega.$
The first is that it takes real states to real states.
A useful way to write this is to define the complex conjugation operator $\Gamma : \K_{\mu} \to \K_{\mu}$ by
\begin{equation}
	\Gamma |f\rangle_{\mu} = |f^* \rangle_{\mu}.
\end{equation}
This is an antilinear operator, and from the definition of $\mu$ one can easily see that it is actually antiunitary.
By a ``real state'' in $\K_{\mu}$ we mean one that is fixed by $\Gamma,$ and one easily computes
\begin{align}
	\begin{split}
	\langle f | \Gamma R \Gamma | g \rangle_{\mu}
		& = \langle R g^* | f^* \rangle_{\mu} \\
		& = \bar{\langle f^* | R g^* \rangle_{\mu}} \\
		& = \frac{1}{2} \bar{\Omega[f, g^*]} \\
		& = \frac{1}{2} \Omega[f^*, g] \\
		& = \langle f | R | g \rangle_{\mu}.
	\end{split}
\end{align}
This gives us the identity
\begin{equation}
	\Gamma R \Gamma = R,
\end{equation}
which confirms that $R$ takes real states to real states.

Another useful property of $R$ is that it is antihermitian, $R^{\dagger} = - R,$ which follows from the calculation
\begin{align}
	\begin{split}
		\langle f | R | g \rangle_{\mu}
			& = \frac{1}{2} \Omega[f^*, g] \\
			& = - \frac{1}{2} \Omega[g, f^*] \\
			& = - \frac{1}{2} \bar{\Omega[g^*, f]} \\
			& = - \bar{\langle g | R f\rangle}_{\mu} \\
			& = - \langle R f | g \rangle_{\mu}.
	\end{split}
\end{align}
This tells us that the spectrum of $R$ is purely imaginary.
By inequality \eqref{eq:mu-Omega-inequality}, we also see that the norm of $R$ is upper bounded by one, so in fact the spectrum of $R$ must lie between $-i$ and $i.$

Now that we know some details about $R$, we are ready to figure out where the one-particle Hilbert space lives within the completed phase space $\K_{\mu}.$
The key point is that the ``one-particle null states'' of $\omega$ are exactly the $(-i)$ eigenvalues of $R$.
To see this, one rewrites equation \eqref{eq:mu-omega-Omega} using the operator $R$:
\begin{equation}
	\omega(\tilde{\phi}[f^*] \tilde{\phi}[g])
		= \langle f | (1 - i R) | g \rangle_{\mu}.
\end{equation}
The state $\tilde{\phi}[g] |\omega\rangle$ is null in $\H_{\omega}^{(1)}$ if and only if both sides of this equation vanish independently of $f$; equivalently, one must have $R |g\rangle_{\mu} = - i |g \rangle_{\mu}.$

This makes it sound like the one-particle Hilbert space $\H_{\omega}^{(1)}$ should live in $\K_{\mu}$ as the orthocomplement of the $-i$ eigenspace of $R$.
To make this concrete, one defines a map from $\H_{\omega}^{(1)}$ to $\K_{\mu}$ via the formula
\begin{equation} \label{eq:app-U-def}
	U \left(\tilde{\phi}[f] |\omega\rangle\right) = \sqrt{1 - i R} |f\rangle_{\mu}.
\end{equation}
By the above discussion, this is an isometry.
Its image is the image of $\sqrt{1 - i R},$ which is the orthocomplement of the $-i$ eigenspace of $R.$
We therefore learn that $\H_{\omega}^{(1)}$ can be mapped unitarily to this orthocomplement within $\K_{\mu}.$

\subsection{von Neumann algebras generated by the fields}
\label{app:vN-generating}

Given a $*$-algebra $\A_0$ and a GNS representation $\H_{\omega},$ there are a few ways to ``complete'' $\A_0$ into a von Neumann algebra $\A_{\omega}$ acting on $\H_{\omega}.$
One natural way is to declare that $\A_{\omega}$ is the smallest von Neumann algebra such that all of the operators coming from $\A_0$ are ``affiliated'' to $\A_0.$
For more details on affiliation and this construction, see \cite[appendix A]{Sorce:paper1} and the discussion below. 
For other choices, see \cite{Driessler:unbounded-to-vN, Buchholz:unbounded-to-vN}.

Usefully, when $\A_0$ is a free field algebra and $\omega$ is a Gaussian state, there is really only one canonical choice of von Neumann algebra in the GNS representation $\H_{\omega},$ and it corresponds to the choice $\A_{\omega}$ described in the preceding paragraph.
The main claim of this subsection is in the free field setting, $\A_{\omega}$ can be described in any of three equivalent ways:
\begin{itemize}
    \item the smallest von Neumann algebra such that each $\phi[f]$ is affiliated, called ${\cal A}_{\omega}$,
    \item the smallest von Neumann algebra containing all bounded functions of $\phi[f]$ with $f$ real, called $\A_{\text{self-adjoint}}$,
    \item the smallest von Neumann algebra containing all Weyl operators $e^{i \phi[f]}$ with $f$ real.
\end{itemize}
Note that for the second and third bullet points to make sense, $\phi[f]$ must have self-adjoint closure when $f$ is real, but this is exactly what we showed in appendix \ref{app:ladder-adjoints}.
We also comment that for the purposes of this discussion, the distinction between $\phi[f]$ and $\tilde{\phi}[f]$ is immaterial --- they generate the same von Neumann algebras, since they only differ by a multiple of the identity.

different choices one might make for $\A_{\omega}$ all give rise to the same algebra.
The main reason for this is that, as discussed in appendix \ref{app:ladder-adjoints}, the free field operator $\phi[f]$ has self-adjoint closure whenever $f$ is real.
For any self-adjoint operator $T$, one can act on $T$ with bounded functions to produce bounded operators.
A natural candidate for $\A_{\omega}$ is the smallest von Neumann algebra containing all bounded functions of $\phi[f]$ for $f$ real.
One might worry that this is smaller than the algebra discussed above, since it seems to only know about the self-adjoint elements of $\A_0,$ but we will see in this section that the two definitions coincide.
We will also see that it is equivalent to the smallest von Neumann algebra containing all of the ``Weyl operators'' $e^{i \phi[f]},$ where $f$ is real.
For the purposes of this discussion, the distinction between $\phi[f]$ and $\tilde{\phi}[f]$ is immaterial --- they generate the same von Neumann algebras, since they only differ by a multiple of the identity.

We begin by relating the algebras in the first two bullet points.

For a general $*$-algebra $\A_0$ and state $\omega,$ one can consider as in \cite[appendix A.4]{Sorce:paper1} the polar decomposition of any $a \in \A_0$ as it acts on $\H_{\omega}$:
\begin{equation}
	a = V_a |a|.
\end{equation}
The minimal affiliated algebra $\A_{\omega}$ is the smallest von Neumann algebra containing all partial isometries $V_a$ and all bounded functions of the positive operators $|a|.$
For the self-adjoint operator $\phi[f],$ the polar decomposition is of the form
\begin{equation}
	\phi[f] = V_f |\phi[f]|,
\end{equation}
where $V_f$ is simply equal to $+1$ on the portion of Hilbert space where $\phi[f]$ has positive spectrum, equal to $-1$ on the portion of Hilbert space where $\phi[f]$ has negative spectrum, and equal to $0$ on the portion of Hilbert space where $\phi[f]$ is zero.
The operator $V_f$ is self-adjoint, so the von Neumann algebra it generates includes the projectors onto the $+1$, $0,$ and $-1$ eigenspaces of $V_f$,\footnote{For example, the $+1$ eigenspace projector $\Pi_+$ is obtained by acting on $V_f$ with the indicator function that is equal to one at $+1$ and equal to zero everywhere else on the real line.} which are the projections onto the positive and negative parts of the spectrum of $\phi[f].$
By combining these projectors with bounded functions of $|\phi[f]|$, one can obtain any bounded function of $\phi[f].$\footnote{If $\psi$ is a bounded function on the spectrum of $\phi[f]$, with $\psi_{\pm}$ its restrictions to the positive and negative portions of the spectrum, then one has
\begin{equation}
	\psi(\phi[f]) = \Pi_+ \psi_+(|\phi[f]|) + \Pi_- \psi_-(-|\phi[f]|) + \psi(0) \Pi_0,
\end{equation}
with $\Pi_{\pm}$ the positive/negative spectral projections of $\phi[f]$ and $\Pi_0$ the projector onto the kernel of $\phi[f].$}
This guarantees that the von Neumann algebra generated by the self-adjoint elements $\phi[f]$ is contained in the minimal affiliated algebra $\A_{\omega}$:
\begin{equation}
	\A_{\text{self-adjoint}} \subseteq \A_{\omega}.
\end{equation}

To see the converse, it is easiest to show the equivalent statement about commutants,
\begin{equation}
	\A_{\text{self-adjoint}}' \subseteq \A_{\omega}'.
\end{equation}
In \cite[appendix A]{Sorce:paper1}, it was shown that the commutant of $\A_{\omega}$ is the set of all bounded operators that commute with all operators in $\A_0$ and all adjoints of operators in $\A_0.$\footnote{While the abstract $\ast$-algebra is closed under the abstract  ``adjoint'' operation, this operation need not coincide with the Hilbert space adjoint in the GNS representation --- due to domain issues --- so $\A_0$ need not be closed under adjoints in this sense. In the free field case, this distinction is not important, since we already showed that $\phi[f]{\dagger}$ is the closure of $\phi[f^*].$}
When one talks about a bounded operator $T$ commuting with an unbounded operator $a,$ one means that $T$ preserves the domain of $a,$ and that one has $T a = a T$ on that domain.
So we are now aiming to show that if a bounded operator commutes with all of the self-adjoint operators $\phi[f]$ with $f$ real, then it commutes with all operators in $\A_0$ and their adjoints.

To show that a bounded operator $T$ commutes with an unbounded operator $a,$ it suffices to show commutativity on a ``core'' for $a.$
A core for an unbounded operator is a subspace of the domain such that the full action of $a$ can be determined by taking limits from this subspace.
By construction, all of the operators coming from $\A_0$ are initially defined on the GNS domain $V_{\omega},$ then ``closed'' by taking limits.
So all of these operators have $V_{\omega}$ as a core.

Now, suppose that $T$ formally commutes with all $\phi[f]$ for $f$ real.
Consider a product $\phi[f_1] \dots \phi[f_n],$ with each $f_j$ real.
Suppose that $|\psi\rangle$ is in the GNS domain $V_{\omega}$.
Then because $T$ formally commutes with $\phi[f_n],$ we know that $T|\psi\rangle$ is in the domain of $\phi[f_n],$ and satisfies
\begin{equation}
	\phi[f_n] T |\psi\rangle
		= T \phi[f_n] |\psi\rangle.
\end{equation}
But $\phi[f_n] |\psi\rangle$ is in the domain of $\phi[f_{n-1}];$ since $T$ formally commutes with $\phi[f_{n-1}],$ the vector $T \phi[f_n] |\psi\rangle$ is in the domain of $\phi[f_{n-1}],$ and we have
\begin{equation}
	\phi[f_{n-1}] \phi[f_n] T |\psi\rangle
	= T \phi[f_{n-1}] \phi[f_n] |\psi\rangle.
\end{equation}
Clearly we can proceed by induction to learn that $T|\psi\rangle$ is in the domain of $\phi[f_1] \dots \phi[f_n]$ and that $T$ commutes with this field monomial on the GNS domain.
The same obviously holds for any linear combination of field monomials, and since the field is linear ($\phi[f + i g] = \phi[f] + i \phi[g]$) an arbitrary field polynomial can be written as a linear combination of field monomials with purely real smearing functions.
It follows that in the case of the free field algebra, we have $\A_{\text{self-adjoint}}' \subseteq \A_{\omega}',$ and the von Neumann algebra generated by the self-adjoint operators $\phi[f]$ with $f$ real is exactly the same as the minimal von Neumann algebra for which the full $*$-algebra $\A_0$ is affiliated.

Now we turn our attention to the Weyl operators.
For $f$ real, the operator $\phi[f]$ is self-adjoint, so one can consider its exponential $e^{i \phi[f]}.$
This is called a \textit{Weyl operator} for the free field.
They can be useful to consider for certain calculations as the Weyl operators are bounded operators, while the field operators $\phi[f]$ are unbounded.
In fact, some approaches to free field theory take the Weyl operators as the core objects of the theory, and start with an abstract C$^*$ algebra of free fields.

The set of Weyl operators $e^{i \phi[f]}$ generates the exact same von Neumann algebra as the set of self-adjoint field operators $\phi[f].$
Since $e^{i \phi[f]}$ is a bounded function of $\phi[f],$ it is obviously contained in the von Neumann algebra generated by $\phi[f].$
Conversely, by the spectral theorem, one has for any vectors $|\psi\rangle$ and $|\chi\rangle$ a measure $\nu_{\psi, \eta}$ on the real line such that the formula
\begin{equation}
	\langle \psi | \chi(\phi[f]) | \eta\rangle
		= \int d\nu_{\psi,\eta}(t) \chi(t)
\end{equation}
holds for any bounded function $\chi$ of the real line.
If we take $\chi$ to be smooth and compactly supported, then we can write it in terms of its Fourier transform to obtain
\begin{equation}
	\langle \psi | \chi(\phi[f]) | \eta\rangle
	= \int d\nu_{\psi,\eta}(t) \int \frac{dp}{2\pi} \hat{\chi}(p) e^{i p t}.
\end{equation}
When the integrand is replaced by its absolute value it remains integrable, so one can switch the order of integration to obtain
\begin{equation}
	\langle \psi | \chi(\phi[f]) | \eta\rangle
	= \int \frac{dp}{2\pi} \hat{\chi}(p) \int d\nu_{\psi,\eta}(t) e^{i p t}
	= \int \frac{dp}{2\pi} \hat{\chi}(p) \langle \psi | e^{i p \phi[f]} |\eta\rangle.
\end{equation}
This tells us that the equation
\begin{equation}
	\chi(\phi[f])
		= \int \frac{dp}{2 \pi} \hat{\chi}(p) e^{i p \phi[f]}
\end{equation}
holds when the integral on the right-hand side is interpreted in the sense of weak convergence.
Consequently, each operator $\chi(\phi[f])$ is contained in the von Neumann algebra generated by the operators $e^{i p \phi[f]}$ with varying $p$.

While we only showed this for the smooth, compactly supported functions $\chi,$ it is generally true that these operators generate all bounded functions of $\phi[f].$
For if $\theta$ is a bounded function on the real line, one has
\begin{equation}
	|\langle \psi | \theta(\phi[f]) | \eta \rangle
		- \langle \psi | \chi(\phi[f]) |\eta\rangle |
		\leq \int d\nu_{\psi, \eta}(t) |\theta(t) - \chi(t)|,
\end{equation}
and on any compact subset of the real line one can approximate $\theta$ arbitrarily well by smooth compactly supported functions $\chi$ in $L^1(\nu_{\psi,\eta})$, since such functions are dense in any $L^1$ space.
The measure $\nu_{\psi,\eta}$ is finite on the full real line, so the right-hand side of the above inequality can be made arbitrarily small by approximating $\theta$ using $\chi$ within a large compact set, and choosing the compact set large enough that the measure is small outside of that set.

Finally, let us comment that we can enlarge the generating set for ${\cal A}_{\omega}$ from Weyl operators of real test functions, to Weyl operators associated with any real element $|\psi\rangle_{\mu}$ of the completed phase space $\K_{\mu}.$ 
These operators, denoted (formally, for now) by $e^{i \tilde{\phi}[\psi]}$, make sense because the map from real test functions $f$ to Weyl operators $e^{i \tilde{\phi}[f]}$ is continuous with respect to the $\mu$-topology in the domain and the strong operator topology in the range, which we will show momentarily. 
Moreover, since von Neumann algebras are closed under strong limits, the von Neumann algebra generated by $e^{i\tilde \phi[\psi]},$ for real $\ket \psi _\mu\in \K_{\mu}$, clearly coincides with that generated using only the (non-completed) space of test functions.

Let us show the continuity of the map $f\mapsto e^{i \tilde{\phi}[f]}$ claimed above. 
For any real sequence (or net) $f_n \to 0$ in the $\mu$ norm, one has
\begin{equation}
	e^{i \tilde{\phi}[f_n]} e^{i \tilde{\phi}[g]} |\omega\rangle
	= e^{i \tilde{\phi}[f_n + g]} e^{\frac{i}{2} \Omega[f_n, g]} |\omega\rangle,
\end{equation}
hence
\begin{equation}
\begin{aligned}
    \big\lVert e^{i \tilde{\phi}[f_n]} e^{i \tilde{\phi}[g]} |\omega\rangle - e^{i \tilde{\phi}[g]} |\omega\rangle\big\rVert^2
	&= 2 - e^{i \Omega[f_n, g]} \langle \omega | e^{i \tilde{\phi}[f_n]} | \omega \rangle - e^{- i \Omega[f_n, g]} \langle \omega | e^{- i \tilde{\phi}[f_n]} | \omega \rangle\\
    &=2 - \big( e^{2 i \langle f_n | R | g \rangle_{\mu}} + e^{- 2 i \langle f_n | R | g \rangle_{\mu}} \big) e^{- \langle f_n | f_n\rangle_{\mu}/2} \to 0,
\end{aligned}
\end{equation}
where in the second step we used a power series for the Weyl operator together with the Gaussianity of $\omega$ (see near equation \eqref{eq:gaussian-expectation} below for details).
So $e^{i \tilde{\phi}[f_n]}$ converges to the identity when acting on a dense set of states;\footnote{Note that the states $e^{i \tilde{\phi}[g]} |\omega\rangle$ have dense span in $\H_{\omega},$ since the Weyl operators generate the same von Neumann algebra as the field operators, and the field operators acting on $|\omega\rangle$ have dense span.
\label{foot:field-op-for-K-mu}} since every operator in this sequence has norm one, one can easily show using the triangle inequality that $e^{i \tilde{\phi}[f_n]}$ converges to the identity when acting on any state.
The map from real test functions to Weyl operators is therefore strongly continuous; one can use this map to define a Weyl operator $e^{i \tilde{\phi}[\psi]}$ for any real $|\psi\rangle_{\mu}$.

Now that we have constructed  these generalized Weyl operators, we can use them, as suggested by our notation so far, to define operators $\tilde \phi[\psi]$ acting on $\H_\omega$ for any $\ket{\psi}_\mu$ in the completed phase space $\K_\mu$, not just for test functions.
This is because the arguments above guarantee that the map $t\to e^{i\tilde{\phi}[t\psi]}$ is a strongly continuous one-parameter unitary group for any real $\ket{\psi}_\mu$ in $\K_\mu$. 
By Stone's theorem, such a group admits a unique self-adjoint generator, which we define to be $\tilde \phi[\psi]$.
Field operators $\phi[\psi]$ for non-real $\psi$ can be obtained by using this procedure to obtain the real and imaginary parts independently.

\subsection{Further comments on Weyl operators}
\label{app:Weyl}

We dedicate this subsection to a discussion of the following identity:
\begin{equation}
\label{eq:gaussian-expectation}
	\langle \omega | e^{i \tilde{\phi}[f]} |\omega\rangle
		= e^{- \langle f | f \rangle_{\mu} / 2},
\end{equation}
which holds for any Gaussian state $\omega$ and Weyl operator $e^{i \tilde{\phi}[f]}$ with $f$ real, and which moreover can be used to answer the question of which vectors $|\psi\rangle$ in $\H_{\omega}$ induce Gaussian states on $\A_0.$

As indicated at the end of the preceding subsection, equation \eqref{eq:gaussian-expectation} can be verified with a power-series representation for the Weyl operator.
In particular, even though the operator $\tilde{\phi}[f]$ is unbounded, the expression
\begin{equation}
	\langle \omega | e^{i \tilde{\phi}[f]} |\omega\rangle
		= \sum_n \frac{i^n}{n!} \langle \omega | \tilde{\phi}[f]^n |\omega\rangle
\end{equation}
is valid so long as the power series converges.
Using the Gaussian form of the correlators, one computes this as
\begin{align}
	\begin{split}
	\langle \omega | e^{i \tilde{\phi}[f]} |\omega\rangle
		& = \sum_k \frac{i^{2k}}{(2k)!} \langle \omega | \tilde{\phi}[f]^{2k} |\omega\rangle \\
		& = \sum_k \frac{(-1)^k}{(2k)!} (\langle f | f \rangle_{\mu})^k \times \text{\# ways of pairing up $2k$ elements}\\
		& = \sum_k \frac{(-1)^k}{(2k)!} (\langle f | f \rangle_{\mu})^k \frac{(2k)!}{k! 2^k} \\
		& = \sum_k \frac{1}{k!}  \left(- \frac{1}{2} \langle f | f \rangle_{\mu} \right)^k \\
		& = e^{- \langle f | f \rangle_{\mu}/2}.
	\end{split}
\end{align}
If we had not shifted the Weyl operators by the one-point function, then we would instead have
\begin{align}
	\begin{split}
		\langle \omega | e^{i \phi[f]} |\omega\rangle
		& = \langle \omega | e^{i \tilde{\phi}[f]} |\omega\rangle e^{i \omega(\phi[f])} \\
		& = e^{- \langle f | f \rangle_{\mu}/2} e^{i \omega(\phi[f])}.
	\end{split}
\end{align}

In fact, \textit{any} state that has Weyl expectation values of this form will be Gaussian!
That is, if a state $|\psi\rangle \in \H_{\omega}$ satisfies
\begin{align} \label{eq:app-psi-gaussian}
	\begin{split}
		\langle \psi | e^{i \phi[f]} |\psi\rangle
			& = e^{- \langle f | f \rangle_{\nu}/2} e^{i \psi(\phi[f])},
	\end{split}
\end{align}
where $\nu$ is an inner product on real test functions and $\psi(\phi[f])$ satisfies $\psi(\phi[f])^* = \psi(\phi[f^*])$, then $|\psi\rangle$ automatically induces a Gaussian state on the $*$-algebra $\A_0.$
The functional on the $*$-algebra is induced by
\begin{align}\label{eq:n-pt-from-Weyl}
	\begin{split}
		\psi(\phi[f_1] \dots \phi[f_n])
		& \equiv (-i)^n \left. \frac{\del^n}{\del t_1 \dots \del t_n} \langle \psi | e^{i t_1\phi[ f_1]} \dots e^{i t_n\phi[ f_n]}|\psi\rangle \right|_{t_1 = \dots = t_n = 0}.
	\end{split}
\end{align}
As in appendix \ref{app:normal-ordering}, the connected $n$-point correlators are determined by
\begin{equation}
	\psi_n^c(\phi[f_1] \dots \phi[f_n])
		= (-i)^n \left.\frac{\del^n}{\del t_1 \dots \del t_n} \log\langle \psi | e^{i t_1 \phi[f_1]} \dots e^{i t_n \phi[f_n]}) | \psi \rangle \right
	|_{t_1=\dots=t_n=0},
\end{equation}
and using the Baker-Campbell-Hausdorff formula this can be computed as
\begin{align}
	\begin{split}
		& \psi_n^c(\phi[f_1] \dots \phi[f_n]) \\
		& \quad = (-i)^n \left.\frac{\del^n}{\del t_1 \dots \del t_n} \log\left( \langle \psi | e^{i \phi[\sum_j t_j f_j]} | \psi \rangle
		e^{\frac{i}{2} \sum_{j < k} t_j t_k \Omega[f_j, f_k]}\right) \right|_{t_1=\dots=t_n=0} \\
		& \quad = (-i)^n \frac{\del^n}{\del t_1 \dots \del t_n} \bigg(- \frac{1}{2} \langle \sum_j t_j f_j| \sum_k t_k f_k\rangle_{\nu} + i \psi(\phi[\sum_j t_j f_j]) + \frac{i}{2} \sum_{j < k} t_j t_k \Omega[f_j, f_k]\bigg) \bigg|_{t_1=\dots=t_n=0},
	\end{split}
\end{align}
and this vanishes for $n > 2,$ so $\psi$ is Gaussian.
Moreover, since we assumed that there is a vector $|\psi\rangle \in \H_{\omega}$ satisfying equation \eqref{eq:app-psi-gaussian}, positivity is automatic, and hence we may conclude that $\psi$ is a Gaussian \textit{state}.

What if we did not assume the existence of a vector representative for $\psi$?
In particular, we could have only assumed the existence of an abstract functional $\psi$ satisfying
\begin{align}\label{eq:app-psi-gaussian-abstract}
	\begin{split}
		\psi(e^{i \phi[f]})
		& = e^{- \langle f | f \rangle_{\nu}/2} e^{i \psi(\phi[f])}.
	\end{split}
\end{align}
In this case, the above calculation shows that $\psi$ defines a Gaussian functional on $\A_0,$ but it does not explicitly show that it is a state, i.e., that $\psi$ is positive.
As in appendix \ref{app:normal-ordering}, to check positivity of $\psi,$ it suffices to check positivity of the vacuum-subtracted two point function, i.e., to check
\begin{equation} \label{eq:app-psi-fs}
	\psi\Big(\sum_{j,k} c_j^* c_k (\phi[f_j^*] - \psi(\phi[f_j^*])) (\phi[f_k] - \psi(\phi[f_k])) \Big) \geq 0.
\end{equation}
In fact, it suffices to check this in the case that the test functions $f_j$ are all real (but with coefficients $c_j$ still allowed to be complex).
For if we write $f_j$ in terms of real and imaginary parts as $f_j = u_j + i v_j$, then the condition we are trying to check becomes
\begin{equation}
		 \psi\Big(\sum_{j,k} c_j^* c_k \tilde{\phi}[u_j - i v_j] \tilde{\phi}[u_k + i v_k] \Big)  \geq 0,
\end{equation}
where now the tildes denote the subtraction of the $\psi$ one-point function.
Writing $u = \sum_j c_j u_j$ and $v = \sum_j c_j v_j,$ this is
\begin{equation}\label{eq:pos-psi-real-smear}
    \psi(\tilde{\phi}[u^*] \tilde{\phi}[u]) + \psi(\tilde{\phi}[v^*] \tilde{\phi}[v]) - 2\Im (\psi(\tilde\phi[u^*]\tilde \phi[v])) \geq 0.
\end{equation}
Assuming positivity of $\psi$ on complex linear combinations of real test functions, the first two terms are non-negative. The same assumption also ensures that $\psi$ satisfies the Cauchy--Schwarz inequality on this domain, allowing us to bound the last term in such a way that \eqref{eq:pos-psi-real-smear} follows immediately.

Now, returning to equation \eqref{eq:app-psi-fs} in the case where each function $f_j$ is real, we expand this expression as
\begin{equation}
	\psi\Big(\sum_{j,k} c_j^* c_k \phi[f_j] \phi[f_k] \Big) \geq \Big| \sum_j c_j \psi(\phi[f_j]) \Big|^2.
\end{equation}
Applying the abstract-functional version of \eqref{eq:n-pt-from-Weyl} together with the usual identity \eqref{eq:app-psi-gaussian-abstract} to the left-hand side of the above equation, this can be rewritten as 
\begin{align}
	\begin{split}
		\Big| \sum_j c_j \psi(\phi[f_j]) \Big|^2
			& \leq \sum_{j,k} c_j^* c_k (-1)  \frac{\del^2}{\del r \del s} e^{- \langle r f_j + s f_k | r f_j + s f_k\rangle_{\nu}/2} e^{i \psi(\phi[r f_j + s f_k])} e^{\frac{i}{2} r s \Omega[f_j, f_k]} \Big|_{r=s=0} \\
			& = \frac{1}{2}\sum_{j,k} c_j^* c_k \left( \langle f_j | f_k \rangle_{\nu} + \langle f_k | f_j \rangle_{\nu} + 2 \psi(\phi[f_j]) \psi(\phi[f_k]) - i \Omega[f_j, f_k] \right).
	\end{split}
\end{align}
Canceling the left-hand side with the middle term on the right, the desired inequality becomes
\begin{align}
	\begin{split}
		0
			& \leq \sum_{j,k} c_j^* c_k \left( \langle f_j | f_k \rangle_{\nu} + \langle f_k | f_j \rangle_{\nu} - i \Omega[f_j, f_k] \right).
	\end{split}
\end{align}
If we assume that $\nu$ is an inner product on the space of real test functions, i.e., that it is symmetric on the space of real test functions, then this becomes
\begin{align}
	\begin{split}
		0
		& \leq \sum_{j,k} c_j^* c_k \left(2 \langle f_j | f_k \rangle_{\nu} - i \Omega[f_j, f_k] \right).
	\end{split}
\end{align}
Finally, putting $f = \sum_j c_j f_j,$ this expression becomes
\begin{align}
	\begin{split}
		\langle f | f \rangle_{\nu}
			\geq \frac{i}{2} \Omega[f^*, f].
	\end{split}
\end{align}
So we see that the positivity requirement reduces to a simple boundedness condition for $\Omega$ in terms of $\nu.$
This boundedness condition is actually the same as inequality \eqref{eq:mu-Omega-inequality}; for putting $f = u + i v,$ the above inequality is
\begin{equation} 
	\langle u | u \rangle_{\nu} + \langle v | v \rangle_{\nu}
		\geq - \Omega[u, v],
\end{equation}
and since $\Omega$ is antisymmetric and this must hold for all $u$ and $v,$ we may simplify the inequality as
\begin{equation} \label{eq:app-Omega-sum-inequality}
	\langle u | u \rangle_{\nu} + \langle v | v \rangle_{\nu}
	\geq |\Omega[u,v]|.
\end{equation}
Inequality \eqref{eq:mu-Omega-inequality} can be rewritten as
\begin{equation}
	2 \sqrt{\langle u | u \rangle_{\nu}} \sqrt{\langle v | v \rangle_{\nu}} \geq |\Omega[u, v]|.
\end{equation}
We can obtain this from inequality \eqref{eq:app-Omega-sum-inequality} by substituting
\begin{equation}
    u \mapsto u \sqrt{\frac{\lVert v \rVert_{\nu}}{\lVert u \rVert_{\nu}}}, \quad \quad 
    v
	\mapsto v \sqrt{\frac{\lVert u \rVert_{\nu}}{\lVert v \rVert_{\nu}}}.
\end{equation}
The other direction is obvious, since one always has $(\lVert u \rVert_{\nu})^2 + (\lVert v \rVert_{\nu})^2 \geq 2 \lVert u \rVert_{\nu} \lVert v \rVert_{\nu}.$
So we conclude that positivity of $\psi$ is exactly the requirement that inequality \eqref{eq:mu-Omega-inequality} hold as a relation between $\nu$ and $\Omega.$

\subsection{More on pure states}
\label{app:pure}

A state $\omega$ on a $\ast$-algebra  $\A_0$ is said to be pure if it cannot be decomposed as a nontrivial convex combination of other states,
\begin{equation}
	\omega = p \omega_1 + (1-p) \omega_2.
\end{equation}
In the case of a free field algebra, an equivalent condition --- see for example\footnote{Note that in \cite{Sorce:paper1} one showed that purity was equivalent to vanishing of the \textit{weak} commutant, while the claim here is about the \textit{strong} commutant. But by examining the definitions of that paper it is very easy to see that in the present case, because all field operators have normal closures, the weak and strong commutants coincide.} \cite[appendix A.7]{Sorce:paper1} --- is that the von Neumann algebra generated by $\A_0$ on $\H_{\omega}$ has trivial commutant.
We claimed in appendix \ref{app:phase-space} that when $\omega$ is pure, the inequality
\begin{equation}
	\langle f | f \rangle_{\mu} \langle g | g \rangle_{\mu} \geq \frac{1}{4} |\Omega(f, g)|^2, \qquad f, g \text{ real}
    \label{eq:omega-mu-inequality}
\end{equation}
is saturated in the sense that for any real $f,$ there exists a real $g$ such that the above is an equality.
First, we show the slightly weaker statement
\begin{equation} \label{eq:supremum-saturation}
	\langle f | f \rangle_{\mu} = \sup_{g \text{ real, nonzero}} \frac{|\Omega(f, g)|^2}{4 \langle g | g \rangle_{\mu}},
\end{equation}
then we will show that the supremum is saturated; this automatically implies inequality \eqref{eq:omega-mu-inequality} is also saturated.
The main utility of exploring these inequalities is to reveal that the operator $R$ defined in appendix \ref{app:phase-space} satisfies $R^2 = -1,$ along with a closely related condition for ``centrally pure'' states.

Suppose, toward contradiction, that equation \eqref{eq:supremum-saturation} does \textit{not} hold.
Then there exists some real $f$ such that
\begin{equation}
	\langle f | f \rangle_{\mu} \gneq \sup_{g \text{ real, nonzero}} \frac{|\Omega(f, g)|^2}{4 \langle g | g \rangle_{\mu}}.
\end{equation}
One has
\begin{align}
	\begin{split}
	\omega(\tilde{\phi}[- i f - g] \tilde{\phi}[i f - g])
		& = \langle i f - g | i f - g \rangle_{\mu} - \frac{i}{2} \Omega(- i f - g, i f - g) \\
		& = \langle f | f \rangle_{\mu} + \langle g | g \rangle_{\mu} + \Omega(f, g),
	\end{split}
\end{align}
hence
\begin{align}
	\begin{split}
	\inf_{g \text{ s.t. $\langle g | g \rangle_{\mu} = \langle f | f \rangle_{\mu}$}} \omega(\tilde{\phi}[- i f - g] \tilde{\phi}[i f - g])
		& = 2 \langle f | f \rangle_{\mu} - \sup_{g \text{ s.t. $\langle g | g \rangle_{\mu} = \langle f | f \rangle_{\mu}$}} |\Omega(f, g)| \\
		& \gneq 0.
	\end{split}
\end{align}
This tells us that within the one-particle Hilbert space $\H_{\omega}^{(1)},$ the state $i \tilde{\phi}[f]|\omega\rangle$ cannot be approximated arbitrarily well by states of the form $\tilde{\phi}[g]|\omega\rangle$ with $g$ real and,
consequently, that the real subspace of $\H_{\omega}^{(1)}$ is not dense in $\H_{\omega}^{(1)}.$
This means that if we treat $\H_{\omega}^{(1)}$ as a real Hilbert space with inner product
\begin{equation}
	(\tilde{\phi}[x] \omega, \tilde{\phi}[y] \omega)
		= \text{Re}\braket*{\tilde{\phi}[x] \omega}{\tilde{\phi}[y] \omega},
\end{equation}
then there exists a state $|\psi\rangle \in \H_{\omega}^{(1)}$ that is real-orthogonal to the subspace spanned by $\tilde{\phi}[g] |\omega\rangle$ with $g$ real.
As explained in appendix \ref{app:generalized-Weyl}, one can create a nontrivial self-adjoint operator $\chi_{i \psi}$ that commutes with every $\tilde{\phi}[g].$\footnote{In the language of that section (see equation \eqref{eq:chi-commutator}), 
\begin{equation}
    [\chi_{i \psi},\tilde{\phi}[g]]
    =
    [\chi_{i \psi},\chi_{\phi[g]\ket{\omega}}]
    =
    - 2i \text{Re}\langle \psi | \phi[g]\omega\rangle
\end{equation}
for all real $g$, and this vanishes under the current definition of $\psi$.}
Consequently, bounded functions of $\chi_{i \psi}$ live in the commutant of the von Neumann algebra generated by the free field, meaning that $\omega$ cannot be pure.

We now know that equation \eqref{eq:supremum-saturation} holds for pure states.
This in fact means that the operator $R$ defined in appendix \ref{app:phase-space} satisfies $R^2 = -1.$
To see this, one rewrites the equation in terms of $R$ as
\begin{equation}
	\langle f | f \rangle_{\mu}
		= \sup_{g \text{ real, nonzero}} \frac{|\langle f | R g \rangle_{\mu}|^2}{\langle g | g \rangle_{\mu}}.
\end{equation}
Hence we have
\begin{equation}
	\langle f | f \rangle_{\mu}
		= \sup_{g \text{ real} \,\langle g | g \rangle_{\mu} = 1} |\langle f | R g \rangle_{\mu}|^2,
\end{equation}
since (using Cauchy-Schwarz) saturation is achieved by $|g\rangle_{\mu} = R^{\dagger} |f\rangle_{\mu} / \lVert R^{\dagger} |f\rangle_{\mu}\rVert.$
This tells us that $R^{\dagger}$ is an isometry on the real subspace of $\K_{\mu}$; since $R^{\dagger}$ takes real solutions to real solutions and $\mu$ splits as in equation \eqref{eq:mu-complex-splitting}, the operator $R^{\dagger}$ is an isometry on all of $\K_{\mu}.$
This gives $R R^{\dagger} = 1,$ and using $R^{\dagger} = -R$ we conclude $R^2 = -1,$ as desired.
Consequently, we actually also learn that $|g\rangle_{\mu} = R^{\dagger} |f\rangle_{\mu} / \lVert R^{\dagger} |f\rangle_{\mu} \rVert$ saturates the supremum in equation \eqref{eq:supremum-saturation}, as claimed above.
Note that when we have $R^2 = -1,$ the only points in the spectrum of $R$ are $\pm i.$
Consequently, the isometric embedding of $\H_{\omega}^{(1)}$ into $\K_{\mu}$ from appendix \ref{app:phase-space} is actually a unitary equivalence between $\H_{\omega}^{(1)}$ and the $+i$ eigenspace of $R$.

Another interesting class of states are those that are ``centrally pure.''
This means that while the von Neumann algebra $\A_{\omega}$ generated by $\A_0$ has a commutant, the commutant is contained entirely within $\A_{\omega},$ i.e., the commutant is the center.
In this case, by its defining equation \eqref{eq:def-of-R}, $R$ will have a kernel, but we will now show that we have $R^2 = -1$ on the orthocomplement of the kernel.

We note that for real $u,$ the condition $R^2 |u\rangle_{\mu} = - |u\rangle_{\mu}$ is equivalent to the statement that $\phi[u + i R u]$ annihilates $|\omega\rangle$ in the GNS Hilbert space, for this occurs if and only if $\ket{u + i R u}_\mu$ lies in the $(-i)$ eigenspace of $R$, and for real $\ket u_\mu$ this is equivalent to $R^2\ket u_\mu=-\ket u_\mu$.
So we consider a real state $|u\rangle_{\mu}$ in the orthocomplement of the kernel of $R$, and we define
\begin{equation}\label{eq:null-psi}
    |\psi\rangle = \phi[u + i R u] |\omega\rangle,
\end{equation}
with the goal of showing $|\psi\rangle = 0.$
Using the isometry $U$ from equation \eqref{eq:app-U-def}, we can compute the norm of $\ket{\psi}$ as
\begin{equation}
	\langle \psi | \psi\rangle
		= \langle \phi[u + i R u]\,\omega | \psi \rangle
		= \langle u + i R u | \sqrt{1 - i R} U \psi\rangle_{\mu}.
\end{equation}
Since $\ket u_\mu$ lies in the orthocomplement of the kernel of $R$, in order to show $\ket{\psi}=0$, it suffices show that $\ket{U\psi}_{\mu}$ lies in the kernel of $R$.

The strategy is to use a duality relation for commutants of free quantum fields, first proved in \cite{Araki:lattice}.
For completeness, we provide proofs of all the relevant facts in appendix \ref{app:one-particle-modular}, following \cite[section 6.5]{Longo:lectures}.
The idea is that one can compute, using again the isometry \eqref{eq:app-U-def}, for any real $f,$ the overlap
\begin{equation}
	\langle \phi[f] \omega | \psi \rangle
		= \langle f | (1 - i R) |u + i R u\rangle_{\mu}
		= \langle f | u \rangle_{\mu} + \langle f | R^2 | u\rangle_{\mu},
\end{equation}
and because $R$ takes real states to real states, this expression is real.
Consequently, if we denote by $X$ the (closure of the) set of states $\phi[f]|\omega\rangle$ with $f$ real, then the state $|\psi\rangle$ has real overlaps with all the states in $X$, and hence lies in the set $X'$ defined in appendix \ref{app:cs-subspace}.
But from that same appendix, one can actually show that, for a centrally pure state, $X'\subseteq X$.\footnote{Recall from appendix \ref{app:generalized-Weyl} that $\A_\omega = \A(X)$. The same appendix gives $\A(X') \subseteq \A(X)' = \A_\omega'$, and by central purity $\A_\omega' \subseteq \A_\omega = \A(X)$, so that $\A(X') \subseteq \A(X)$. It follows that $\A(X') = \A(X') \cap \A(X)$. The final equation of appendix \ref{app:Haag-duality} then implies $\A(X') = \A(X' \cap X)$, and the discussion at the end of appendix \ref{app:cs-subspace} yields $X' = X' \cap X$, hence $X' \subseteq X$.}
In particular, it follows that $|\psi\rangle$ is in $X$, meaning there is a sequence of real test functions $g_n$ with
\begin{equation}
	|\psi\rangle = \lim_n \phi[g_n] |\omega\rangle.
    \label{eq:psi-in-X}
\end{equation}

This reality property will help us show that $\ket{U\psi}_{\mu}$ is in the kernel of $R$ because it allows us to strategically pull out factors of the operator $\Gamma$ which flips the sign of $iR$ via $\Gamma(iR)\Gamma=-iR$.
In particular, using the isometry $U$ from appendix \ref{app:phase-space}, we can translate equation \eqref{eq:psi-in-X} to the space $\K_{\mu}$ as 
\begin{equation}
	|U \psi\rangle_{\mu}
		= \lim_n \sqrt{1 - i R} |g_n\rangle_{\mu}.
\end{equation}
Because each $g_n$ is real, we may freely insert a $\Gamma$ and move it through $\sqrt{1 - i R}$ to obtain
\begin{equation}
	\ket{U\psi}_\mu
		= \lim_n \Gamma \sqrt{1 + i R} |g_n \rangle_{\mu}
		= \lim_n \Gamma \sqrt{\frac{1+ i R}{1 - i R}} U\phi[g_n]|\omega\rangle
		= \Gamma \sqrt{\frac{1+ i R}{1 - i R}} \ket{U\psi}_\mu.
\end{equation}
On the other hand, from the explicit form of $|\psi\rangle$ given in \eqref{eq:null-psi}, we have
\begin{equation}\label{eq:U-psi-intermediate}
	\ket{U\psi}_\mu
		= \sqrt{1 - i R} |u + i R u\rangle_{\mu}
		= \sqrt{1 - i R} (1 + i R) |u\rangle_{\mu}
		= \sqrt{1 + i R} \sqrt{1 + R^2} |u\rangle_{\mu}.
\end{equation}
Since $u$ is real and $R$ preserves reality, we obtain
\begin{equation}
	\ket{U\psi}_\mu = \sqrt{1 + i R} \Gamma \sqrt{1 + R^2} |u\rangle_{\mu} = \Gamma \sqrt{1 - i R} \sqrt{1 + R^2} |u\rangle_{\mu}
	= \Gamma \sqrt{\frac{1 - iR}{1 + i R}} \ket{U\psi}_\mu,
\end{equation}
where in the last equality we inserted  a factor of $\sqrt{1+iR}^{-1}\sqrt{1+iR}$ and used \eqref{eq:U-psi-intermediate}.
Putting this relation together with the previous one, we find
\begin{equation}
	\sqrt{\frac{1 - iR}{1 + i R}} \ket{U\psi}_\mu
		= \sqrt{\frac{1 + iR}{1 - i R}} \ket{U\psi}_\mu,
\end{equation}
from which we may freely conclude that $\ket{U\psi}_\mu$ is in the kernel of $R$.

This completes the claim $R^2 |u\rangle_{\mu} = - |u\rangle_{\mu}$ for $|u\rangle_{\mu}$ in the orthocomplement of the kernel of $R.$
Consequently, for centrally pure states, $R$ has a discrete spectrum and its only possible eigenvalues are $+i, 0,$ and $-i.$

\section{Adjoints of ladder operators}
\label{app:ladder-adjoints}

This appendix fills in the details of a claim made in appendix \ref{app:fock}.
Namely, given a Gaussian state $\omega,$ we defined the creation and annihilation operators
\begin{equation}
\label{eq:phi-is-a-adag}
	\tilde{\phi}[f] = a[f] + a[f^*]^{\dagger},
\end{equation}
and claimed that upon taking closures, $a[f]^{\dagger}$ is genuinely the adjoint of $a[f]$.\footnote{We again emphasize that this implies essential self-adjointness for the field operators $\tilde \phi[f]$ for $f$ real.}

To prove this, we recall the definition of the adjoint of an unbounded operator --- again, see \cite[section 2.3]{Sorce:modular} for a review.
If $T$ is an unbounded operator defined on a dense domain $D \subseteq \H$, then the adjoint $T^{\dagger}$ is defined on all vectors $|\psi\rangle$ such that the functional
\begin{equation}
	|\chi \rangle \mapsto \langle \psi | T \chi \rangle
\end{equation}
is bounded for $|\chi\rangle \in D.$
The Riesz lemma then guarantees that there exists a vector $T^{\dagger} |\psi\rangle$ satisfying
\begin{equation}
	\langle T^{\dagger} \psi | \chi \rangle = \langle \psi | T \chi\rangle,
\end{equation}
and density of $D$ tells us that $T^{\dagger} |\psi\rangle$ is unique.
This defines the action of $T^{\dagger}$ on its domain.

Now, in appendix \ref{app:fock}, we already explained that the adjoint of $a[f]$ is an extension of the creation operator $a[f]^{\dagger}$, which can be schematically written
\begin{equation}
    (a[f])^{\dag}\supseteq \text{``}a[f]^{\dagger}\text{''}.
\end{equation}
Consequently, $a[f]^{\dagger}$ agrees with the adjoint of $a[f]$ on its domain.
To show equality, it therefore suffices to show that every vector in the domain of the adjoint of $a[f]$ is in the domain of (the closure of) the creation operator.
To do this, it is important to know that the creation operators are bounded on any $n$-particle subspace.
As explained in appendix \ref{app:fock}, one can choose an orthonormal basis $\tilde{\phi}[f_j]|\omega\rangle$ for the one-particle Hilbert space, and then the vectors
\begin{equation}
	:\tilde{\phi}[f_1]^{p_1} \dots \tilde{\phi}[f_k]^{p_k} : |\omega\rangle
\end{equation}
with $p_1 + \dots + p_k = n$ will form an orthogonal basis for $\H_{\omega}^{(n)}.$
One then computes (using equation \eqref{eq:GNS-normal-overlaps})
\begin{equation} \label{eq:n-particle-basis-norms}
	\lVert :\tilde{\phi}[f_1]^{p_1} \dots \tilde{\phi}[f_k]^{p_k} : |\omega\rangle \rVert^2
		= p_1! \dots p_k!.
\end{equation}
For a generic test function $g,$ we may decompose it in the one-particle Hilbert space as $\tilde{\phi}[g] |\omega\rangle = \sum_{j} c_j \tilde{\phi}[f_j] |\omega\rangle.$
Using the equivalence from appendix \ref{app:fock}, which relates the $n$-particle space to the $n$-th symmetric tensor power of the one-particle Hilbert space, one can compute
\begin{equation}
	a[g^*]^{\dagger} :\tilde{\phi}[f_1]^{p_1} \dots \tilde{\phi}[f_k]^{p_k} : |\omega\rangle
		= \sum_j c_j :\tilde{\phi}[f_1]^{p_1} \dots \tilde{\phi}[f_j]^{p_j+1} \dots \tilde{\phi}[f_k]^{p_k} :|\omega\rangle.
\end{equation}
Combining with equation \eqref{eq:n-particle-basis-norms}, one finds
\begin{align}
	\begin{split}
		\frac{\left\lVert a[g^*]^{\dagger} :\tilde{\phi}[f_1]^{p_1} \dots \tilde{\phi}[f_k]^{p_k} : |\omega\rangle \right\rVert^2}{\left\lVert :\tilde{\phi}[f_1]^{p_1} \dots \tilde{\phi}[f_k]^{p_k} : |\omega\rangle \right\rVert^2}
		& = \sum_j |c_j|^2 (p_j+1) \\
		& \leq (n+1) \lVert \phi[g] |\omega \rangle \rVert^2,
	\end{split}
\end{align}
where in the last step we have upper bounded $p_j$ by $n$.
This shows that $a[g^*]^{\dagger}$ is bounded on the $n$-particle Hilbert space by $\sqrt{n+1} \lVert \phi[g] |\omega\rangle \rVert.$
A similar calculation shows that $a[g^*]$ is bounded on this space by $\sqrt{n} \lVert \phi[g] |\omega \rangle \rVert.$
So the creation and annihilation operators are bounded on $n$-particle vectors; consequently, in extending to closures, we may as well take these operators to be defined on every $n$-particle Hilbert space and not just on the GNS domain $V_{\omega}.$

Now we return to our main goal, which is to show that every vector in the domain of the adjoint of $a[f]$ is in the domain of the closure of $a[f]^{\dagger}.$
Let $|\psi\rangle$ be in the domain of the adjoint of $a[f].$
This means that the map 
\begin{equation}
	|\chi\rangle \mapsto \langle \psi | a[f] \chi \rangle
    \label{eq:adjoint-domain-map}
\end{equation}
is bounded for every finite-particle vector $|\chi\rangle.$
The vector $|\psi\rangle$ can be split into its $n$-particle amplitudes,
\begin{equation}
	|\psi\rangle = \oplus_n |\psi^{(n)}\rangle.
\end{equation}
Clearly we have
\begin{equation}
	|\psi\rangle = \lim_N \oplus_{n=0}^{N} |\psi^{(n)}\rangle.
\end{equation}
To show that $|\psi\rangle$ is in the domain of the (closure of the) creation operator $a[f]^{\dagger},$ it suffices to show that the sequence
\begin{equation} \label{eq:app-image-sequence}
	\lim_N \oplus_{n=0}^{N} a[f]^{\dagger} |\psi^{(n)}\rangle
\end{equation}
converges.\footnote{This just comes from the definition of a closed operator --- the domain of the closure of $a[f]^{\dagger}$ is the set of vectors that can be reached by limits of sequences in $V_{\omega}$ such that the ``image sequence'' obtained by acting with $a[f]^{\dagger}$ also converges.}
To see that the sequence \eqref{eq:app-image-sequence} converges, we use the fact that by boundedness of the map \eqref{eq:adjoint-domain-map} there is a constant $C$ with
\begin{equation}
	|\langle \psi | a[f] \chi \rangle|
		\leq C \lVert \chi \rVert
\end{equation}
for every $|\chi\rangle$ in the domain of $a[f].$
Next we note that the vector $\oplus_{n=0}^{N} a[f]^{\dagger} |\psi^{(n)}\rangle$ is itself in the domain of $a[f].$
So putting this vector in the role of $\chi$, we have
\begin{equation}
	\bigg| \sum_{n=0}^{N} \langle \psi | a[f] a[f]^{\dagger} \psi^{(n)}\rangle \bigg|
		\leq C \,\big\lVert\!\oplus_{n=0}^{N} a[f]^{\dagger} |\psi^{(n)}\rangle \big\rVert.
\end{equation}
The state $a[f] a[f]^{\dagger} |\psi^{(n)}\rangle$ is in the $n$-particle Hilbert space, so we are free to replace $\bra{\psi}$ by $\bra*{\psi^{(n)}}$ on the left-hand side, giving
\begin{align}
	\big\lVert \oplus_{n=0}^{N}  a[f]^{\dagger} |\psi^{(n)}\rangle \big\rVert
	\leq C 
\end{align}
The left-hand side is increasing in $N$, and bounded, so it must be a convergent sequence.
Orthogonality of the $n$-particle Hilbert spaces then tells us that the sequence of \textit{vectors}
\begin{equation}
	a[f]^{\dagger} \oplus_{n=0}^{N} |\psi^{(n)}\rangle
\end{equation}
is Cauchy, and hence convergent, which is what we wanted to show.

\section{One-particle modular theory}
\label{app:one-particle-modular}

The Tomita operator $S = J \Delta^{1/2}$ associated with a Gaussian state has an analogue, $s = j \delta^{1/2},$ acting only on the one-particle space $\H_{\omega}^{(1)}.$
These operators are best understood by writing the $n$-particle Fock spaces explicitly as $n$-th symmetric tensor powers of $\H_{\omega}^{(1)}$ via the identification of equation \eqref{eq:symmetric-power-Fock}:
\begin{equation}
	: \phi[f_1] \dots \phi[f_n]: |\omega\rangle \leftrightarrow \sqrt{n!} | \phi[f_1] \omega \rangle \otimes_S \dots \otimes_S | \phi[f_n] \omega \rangle.
\end{equation}
On the GNS domain, the connection between $S$ and $s$ is via the formula
\begin{equation}
	S \left( | \phi[f_1] \omega \rangle \otimes_S \dots \otimes_S | \phi[f_n] \omega \rangle\right)
	= | s \phi[f_1] \omega \rangle \otimes_S \dots \otimes_S | s \phi[f_n] \omega \rangle,
\end{equation}
and similarly for the connection between $J$ and $j,$ and the connection between $\Delta^{1/2}$ and $\delta^{1/2}.$
By mapping modular operators from the full GNS space $\H_{\omega}$ to the one-particle space $\H_{\omega}^{(1)},$ one obtains many simplifications that can be used, for example, to prove a general form of Haag duality.
This appendix elaborates that theory and its main results.
We mostly follow \cite{Figliolini:tomita} and \cite[section 6.5]{Longo:lectures}, though note that the Weyl operators there are not equivalent to our operators $e^{i \chi_\psi}$ below.\footnote{They work with the anti-self-adjoint operators $a_{\psi} - a_{\psi}^{\dagger}$, instead of the self-adjoint operators $a_{\psi} + a_{\psi}^{\dagger}.$}

To simplify the presentation in this appendix, we drop the ``tilde'' notation $\tilde{\phi}$ that was used in the previous two appendices; we will write $\phi$ for the vacuum-subtracted field.
For zero-mean Gaussian states, like those considered in the main text, there is no difference between these two conventions.

\subsection{Key points}
\begin{itemize}
	\item Let $\omega$ be a Gaussian state with one-particle Hilbert space $\H_{\omega}^{(1)}.$
	For any state $|\psi\rangle$ in the one-particle Hilbert space, one can introduce creation and annihilation operators $a_{\psi}$ and $a_{\psi}^{\dagger}$, together with a self-adjoint operator
	\begin{equation}
		\chi_{\psi} = a_{\psi} + a_{\psi}^{\dagger}.
	\end{equation}
	For $|\psi\rangle = \phi[f] |\omega\rangle,$ one has $a_{\phi[f] |\omega\rangle} = a[f^*].$
	We have $\chi_{\phi[f] |\omega\rangle} = \phi[f]$ if and only if $f$ is real.
	
	\item To any closed, real subspace $X$ of the one-particle Hilbert space, one can construct a von Neumann algebra $\A(X)$ from the operators $\chi_{\psi}$ for $|\psi\rangle \in X.$
	The closed, real subspace $X'$ is defined as the set of all one-particle vectors that have purely real overlap with all vectors in $X$.
	One has $\A(X') \subseteq \A(X)$ trivially, and a more complicated argument gives $\A(X') = \A(X),$ which is a form of Haag duality.
	
	\item The complex spaces $X \cap i X$ and $X' + i X'$ are orthocomplements of one another.
	
	\item $|\omega\rangle$ is cyclic for $\A(X)$ if and only if $X + i X$ is dense, and it is separating for $\A(X)$ if and only if $X \cap i X$ contains only the zero vector.
	
	\item In the cyclic-separating case, one can define a ``one-particle Tomita operator'' on $X + i X$ via $s |x + i y \rangle = |x - i y\rangle$.
	This is a closable operator with polar decomposition $s = j \delta^{1/2}.$
	It is related to the full Tomita operator $S$ in the Fock space picture by the formula
	\begin{equation}
		S |\psi_1 \otimes \dots \otimes \psi_n\rangle
			= |s \psi_1 \otimes \dots \otimes s \psi_n\rangle,
	\end{equation}
	with similar formulas holding for $J, j$ and $\Delta, \delta.$
	This same formula holds in the non-cyclic-separating setting, though the argument is more involved.
	
	\item Under the isometry $U : \H_{\omega}^{(1)} \to \K_{\mu_1}$ that takes $\phi[f] |\omega\rangle$ to $\sqrt{1 - i R} |f\rangle_{\mu},$ the operator $j$ conjugates to the complex conjugation $\Gamma$ projected away from $R = \pm i.$
	The operator $\delta^{1/2}$ conjugates to $\sqrt{\frac{1 + i R}{1 - i R}}$ projected away from $R = \pm i.$
\end{itemize}

\subsection{Generalizations of field operators}
\label{app:generalized-Weyl}

To start, we note that in the main text, we are concerned with the von Neumann algebra generated by the essentially self-adjoint operators $\phi[f]$, with $f$ a real test function.
It will be very helpful in what follows to consider a more general class of von Neumann algebras, since they can arise when taking commutants of the von Neumann algebra we actually care about.
We begin by thinking of a general field operator in terms of its creation and annihilation operators,
\begin{equation}
	\phi[f] = a[f] + a[f^*]^{\dagger},
\end{equation}
recalling the definitions
\begin{equation}
	a[f] :\phi[g_1] \dots \phi[g_n]: | \omega\rangle
	= \sum_{j} \langle \phi[f^*]\omega | \phi[g_j]\omega \rangle :\phi[g_1] \dots \hat{\phi[g_{j}]} \dots \phi[g_n] : |\omega\rangle
\end{equation}
and
\begin{equation}
	a[f^*]^{\dagger} :\phi[g_1] \dots \phi[g_n]: | \omega\rangle
	= \;\;: \phi[f] \phi[g_1] \dots \phi[g_n] : |\omega\rangle.
\end{equation}
In terms of symmetric tensor powers and one-particle quantities, we can rewrite this productively as
\begin{align}
	\begin{split}
	& a[f] \left(|\phi[g_1] \omega \rangle \otimes_S \dots \otimes_S \phi[g_n] \omega\rangle \right) \\
	& \qquad \qquad = \sqrt{n} \sum_{j} \langle \phi[f^*] \omega | \phi[g_j] \omega \rangle \left(|\phi[g_1] \omega \rangle \otimes_S \dots \hat{|\phi[g_j] \omega \rangle} \otimes_S \dots \otimes_S \phi[g_n] \omega\rangle \right)
	\end{split}
\end{align}
and
\begin{align}
	\begin{split}
	& a[f^*]^{\dagger} \left(|\phi[g_1] \omega \rangle \otimes_S \dots \otimes_S \phi[g_n] \omega\rangle \right) \\
	& \qquad \qquad = \sqrt{n+1} \left(|\phi[f] \omega \rangle \otimes_S |\phi[g_1] \omega \rangle \otimes_S \dots \otimes_S \phi[g_n] \omega\rangle \right).
	\end{split}
\end{align}
We note that from these two definitions, it is clear that the details of the test function $f$ are not so important.
The operator $a[f]$ only depends on the \textit{state} $\phi[f^*] |\omega\rangle,$ while the operator $a[f^*]^{\dagger}$ only depends on the state $\phi[f] |\omega\rangle.$
For any state $|\psi\rangle$ in the one-particle Hilbert space, it therefore makes sense to define unbounded operators on any Hilbert space of finite-particle-number states, that we call $a_{\psi}$ and $a_{\psi}^{\dagger}$, defined by
\begin{equation}
	a_{\psi} \left(|v_1 \rangle \otimes_S \dots \otimes_S |v_n \rangle \right)
	= \sqrt{n} \sum_{j} \langle \psi | v_j \rangle \left(|v_1 \rangle \otimes_S \dots \hat{|v_j \rangle} \otimes_S \dots \otimes_S v_n\rangle \right)
\end{equation}
and
\begin{equation}
	a_{\psi}^{\dag} \left(|v_1 \rangle \otimes_S \dots \otimes_S |v_n \rangle \right)
	= \sqrt{n+1} \left(|\psi \rangle \otimes_S |v_1 \rangle \otimes_S \dots \otimes_S | v_n \rangle \right).
\end{equation}
Clearly for a test function $f$ we have $a[f^*] = a_{\phi[f] |\omega\rangle},$\footnote{Technically $a_{\phi[f] |\omega\rangle}$ is defined on a larger domain than $a[f^*]$, but the closures of these operators coincide.}
and an identical argument to that given in appendix \ref{app:ladder-adjoints} tells us that the operator we have been calling $a_{\psi}^{\dagger}$ actually is, after taking closures, the adjoint of the operator $a_{\psi}.$

So now, in addition to the field operators $\phi[f],$ we have a whole host of additional operators based on elements of the one-particle Hilbert space.
We write their self-adjoint combinations as
\begin{equation}
	\chi_{\psi}
	= a_{\psi} + a_{\psi}^{\dagger}.
\end{equation}
For $\psi$ of the form $\phi[f] |\omega\rangle,$ we have
\begin{equation}
	\chi_{\phi[f] |\omega\rangle}
	= a[f^*] + a[f^*]^{\dagger}
	= (a[f^*] - a[f]) + \phi[f].
\end{equation}
So we only have $\chi_{\phi[f] |\omega\rangle} = \phi[f]$ when $f$ is real; otherwise, we should not expect our ``$\chi$'' operators to have anything to do with the field operators $\phi.$
Nevertheless, because each $\chi$ operator is essentially self-adjoint, we can proceed exactly as in appendix \ref{app:vN-generating} to associate, with any subset $X$ of the one-particle Hilbert space, the von Neumann algebra
\begin{equation}
	\A(X)
	= \big\langle e^{i \chi_{\psi}} \, \big| \, |\psi\rangle \in X\big\rangle
	= \big\langle \,\text{bounded functions of } \chi_{\psi}\,\big|\, |\psi\rangle \in X\,\big\rangle. 
\end{equation}
Moreover, just as for ordinary Weyl operators, a simple expansion of $e^{i \chi_{\psi}}$ in terms of a power series gives
\begin{equation} \label{eq:generalized-Weyl-eval}
	\langle \omega | e^{i \chi_{\psi}} |\omega\rangle
		= e^{- \langle \psi | \psi \rangle/2}.
\end{equation}

Note that for complex numbers $\alpha$ and $\beta,$ one has
\begin{equation}\label{eq:chi-linear-comb}
	\chi_{\alpha \psi_1 + \beta \psi_2}
	= \alpha^* a_{\psi_1} + \beta^* a_{\psi_2}
	+ \alpha a_{\psi_1}^{\dagger} + \beta a_{\psi_2}^{\dagger},
\end{equation}
where this equality holds on the shared domain of finite-particle states.
So in the case where $\alpha$ and $\beta$ are real, we have $\chi_{\alpha \psi_1 + \beta \psi_2} = \alpha \chi_{\psi_1} + \beta \chi_{\psi_2}.$
Consequently, in terms of the generated von Neumann algebra $\A(X)$, we may as well take $X$ to be a real subspace, i.e., given a generic subset $X$ with no vector space structure, we can replace it with its real-linear span without changing $\A(X)$.

By repeating the argument given in section \ref{app:vN-generating}, which showed strong continuity of the map $f \mapsto e^{i \phi[f]}$ with $f$ a real element of $\K_{\mu},$ one can see that the map from one-particle states to generalized Weyl operators, $|\psi\rangle \to e^{i \chi_{\psi}},$ is strongly continuous.
Consequently, in defining the algebra $\A(X),$ there is no problem in replacing $X$ with its closure $\bar{X}$.
So without loss of generality, we will henceforth assume that $X$ is a closed, real  subspace of $\H_\omega^{(1)}$.

The full von Neumann algebra $\A_{\omega}$ associated with a GNS representation is obtained by taking $X$ to be the set of states $\phi[f] |\omega\rangle$ with $f$ real.
This is notably different from the von Neumann algebra that would be obtained by taking $X$ to be the full complex space $\H_{\omega}^{(1)}.$
In the latter case, as we will show momentarily, one always obtains $\A(\H_{\omega}^{(1)}) = \B(\H_{\omega}),$ the full algebra of bounded operators on $\H_{\omega}.$
This happens even when $\omega$ is mixed, while in that case the algebra 
\begin{equation}
    \A_{\omega} = \A(\{\phi[f] |\omega\rangle \,|\, f \text{ real}\})
\end{equation}
is a strict subalgebra of $\B(\H_{\omega}).$

We will show the above claim as a special case of a more general statement, which is that whenever $X$ is a \textit{complex} subspace of ${\cal H}_{\omega}^{(1)}$, and we write
\begin{equation}
    {\cal H}_{\omega}^{(1)}=X\oplus X^\perp,
\end{equation}
then one has (i) $\H_\omega \cong  \H_X \otimes \H_{X^\perp}$, where $\H_X$ is the Fock space of $X$:
\begin{equation}
    \H_X=\bigoplus_{n=0}^\infty X^{\otimes_S n},
\end{equation}
and (ii) $\A(X) \cong \B(\H_X)\otimes 1_{X^\perp}$.

To see point (i), note that one can map $\H_X \otimes \H_{X^\perp}$ to the ordinary Fock space $\H_\omega$ via the map
\begin{align}\label{eq:split-H}
	\begin{split}
		&	|\psi_1 \otimes_S \dots \otimes_S \psi_p\rangle \otimes
	|\psi^\perp_1 \otimes_S \dots \otimes_S \psi^\perp_q\rangle \\
	& \qquad \mapsto \sqrt{\frac{(p+q)!}{p! q!}} |\psi_1 \otimes_S \dots \otimes_S \psi_p \otimes_S \psi^\perp_1 \otimes_S \dots \otimes_S \psi^\perp_q \rangle.
	\end{split}
\end{align}
Unitarity follows from the mutual orthogonality of $X$ and $X^\perp$, which means that even when overlaps are taken on the right-hand side, there is no contribution from permutations that ``mix'' these terms. 

To see point (ii), note that complex linearity of $X$ guarantees that for any $\ket{\psi}$ in $X$, $i\ket{\psi}$ is also in $X$, so that in particular the algebra $\A(X)$ contains (in the affiliated sense) both $\chi_\psi$ and $\chi_{i\psi}$, and hence also $a_\psi^\dagger$ and $a_\psi$.
It follows that any bounded operator that commutes with $\A(X)$ within $\H_X$ would have to commute with all creation and annihilation operators for that Fock space, but this implies that the operator must be proportional to the identity.
It follows that, under the isomorphism of point (i), one has $\A(X)'\subseteq 1_X\otimes \B(\H_{X^{\perp}})$.
The reverse inclusion is immediate, and taking commutants gives claim (ii).

\subsection{Cyclic and separating}
\label{app:cs-subspace}

We are now ready to address an important question for the construction of modular theory: under what conditions on a closed, real subspace $X$ can we conclude that $|\omega\rangle$ is cyclic or separating for the von Neumann algebra $\A(X)$?

First we show that cyclicity is equivalent to the statement that the complex-linear span of $X$ is dense in $\H_{\omega}^{(1)}.$
By taking derivatives of operators like $e^{i \chi_{\psi}},$ one can obtain, within the set $\bar{\A(X) |\omega\rangle},$ any vector of the form
\begin{equation}
	\chi_{\psi_1} \dots \chi_{\psi_n} |\omega\rangle \qquad |\psi_{j}\rangle \in X,
\end{equation}
and such vectors are dense within $\bar{\A(X) |\omega\rangle}.$
By normal-ordering the $\chi$ operators with an analog of equation \eqref{eq:normal-ordering}, one sees that, under the unitary equivalence \eqref{eq:symmetric-power-Fock}, the set of all linear combinations of such states is exactly the set of all linear combinations of states in $(X + i X) \otimes_S\dots \otimes_S (X + i X)$.
The closure of this set is equal to the full symmetric Fock space if and only if $X + i X$ is dense in $\H_{\omega}^{(1)}.$
Since the symmetric Fock space of $\H_\omega^{(1)}$ is canonically isomorphic to $\H_\omega$ through equation \eqref{eq:symmetric-power-Fock}, the claim is proved.

Now we show that $|\omega\rangle$ is separating for $\A(X)$ if and only if one has $X \cap i X = \{0\}$.\footnote{Recall we have assumed (without loss of generality) that $X$ is a real vector space, so the zero vector is always contained in both $X$ and $iX$.}
For the easy direction, we assume the existence of a nonzero vector $|\psi\rangle$ in $X \cap i X.$
Note that using equation \eqref{eq:chi-linear-comb}, we have that $\chi_{i \psi} - i \chi_{\psi}$ is pure annihilation, and hence
\begin{equation}
	(\chi_{i \psi} - i \chi_{\psi}) |\omega\rangle = 0.
\end{equation}
But the operator on the left-hand side does not vanish on arbitrary $n$-particle states, so $|\omega\rangle$ is not separating for $\A(X)$.\footnote{A subtlety here is that the operator $\chi_{i \psi} - i \chi_{\psi}$ is unbounded, so it is not actually in the algebra $\A(X)$. Instead it is affiliated with $\A(X)$. Its adjoint, $\chi_{i \psi} + i \chi_{\psi}$, is also affiliated with $\A(X)$. If $|\omega\rangle$ were separating for $\A(X$), then as in e.g. \cite[appendix B]{Sorce:modular}, one would have $(\chi_{i \psi} - i \chi_{\psi}) |\omega\rangle$ in the domain of the Tomita operator $S$ with image $(\chi_{i \psi} + i \chi_{\psi}) |\omega\rangle.$ But then the Tomita operator would take the zero vector to a nonzero vector, which is a contradiction.}

To show the other direction --- that $X \cap i X = \{0\}$ implies that $|\omega\rangle$ is separating for $\A(X)$ --- we will check the equivalent condition that $|\omega\rangle$ is cyclic for the commutant $\A(X)'$.
Towards a characterization of $\A(X)'$, note that the commutation relation $[a_{\psi_1}, a_{\psi_2}^{\dagger}] = \langle \psi_1 | \psi_2\rangle$ implies
\begin{equation}
\label{eq:chi-commutator}
    [\chi_{\psi_1},\chi_{\psi_2}]=2 i\, \text{Im} \left(\langle \psi_1|\psi_2\rangle \right)\,,
\end{equation}
and from this one computes
\begin{equation} \label{eq:chi-commutators}
	e^{i \chi_{\psi_1}} e^{i \chi_{\psi_2}}
	= e^{i \chi_{\psi_1 + \psi_2}} e^{- i\, \text{Im}(\langle \psi_1 | \psi_2\rangle)}
	= e^{i \chi_{\psi_2}} e^{i \chi_{\psi_1}} e^{- 2 i\, \text{Im}(\langle \psi_1 | \psi_2\rangle)}.
\end{equation}
In particular, this means that for any one-particle vector $|\eta \rangle$ that has purely real overlap with all elements of $X$, the operator $e^{i \chi_{\eta}}$ is in $\A(X)'.$
This encourages us to define the space
\begin{equation} \label{eq:X-prime-definition}
	X'
	= \{ |\eta\rangle \in \H_{\omega}^{(1)} \,|\, \text{Im}\langle \psi | \eta \rangle = 0 \text{ for all } |\psi\rangle \in X\}.
\end{equation}
We clearly have $\A(X)' \supseteq \A(X')$; in a later section we will show equality, which is a form of Haag duality, first demonstrated by Araki in \cite{Araki:lattice}.
The spaces $X$ and $X'$ have nice relationships to one another, primarily due to the fact that passing from $X$ to $X'$ is the same as multiplying by $i$ and then taking the orthocomplement with respect to the real inner product
\begin{equation}
	(\psi, \eta) = \text{Re}\langle\psi | \eta\rangle,
    \label{eq:real-inner-product}
\end{equation}
so one has $X' = (i X)^{\perp_{\reals}}.$
This implies the identity $X'' = \bar{X}$.
Additionally, one can now show that we have $X \cap i X = \{0\}$ if and only if $X' + i X'$ is dense.
This fact becomes trivial upon recognizing that $X \cap i X$ and $\bar{X' +i X'}$ are both complex Hilbert spaces, and they are easily seen to be one another's orthocomplements with respect to the original, complex inner product. 

So far, we have shown that whenever we have $X \cap i X = \{0\},$ we have $X' + i X'$ dense, so we may conclude that $|\omega\rangle$ is cyclic for $\A(X')$.
Given the inclusion $\A(X') \subseteq \A(X)',$ clearly any cyclic vector for $\A(X')$ is also cyclic for $\A(X)'.$
This tells us that when we have $X \cap i X = \{0\},$ the vector $|\omega\rangle$ is cyclic for $\A(X)',$ hence separating for $\A(X).$

Another useful statement is that for two closed, real subspaces $X, Y \subseteq \H_{\omega}^{(1)},$ one has $\A(X) = \A(Y)$ if and only if one has $X = Y.$
The ``if'' direction is obvious.
For the ``only if'' direction, one first observes the identity
\begin{equation}\label{eq:A-sum-is-union-A}
	\A(\bar{X+Y}) = \A(X) \vee \A(Y),
\end{equation}
which follows from $\A(\bar{X+Y}) = \A(X+Y)$ and the fact that every operator $e^{i \chi_{\psi_X + \psi_Y}}$ can be written as a complex multiple of the product $e^{i \chi_{\psi_X}} e^{i \chi_{\psi_Y}}$.
Now, assuming toward contradiction that we have $\A(X) = \A(Y)$ but $X \neq Y,$ we have
\begin{equation}
	\A(X) = \A(X) \vee \A(Y) = \A(\bar{X+Y})
\end{equation}
but $X \subsetneq \bar{X+Y}.$
Taking commutants, one finds
\begin{equation}
	\A(X)' = \A(\bar{X+Y})',
\end{equation}
but $(X+Y)' \subsetneq X'.$
Choosing $|\psi\rangle$ in $X'$ that is not in $(X+Y)',$ one finds that $e^{i \chi_{\psi}}$ will be in $\A(X)'$ but not in $\A(\bar{X+Y})',$ which is a contradiction.

\subsection{Modular operators}
\label{app:1P-modular-ops}

We now know that for a closed, real subspace $X \subseteq \H_{\omega}^{(1)},$ the conditions for $|\omega\rangle$ to be cyclic-separating for $\A(X)$ are
\begin{align}
	\bar{X + i X}
	& = \H_{\omega}^{(1)}, \\
	X \cap i X & = \{0\}.
\end{align}
In this setting, we can define a Tomita operator $S$ as the closure of the usual formula
\begin{equation}
	S a |\omega\rangle = a^{\dagger} |\omega\rangle, \qquad a \in \A(X).
\end{equation}
By acting with $S$ on the state $|\psi\rangle = \chi_{\psi} |\omega\rangle,$ and using the fact that the self-adjoint operator $\chi_{\psi}$ is affiliated to $\A(X)$ when $|\psi\rangle$ is in $X$, one finds
\begin{equation}
	S |\psi\rangle = |\psi\rangle, \qquad |\psi\rangle \in X.
\end{equation}
By antilinearity of $S$, we have an obvious action of $X$ on the domain $X + i X$ as
\begin{equation}
	S (|\psi\rangle + i |\eta\rangle)
	= |\psi\rangle - i |\eta\rangle, \qquad |\psi\rangle, |\eta\rangle \in X.
\end{equation}

As is usually the case with Gaussian states, many interesting questions about the global GNS space $\H_{\omega}$ can be translated into simpler questions about the one-particle space $\H_{\omega}^{(1)}.$
This encourages us to ask how much information about the Tomita operator $S$ can be recovered from its action on $\H_{\omega}^{(1)}.$
To pursue this line of investigation, we \textit{define} an operator $s$ on the domain $X + i X$ by
\begin{equation}\label{eq:def-s}
	s (|\psi\rangle + i |\eta\rangle)
	= |\psi\rangle - i |\eta\rangle, \qquad |\psi\rangle, |\eta\rangle \in X.
\end{equation}
As is typically useful in studying Tomita operators (see e.g. \cite{Witten:notes} or \cite[appendix B]{Sorce:modular}), one defines another operator $s'$ with domain $X' + i X'$ via the formula
\begin{equation}
	s' (|\psi'\rangle + i |\eta'\rangle)
	= |\psi'\rangle - i |\eta'\rangle, \qquad |\psi'\rangle, |\eta'\rangle \in X'.
\end{equation}
By computing overlaps, one finds that $s^{\dagger}$ is an extension of $s'$ and $(s')^{\dagger}$ is an extension of $s$.\footnote{The relevant calculation is
	\begin{align*}
		\langle \psi' + i \eta' | s (\psi + i \eta)\rangle
		& = \langle \psi' + i \eta' | \psi - i \eta\rangle \\
		& = \langle \psi' | \psi \rangle - i \langle \eta' | \psi\rangle - i \langle \psi' | \eta \rangle - \langle \eta' | \eta\rangle \\
		& = \langle \psi | \psi' \rangle - i \langle \psi | \eta'\rangle - i \langle \eta | \psi' \rangle - \langle \eta | \eta'\rangle \\
		& = \langle \psi + i \eta | \psi' - i \eta' \rangle \\
		& = \langle \psi + i \eta | s' (\psi' + i \eta') \rangle,
	\end{align*}
    which shows $s'\subseteq s^\dag$. Swapping $s'$ with $s$ gives $s\subseteq (s')^\dag$.}
This shows that $s$ has densely defined adjoint, and hence is closable.  Swapping the primes shows that also $s'$ is closable.
By abuse of notation, one writes $s$ and $s'$ for the closures.

One then writes the polar decompositions
\begin{align}
	s
	& = j \delta^{1/2}, \\
	s'
	& = j' (\delta')^{1/2}.
\end{align}
Because $s$ and $s'$ square to the identity on their domains, each is invertible, which means that $j$ and $j'$ are antiunitary.
Standard manipulations --- see reviews in \cite{Witten:notes} or \cite[appendix B]{Sorce:modular} --- then give the identities
\begin{align}
	j
	& = j', \\
	j^2
	& = 1, \\
	\delta'
	& = \delta^{-1}, \\
	j \delta j
	& = \delta^{-1}.
\end{align}

We now draw a sharper connection between $S$ and $s,$ and between the polar decompositions $S = J \Delta^{1/2}$ and $s = j \delta^{1/2}.$
For any $|\psi_1\rangle, \dots |\psi_n\rangle$ in $X$, one has
\begin{equation}
	S \chi_{\psi_1} \dots \chi_{\psi_n} |\omega\rangle
	= \chi_{\psi_n} \dots \chi_{\psi_1} |\omega\rangle.
\end{equation}
By acting on normal-ordered products of $\chi$ operators, one concludes the formula
\begin{equation}
	S |\psi_1 \otimes_S \dots \otimes_S \psi_n\rangle
	= |\psi_1 \otimes_S \dots \otimes_S \psi_n \rangle.
\end{equation}
Extending by antilinearity, and taking the vectors $|\psi_j\rangle$ to be in $X + iX$ instead of simply $X$, we find formula
\begin{equation}
	S |\psi_1 \otimes_S \dots \otimes_S \psi_n\rangle
	= |s \psi_1 \otimes_S \dots \otimes_S s \psi_n \rangle,
\end{equation}
and by taking limits one may conclude that the same formula holds for any $|\psi_j\rangle$ in the domain of $s.$
Using $\Delta = S^{\dagger} S$ and $\delta = s^{\dagger} s$ one may also derive
\begin{equation}
	\Delta^{1/2} |\psi_1 \otimes_S \dots \otimes_S \psi_n\rangle
	= |\delta^{1/2} \psi_1 \otimes_S \dots \otimes_S \delta^{1/2} \psi_n \rangle
\end{equation}
on the same domain, and by appealing to polar decompositions one finds
\begin{equation} \label{eq:J-j-formula}
	J |\psi_1 \otimes_S \dots \otimes_S \psi_n\rangle
	= |j \psi_1 \otimes_S \dots \otimes_S j \psi_n \rangle,
\end{equation}
where this relation holds for \textit{any} vector on Hilbert space by boundedness of $J$ and $j.$

To close this subsection, we give an explicit formula for $j$ and $\delta^{1/2}$ in the case of physical interest, namely when $X$ is the set $\{\phi[f] |\omega\rangle \, |\, f \text{ real}\}$.
In this case one can take advantage of the isometry $U$ from appendix \ref{app:phase-space}, which maps from $\H_{\omega}^{(1)}$ to $\K_{\mu}$ via the formula
\begin{equation}
	U \phi[f] |\omega\rangle
	= \sqrt{1 - i R} |f\rangle_{\mu}.
\end{equation}
The image of $U$ is the orthocomplement of the $-i$ eigenspace of $R$.
In the case of present interest, where $|\omega\rangle$ is cyclic and separating for $\A(X)$, the operator $R$ actually has no $-i$ eigenvalues; for these would give rise to null states for $|\omega\rangle,$ which would prevent it from being separating.

Now, for general complex $f,$ one has
\begin{align}
	\begin{split}
		U s U^{\dagger} \sqrt{1 - i R} |f\rangle_{\mu}
		& = U s \phi[f] |\omega\rangle \\
		& = U \phi[f^*] |\omega\rangle \\
		& = \sqrt{1 - i R} \Gamma |f\rangle_{\mu} \\
		& = \Gamma \sqrt{1 + i R} |f\rangle_{\mu}.
	\end{split}
\end{align}
So for general $|\eta\rangle_{\mu}$ in the image of $\sqrt{1 - i R},$ 
 one has\footnote{A priori, this equation only tells us that $UsU^{\dag}$ is defined on all vectors for which $\Gamma \sqrt{\frac{1+iR}{1 - i R}}$ is defined, but not vice versa.
However, every vector $\eta$ in the domain of $UsU^{\dag}$ must have  $U^{\dag}\ket{\eta}$ in the domain of $s$ --- that is, in the GNS domain. The image of the GNS domain under $U$ is the image of $\sqrt{1-iR}$.
Since $UsU^{\dag}$ and $\Gamma \sqrt{\frac{1+iR}{1 - i R}}$ share the same dense domain, they are equal as unbounded operators, as is needed for the polar decomposition arguments to follow.}
\begin{equation}
	U s U^{\dagger} |\eta \rangle_{\mu}
	= \Gamma \sqrt{\frac{1+iR}{1 - i R}} |\eta\rangle_{\mu}.
\end{equation}
By writing $s = j \delta^{1/2}$ and using $U^{\dagger} U = 1,$ one has
\begin{equation}\label{eq:sep-pol-dec}
	(U j U^{\dagger})(U \delta^{1/2} U^{\dagger}) |\eta \rangle_{\mu}
	= \Gamma \sqrt{\frac{1+iR}{1 - i R}} |\eta\rangle_{\mu}.
\end{equation}
Since in the present separating setting  $R$ has no $-i$ eigenvalues, and since $R$ commutes with $\Gamma,$ we also know that $R$ has no $+i$ eigenvalues.
Therefore the positive operator $\sqrt{(1+i R)/(1 - iR)}$ has no kernel.
The operator $\Gamma$ is antiunitary.
So by uniqueness of the polar decomposition, we may conclude
\begin{equation}
	U j U^{\dagger}
	= \Gamma
\end{equation}
and
\begin{equation}
	U \delta^{1/2} U^{\dagger}
	= \sqrt{\frac{1 + i R}{1 - i R}}.
\end{equation}

\subsection{Haag duality}
\label{app:Haag-duality}

We now apply the modular theory toolkit that we have developed so far to discuss Haag duality.
In appendix \ref{app:cs-subspace}, we proved
\begin{equation}
	\A(X') \subseteq \A(X)'.
\end{equation}
We now prove equality.
Without loss of generality we take $X$ to be closed and real-linear.

In the case where $|\omega\rangle$ is cyclic and separating for $X$, the proof is very simple.
Using equation \eqref{eq:J-j-formula}, one may easily compute
\begin{equation}
	J \chi_{\psi} J
	= \chi_{j \psi}.
\end{equation}
Since $J$ conjugates $\A(X)$ to $\A(X)'$ in the cyclic-separating case, we find
\begin{equation}
	\A(X)'
	= J \A(X) J
	= \A(j X).
\end{equation}
So all that remains is to show the inclusion $j X \subseteq X'.$
To see this, fix $|\psi\rangle, |\eta\rangle \in X,$ and write, using by definition \eqref{eq:def-s} that $s$ acts like the identity on $X$,
\begin{equation}
	\langle \psi + \eta | j | \psi + \eta \rangle
	= \langle \psi + \eta | \delta^{1/2} |\psi + \eta\rangle.
\end{equation}
This is real by positivity of the operator $\delta^{1/2}.$
On the other hand, expanding the left-hand side and using antiunitarity of $j$, one has
\begin{equation}
	\langle \psi | j | \psi \rangle + \langle \eta | j | \eta \rangle
	+ 2 \langle \psi | j | \eta \rangle \in \reals.
\end{equation}
The first two terms are also real by the same argument, so one finds that $j |\eta\rangle$ must have real overlap with every $|\psi\rangle$ in $X$.
This gives $j X \subseteq X'.$
We conclude $\A(X)' \subseteq \A(X'),$ but we already had $\A(X') \subseteq \A(X)',$ so this completes the proof that
\begin{equation}\label{eq:Haag-cyc-sep}
    \A(X)' = \A(X') \quad \mathrm{if} \; \ket{\omega}\;\mathrm{is}\;\mathrm{cyclic}\;\mathrm{and}\;\mathrm{separating}\;\mathrm{for}\;\A(X).
\end{equation}

In the case where $|\omega\rangle$ is not cyclic and separating for $X$, one employs a reduction technique to reduce to the cyclic-separating case.\footnote{We thank Roberto Longo for explaining this argument to us over email.}
We write\footnote{Note that we are using the standard complex orthocomplement rather than the real $\perp_{\mathbb{R}}$ mentioned above.}
\begin{align}
	A 
	& = X \cap i X \\
	B
	& = (X + i X)^{\perp},
\end{align}
and we decompose the one-particle Hilbert space as
\begin{equation}
	\H_{\omega}^{(1)}
		= A \oplus B \oplus C,
\end{equation}
where $C$ is just $(A \oplus B)^{\perp}.$
The idea is that $A$ contains the ``non-separating'' data of $\A(X)$, while $B$ contains the ``non-cyclic'' data.
Given any decomposition of $\H_{\omega}^{(1)}$ into mutually orthogonal subspaces, there is a natural decomposition of $\H_{\omega}$ into an associated tensor product.
The way this works is that we define
\begin{equation}
	\H_{A}
		\equiv \bigoplus_n A^{\otimes_S n},
\end{equation}
and similarly for $\H_B$ and $\H_C.$
One can map $\H_A \otimes \H_B \otimes \H_C$ to the ordinary Fock space $\H_{\omega}$ via a generalization of the map in equation \eqref{eq:split-H} to the case of three subspaces.

Decomposing $X$ into $A$ and its orthocomplement $A^{\perp}\cap X$, we have, using \eqref{eq:A-sum-is-union-A},
\begin{equation}
	\A(X) = \A(A + (A^{\perp} \cap X)) = \A(A) \vee \A(A^{\perp} \cap X).
\end{equation}
Taking commutants gives
\begin{equation}
	\A(X)' = \A(A)' \cap \A(A^{\perp} \cap X)'.
\end{equation}
Because $A = X \cap i X$ is a complex subspace, we may use the result discussed near equation \eqref{eq:split-H} to conclude that $\mathcal A(A)$ is the full algebra of bounded operators on $\H_A$.
Consequently, under the identification of $\H_{\omega}$ with $\H_{A} \otimes \H_{B} \otimes \H_C,$ the commutant $\A(A)'$ is simply the algebra of bounded operators on $\H_B\otimes \H_C.$
We may therefore write
\begin{equation}
	\A(X)' = (1_A \otimes \B(\H_B) \otimes \B(\H_C)) \cap \A(A^{\perp} \cap X)'.
\end{equation}

We can get a little extra mileage by explicitly writing $A^{\perp} = B \oplus C,$ and remembering that all of the vectors in $B$ are orthogonal to $X$ by definition.
So we have $A^{\perp} \cap X = C\cap X,$ and this gives
\begin{equation}
	\A(X)' = (1_A \otimes \B(\H_B) \otimes \B(\H_C)) \cap \A(C \cap X)'.
\end{equation}
Since \textit{every} bounded operator on $\H_B$ is in the commutant of $\A(C \cap X),$ we can decompose this further as
\begin{equation}
	\A(X)' = \left[1_A \otimes \B(\H_B) \otimes 1_C \right] \vee \left[ (1_A \otimes 1_B \otimes \B(\H_C)) \cap \A(C \cap X)' \right].
\end{equation}
The second algebra in this expression is simply the commutant of $\A(C \cap X)$ considered as an algebra acting on $\H_C$.
But, using the results from section \ref{app:cs-subspace}, one can see that $\ket \omega$ is both cyclic\footnote{To show that $\A(C\cap X)\ket{\omega}$ is dense in $\H_C$ we note that 
\begin{equation}
C \cap X + i (C \cap X) = C \cap X + (i C) \cap (i X) = C \cap X + C \cap i X = C \cap (X + i X) = C.
\end{equation}} and separating\footnote{The state $|\omega\rangle$ is separating for this algebra $\A(C \cap X)$ because we have
\begin{equation}
	(C \cap X) \cap i (C \cap X)
	= (A^{\perp} \cap B^{\perp} \cap X) \cap i (A^{\perp} \cap B^{\perp} \cap X)\subseteq A^\perp \cap (X\cap iX)=\{0\}.
\end{equation}
} for this algebra, hence we can use the cyclic-separating result \eqref{eq:Haag-cyc-sep} from above to move the prime in this second term inside the parenthesis. 
Taking the intersection with $1_A \otimes 1_B \otimes \B(\H_C)$ then simply amounts to restricting to $\ket{\psi}\in C\subseteq \H_\omega^{(1)}$ in the definition of $\A((C\cap X)')$, giving
\begin{equation} \label{eq:n-perp-reduced}
	(1_A \otimes 1_B \otimes \B(\H_C)) \cap \A(C \cap X)'
		= \A\big(\{|\psi\rangle \in C \,|\, \text{Im}\langle \psi | \eta\rangle  = 0 \text{ for all } |\eta\rangle \in C \cap X\} \big).
\end{equation}

Putting all this together, we find that $\A(X)'$ is minimally generated by (i) the set on the right-hand side of \eqref{eq:n-perp-reduced}, and (ii) the set of bounded operators acting on $\H_B.$
This latter is simply $\A((X + i X)^{\perp})$, again using the discussion near equation \eqref{eq:split-H}.
Using \eqref{eq:A-sum-is-union-A} we then find that $\A(X)'$ coincides with  $\A(Y)$, where $Y$ is the set
\begin{equation}\label{eq:one-part-Y}
	Y
	= \bar{(X + i X)^{\perp} + \{|\psi\rangle \in C \,|\, \text{Im}\langle \psi | \eta\rangle  = 0 \text{ for all } |\eta\rangle \in C \cap X\}}.
\end{equation}
It is easily verified that one has $Y \subseteq X',$\footnote{Clearly elements of $(X+iX)^{\perp}$ have real overlap with $X$, since such overlaps vanish.
Moreover, states $\ket{\psi}$ in the second term have real overlaps with both $C^\perp\cap X$ (since such overlaps vanish) and $C\cap X$ (by definition), and thus with all of $(C^\perp \cap X)\cup (C\cap X)=X$; it follows that such states are in $X'$.}
giving $\A(X)' \subseteq \A(X'),$  completing the proof.

One consequence of Haag duality that is used in appendix \ref{app:pure} is the dual of the identity $\A(\bar{X + Y}) = \A(X) \vee \A(Y)$ from equation \eqref{eq:A-sum-is-union-A}, which we proved in appendix \ref{app:cs-subspace}.
Taking commutants and using Haag duality, one discovers
\begin{equation}
	\A((X + Y)') = \A(X') \cap \A(Y'),
\end{equation}
and it is easy to compute $(X + Y)' = X' \cap Y',$ giving
\begin{equation}
	\A(X' \cap Y') = \A(X') \cap \A(Y').
\end{equation}
Relabeling $X' \to X$ and $Y' \to Y$, it is easy to see that we have $\A(X \cap Y) = \A(X) \cap \A(Y)$ for general $X, Y.$

\subsection{Non-cyclic-and-separating}
\label{app:non-cs}

We now return to our discussion of the one-particle modular operators $(s,j,\delta)$, generalizing the results of section \ref{app:1P-modular-ops} to the non-cyclic-and-separating case.
Our main reason for developing this theory is that --- as discussed in appendix \ref{app:1P-modular-ops} --- any Gaussian state for which the spectrum of $R$ contains eigenvalues $\pm i$ will not be separating for the von Neumann algebra generated by fields.
In particular, this is the case for all pure states, so it is useful to understand how to apply modular theory in this setting.

In ordinary modular theory, when one has a state $|\omega\rangle$ and a von Neumann algebra $\A$, one defines the projector $P$ onto the subspace $\bar{\A' |\omega\rangle}$ --- written $P$ because it is an element of $\A$ --- and the projector $P'$ onto the subspace $\bar{\A |\omega\rangle}$ --- written $P'$ because it is an element of $\A'.$
When one has $P \neq 1$ the state is not separating; when one has $P' \neq 1$ the state is not cyclic.
In this case one can still define modular operators --- see for example the nice review in \cite[appendix A]{Ceyhan:QNEC} --- via the formulas\footnote{The inclusion of $P$ in the definition of $S$ is necessary to ensure that $S$ is well defined in the non-separating case. 
In particular, if we have $a\ket{\omega}=0$ for some nonzero $a \in \A$, we must ensure the identity $Sa\ket{\omega}=0$ as well.
Introducing $P$ does the job, since it is easy to show that for such $a$, $a^\dag \ket{\omega}$ must be orthogonal to $\overline{\A'\ket\omega}$, thus $Pa^\dag \ket{\omega}=0$.
The formula $S(1-P')=0$ simply amounts to defining $S$ to vanish on the orthocomplement of $\A\ket{\omega}$, which is nontrivial in the non-cyclic case.
}
\begin{equation}
	S a |\omega\rangle = P a^{\dagger} |\omega\rangle , \qquad S (1-P') = 0
    \label{eq:tomita-op-non-cs}
\end{equation}
and
\begin{equation}
	S' a' |\omega\rangle = P' (a')^{\dagger} |\omega\rangle , \qquad S' (1-P) = 0.
\end{equation}
It is an easy exercise to show that $(S')^{\dagger}$ is an extension of $S$, and that $S^{\dagger}$ is an extension of $S'$, so that each operator is closable and can be replaced by its closure with no confusion.

Given a real, closed subspace $X$ of the one-particle Hilbert space $\H_{\omega}^{(1)},$ we can still define $X'$ as the real-orthocomplement of $(i X).$
We define $p$ to be the projector onto the closure of $X' + i X',$ and define $p'$ to be the projector onto the closure of $X + i X.$
By comparison to the usual modular operators, it is natural to define operators $s$ and $s'$ via
\begin{equation}
	s |x + i y\rangle = p |x - i y\rangle, \qquad s(1-p')=0
\end{equation}
and
\begin{equation}
	s' |x' + i y' \rangle = p' |x - i y \rangle, \qquad s'(1-p)=0.
\end{equation}
It is straightforward to check that $(s')^{\dagger}$ extends $s$, and that $s^{\dagger}$ extends $s'.$

In this setting, for $\psi_1, \dots, \psi_n \in X,$ we have\footnote{The exponentiated form of this equation is a straightforward application of equation \eqref{eq:tomita-op-non-cs}; taking derivatives then yields the version presented here.}
\begin{equation}
	S \chi_{\psi_1} \dots \chi_{\psi_n} |\omega\rangle
		= P \chi_{\psi_n} \dots \chi_{\psi_1} |\omega\rangle,
\end{equation}
and by using normal ordering and the isomorphism between the GNS space and a symmetric Fock space, we can conclude that $S$ acts on the real elements within each $n$-particle space as
\begin{equation}
	S | \psi_1 \otimes_S \dots \otimes_S\psi_n\rangle 
	= P | \psi_1 \otimes_S \dots \otimes_S \psi_n\rangle.
\end{equation}

Going forward, the important fact will be that $P$ acts on each $n$-particle subspace as $p^{\otimes n}.$
The reason is that $P$ is the projector onto the closure of the space $\A(X)' |\omega\rangle.$
But by the preceding subsection, we know this is the same as the closure of the space $\A(X') |\omega\rangle,$ which is simply the sum over the $n$-particle subspaces $(X' + i X')^{\otimes n}$.  
From this, we may conclude that for $\psi_1, \dots, \psi_n \in X,$ we have
\begin{equation}
	S | \psi_1 \otimes_S \dots \otimes_S\psi_n\rangle 
	=| p \psi_1 \otimes_S \dots \otimes_S p \psi_n\rangle,
\end{equation}
and extending by antilinearity, we find that so long as each $\psi_j$ is in the domain of $s,$ we have
\begin{equation}
	S | \psi_1 \otimes_S \dots \otimes_S\psi_n\rangle 
	=| s \psi_1 \otimes_S \dots \otimes_S s \psi_n\rangle.
\end{equation}
So the usual identity still holds; $S$ acts on the $n$-particle spaces as a tensor product of $s$ operators, and similarly for the operators $J, \Delta$ and $j, \delta$ that appear in the polar decompositions of $S$ and $s.$

In the special case where we take $X$ to be the subspace formed by states $\phi[f] |\omega\rangle$ with $f$ real, we computed in the preceding subsection --- within the cyclic-separating case --- the formulas $U j U^{\dagger} = \Gamma$ and $U \delta^{1/2} U^{\dagger} = \sqrt{\frac{1+iR}{1-i R}},$ where $U$ is the isometry
\begin{equation}
	U \phi[f] |\omega\rangle = \sqrt{1 - i R} |f\rangle_{\mu}.
\end{equation}
In the non-cyclic-separating case, we can still compute, for complex $f$,
\begin{align}
	\begin{split}
	U s U^{\dagger} \sqrt{1 - i R} |f\rangle_{\mu}
		& = U p \phi[f^*]\ket{\omega} \\
		& = U p U^{\dagger} \sqrt{1 - i R} \Gamma |f\rangle_{\mu} \\
		& = U p U^{\dagger} \Gamma \sqrt{1 + i R} |f\rangle_{\mu},
	\end{split}
\end{align}
and for $|\eta\rangle_{\mu}$ in the image of $\sqrt{1 - i R},$  one can conclude
\begin{align}\label{eq:UsUdag-ket}
	U s U^{\dagger} |\eta\rangle_{\mu}
	& = U p U^{\dagger} \Gamma \sqrt{\frac{1 + i R}{1 - i R}} |\eta\rangle_{\mu}.
\end{align}

We would like to say that in this setting, one can identify $U j U^{\dagger}$ with $U p U^{\dagger} \Gamma,$ and still identify $U \delta^{1/2} U^{\dagger}$ with the projection of $\sqrt{\frac{1+ i R}{1 -i R}}$ away from $R = \pm i.$

In order to do so, let us first upgrade equation \eqref{eq:UsUdag-ket} to  work for \textit{any} $\ket{\eta}_\mu$ in the domain of $UsU^\dag$, and not just for those in the image of $\sqrt{1-iR}$.
We can do this by noticing that the orthocomplement of this image is the $-i$ eigenspace of $R$, and these states are annihilated by $U^{\dagger}$ anyway, thus we can turn \eqref{eq:UsUdag-ket} into an operator equality by including, on the right-hand side, a projector away from the $-i$ eigenspace:
\begin{align}
	U s U^{\dagger}
	& = U p U^{\dagger} \Gamma \sqrt{\frac{1 + i R}{1 - i R}} (1 - \Pi_{-i}).
\end{align}
In fact, because $\sqrt{1 + i R}$ has as its kernel the $+i$ eigenspace, we can rewrite this as
\begin{align}
	U s U^{\dagger}
	& = U p U^{\dagger} \Gamma \sqrt{\frac{1 + i R}{1 - i R}} (1 - \Pi_{-i} - \Pi_{i}).
\end{align}

We now claim that the projectors $UpU^\dag$ and $1-\Pi_{-i}-\Pi_i$ actually coincide.
To understand this, recall that $p$ projects onto $\bar{X' + i X'}$, which is the orthocomplement of the space $X\cap iX$.
The key point is that in this setting, the space $X \cap i X$ is the set of all vectors that can be written either as $\phi[f] |\omega\rangle$ or as $i \phi[g] |\omega\rangle$ for distinct choices of $f, g$ real.
In other words, it is labeled by pairs $f,g$ with
\begin{equation}
	(\phi[f] - i \phi[g])|\omega\rangle = 0.
\end{equation}

These are exactly the null states of $|\omega\rangle$!
That is, whenever $|\alpha\rangle_{\mu}$  is in the $-i$ eigenspace of $R$, we have 
\begin{equation}
    0=
    \phi[\alpha] \ket{\omega}
    =
    \frac{1}{2}\phi[\alpha+\alpha^*]\ket{\omega}
    +
    \frac{i}{2} \phi[\alpha-\alpha^*]\ket{\omega}\,,
\end{equation}
so the state $\phi[\alpha+\alpha^*]\ket{\omega}$ is in $X\cap iX$, and conversely any state in $X\cap iX$ can be written in this form.
The space $\bar{X' + i X'}$ is the orthocomplement of this space, so it can be identified as the set of one-particle states $|\beta\rangle$ satisfying
\begin{equation}
	\langle \beta | \phi[\alpha + \alpha^*] \omega \rangle = 0 \text{ for all $R |\alpha\rangle_{\mu} = - i |\alpha\rangle_{\mu}$},
\end{equation}
which after acting with $U$ is the condition
\begin{equation}
	\langle U \beta | \sqrt{1 - i R} (\alpha + \alpha^*) \rangle_\mu = 0 \text{ for all $R |\alpha\rangle_{\mu} = - i |\alpha\rangle_{\mu}$},
\end{equation}
and $R$ acts on $|\alpha\rangle_\mu$ as $-i$ and on  $\ket{\alpha^*}_\mu$ as $+i,$ so this is
\begin{equation}
	\langle U \beta | \alpha^* \rangle_\mu = 0 \text{ for all $R |\alpha\rangle_{\mu} = - i |\alpha\rangle_{\mu}$}.
\end{equation}
Since complex conjugation takes $-i$ eigenstates of $R$ to $+i$ eigenstates of $R$, this is the same as saying
\begin{equation}
	\langle U \beta | \alpha \rangle_\mu = 0 \text{ for all $R |\alpha\rangle_{\mu} =  i |\alpha\rangle_{\mu}$}.
\end{equation}
So $\bar{X' + i X'}$ is the set of all states $\ket \beta$ that are mapped, under $U$, to the orthocomplement of the $+i$ eigenspace of $R$.
Since the image of $U$ is already orthogonal to the $-i$ eigenspace of $R$, we conclude
\begin{equation}
	|\beta\rangle \in \bar{X' + i X'} \quad \Leftrightarrow \quad U |\beta\rangle \text{ in $(-i, i)$ spectral subspace of $R$}.
\end{equation}
Consequently, since $p$ projects onto $\bar{X' +i X'},$ it is obvious that $U p U^{\dagger}$ projects onto the $(-i, i)$ spectral subspace of $R$.
We can write $U p U^{\dagger} = 1 - \Pi_{i} - \Pi_{-i}$, as claimed.

Equation \eqref{eq:UsUdag-ket} now becomes 
\begin{equation}
\begin{aligned}
        UsU^\dag & = U p U^{\dagger} \Gamma \sqrt{\frac{1 + i R}{1 - i R}} (U p U^{\dagger}) \\
	& = \left[ U p U^{\dagger} \Gamma U p U^{\dagger} \right] \left[ U p U^{\dagger} \sqrt{\frac{1 + i R}{1 - i R}} (U p U^{\dagger})\right],
\end{aligned}
\end{equation}
where we have also used that $UpU^{\dag}$ commutes with $\Gamma$.

\textit{Now} we may appeal to uniqueness of the polar decomposition --- and the fact that $U p U^{\dagger}$ commutes with $\Gamma$ and with $R$ --- to identify 
\begin{equation}
	U j U^{\dagger} = U p U^{\dagger} \Gamma
\end{equation}
and
\begin{equation}
	U \delta^{1/2} U^{\dagger}
		= U p U^{\dagger} \sqrt{\frac{1+i R}{1 - i R}}.
\end{equation}

\section{A modular conjugation proof}
\label{app:modular-conjugation-proof}

In section \ref{sec:canonical-checking}, we encountered two cyclic-separating states $|\omega_1\rangle_1$ and $|\omega_2\rangle_2$ in distinct Hilbert spaces $\H_{\omega_1}$ and $\H_{\omega_2}$, with modular conjugations $J_{\omega_1}$ and $J_{\omega_2}.$
These Hilbert spaces are built as GNS representations of a common $*$-algebra $\A_0,$ and the von Neumann algebras in each representation are called $\A_{\omega_1}$ and $\A_{\omega_2}.$
We also had a ``natural cone'' representative of $|\omega_2\rangle_2$ in $\H_{\omega_1},$ denoted $|\omega_2^{\natural}\rangle_1,$ and we defined an isometry $V : \H_{\omega_2} \to \H_{\omega_1}$ via the formula 
\begin{equation} \label{eq:app-natural-def-V}  
    V a |\omega_2\rangle_2
        = a |\omega_2^{\natural}\rangle_1, \qquad a \in \A_0.
\end{equation}
We claimed the identity $V^{\dagger} J_{\omega_1} V = J_{\omega_2}.$

To show this, we will use Araki's characterization of the modular conjugation from \cite[theorem 1]{Araki:natural-1}.
To show that $V^{\dagger} J_{\omega_1} V$ is the modular conjugation of $|\omega_2\rangle_2,$ we need to show the following:
\begin{enumerate}[(i)]
	\item $V^{\dagger} J_{\omega_1} V$ is antiunitary and squares to the identity.
	\item $V^{\dagger} J_{\omega_1} V$ conjugates $\A_{\omega_2}$ to $\A_{\omega_2}'$ and fixes $|\omega_2\rangle_2.$
	\item One has $\langle a^{\dagger} \omega_2 | V^{\dagger} J_{\omega_1} V a |\omega_2\rangle_2 \geq 0$ for all $a \in \A_{0}.$\footnote{The assumptions of theorem 1 in \cite{Araki:natural-1} include the assumption that this inequality is saturated only for $a=0,$ but that assumption is not actually used anywhere in the proof.} 
\end{enumerate}
To prove these, the only real ingredient we will need is that the projector $P' = V V^{\dagger}$ is in the von Neumann algebra $\A_{\omega_1}',$ which follows from (1) the fact that $P'$ is the projector onto the image of $V$, (2) the fact that the image of $V$ is an invariant space for $\A_{\omega_1},$ and (3) the general fact that projectors onto an invariant space for a von Neumann algebra $\M$ must lie in the commutant $\M'.$
We will write $P = J_{\omega_1} P' J_{\omega_1}.$
Note that $P'$ is the projector onto the closure of the space $\A_{\omega_1} |\omega_2^{\natural}\rangle_1.$
Because $J_{\omega_1}$ conjugates $\A_{\omega_1}$ to $\A_{\omega_1}'$ and fixes $|\omega_2^{\natural}\rangle_1,$ we know that $P$ is the projector onto the closure of $\A_{\omega_1}' |\omega_2^{\natural}\rangle_1.$

Using the above observations, one has
\begin{equation}
	(V^{\dagger} J_{\omega_1} V)^2
	= V^{\dagger} J_{\omega_1} P' J_{\omega_1} V
	= V^{\dagger} P V.
\end{equation}
The operator $V^{\dagger} P V$ is a projector, since one has
\begin{equation}
	(V^{\dagger} P V)^2 = V^{\dagger} P P' P V = V^{\dagger} P' P V = V^{\dagger} P V.
\end{equation}
It fixes the dense subspace $\A_{\omega_2}'|\omega_2\rangle_2,$ since conjugation by $V^{\dagger}$ maps $\A_{\omega_1}$ to $\A_{\omega_2},$ and we have
\begin{equation}
	V^{\dagger} P V a' |\omega_2\rangle_2
	= a' V^{\dagger} P V |\omega_2\rangle_2
	= a' V^{\dagger} P |\omega_2^{\natural}\rangle_1
	= a' V^{\dagger} |\omega_2^{\natural}\rangle_1
	= a' |\omega_2\rangle_2.
\end{equation}
Consequently, we may conclude that $V^{\dagger} P V$ is the identity on $\H_{\omega_2},$ which completes the proof of point (i) above.

As for point (ii), the fact that $V^{\dagger} J_{\omega_1} V$ fixes $|\omega_2\rangle_2$ is obvious from the fact that $J_{\omega_1}$ fixes $|\omega_2^{\natural}\rangle_1.$
Moreover, for any $b, c \in \A_0$ and $a' \in \A_{\omega_1}',$ we have
\begin{align}
	\begin{split} 
		V^{\dagger} a' V b c |\omega_2\rangle_2
		& = V^{\dagger} a' b c |\omega_2^{\natural}\rangle_1 \\
		& = V^{\dagger} b a' c |\omega_2^{\natural}\rangle_1 \\
		& = V^{\dagger} (V V^{\dagger}) b a' c (V V^{\dagger})|\omega_2^{\natural} \rangle_1 \\
		& = V^{\dagger} b (V V^{\dagger}) a' (V V^{\dagger}) c|\omega_2^{\natural} \rangle_1 \\
		& = b (V^{\dagger} a' V) c |\omega_2\rangle_2.
	\end{split} 
\end{align}
So $V^{\dagger}$ conjugates $\A_{\omega_1}'$ to $\A_{\omega_2}'.$
In addition, for any $a \in \A_0,$ the operator $V a V^{\dagger}$ acts as $a$ on the space $\A_{\omega_1} |\omega_2^{\natural}\rangle_1$ and annihilates the orthocomplement of this space, from which we may conclude $V a V^{\dagger} = P' a P' = a P'.$
Putting these two facts together, we have
\begin{align}
	\begin{split} 
		(V^{\dagger} J_{\omega_1} V) a (V^{\dagger} J_{\omega_1} V)
		& = V^{\dagger} J_{\omega_1} (V a V^{\dagger}) J_{\omega_1} V \\
		& = V^{\dagger} J_{\omega_1} a P' J_{\omega_1} V \\
		& = V^{\dagger} (J_{\omega_1} a J_{\omega_1}) (J_{\omega_1} P' J_{\omega_1}) V \\
		& = V^{\dagger} (J_{\omega_1} a J_{\omega_1}) P V \\
		& = V^{\dagger} (J_{\omega_1} a J_{\omega_1}) P (V V^{\dagger}) V \\
		& = (V^{\dagger} (J_{\omega_1} a J_{\omega_1}) V) V^{\dagger} P V.
	\end{split} 
\end{align}
Since we know $V^{\dagger} P V$ is the identity, and $V^{\dagger}$ conjugates $\A_{\omega_1}'$ to $\A_{\omega_2}',$ we conclude that this expression is affiliated to $\A_{\omega_2}',$ and $V^{\dagger} J_{\omega_1} V$ conjugates $\A_{\omega_2}$ to $\A_{\omega_2}',$ completing the proof of point (ii) above.

Point (iii) above follows from the identity
\begin{equation}
	\langle a^{\dagger} \omega_2 |V^{\dagger} J_{\omega_1} V a |\omega_2\rangle_2
		= \langle a^{\dagger} \omega_2^{\natural} | J_{\omega_1} a |\omega_2^{\natural}\rangle_1,
\end{equation}
together with the fact (see \cite[theorem 4]{Araki:natural-1}) that $|\omega_2^{\natural}\rangle_1$ can be written as a limit of states of the form $L J_{\omega_1} L |\omega_1\rangle_1$ for $L \in \A_{\omega_1}.$
This gives
\begin{align}
	\begin{split}
	\langle a^{\dagger} \omega_2 |V^{\dagger} J_{\omega_1} V a |\omega_2\rangle_2
		& = \lim_n \langle a^{\dagger} L_n J_{\omega_1} L_n \omega_1 | J_{\omega_1} a L_n J_{\omega_1} L_n \omega_1 \rangle_1 \\
		& = \lim_n \langle a^{\dagger} L_n (J_{\omega_1} L_n J_{\omega_1}) \omega_1 | (J_{\omega_1} a L_n J_{\omega_1}) L_n \omega_1 \rangle_1 \\	
		& = \lim_n \langle a^{\dagger} L_n \omega_1 | L_n (J_{\omega_1} L_n^{\dagger} J_{\omega_1}) (J_{\omega_1} a L_n J_{\omega_1}) \omega_1 \rangle_1 \\
		& = \lim_n \langle L_n^{\dagger} a^{\dagger} L_n \omega_1 | J_{\omega_1} (L_n^{\dagger} a L_n) \omega_1 \rangle_1 \\
		& = \lim_n \langle L_n^{\dagger} a^{\dagger} L_n \omega_1 | \Delta_{\omega_1}^{1/2} L_n^{\dagger} a^{\dagger} L_n \omega_1 \rangle_1.
	\end{split}
\end{align}
Nonnegativity holds for each $n$ by nonnegativity of the modular operator $\Delta_{\omega_1}$; this completes the proof of point (iii) above.

\section{Details of the pure-state ansatz}
\label{app:ansatz-details}

This appendix fills in some details about the pure-state ansatz from section \ref{sec:pure}.
To clean up notation, let us write $|\chi\rangle$ for $(U \otimes U) V |\omega_2^{(2)}\rangle,$ and we recall that $V$ maps each $n$-particle space $\H_{\omega}^{(n)}$ unitarily to the $n$-th symmetric tensor power of $\H_{\omega}^{(1)}$ using the inverse of equation \eqref{eq:symmetric-power-Fock}.

\subsection{Higher-particle amplitudes}
\label{app:higher-particle-amplitudes}

The ``null state constraints'' on the $n$-particle amplitudes of $|\omega_2\rangle$ are given by the equation
\begin{equation} \label{eq:app-particle-pure-constraint}
	a[L^{-1} \psi] |\omega_2^{(2 n)}\rangle + a[L^{-1} \psi^*]^{\dagger} |\omega_2^{(2n-2)}\rangle = 0
\end{equation}
for any $|\psi\rangle_{\mu_2}$ in the $-i$ eigenspace of $R_2.$
Taking an overlap of this equation with a $(2n-1)$-particle state $|:\phi[g_1] \dots \phi[g_{2n-1}]: \omega_1\rangle,$ and passing unitarily to an appropriate tensor power of $\K_{\mu_1}$, we obtain
\begin{align} \label{eq:even-particle-pure-overlap}
	\begin{split}
	& \sqrt{2n(2n-1)} \langle L^{-1} \psi^* \otimes g_1 \otimes \dots \otimes g_{2n-1} |(U \otimes \dots \otimes U) V \omega_2^{(2 n)}\rangle_{\mu_1} \\
		& \qquad \qquad = - \sum_{j} \langle \Pi_{1,i} g_j | L^{-1} \psi \rangle_{\mu_1} \langle g_1 \otimes \dots \otimes \hat{g_j} \otimes \dots \otimes g_{2n-1} |(U \otimes \dots \otimes U) V \omega_2^{(2n-2)}\rangle_{\mu_1}.
	\end{split}
\end{align}
Because states of the form $|L^{-1} \psi^*\rangle_{\mu_1}$ are dense in the image of $U$ (as discussed below equation \eqref{eq:three-particle-pure2}), this completely determines $\ket*{\omega_2^{(2n)}}$ in terms of the lower-particle amplitudes.
For $n=2$, we have
\begin{align}
	\begin{split}
		& \langle L^{-1} \psi^* \otimes g_1 \otimes g_2 \otimes g_3 |(U \otimes \dots \otimes U) V \omega_2^{(4)}\rangle_{\mu_1} \\
		& \qquad \qquad = - \frac{1}{\sqrt{12}} \sum_{j} \langle \Pi_{1,i} g_j | L^{-1} \psi \rangle_{\mu_1} \langle \dots \otimes \hat{g_j} \otimes \dots | \chi\rangle_{\mu_1}.
	\end{split}
\end{align}
Using equation \eqref{eq:two-particle-overlap-mu} to rewrite $\langle \Pi_{1,i} g_j | L^{-1} \psi \rangle_{\mu_1}$ in terms of $\ket*{\omega_2^{(2)}}$, we have
\begin{align}
	\begin{split}
		& \langle L^{-1} \psi^* \otimes g_1 \otimes g_2 \otimes g_3 |(U \otimes \dots \otimes U) V \omega_2^{(4)}\rangle_{\mu_1} \\
		& \qquad \qquad = \frac{1}{\sqrt{6}} \sum_{j} \langle L^{-1} \psi^* \otimes g_j | \chi \rangle_{\mu_1} \langle \dots \otimes \hat{g_j} \otimes \dots | \chi\rangle_{\mu_1} \\
		& \qquad \qquad =\frac{1}{\sqrt{6}} \sum_{j} \langle L^{-1} \psi^* \otimes g_j \otimes ( \dots \otimes \hat{g_j} \otimes \dots) | \chi \otimes \chi\rangle_{\mu_1} .
	\end{split}
\end{align}
We can freely symmetrize over the bra on the left-hand side, since $(U \otimes \dots \otimes U) V |\omega_2^{(4)}\rangle$ is in a symmetric tensor power space.
This gives
\begin{align}
	\begin{split}
		& \langle L^{-1} \psi^* \otimes_S g_1 \otimes_S g_2 \otimes_S g_3 |(U \otimes \dots \otimes U) V \omega_2^{(4)}\rangle_{\mu_1} \\
		& \qquad \qquad = \sqrt{\frac{3}{2}} \langle L^{-1} \psi^* \otimes_S g_1 \otimes_S g_2 \otimes_S g_3 |P_{\text{sym}} | \chi \otimes \chi \rangle_{\mu_1},
	\end{split}
\end{align}
where $P_{\text{sym}}$ is the projector onto the symmetric subspace of $\H_{\omega_1}^{\otimes 4}.$
This explicitly solves for the $4$-particle amplitude as
\begin{equation}
	|(U \otimes \dots \otimes U) V \omega_2^{(4)}\rangle_{\mu_1}
		= \sqrt{\frac{3}{2}} P_{\text{sym}} | \chi \otimes \chi \rangle_{\mu_1},
\end{equation}
and induction gives
\begin{equation}
	|(U \otimes \dots \otimes U) V \omega_2^{(2n)}\rangle_{\mu_1}
	= \frac{1}{n!} \sqrt{\frac{(2n)!}{2^n}} P_{\text{sym}}|\chi^{\otimes n}\rangle_{\mu_1}.
\end{equation}

\subsection{Symmetry}
\label{app:ansatz-symmetry}

We would like to find an explicit formula for a generic overlap $\langle f \otimes g | \chi\rangle_{\mu_1}$ that makes it manifest that $\ket{\chi}_{\mu_1}$ belongs to the \textit{symmetric} subspace of $\K_{\mu_1,{+i}}\otimes\K_{\mu_1.+i}$,  (with $\K_{\mu_1.+i}$ the $+i$ eigenspace of $R_1$).\footnote{Recall that this is required by consistency if we are working towards proving necessity as in section \ref{sec:two-part-ampl}, and it is required in order to \textit{define} $\ket*{\omega_2^{(2)}}$ when proving sufficiency.}

Recall  from equation \eqref{eq:two-particle-overlap-mu} that $\ket{\chi}_{\mu_1}$ satisfies, for any $\ket{\psi}_{\mu_1}$ in the $-i$ eigenspace of $R_2$, 
\begin{equation} \label{eq:app-chi-def}
	\langle L^{-1} \psi^* \otimes g|\chi\rangle_{\mu_1} = - \frac{1}{\sqrt{2}} \langle \Pi_{1,i} g | L^{-1} \psi \rangle_{\mu_1}.
\end{equation}
Meanwhile, by acting on equation \eqref{eq:1+Q-2} with $\Gamma$ and recalling that $\Gamma$ commutes with $L$, one finds 
\begin{equation}
	\Pi_{1, i} |L^{-1} \psi^*\rangle_{\mu_1} = \frac{1 + L^{\dagger} L}{2} |L^{-1} \psi^*\rangle_{\mu_1},
\end{equation}
and we note that the operator $(1+Q)/2$ appearing on the right hand side is invertible, where as in the main text, we write $Q = L^{\dagger} L.$
From the above, we learn that $\Pi_{1,i}$ acts in a bounded, invertible way when mapping the space $\{\ket*{L^{-1}\psi^*}_{\mu_1}\}$ to its image under $\Pi_{1,i}$.
But in section \ref{sec:pure-odd-particle}, we showed that the image of this map is dense in $\Pi_{1,i} \K_{\mu_1}.$
Since the generic $f$ appearing in the overlap $\braket{f\otimes g}{\chi}_{\mu_1}$ can be taken without loss of generality to lie in $\Pi_{1,i}\K_{\mu_1}$, we can compute
\begin{equation} \label{eq:generic-chi-overlap}
	\langle f \otimes g |\chi\rangle
	= \langle \Pi_{1,i} 2 (1 + Q)^{-1} \Pi_{1,i} f \otimes g | \chi\rangle 
	= - \sqrt{2} \langle \Pi_{1,i} g |(1 + Q)^{-1} \Gamma \Pi_{1,i} f \rangle,
\end{equation}
where in the last equality we used \eqref{eq:app-chi-def}.
Symmetry of this expression under the interchange $f \leftrightarrow g$ follows from antiunitarity of $\Gamma$ and the fact that $\Gamma$ commutes with $Q.$

\subsection{Normalizability}
\label{app:ansatz-normalizability}

The ansatz for $|\omega_2\rangle$ from section \ref{sec:pure-sufficiency} is
\begin{equation}
	|(U \otimes \dots \otimes U) V \omega_2^{(2n)}\rangle_{\mu_1}
	= \frac{1}{n!} \sqrt{\frac{(2n)!}{2^n}} P_{\text{sym}}|\chi^{\otimes n}\rangle_{\mu_1}.
\end{equation}
The norm-squared of $|\omega_2^{(2n)}\rangle$ is therefore given by
\begin{equation}
	\langle \omega_2^{(2n)} | \omega_2^{(2n)}\rangle
	= \frac{1}{(n!)^2} \frac{(2n)!}{2^n} \langle \chi^{\otimes n} | P_{\text{sym}}|\chi^{\otimes n}\rangle_{\mu_1}.
\end{equation}
Picking an arbitrary orthonormal basis $|e_j\rangle_{\mu_1}$ for the $+i$ eigenspace of $R_1$, we may think of $|\chi\rangle$ as the operator on $\K_{\mu_1}$ with matrix elements $\chi_{jk} = \langle e_j \otimes e_k | \chi\rangle_{\mu_1}.$
Then the above expression becomes
\begin{equation}
	\langle \omega_2^{(2n)} | \omega_2^{(2n)}\rangle
	= \frac{1}{(n!)^2} \frac{1}{2^n} \sum_{j_1,\dots, j_{2n}} \sum_{\pi \in S_{2n}} \chi_{j_1 j_2}^* \dots \chi_{j_{2n-1} j_{2n}}^*
	\chi_{j_{\pi(1)} j_{\pi(2)}} \dots \chi_{j_{\pi(2n-1)} j_{\pi(2n)}}.
\end{equation}

This is a fairly complicated expression.
It can be understood by treating $\chi$ as a matrix, with each term in the sum corresponding to some expression like $\tr((\chi^{\dagger} \chi)^{p_1}) \dots \tr((\chi^{\dagger} \chi)^{p_m}),$ where one has $p_1 + \dots + p_m = n,$ but the specific integers $p_1, \dots, p_m$ are determined by the cycle structure of the permutation $\pi.$
To bound any one of these terms, we can use standard Hölder-type inequalities to write
\begin{equation}
	\tr\big((\chi^{\dagger} \chi)^{p}\big)
	\leq \tr\big(\chi^{\dagger} \chi\big) \lVert \chi^{\dagger} \chi \rVert_{\infty}^{p-1},
\end{equation}
so we have
\begin{equation} \label{eq:cdagger-c-traces}
	\tr\big((\chi^{\dagger} \chi)^{p_1}\big) \cdots \tr\big((\chi^{\dagger} \chi)^{p_m}\big)
	\leq \tr\big(\chi^{\dagger} \chi\big)^{m} \lVert \chi^{\dagger} \chi \rVert_{\infty}^{n-m}.
\end{equation}

The details of the relationship between a permutation $\pi$ and the number of traces $m$ are somewhat complicated.
However, using the inequality $\tr(\chi^{\dagger} \chi) \geq \lVert \chi^{\dagger} \chi \rVert_{\infty}$, we can upper bound the right-hand side of \eqref{eq:cdagger-c-traces} by replacing $m$ with any upper bound on the number of traces. A  simple (albeit loose) upper bound is given by the number of cycles in the permutation $\pi$, since traces can only appear by the closure of a cycle.
Consequently, we have
\begin{equation}
	\langle \omega_{2}^{(2n)} | \omega_2^{(2n)} \rangle 
		\leq (\lVert \chi^{\dagger} \chi \rVert_{\infty})^n
			\sum_{\pi \in S_{2n}} \frac{1}{(n!)^2 2^n} \left( \frac{\tr(\chi^{\dagger} \chi)}{\lVert \chi^{\dagger} \chi \rVert_{\infty}} \right)^{\text{\# cycles}(\pi)}.
\end{equation}
This can be rewritten as
\begin{equation}\label{eq:norm-2n-bound}
	\langle \omega_{2}^{(2n)} | \omega_2^{(2n)} \rangle 
	\leq (\lVert \chi^{\dagger} \chi \rVert_{\infty})^n
	\sum_{m=1}^{2n} C(2n, m) \frac{1}{(n!)^2 2^n} \left( \frac{\tr(\chi^{\dagger} \chi)}{\lVert \chi^{\dagger} \chi \rVert_{\infty}} \right)^{m}.
\end{equation}
where $C(2n, m)$ is the number of permutations in $S_{2n}$ with exactly $m$ cycles.

The numbers $C(2n, m)$ are called Stirling numbers.
Their basic properties can be found in any book on combinatorics, for example \cite[chapter 1.4]{Aigner:enumeration}.
They have generating function
\begin{equation}
	\sum_{m=1}^{2n} C(2n, m) x^{m} = \frac{\Gamma(x+2n)}{\Gamma(x)},
\end{equation}
from which we can rewrite inequality \eqref{eq:norm-2n-bound} as
\begin{equation} \label{eq:penultimate-2n-upper-bound}
	\langle \omega_{2}^{(2n)} | \omega_2^{(2n)} \rangle 
	\leq (\lVert \chi^{\dagger} \chi \rVert_{\infty})^n
	\frac{1}{(n!)^2 2^n} \left. \frac{\Gamma(x+2n)}{\Gamma(x)} \right|_{x= \mathrm{tr}(\chi^{\dagger} \chi)/\lVert \chi^{\dagger} \chi\rVert_{\infty}}.
\end{equation}

Now let us think a bit about what the traces and the operator norms in this expression actually mean.
What we have been calling $\chi^{\dagger} \chi$ is an infinite-dimensional matrix with matrix elements
\begin{align} \label{eq:chidag-chi-elements}
	\begin{split}
		(\chi^{\dagger} \chi)_{jk}
		& = \sum_{\ell} \langle \chi | e_j \otimes e_{\ell} \rangle_{\mu_1} \langle e_{\ell} \otimes e_k | \chi\rangle_{\mu_1},
	\end{split}
\end{align}
where $|e_j\rangle_{\mu_1}$ is an orthonormal basis for the $+i$ eigenspace of $R_1.$
Using equation \eqref{eq:generic-chi-overlap}, we may compute
\begin{align}
	\begin{split}
		(\chi^{\dagger} \chi)_{jk}
		& = 2 \sum_{\ell} \langle (1 + Q)^{-1} \Gamma e_{j} |  e_{\ell} \rangle_{\mu_1} \langle e_{\ell} | (1 + Q)^{-1} \Gamma e_{k} \rangle_{\mu_1} \\
		& = 2 \langle (1 + Q)^{-1} \Gamma e_{j} |  \Pi_{1,i} | (1 + Q)^{-1} \Gamma e_{k} \rangle_{\mu_1}.
	\end{split}
\end{align}
Using antiunitarity of $\Gamma$, together with the fact that it commutes with $Q$ and sends $\Pi_{1,i}$ to $\Pi_{1,-i},$ we may write this as
\begin{align}
	\begin{split}
		(\chi^{\dagger} \chi)_{jk}
		& = 2 \langle e_{k} | (1 + Q)^{-1} \Pi_{1,-i}  (1 + Q)^{-1} | e_{j} \rangle_{\mu_1}.
	\end{split}
\end{align}
The same calculation as in equation \eqref{eq:1+Q-2} yields that $\Pi_{1,i}$ acts on states like $L^{-1} |R_2 = +i\rangle_{\mu_2}$ as $(1+Q)/2$, and since this produces a dense set of states in the $+i$ eigenspace of $R_1,$ one can approximate $(1 + Q)^{-1} |e_j\rangle_{\mu_1}$ arbitrarily well as a vector of the form $L^{-1} |R_2 = +i\rangle_{\mu_2}.$
Another calculation like that in equation \eqref{eq:1+Q-2} shows that $\Pi_{1,-i}$ acts on all such states as $(1-Q)/2.$
Putting these observations together, and using $\Pi_{1,-i} = \Pi_{1,-i}^2,$ we can write $(\chi^{\dagger} \chi)_{jk}$ as
\begin{align}
	\begin{split}
		(\chi^{\dagger} \chi)_{jk}
		& = \langle e_{k} | \frac{1}{2} \left( \frac{1-Q}{1+Q}\right)^2 | e_{j} \rangle_{\mu_1}.
	\end{split}
\end{align}
The function $((1-x)/(1+x))^2$ is strictly smaller than one away from $x=0.$
Since $Q$ is a lower-bounded operator --- as $L$ is invertible --- we may conclude that $((1-Q)/(1+Q))^2$ is strictly bounded away from one as an operator.
As a consequence, we obtain that there is a number $\epsilon > 0$ such that $\chi^{\dagger} \chi$ has the upper bound
\begin{equation}
	\lVert \chi^{\dagger} \chi \rVert_{\infty} \leq \frac{1 - \epsilon}{2}.
\end{equation}
If we return now to equation \eqref{eq:penultimate-2n-upper-bound}, we find the upper bound
\begin{equation} \label{eq:final-2n-upper-bound}
	\langle \omega_2^{(2n)} | \omega_2^{(2n)} \rangle
		\leq \left( \frac{1 - \epsilon}{4} \right)^n \frac{1}{(n!)^2} 
		 \left. \frac{\Gamma(x+2n)}{\Gamma(x)} \right|_{x= \tr(\chi^{\dagger} \chi)/\lVert \chi^{\dagger} \chi\rVert_{\infty}}.
\end{equation}
It is a simple task to show that the sum of this expression from $n=0$ to $n=\infty$ converges (for instance, one may use the ratio test);\footnote{Let $a_n$ be the term on the right-hand side. Using that $\Gamma(z+1)=z\Gamma(z)$ twice one finds that the ratio of $a_{n+1}/a_n$ goes to $1-\epsilon$ as $n\to\infty$, which is smaller than $1$ since $\epsilon >0$.}
consequently, the ansatz for $|\omega_2\rangle$ is normalizable.

\subsection{Domains of field operators}
\label{app:ansatz-domain}

Here we aim to show that $|\omega_2\rangle$, as defined by the ansatz of section \ref{sec:pure}, is in the domain of (the closure of) each field operator $\phi[f].$
Proceeding by induction, we will show that all states of the form $\phi[f_1] \dots \phi[f_n] |\omega_2\rangle$ make sense.

The point is that, as explained in appendix \ref{app:ladder-adjoints}, each field operator $\phi[f]$ has, as its closure, the adjoint of the field operator $\phi[f^*].$
So at first it will suffice to show that $|\omega_2\rangle$ is in the domain of the adjoint of $\phi[f^*].$
By the definition of the adjoint of an unbounded operator, this means that we must show that the map
\begin{equation}
	|\eta \rangle \mapsto \langle \omega_2 | \phi[f^*] \eta\rangle
\end{equation}
is bounded for any $|\eta\rangle$ in the original domain of definition of $\phi[f^*].$
These states $|\eta\rangle$ are the finite-particle-number states that can be obtained by acting on $|\omega_1\rangle$ with a finite number of field operators.
But since $\phi[f^*]$ is bounded on each $n$-particle space, we may as well let $|\eta\rangle$ be a generic state of finite particle number, i.e.,
\begin{equation}
	|\eta\rangle = \sum_{k=0}^{K} |\eta^{(k)}\rangle.
\end{equation}

Consider now the overlap 
\begin{equation}
	\langle \omega_2 | \phi[f^*] \eta^{(k)}\rangle.
\end{equation}
Using the decomposition $\phi[f^*] = a[f^*] + a[f]^{\dagger},$ one sees that only the $(k+1)$- and $(k-1)$-particle-number amplitudes of $|\omega_2\rangle$ contribute to this overlap.
So we have
\begin{equation}
	\langle \omega_2 | \phi[f^*] \eta^{(k)}\rangle
		= \langle \omega_2^{(k-1)} | a[f^*] \eta^{(k)}\rangle
			+ \langle \omega_2^{(k+1)} | a[f]^{\dagger} \eta^{(k)}\rangle.
\end{equation}
The operator norm of $a[f]^{\dagger}$ on the $k$-particle space is  $\sqrt{k+1} \lVert \phi[f^*] |\omega_1\rangle \rVert$ (as we computed in appendix \ref{app:ladder-adjoints}); similarly the operator norm of $a[f^*]$ on this space is $\sqrt{k} \lVert \phi[f] |\omega_1\rangle \rVert.$
Using Cauchy-Schwarz on the above expression, we find 
\begin{equation}
	|\langle \omega_2 | \phi[f^*] \eta^{(k)}\rangle|
	\leq \left( \sqrt{k+1} \lVert \phi[f^*] |\omega_1\rangle \rVert  \lVert \omega_2^{(k+1)} \rVert + \sqrt{k} \lVert \phi[f] |\omega_1\rangle \rVert \lVert \omega_2^{(k-1)} \rVert \right) \lVert \eta^{(k)}\rVert.
\end{equation}
If $k$ is even, then this vanishes, because the odd-particle amplitudes of $|\omega_2\rangle$ vanish.
On the other hand, if we put $k = 2n+1,$ then we may estimate the norms of the even-particle amplitudes of $|\omega_2\rangle$ using inequality \eqref{eq:final-2n-upper-bound}.
This gives
\begin{align}
	\begin{split}
	|\langle \omega_2 | \phi[f^*] \eta^{(2n+1)}\rangle|
		& \leq \sqrt{2n+2} \lVert \phi[f^*] |\omega_1\rangle \rVert  \left( \frac{1-\epsilon}{4} \right)^{(n+1)/2} \frac{1}{(n+1)!} \sqrt{\frac{\Gamma(x+2n+2)}{\Gamma(x)}} \lVert \eta^{(2n+1)}\rVert \\
	& \qquad + \sqrt{2n+1} \lVert \phi[f] |\omega_1\rangle \rVert \left( \frac{1-\epsilon}{4} \right)^{n/2} \frac{1}{n!} \sqrt{\frac{\Gamma(x+2n)}{\Gamma(x)}} \lVert \eta^{(2n+1)}\rVert ,
	\end{split} 
\end{align}
where $x$ is some number $\geq 1.$
Using fairly crude bounds, one can bound this as
\begin{align}
	\begin{split}
		|\langle \omega_2 | \phi[f^*] \eta^{(2n+1)}\rangle|
		& \leq \left[ \sqrt{2n+2} \left( \frac{1-\epsilon}{4} \right)^{n/2} \frac{1}{n!}    \sqrt{\frac{\Gamma(x+2n+2)}{\Gamma(x)}} \right] \left( \lVert \phi[f^*] |\omega_1\rangle \rVert + \lVert \phi[f] |\omega_1\rangle \rVert \right) \lVert \eta^{(2n+1)}\rVert.
	\end{split}
\end{align}
Call the part of the expression in square brackets $Z_{n, x, \epsilon},$ and call the part in parentheses $C_f$. One has
\begin{align}
	\begin{split}
		|\langle \omega_2 | \phi[f^*] \eta^{(2n+1)}\rangle|
		& \leq Z_{n, x, \epsilon} C_f \lVert \eta^{(2n+1)}\rVert.
	\end{split}
\end{align}
From this one may conclude
\begin{equation}
	| \langle \omega_2 | \phi[f^*] \eta \rangle | 
		\leq \sum_{n=0}^{(K-1)/2} |\langle \omega_2 | \phi[f^*] \eta^{(2n+1)} \rangle | 
		\leq C_f \sum_{n=0}^{(K-1)/2} Z_{n, x, \epsilon} \lVert \eta^{(2n+1)} \rVert.
\end{equation}
\begin{equation}
\begin{aligned}
    	| \langle \omega_2 | \phi[f^*] \eta \rangle | 
	&\leq C_f \textstyle\sqrt{\sum_{n=0}^{(K-1)/2} Z_{n, x, \epsilon}^2} \,\textstyle\sqrt{\sum_{n=0}^{(K-1)/2} \lVert \eta^{(2n+1)} \rVert^2}\\
    &\leq C_f \textstyle\sqrt{\sum_{n=0}^{(K-1)/2} Z_{n, x, \epsilon}^2} \,\lVert \eta \rVert\\
    &\leq  C_f \textstyle\sqrt{\sum_{n=0}^{\infty} Z_{n, x, \epsilon}^2} \,\lVert \eta \rVert.
\end{aligned}
\end{equation}
So $|\omega_2\rangle$ is in the domain of $\phi[f]$ provided that the sum on the right-hand side is finite.
Finiteness of this sum is established again using the ratio test; the extra factor of $\sqrt{2n+2}$ does not change anything.

Repeating the above calculation for the overlap $\langle \phi[f_1] \omega_2 | \phi[f_2^*] \eta \rangle$, the only change is an additional polynomial-in-$n$ contribution to $Z_{n,x,\epsilon}$.
This does not change summability, and one may conclude that $\phi[f_1] |\omega_2\rangle$ is in the domain of $\phi[f_2].$
Proceeding by induction, one easily makes sense of all states of the form $\phi[f_1] \dots \phi[f_n] |\omega_2\rangle.$

\section{Some Schatten-class counterexamples}

The general condition claimed in the main text for excitability $\omega_2 \prec \omega_1$ --- see theorem \ref{thm:general-theorem} --- is that one has $\mu_2 \prec \mu_1$, that the operator $\sqrt{X_2} - \sqrt{X_1}$ is Hilbert-Schmidt, and that $Q = X_2 - X_1 + 1$ has trivial kernel.
When $\omega_2$ and $\omega_1$ are both pure, instead of the condition on $\sqrt{X_2} - \sqrt{X_1}$, one may equivalently check that $X_2 - X_1$ is Hilbert-Schmidt --- see proposition \ref{prop:pure-with/out-roots}.
When $\omega_1$ is pure and $\omega_2$ is mixed, one may equivalently check that the ``diagonal blocks'' of $X_2 - X_1$ with respect to the $R_1$ decomposition are trace class --- see theorem \ref{thm:mixed-from-pure}.
In this appendix we show that these statements are tight:
\begin{itemize}
	\item First we show that for any pure state $\omega_1$ there exists a state $\omega_2$ with $X_2 - X_1$ Hilbert-Schmidt but $\sqrt{X_2} - \sqrt{X_1}$ is not Hilbert-Schmidt.
	\item
	Second we show that for any pure state $\omega_1$ there exists a state $\omega_2 \prec \omega_1$ --- so in particular the diagonal blocks of $X_2 - X_1$ are trace-class --- for which the full operator $X_2 - X_1$ is not trace-class.
\end{itemize}

\subsection{A Hilbert-Schmidt counterexample}
\label{app:counterexample-1}

Let $\omega_1$ be a pure state, and let $\K_{\mu_1}$ be the associated phase space, with operator $R_1$ implementing the symplectic form.
As in appendix \ref{app:pure}, the fact that $\omega_1$ is pure means that one has $(R_1)^2 = -1.$
Let $|e_n\rangle_{\mu_1}$ be an orthonormal basis for the $+i$ eigenspace of $R_1.$
With $\Gamma$ denoting complex conjugation, the vectors $\Gamma |e_n\rangle$ form an orthonormal basis for the $-i$ eigenspace.
We define a two-point function $\omega_2$ by the formula
\begin{equation}
	\omega_2(\phi[f]^* \phi[g])
		= \sum_{n} \frac{1}{n} \langle f | \Gamma e_n\rangle_{\mu_1} \langle \Gamma e_n | g \rangle_{\mu_1} + \sum_n \left(2 + \frac{1}{n} \right) \langle f | e_n \rangle_{\mu_1} \langle e_n | g \rangle_{\mu_1}.
\end{equation}
Leaving the basis implicit, we may write this schematically in a block decomposition as
\begin{equation}
	\omega_2(\phi[f]^* \phi[g])
		= \langle f | \begin{pmatrix} 2 + 1/n & 0 \\ 0 & 1/n \end{pmatrix} |g\rangle_{\mu_1},
\end{equation}
where the square matrix represents the operator on $\K_{\mu_1}$ that sends $\ket{e_n}_{\mu_1}$ to $(2+1/n)\ket{e_n}_{\mu_1}$ and $\ket{\Gamma e_n}_{\mu_1}$ to $1/n\ket{\Gamma e_n}_{\mu_1}$.
Clearly $\omega_2$ is positive, linear in $g,$ and antilinear in $f.$
To check that it is a valid two-point function, we need only check the canonical commutation relation.
But we have
\begin{align}
	\begin{split}
		\omega_2(\phi[f]^* \phi[g] - \phi[g] \phi[f]^*)
			& = \langle f | \begin{pmatrix} 2 + 1/n & 0 \\ 0 & 1/n \end{pmatrix} |g\rangle_{\mu_1}
				- \langle \Gamma g| \begin{pmatrix} 2 + 1/n & 0 \\ 0 & 1/n \end{pmatrix} |\Gamma f\rangle_{\mu_1} \\
			& = \langle f | \begin{pmatrix} 2 + 1/n & 0 \\ 0 & 1/n \end{pmatrix} |g\rangle_{\mu_1}
			- \langle f| \Gamma \begin{pmatrix} 2 + 1/n & 0 \\ 0 & 1/n \end{pmatrix} \Gamma |g\rangle_{\mu_1}.
	\end{split}
\end{align}
Conjugation by $\Gamma$ exchanges the eigenspaces of $R_1,$ so this may be written
\begin{align}
	\begin{split}
		\omega_2(\phi[f]^* \phi[g] - \phi[g] \phi[f]^*)
		& = \langle f | \begin{pmatrix} 2 + 1/n & 0 \\ 0 & 1/n \end{pmatrix} |g\rangle_{\mu_1}
		- \langle f| \begin{pmatrix} 1/n & 0 \\ 0 & 2 + 1/n \end{pmatrix} \Gamma |g\rangle _{\mu_1}\\
		& = 2 \langle f |\begin{pmatrix} 1 & 0 \\ 0 & -1 \end{pmatrix}| g \rangle_{\mu_1} \\
		& = - 2 i \langle f | R_1 | g \rangle_{\mu_1} \\
		& = - i \Omega[f^*, g].
	\end{split}
\end{align}

We may now compute the operators $X_2$ and $X_1.$
We have
\begin{equation}
	X_1
		= 1 - i R_1
		= \begin{pmatrix} 2 & 0 \\ 0 & 0 \end{pmatrix}
\end{equation}
and
\begin{equation}
	X_2
		= \begin{pmatrix} 2 + 1/n & 0 \\ 0 & 1/n \end{pmatrix}.
\end{equation}
We may therefore compute
\begin{equation}
	X_2 - X_1
		= \begin{pmatrix} 1/n & 0 \\ 0 & 1/n \end{pmatrix}.
\end{equation}
This is a perfectly good Hilbert-Schmidt operator, since the sum $\sum_n 1/n^2$ is finite.
But one can easily compute
\begin{equation}
	\sqrt{X_2} - \sqrt{X_1}
		= \begin{pmatrix} \textstyle\sqrt{2 + \frac{1}{n}} - \sqrt{2} & 0 \\ 0 & \frac{1}{\sqrt{n}} \end{pmatrix}.
\end{equation}
The first term is Hilbert-Schmidt, but the second is not, because $\sum_n 1/n$ does not converge.

\subsection{A trace-class counterexample}
\label{app:counterexample-2}

In the notation of the previous subsection, given a state $\omega_1,$ we can define a state $\omega_2$ by
\begin{equation}
	\omega_2(\phi[f]^* \phi[g])
		= \langle f | \begin{pmatrix} 2 + 1/n^2 & 1/n \\ 1/n & 1/n^2 \end{pmatrix} |g\rangle_{\mu_1}.
\end{equation}
The notation in the off-diagonal blocks means that this operator sends, e.g., the state $|e_n\rangle$ to $(2 + 1/n^2) |e_n\rangle + 1/n |\Gamma e_n\rangle.$
It is easy to verify that this is positive and that it respects the canonical commutation relation.
The two point operators $X_1$ and $X_2$ are given by
\begin{equation}
	X_1
	= 1 - i R_1
	= \begin{pmatrix} 2 & 0 \\ 0 & 0 \end{pmatrix}
\end{equation}
and
\begin{equation}
	X_2
	= \begin{pmatrix} 2 + 1/n^2 & 1/n \\ 1/n & 1/n^2 \end{pmatrix}.
\end{equation}
The diagonal blocks of $X_2 - X_1$ go like $1/n^2,$ so they are trace-class.
However the full operator $X_2 - X_1$ is not trace-class.
To see this, one can explicitly write
\begin{equation}
	X_2 - X_1
		= \begin{pmatrix} 1/n^2 & 1/n \\ 1/n & 1/n^2 \end{pmatrix}
\end{equation}
and compute
\begin{equation}
	|X_2 - X_1|
		= \sqrt{(X_2 - X_1)^2}
		= \begin{pmatrix} 1/n & 1/n^2 \\ 1/n^2 & 1/n \end{pmatrix},
\end{equation}
which has infinite trace.

\section{Some abelian excitability proofs}
\label{app:abelian-proofs}

\subsection{\texorpdfstring{Proof of equation \eqref{eq:pm_main}}{Proof of equation (6.45)}}
\label{app:pm-proof}

Let $\ket*{(\tau_2^\natural)_{(m)}}$ be the natural cone representative of the restriction of $\tau_2$ to the subalgebra $\A_{\tau_1, (m)}$,  within the subspace $\mathcal{H}_{\tau_1, (m)}$. 
In section \ref{sec:abelian} we noted that
\begin{equation}
\label{eq:tau-nat-app}
	|(\tau_2^{\natural})_{(m)}\rangle
		= p_m(\phi[z_1], \dots, \phi[z_m]) |\tau_1\rangle
\end{equation}
for some positive function $p_m(t_1,\dots, t_m)$ on $\mathbb R^m$, and we claimed that this function is given in terms of $Q$ by equation \eqref{eq:pm_main}.
As indicated in the main text,  $p_m$ is completely fixed by imposing that $\ket*{(\tau_2^\natural)_{(m)}}$ reproduces all correlation functions of $\tau_2$ within $\mathcal{A}_{\tau_2,(m)}$; in particular, we have equation
\eqref{eq:tau-natural-Weyl-expectation}.

Taking derivatives of equation \eqref{eq:tau-natural-Weyl-expectation}, and using that the commutator vanishes, we have
\begin{align}
	\begin{split}
	& \langle (\tau_2^{(\natural)})_{(m)} | \phi[z_1]^{k_1} \dots \phi[z_m]^{k_m} |(\tau_2^{\natural})_{(m)} \rangle \\
		& \qquad = (-i)^{k_1 + \dots + k_m} \left. \frac{ \del^{k_1 + \dots + k_m}}{\del t_1^{k_1} \dots \del t_m^{k_m}}
			\langle (\tau_2^{\natural})_{(m)} | e^{i \phi[t_1 z_1 + \dots + t_m z_m]} |(\tau_2^{\natural})_{(m)}\rangle \right|_{t_1 = \dots = t_m = 0} \\
		& \qquad = (-i)^{k_1 + \dots + k_m} \left. \frac{ \del^{k_1 + \dots + k_m}}{\del t_1^{k_1} \dots \del t_m^{k_m}}
		e^{ - \frac{1}{2} \langle t_1 z_1 + \dots + t_m z_m | Q | t_1 z_1 + \dots + t_m z_m \rangle_{\mu_1} } \right|_{t_1 = \dots = t_m = 0}\\
        &\qquad =(-i)^{k_1+\dots k_m} \left. \frac{ \partial^{k_1+\dots k_m}}{\partial t_1^{k_1} \dots \partial t_m^{k_m}}
		\exp\left[ - \frac{1}{2} \sum_{r,s} t_r Q_{r s} t_s  \right] \right|_{t_1 = \dots = t_m = 0},
	\end{split}
    \label{eq:tau-natural-correlator}
\end{align}
where we have defined $Q_{rs} = \langle z_r | Q | z_s\rangle_{\mu_1}$.
This is easily seen to describe a general polynomial expectation value with respect to an $m$-dimensional Gaussian integral with covariance matrix given by the inverse of the matrix obtained by restricting $Q$ to the span of $|z_1\rangle_{\mu_1}$ through $|z_m\rangle_{\mu_1}.$
To see this, denote by $Q_m$ the operator obtained by restricting $Q$ to this subspace.
Since $Q$ has no kernel by the results of section \ref{sec:general-necessary}, and since the subspace is finite-dimensional, the operator $Q_m$ is always invertible.
Consequently, we have, working in matrix notation,
\begin{align}
	\begin{split}
		& \langle (\tau_2^{(\natural)})_{(m)} | \phi[z_1]^{k_1} \dots \phi[z_m]^{k_m} |(\tau_2^{\natural})_{(m)} \rangle\\
		& \qquad = (-i)^{k_1 + \dots + k_m} \left. \frac{ \partial^{k_1 + \dots + k_m}}{\partial t_1^{k_1} \dots \partial t_m^{k_m}}
		\int \dd^m x\; \frac{1}{\sqrt{\det 2 \pi Q_m}} \exp\left[ - \frac{1}{2} \vec{x} \cdot Q_m^{-1} \cdot \vec{x} + i \vec{x} \cdot \vec{t} \right] \right|_{t_1 = \dots = t_m = 0} \\
		& \qquad = 	\frac{1}{\sqrt{\det 2 \pi Q_m}} \int \dd^m x\; x_1^{k_1} \dots x_m^{k_m} \exp\left[ - \frac{1}{2} \vec{x} \cdot Q_m^{-1} \cdot \vec{x} \right].
	\end{split}
    \label{eq:tau-natural-correlator2}
\end{align}

To see how this determines the function $p_m$, we rewrite the left-hand side of the above equation using equation \eqref{eq:tau-nat-app}:
\begin{align}
\label{eq:pm_expVal-app}
	\begin{split}
		& \bra*{(\tau_2^{\natural})_{(m)}}\phi[z_1]^{k_1} \dots \phi[z_m]^{k_m} \ket*{(\tau_2^{\natural})_{(m)}}\\
        & \qquad = \langle  \tau_1 | p_m(\phi[z_1], \dots, \phi[z_m])^2 \phi[z_1]^{k_1} \dots \phi[z_m]^{k_m} |\tau_1\rangle\\
		 &
         \qquad =\frac{1}{\sqrt{(2 \pi)^m}} \int \dd^m x\; p_m(x_1, \dots, x_m)^2 x_1^{k_1} \dots x_m^{k_m} \exp\left[ - \frac{1}{2} \vec{x} \cdot \vec{x} \right] .
	\end{split}
\end{align}
where to evaluate the $\tau_1$ correlator in the last line, we used similar methods as for the $\tau_2^{\natural}$ correlator; in particular, one can just replace $Q$ by the identity in equations \eqref{eq:tau-natural-correlator}-\eqref{eq:tau-natural-correlator2}.
Equality of the preceding two equations tells us
\begin{align}
	\begin{split}
	& \frac{1}{\sqrt{(2 \pi)^m}} \int \dd^m x\; p_m(x_1, \dots, x_m)^2 x_1^{k_1} \dots x_m^{k_m} \exp\left[ - \frac{1}{2} \vec{x} \cdot \vec{x} \right] \\
		& \qquad = \frac{1}{\sqrt{\det 2 \pi Q_m}} \int \dd^m x\; x_1^{k_1} \dots x_m^{k_m} \exp\left[ - \frac{1}{2} \vec{x} \cdot Q_m^{-1} \cdot \vec{x} \right],
	\end{split}
\end{align}
Since $p_m$ is positive, this uniquely determines the function $p_m$ to be given by the formula
\begin{equation} \label{eq:pm-app}
	p_m(x_1, \dots, x_m) = \frac{1}{(\det Q_m)^{1/4}} \exp\left[ \frac{1}{4} \vec{x}\cdot (1 - Q_m^{-1} ) \cdot \vec{x} \right].
\end{equation}
This completes the proof of equation \eqref{eq:pm_main}.

\subsection{\texorpdfstring{Proof of equation \eqref{eq:lower-bounding-sequence}}{Proof of equation (6.51)}}
\label{app:lower-bounding-sequence}

Here we demonstrate that
\begin{equation}
    \braket{\Pi_{(m)}P_{\tau_2}\Pi_{(m)}\tau_1}{(P_{\tau_2,(m)}-\Pi_{(m)}P_{\tau_2}\Pi_{(m)})\tau_1}
\end{equation}
is a natural cone overlap.
Since $\Pi_{(m)}P_{\tau_2}\Pi_{(m)}$ is positive and affiliated to $\mathcal{A}_{\tau_1,(m)}$, it remains to show that the same is true of $P_{\tau_2,(m)}-\Pi_{(m)}P_{\tau_2}\Pi_{(m)}.$

Let us begin with affiliation.
Both operators in the expression 
\begin{equation}\label{eq:pos-aff}
    P_{\tau_2,(m)}-\Pi_{(m)}P_{\tau_2}\Pi_{(m)}
\end{equation}
are affiliated to $\mathcal{A}_{\tau_1,(m)}$; hence, to check that the full operator is affiliated, we need only check that it is densely defined and closed.
Towards denseness, we claim that one has
\begin{equation}
\label{eq:p-2-m}
    P_{\tau_2, (m)} = |P_{\tau_2} \Pi_{(m)}|\,,
\end{equation}
and hence the domain of $P_{\tau_2,(m)}$ coincides with that of $P_{\tau_2}\Pi_{(m)}$, which in turn coincides with that of $\Pi_{(m)}P_{\tau_2}\Pi_{(m)}$.
This common domain is dense, so $P_{\tau_2,(m)}-\Pi_{(m)}P_{\tau_2}\Pi_{(m)}$ is densely defined.
We now see that $P_{\tau_2,(m)}-\Pi_{(m)}P_{\tau_2}\Pi_{(m)}$ is the difference of two closed operators with the same domain, and hence it is itself closed on this domain.
We conclude $P_{\tau_2,(m)}-\Pi_{(m)}P_{\tau_2}\Pi_{(m)}$ is affiliated to $\mathcal{A}_{\tau_1,(m)}$, as desired.

Before showing positivity of this operator, let us justify equation \eqref{eq:p-2-m}.
This is the statement that acting on $|\tau_1\rangle$ with $|P_{\tau_2} \Pi_{(m)}|$ produces the natural cone representative $|(\tau_2^{\natural})_{(m)}\rangle$.
By similar reasoning to that used below equation \eqref{eq:first-natural-cone-overlap} in the main text, we have that $|P_{\tau_2} \Pi_{(m)}|$ is affiliated to $\A_{\tau_1, (m)}$. 
Since this operator is manifestly positive, it only remains to show that $|P_{\tau_2} \Pi_{(m)}| |\tau_1\rangle$ reproduces all the correlation functions of $|\tau_2\rangle$ on this algebra.
Writing the polar decomposition 
\begin{equation}
	P_{\tau_2} \Pi_{(m)}
		= V_m |P_{\tau_2} \Pi_{(m)}|,
\end{equation}
and recalling that $P_{\tau_2} \Pi_{(m)}$ commutes with $\A_{\tau_1, (m)},$ one learns that the isometry $V_m$ commutes with $\A_{\tau_1, (m)}$ as well.
So for any $a$ in this algebra, we have
\begin{align}
	\begin{split}
	\langle (|P_{\tau_2} \Pi_{(m)}|) \tau_1 | a | (|P_{\tau_2} \Pi_{(m)}|) \tau_1 \rangle
		& = \langle V_m^{\dagger} P_{\tau_2} \Pi_{(m)} \tau_1 | a | V_m^{\dagger} P_{\tau_2} \Pi_{(m)} \tau_1 \rangle \\
		& = \langle V_m^{\dagger} P_{\tau_2} \tau_1 | a | V_m^{\dagger} P_{\tau_2} \tau_1 \rangle \\
		& = \langle \tau_2^{\natural} | V_m a V_m^{\dagger} | \tau_2^{\natural} \rangle \\
		& = \langle \tau_2^{\natural} | a V_m V_m^{\dagger} | \tau_2^{\natural} \rangle.
	\end{split}
\end{align}
The operator $V_m V_m^{\dagger}$ is the projector onto the image of $P_{\tau_2} \Pi_{(m)}.$
Obviously $|\tau_2^{\natural} \rangle$ is in this image --- since it is just produced by acting with this operator on the state $|\tau_1\rangle$ --- so one finds
\begin{align}
	\begin{split}
		\langle (|P_{\tau_2} \Pi_{(m)}|) \tau_1 | a | (|P_{\tau_2} \Pi_{(m)}|) \tau_1 \rangle
		& = \langle \tau_2^{\natural} | a | \tau_2^{\natural} \rangle.
	\end{split}
\end{align}
This lets us conclude $P_{\tau_2, (m)}
		= |P_{\tau_2} \Pi_{(m)}|$ as desired.

Finally, we check positivity of $P_{\tau_2,(m)}-\Pi_{(m)}P_{\tau_2}\Pi_{(m)}.$
We compute, for $\ket{\psi}$ in the shared domain,
\begin{equation}
\begin{aligned}
        \bra{\psi}\Pi_{(m)}P_{\tau_2}\Pi_{(m)}\ket{\psi} &= \bra{\psi}P_{\tau_2}\Pi_{(m)}\ket{\psi}\\
        &= \bra{\psi}V_m P_{\tau_2, (m)}\ket{\psi}\\
        &= \bra*{P_{\tau_2, (m)} ^{1/2}\psi}V_m \ket*{P_{\tau_2, (m)} ^{1/2}\psi},
\end{aligned}
\end{equation}
where in the first equality we used that $\Pi_{(m)}$ formally commutes with $P_{\tau_2}$, in the second we used equation \eqref{eq:p-2-m}, and in the third we used that $V_m$ formally commutes with $P_{\tau_2,(m)}$.
Since $V_m$ is a partial isometry, we have 
\begin{equation}
    \bra*{P_{\tau_2, (m)} ^{1/2}\psi}V_m \ket*{P_{\tau_2, (m)} ^{1/2}\psi} \leq \bra*{P_{\tau_2, (m)} ^{1/2}\psi}\ket*{P_{\tau_2, (m)} ^{1/2}\psi} = \bra*{\psi}P_{\tau_2, (m)} \ket*{\psi},
\end{equation}
which concludes the proof.

\subsection{\texorpdfstring{Proof of equation \eqref{eq:tauOverlaps}}{Proof of equation (6.56)}}
\label{app:overlap-taus}

Here we compute the inner product $\braket*{(\tau_2^\natural)_{(m)}}{(\tau_2^\natural)_{(n)}}$ with
\begin{equation}
	|(\tau_2^{\natural})_{(m)}\rangle
		= p_m(\phi[z_1], \dots, \phi[z_m]) |\tau_1\rangle,
\end{equation}
and $p_m$ given in equation \eqref{eq:pm-app}.
As in the main text, we take $m\geq n$. 
We have
\begin{equation}
\begin{aligned}
        \braket*{(\tau_2^\natural)_{(m)}}{(\tau_2^\natural)_{(n)}} &=\bra{\tau_1}p_m(\phi[z_1], \dots, \phi[z_m])p_n(\phi[z_1], \dots, \phi[z_n]) \ket{\tau_1}\,.
\end{aligned}
\end{equation}
Using the explicit expression for $\tau_1$  correlators discussed below equation \eqref{eq:pm_expVal-app}, we have
\begin{equation}
    \braket*{(\tau_2^\natural)_{(m)}}{(\tau_2^\natural)_{(n)}} = \frac{1}{\sqrt{(2\pi)^m}}\int\dd^m x \,e^{-\frac{1}{2}\vec{x}\cdot \vec{x}}p_m(x_1,\dots, x_m)p_n(x_1,\dots, x_n),
\end{equation}
which, upon substituting equation \eqref{eq:pm-app} for $p_m$, becomes
\begin{equation}
    \braket*{(\tau_2^\natural)_{(m)}}{(\tau_2^\natural)_{(n)}} = 
    \frac{1}{\sqrt{(2 \pi)^m}(\det Q_m)^{1/4}(\det Q_n)^{1/4}} \int \dd^m x\; e^{- \frac{1}{2} \vec{x} \cdot \vec{x}}e^{\frac{1}{4} \vec{x} \cdot (1 - Q_m^{-1}) \cdot \vec{x}} e^{\frac{1}{4} \vec{x} \cdot (1_n - Q_n^{-1}) \cdot \vec{x}},
\end{equation}
where $1_n$ denotes the projector onto the first $n$ entries $x_1, \dots , x_n$.
Combining the terms in the exponent we get 
\begin{equation}
    \braket*{(\tau_2^\natural)_{(m)}}{(\tau_2^\natural)_{(n)}} =  \frac{1}{\sqrt{(2 \pi)^m}(\det Q_m)^{1/4}(\det Q_n)^{1/4}} \int \dd^m x\; e^{- \frac{1}{4} \vec{x} \cdot (Q_m^{-1} +Q_n^{-1} + 1_{m-n}) \cdot \vec{x}}
\end{equation}
with $1_{m-n}$ the projector onto $x_{n+1},\dots , x_{m}$.
Finally, evaluating the last Gaussian integral, and substituting $\det Q_n = \det(Q_n + 1_{m-n}),$ this is
\begin{align}
	\begin{split} 
		\braket*{(\tau_2^\natural)_{(m)}}{(\tau_2^\natural)_{(n)}} &=   (\det Q_m)^{-1/4} (\det (Q_n + 1_{m-n}))^{-1/4}\left({\det(\tfrac{Q_m^{-1} + (Q_n + 1_{m-n})^{-1}}{2})}\right)^{-1/2}\\
        &= \det(\frac{Q_{n, \perp}^{1/4} Q_m^{-1/2} Q_{n, \perp}^{1/4} + Q_{n, \perp}^{-1/4} Q_m^{1/2} Q_{n, \perp}^{-1/4}}{2})^{-1/2},
	\end{split} 
\end{align}
where in the second line we used the multiplicative property of determinants and we have introduced the notation $Q_{n, \perp}$ for $Q_n + 1_{m-n}.$

Writing $\lambda^{(m, n)}_{j}$ for the eigenvalues of $Q_{n, \perp}^{-1/4} Q_{m}^{1/2} Q_{n, \perp}^{-1/4}$, we finally get
\begin{align} \label{eq:tau-diff-det-app}
	\begin{split} 
		\braket*{(\tau_2^\natural)_{(m)}}{(\tau_2^\natural)_{(n)}}
			= \left(\prod_{j=1}^{m} \frac{\lambda^{(m,n)}_j + 1/\lambda^{(m,n)}_j}{2}\right)^{-1/2},
	\end{split} 
\end{align}
which reproduces equation \eqref{eq:tauOverlaps}, as desired.

\subsection{\texorpdfstring{Invertibility of $Q$}{Invertibility of Q}}
\label{app:why-q-is-invertible}

Above equation \eqref{eq:HS-mu} in the main text, we claimed that $Q$ is invertible whenever we have the Gaussian-state excitability relation $\tau_2 \prec \tau_1$ on an abelian scalar field algebra.
Notation used in this appendix is the same as the notation used in section \ref{sec:abelian}.

Invertibility of $Q$ is the statement that $\mu_1$ and $\mu_2$ induce equivalent inner products on the space of smearing functions.
From $\tau_2\prec\tau_1$, we already have $\mu_2\prec \mu_1$.
Hence, it remains to show that $\mu_1\prec \mu_2$.
To do so, we will show that any sequence of test functions $f_n$ that converges to zero with respect to $\mu_2$ must  also converge to zero with respect to $\mu_1$.
As in section \ref{sec:mu-boundedness}, it suffices to consider real sequences.

We begin by evaluating 
\begin{equation}
\begin{aligned}
        \bra*{\tau_2^\natural}e^{i\phi[g]}e^{i\phi[f_n]}e^{i\phi[h]}\ket*{\tau_2^\natural}  &=  \bra*{\tau_2^\natural}e^{i\phi[g+f_n +h]}\ket*{\tau_2^\natural}\\
        &= \tau_2(e^{i\phi[g+f_n +h]})\\
        &= e^{-\norm{g+f_n+h}^2_{\mu_2}/2}\\
        &\to  e^{-\norm{g+h}^2_{\mu_2}/2}\\
        &= \bra*{\tau_2^\natural}e^{i\phi[g]}e^{i\phi[h]}\ket*{\tau_2^\natural}.
\end{aligned}
\end{equation}
This shows that if $f_n \to 0$ in $\K_{\mu_2}$, then  $e^{i\phi[f_n]} \to 1$ with respect to matrix elements for a dense set of states within the subspace of $\H_{\tau_1}$ generated by the action of Weyl operators on $\ket*{\tau_2^\natural}$.
Since the sequence $e^{i \phi[f_n]}$ is uniformly bounded, that means that the sequence actually converges with respect to the weak topology for this subspace.

We now claim that this subspace coincides with the entire GNS Hilbert space $\H_{\tau_1}$; that is, we claim $\ket*{\tau_2^\natural}$ is cyclic for $\mathcal{A}_{\tau_1}$.
This follows from the fact that $\ket*{\tau_2^\natural}$ lies in the natural cone for  $\mathcal{A}_{\tau_1}$ (by construction) and is separating for   $\mathcal{A}_{\tau_1}$ (since it is separating on each subalgebra $\mathcal{A}_{\tau_1,(m)}$),\footnote{Proof: 
We first argue that $\ket*{\tau_2^\natural}$ is separating for each subalgebra $\mathcal{A}_{\tau_1,(m)}$.
Let $a\in \mathcal{A}_{\tau_1,(m)}$ be such that $\expval*{a^*a}{\tau_2^\natural}=0$. 
Since $\mathcal{A}_{\tau_1,(m)}$ is generated by a finite set of commuting, independent Hermitian operators, $a$ can be written in terms of some function $a(x_1,\dots, x_m)$ on $\mathbb R^m$. Similarly to equation \eqref{eq:pm_expVal-app}, the corresponding expectation value is given by integrating the non-negative function $|a(\vec x)|^2$ against the positive measure $p_m(\vec x)^2\exp(-\vec x \cdot \vec x /2)/\sqrt{(2\pi)^m}$, with $p_m(\vec x)$ given by equation \eqref{eq:pm_main}. This integral vanishes only if $|a(\vec x)|^2=0$, which implies that $a=0$. So $\ket*{\tau_2^\natural}$ is separating for each subalgebra.
This means $\ket*{\tau_2^\natural}$ is also separating for $\bigcup_m \mathcal{A}_{\tau_1,(m)}$.
In turn, this means $\ket*{\tau_2^\natural}$ is cyclic for $(\bigcup_m \mathcal{A}_{\tau_1,(m)})'$ which finally means $\ket*{\tau_2^\natural}$ is separating for $(\bigcup_m \mathcal{A}_{\tau_1,(m)})''=\A_{\tau_1}$ (this last equality is proved in section \ref{sec:suff-abelian}.).
} and any separating state in the natural cone is automatically cyclic \cite[theorem 5.4]{Araki:natural-1}.

So, $e^{i\phi[f_n]}$ converges weakly to 1 for the full space ${\cal H}_{\tau_1}$.
In particular, evaluating on the GNS vector $\ket{\tau_1}$ gives
\begin{equation}
    \bra*{\tau_1}e^{i\phi[f_n]}\ket*{\tau_1} = e^{-\norm{f_n}^2_{\mu_1}/2}\to 1,
\end{equation}
which implies that $f_n$ must converge to zero also with respect to $\mu_1$, which concludes the proof.

\subsection{\texorpdfstring{Proof of equation \eqref{eq:HS-mu}}{Proof of equation (6.58)}}
\label{app:bound-lambdas}

Letting $\lambda^{(m, n)}_{j}$ be the eigenvalues of $Q_{n, \perp}^{-1/4} Q_{m}^{1/2} Q_{n, \perp}^{-1/4}$, we claim that there exists a positive constant $C$, independent of $m$ and $n$, such that
\begin{equation} \label{eq:HS-mu-app}
	\prod_{j=1}^{m} \frac{\lambda^{(m,n)}_j + 1/\lambda^{(m,n)}_j}{2}
		\geq 1 + C \sum_{j=1}^{m} (\lambda^{(m,n)}_j - 1)^2.
\end{equation}
This is equation \eqref{eq:HS-mu} in the main text.
To obtain this, we note that, in light of appendix \ref{app:why-q-is-invertible}, $Q$ is bounded and invertible, hence we have constants $0 < u < v$ with $u \leq Q \leq v.$
This same property is inherited by $Q_{m}$, i.e., we have $u \leq Q_{m} \leq v,$ and as for $Q_{n, \perp},$ we have the relation
\begin{equation}
	\text{min}(1, u) \leq Q_{n, \perp} \leq \text{max}(1, v),
\end{equation}
which is made obvious by writing $Q_{n,\perp}$ in blocks corresponding to the decomposition $E_n {\cal K}_{\mu_1} \oplus (E_m-E_n) {\cal K}_{\mu_1}$.
We might as well take $u < 1$ and $v > 1$ to avoid having to use the minima and maxima in the above equation.
Then we obtain the inequality 
\begin{equation}
	\left( \frac{u}{v} \right)^{1/2} \leq Q_{n,\perp}^{-1/4} Q_m^{1/2} Q_{n, \perp}^{-1/4} \leq \left(\frac{v}{u}\right)^{1/2},
\end{equation}
hence
\begin{equation}
	\left( \frac{u}{v} \right)^{1/2} \leq \lambda^{(m,n)}_j \leq \left(\frac{v}{u}\right)^{1/2}.
\end{equation}
The function
\begin{equation}
	\frac{e^{\sum_j \log((x_j + x_j^{-1})/2)} - 1}{\sum_j (x_j-1)^2}
\end{equation}
is continuous and positive on any subset of $\mathbb R^m_{\geq 0}$ that is bounded away from the axes, so in particular, when all $x_j$ are bounded between $(u/v)^{1/2}$ and $(v/u)^{1/2},$ this function has a minimum; call it $C$.
Plugging in $x_j = \lambda^{(m,n)}_j$ gives the desired inequality \eqref{eq:HS-mu-app}.

\bibliographystyle{JHEP}
\bibliography{bibliography}

\providecommand{\href}[2]{#2}\begingroup\raggedright\begin{thebibliography}{10}

\bibitem{Powers:quasi}
R.~T. Powers and E.~St{\o}rmer, {\it {Free states of the canonical
  anticommutation relations}},  {\em Commun. Math. Phys.} {\bf 16} (1970)
  1--33.

\bibitem{Araki:quasi}
H.~Araki, {\it {On quasifree states of the canonical commutation relations:
  II.}}, .

\bibitem{VanDaele:quasi}
A.~Van~Daele, {\it {Quasi-equivalence of quasi-free states on the weyl
  algebra}},  {\em Commun. Math. Phys.} {\bf 21} (1971) 171--191.

\bibitem{Araki-Yamagami}
H.~Araki and S.~Yamagami, {\it On quasi-equivalence of quasifree states of the
  canonical commutation relations},  {\em {Publications of the Research
  Institute for Mathematical Sciences}} {\bf 18} (1982), no.~2 703--758.

\bibitem{Verch:hadamard}
R.~Verch, {\it {Local definiteness, primarity and quasiequivalence of quasifree
  Hadamard quantum states in curved space-time}},  {\em Commun. Math. Phys.}
  {\bf 160} (1994) 507--536.

\bibitem{Jensen:JSS}
K.~Jensen, J.~Sorce, and A.~J. Speranza, {\it {Generalized entropy for general
  subregions in quantum gravity}},  {\em JHEP} {\bf 12} (2023) 020,
  [\href{http://arxiv.org/abs/2306.01837}{{\tt arXiv:2306.01837}}].

\bibitem{Sorce:analyticity}
J.~Sorce, {\it {Analyticity and the Unruh effect: a study of local modular
  flow}},  {\em JHEP} {\bf 24} (2024) 040,
  [\href{http://arxiv.org/abs/2403.18937}{{\tt arXiv:2403.18937}}].

\bibitem{Caminiti:2025hjq}
J.~Caminiti, F.~Capeccia, L.~Ciambelli, and R.~C. Myers, {\it {Geometric
  modular flows in 2d CFT and beyond}},  {\em JHEP} {\bf 08} (2025) 166,
  [\href{http://arxiv.org/abs/2502.02633}{{\tt arXiv:2502.02633}}].

\bibitem{Leutheusser:HSMT}
S.~A.~W. Leutheusser and H.~Liu, {\it {Emergent Times in Holographic Duality}},
   {\em Phys. Rev. D} {\bf 108} (2023), no.~8 086020,
  [\href{http://arxiv.org/abs/2112.12156}{{\tt arXiv:2112.12156}}].

\bibitem{Witten:largeN}
E.~Witten, {\it Why does quantum field theory in curved spacetime make sense?
  and what happens to the algebra of observables in the thermodynamic limit?},
  \href{http://arxiv.org/abs/2112.11614}{{\tt arXiv:2112.11614}}.

\bibitem{Leutheusser:subalgebra}
S.~Leutheusser and H.~Liu, {\it {Subregion-subalgebra duality: Emergence of
  space and time in holography}},  {\em Phys. Rev. D} {\bf 111} (2025), no.~6
  066021, [\href{http://arxiv.org/abs/2212.13266}{{\tt arXiv:2212.13266}}].

\bibitem{Shale:unitary}
D.~Shale, {\it Linear symmetries of free boson fields},  {\em {Transactions of
  the American Mathematical Society}} {\bf 103} (1962), no.~1 149--167.

\bibitem{Wald:particle-creation}
R.~M. Wald, {\it {On Particle Creation by Black Holes}},  {\em Commun. Math.
  Phys.} {\bf 45} (1975) 9--34.

\bibitem{Wald:s-matrix}
R.~M. Wald, {\it {Existence of the S matrix in quantum field theory in curved
  space-time}},  {\em Annals Phys.} {\bf 118} (1979) 490--510.

\bibitem{Woronowicz:purification}
S.~L. Woronowicz, {\it {On the purification of factor states}},  {\em Commun.
  Math. Phys.} {\bf 28} (1972) 221--235.

\bibitem{Longo:quasi}
R.~Longo, {\it {Modular Structure of the Weyl Algebra}},  {\em Commun. Math.
  Phys.} {\bf 392} (2022), no.~1 145--183,
  [\href{http://arxiv.org/abs/2111.11266}{{\tt arXiv:2111.11266}}].

\bibitem{Conti:quasi}
R.~Conti and G.~Morsella, {\it {Quasi-free Isomorphisms of Second Quantization
  Algebras and Modular Theory}},  {\em Math. Phys. Anal. Geom.} {\bf 27}
  (2024), no.~2 8, [\href{http://arxiv.org/abs/2305.07606}{{\tt
  arXiv:2305.07606}}].

\bibitem{Verch:algebra}
R.~Verch, {\it {Continuity of symplectically adjoint maps and the algebraic
  structure of Hadamard vacuum representations for quantum fields on curved
  space-time}},  {\em Rev. Math. Phys.} {\bf 9} (1997) 635--674,
  [\href{http://arxiv.org/abs/funct-an/9609004}{{\tt funct-an/9609004}}].

\bibitem{Segal:distributions}
I.~E. Segal, {\it {Distributions in Hilbert space and canonical systems of
  operators}},  {\em {Transactions of the American Mathematical Society}} {\bf
  88} (1958), no.~1 12--41.

\bibitem{Araki-Shiraishi}
H.~Araki and M.~Shiraishi, {\it {On quasifree states of the canonical
  commutation relations (I)}},  {\em {Publications of the Research Institute
  for Mathematical Sciences}} {\bf 7} (1971), no.~1 105--120.

\bibitem{Sorce:paper1}
J.~Sorce, {\it {Continuum canonical purifications}},
  \href{http://arxiv.org/abs/2512.17014}{{\tt arXiv:2512.17014}}.

\bibitem{Haag-Kastler}
R.~Haag and D.~Kastler, {\it {An Algebraic approach to quantum field theory}},
  {\em J. Math. Phys.} {\bf 5} (1964) 848--861.

\bibitem{Sorce:types}
J.~Sorce, {\it {Notes on the type classification of von Neumann algebras}},
  {\em Rev. Math. Phys.} {\bf 36} (2024), no.~02 2430002,
  [\href{http://arxiv.org/abs/2302.01958}{{\tt arXiv:2302.01958}}].

\bibitem{Stratila:book}
{Str\v{a}til\v{a}, Serban and Zsid\'{o}, L\'{a}szl\'{o}}, {\em {Lectures on von
  Neumann algebras}}.
\newblock {Cambridge University Press}, 2019.

\bibitem{Sorce:modular}
J.~Sorce, {\it {An intuitive construction of modular flow}},  {\em JHEP} {\bf
  12} (2023) 079, [\href{http://arxiv.org/abs/2309.16766}{{\tt
  arXiv:2309.16766}}].

\bibitem{Conway:book}
J.~B. Conway, {\em A course in operator theory}, vol.~21.
\newblock {American Mathematical Society}, 2025.

\bibitem{Hollands:axioms}
S.~Hollands and R.~M. Wald, {\it {Axiomatic quantum field theory in curved
  spacetime}},  {\em Commun. Math. Phys.} {\bf 293} (2010) 85--125,
  [\href{http://arxiv.org/abs/0803.2003}{{\tt arXiv:0803.2003}}].

\bibitem{Hollands:review}
S.~Hollands and R.~M. Wald, {\it {Quantum fields in curved spacetime}},  {\em
  Phys. Rept.} {\bf 574} (2015) 1--35,
  [\href{http://arxiv.org/abs/1401.2026}{{\tt arXiv:1401.2026}}].

\bibitem{Slawny:1972iq}
J.~Slawny, {\it {On factor representations and the c*-algebra of canonical
  commutation relations}},  {\em Commun. Math. Phys.} {\bf 24} (1972) 151--170.

\bibitem{Bratteli:1996xq}
O.~Bratteli and D.~W. Robinson, {\em {Operator algebras and quantum statistical
  mechanics. Vol. 2: Equilibrium states. Models in quantum statistical
  mechanics}}.
\newblock N.d., 1996.

\bibitem{Araki:natural-1}
H.~Araki, {\it {Some properties of modular conjugation operator of von Neumann
  algebras and a non-commutative Radon-Nikodym theorem with a chain rule}},
  {\em {Pacific Journal of Mathematics}} {\bf 50} (1974), no.~2 309--354.

\bibitem{Skripka:book}
A.~Skripka and A.~Tomskova, {\em Multilinear operator integrals}.
\newblock Springer, 2019.

\bibitem{Bach:revisited}
V.~Bach, A.~F.~M. ter Elst, and J.~Rehberg, {\it {The Birman--Solomyak theorem
  revisited: a novel elementary proof, generalisation, and applications}},
  \href{http://arxiv.org/abs/2511.11058}{{\tt arXiv:2511.11058}}.

\bibitem{sorce-blog-Pick}
J.~Sorce, {\it Pick functions and operator monotones},  Sep, 2024.
\newblock
  \url{https://sorcenotes.blogspot.com/2024/09/pick-functions-and-operator-monotones.html}.

\bibitem{Takeaski:I}
M.~Takesaki, {\em {Theory of operator algebras I}}.
\newblock Springer, 1979.

\bibitem{Satishchandran:uniqueness}
G.~Satishchandran and J.~Sorce, {\it {Uniqueness of null-local modular flow}},
  {\em to appear}.

\bibitem{Witten:background}
E.~Witten, {\it {A background-independent algebra in quantum gravity}},  {\em
  JHEP} {\bf 03} (2024) 077, [\href{http://arxiv.org/abs/2308.03663}{{\tt
  arXiv:2308.03663}}].

\bibitem{Gaiotto:higher-form}
D.~Gaiotto, A.~Kapustin, N.~Seiberg, and B.~Willett, {\it {Generalized Global
  Symmetries}},  {\em JHEP} {\bf 02} (2015) 172,
  [\href{http://arxiv.org/abs/1412.5148}{{\tt arXiv:1412.5148}}].

\bibitem{Wald:QFTCS}
R.~M. Wald, {\em {Quantum Field Theory in Curved Space-Time and Black Hole
  Thermodynamics}}.
\newblock Chicago Lectures in Physics. University of Chicago Press, Chicago,
  IL, 1995.

\bibitem{Friedlander:book}
F.~G. Friedlander, {\em The wave equation on a curved space-time}, vol.~2.
\newblock Cambridge university press, 1975.

\bibitem{Driessler:unbounded-to-vN}
W.~Driessler, S.~J. Summers, and E.~H. Wichmann, {\it {On the Connection
  Between Quantum Fields and Von Neumann Algebras of Local Operators}},  {\em
  Commun. Math. Phys.} {\bf 105} (1986) 49--84.

\bibitem{Buchholz:unbounded-to-vN}
D.~Buchholz, {\it {On quantum fields which generate local algebras}},  {\em J.
  Math. Phys.} {\bf 31} (1990) 1839--1846.

\bibitem{Araki:lattice}
H.~Araki, {\it {A lattice of von Neumann algebras associated with the quantum
  theory of a free Bose field}},  {\em {Journal of Mathematical Physics}} {\bf
  4} (1963), no.~11 1343--1362.

\bibitem{Longo:lectures}
R.~Longo, {\it {Lectures on conformal nets}}, .

\bibitem{Figliolini:tomita}
F.~Figliolini and D.~Guido, {\it {THE TOMITA OPERATOR FOR THE FREE SCALAR
  FIELD}},  {\em Ann. Inst. H. Poincare Phys. Theor.} {\bf 51} (1989) 419--435.

\bibitem{Witten:notes}
E.~Witten, {\it {APS Medal for Exceptional Achievement in Research: Invited
  article on entanglement properties of quantum field theory}},  {\em Rev. Mod.
  Phys.} {\bf 90} (2018), no.~4 045003,
  [\href{http://arxiv.org/abs/1803.04993}{{\tt arXiv:1803.04993}}].

\bibitem{Ceyhan:QNEC}
F.~Ceyhan and T.~Faulkner, {\it {Recovering the QNEC from the ANEC}},  {\em
  Commun. Math. Phys.} {\bf 377} (2020), no.~2 999--1045,
  [\href{http://arxiv.org/abs/1812.04683}{{\tt arXiv:1812.04683}}].

\bibitem{Aigner:enumeration}
M.~Aigner, {\em A course in enumeration}.
\newblock Springer, 2007.

\end{thebibliography}\endgroup

\end{document}